  \providecommand\BibTeX{{
    \normalfont B\kern-0.5em{\scshape i\kern-0.25em b}\kern-0.8em\TeX}}}
\renewcommand\footnotetextcopyrightpermission[1]{}
\setlist{noitemsep}
\setlist{nolistsep}
\newtheorem{theorem}{Theorem}          	
\newaliascnt{lemma}{theorem}				
\newtheorem{lemma}[lemma]{Lemma}            
\newaliascnt{conjecture}{theorem}			
\newtheorem{conjecture}[conjecture]{Conjecture}   
\newaliascnt{remark}{theorem}				
\newaliascnt{corollary}{theorem}			
\newaliascnt{definition}{theorem}			
\newtheorem{definition}[definition]{Definition}    
\newaliascnt{proposition}{theorem}			
\newtheorem{proposition}[proposition]{Proposition}
\newaliascnt{example}{theorem}			
\newtheorem{example}[example]{Example}  
\newaliascnt{axiom}{theorem}			
\newaliascnt{problem}{theorem}			
\newaliascnt{fact}{theorem}		
\newaliascnt{claim}{theorem}		
\DeclareRobustCommand*\uell{\mathpalette\@uell\relax}
\newcommand*\@uell[2]{
  \setbox0=\hbox{$#1\ell$}
  \setbox1=\hbox{\rotatebox{10}{$#1\ell$}}
  \dimen0=\wd0 \advance\dimen0 by -\wd1 \divide\dimen0 by 2
  \mathord{\lower 0.1ex \hbox{\kern\dimen0\unhbox1\kern\dimen0}}
}
\DeclareRobustCommand{\perthousand}{
  \ifmmode
    \text{\textperthousand}
  \else
    \textperthousand
  \fi}
\newcommand{\introparagraph}[1]{\textbf{#1.}}        
\newcommand{\specificref}[2]{\hyperref[#2]{#1~\ref*{#2}}}
\newcommand{\commentout}[1]{}
\newcommand{\set}[1]{\{#1\}}
\newcommand{\sat}{\mbox{{\rm\sc SAT}}}
\renewcommand{\vec}[1]{\boldsymbol{\mathbf{#1}}}
\newcommand{\datarule}{{\,:\!\!-\,}}
\newcommand{\x}  { \mathrm{x}}			
\newcommand{\true}{\texttt{true}\xspace}		
\newcommand{\false}{\texttt{false}\xspace}
\newcommand{\witnesses}{\texttt{witnesses}\xspace}		
\newcommand{\arity}{\texttt{arity}\xspace}		
\newcommand{\dom}{\texttt{dom}\xspace}
\newcommand{\exSymb}{\textup{x}}
\newcommand{\ex}[1]{{#1}^{\exSymb}}
\newcommand{\Var}{\textup{\textit{Var}}}      	
\newcommand{\var}{\textup{\texttt{var}}}      	
\renewcommand{\vec}[1]{\boldsymbol{\mathbf{#1}}}
\newcommand{\res}{\texttt{RES}} 
\newcommand{\rats}{q_{\textup{\textrm{rats}}}}  
\newcommand{\chain}{q_{\textup{\textrm{chain}}}} 
\newcommand{\achain}{q_{\textup{\textrm{chain}}}^{\textup{\textrm{a}}}} 
\newcommand{\bchain}{q_{\textup{\textrm{chain}}}^{\textup{\textrm{b}}}}
\newcommand{\cchain}{q_{\textup{\textrm{chain}}}^{\textup{\textrm{c}}}}
\newcommand{\abchain}{q_{\textup{\textrm{chain}}}^{\textup{\textrm{ab}}}}
\newcommand{\acchain}{q_{\textup{\textrm{chain}}}^{\textup{\textrm{ac}}}}
\newcommand{\bcchain}{q_{\textup{\textrm{chain}}}^{\textup{\textrm{bc}}}}
\newcommand{\abcchain}{q_{\textup{\textrm{chain}}}^{\textup{\textrm{abc}}}}
\newcommand{\perm}{q_\textup{perm}} 
\newcommand{\vc}{q_{\textup{\textrm{vc}}}} 
\newcommand{\lin}{q_{\textup{\textrm{lin}}}} 
\newcommand{\conv}{q_{\textup{\textrm{conf}}}}
\newcommand{\brats}{q_{\textrm{brats}}}
\newcommand{\NP}{\textup{\textsf{NP}}\xspace}
\newcommand{\PTIME}{\textup{\textsf{PTIME}}\xspace}
\newcommand{\pos}{\texttt{pos}}
\newcommand{\flow}{\textup{\texttt{flow}}\xspace}
\renewcommand{\vec}[1]{\boldsymbol{\mathbf{#1}}}
\renewcommand{\phi}{\varphi}      
\renewcommand{\angle}[1]{ \langle #1 \rangle }
\newcommand{\ov}{\overline}
\newcommand{\bigset}[2]{\bigl\{ #1 \,\bigm|\, #2 \bigr\} }
\newcommand{\qLra}{\quad\Leftrightarrow\quad}
\newcommand{\abs}[1]{ \vert #1 \vert }
\newcommand{\rewriteclosed}		
\definecolor{dg}{cmyk}{0.60,0,0.88,0.27}
\newcommand{\p}{\mbox{{\rm P}}}
\newcommand{\np}{\mbox{{\rm NP}}}
\newcommand{\eqdef}{\buildrel \mbox{\tiny\rm def} \over =}
\newcommand{\Tri}{\textup{\textrm{T}}}
\newcommand{\sj}{\textrm{sj}}
\newcommand{\allsj}{\CIRCLE}
\newcommand{\ssjbin}{\LEFTcircle}
\newcommand{\twoR}{\LEFTcircle^{:}}
\newcommand{\threeR}{\LEFTcircle^{\because}}
\begin{document}

\title[Resilience for Binary SJ Conjunctive Queries]{New Results for the Complexity of Resilience for Binary Conjunctive Queries with Self-Joins}

\author{Cibele Freire}
\affiliation{
  \institution{Wellesley College}
}

\author{Wolfgang Gatterbauer}
\affiliation{
  \institution{Northeastern University, Boston}
}

\author{Neil Immerman}
\affiliation{
  \institution{University of Massachusetts Amherst}
}

\author{Alexandra Meliou}
\affiliation{
 \institution{University of Massachusetts Amherst}}

\begin{abstract}	
	
The resilience of a Boolean query {on a database} is the minimum number of tuples that need to be deleted from the input tables in order to make the query false. A solution to this problem immediately translates into a solution for the more widely known problem of deletion propagation with source-side effects. In this paper, we give several novel results on the hardness of the resilience problem for conjunctive queries with self-joins, and, more specifically, we present a dichotomy result for the class of \emph{single-self-join binary queries} with exactly two repeated relations occurring in the query. Unlike in the self-join free case, the concept of triad is not enough to fully characterize the complexity of resilience.  We identify new structural properties, namely chains, confluences and permutations, which lead to various \np-hardness results. We also give novel involved reductions to network flow to show certain cases are in \p. Although restricted, our results provide important insights into the problem of self-joins that we hope can help solve the general case of all conjunctive queries with self-joins in the future.
\end{abstract}

\begin{CCSXML}
<ccs2012>
<concept>
<concept_id>10003752.10010070.10010111</concept_id>
<concept_desc>Theory of computation~Database theory</concept_desc>
<concept_significance>500</concept_significance>
</concept>
<concept>
<concept_id>10002951.10002952.10002953.10002955</concept_id>
<concept_desc>Information systems~Relational database model</concept_desc>
<concept_significance>300</concept_significance>
</concept>
</ccs2012>
\end{CCSXML}

\ccsdesc[500]{Theory of computation~Database theory}
\ccsdesc[300]{Information systems~Relational database model}

\maketitle

\section{Introduction}\label{sec:introduction}

Various problems in database research, such as causality, explanations, and
deletion propagation, examine how \emph{interventions in the input} to a query
impact the query's output. An intervention constitutes a change (update,
addition, or deletion) to the input tuples. 
In this paper, we study the
\emph{resilience} of a Boolean query with respect to tuple deletions.
Resilience is a variant of deletion propagation that focuses on Boolean
queries: it corresponds to the minimum number of tuples whose deletion causes
the query to evaluate to false. In previous work~\cite{FreireGIM15}, we
provided a full characterization of the complexity of resilience for the
family of self-join-free conjunctive queries (sj-free CQs) with functional
dependencies. 
In this paper, we augment the previous results to account for a restricted class of
self-joins.

Self-joins have long plagued the complexity study of many problems in database
theory research: for example, on the topic of \emph{consistent query answering}, Kolaitis and
Pema~\cite{KolaitisP12} proved a dichotomy into PTIME and coNP-complete cases
for the family of queries with only two atoms and no self-joins. Koutris and
Suciu~\cite{KoutrisS14} extended the dichotomy to the larger class of
self-join-free conjunctive queries, where each atom has as primary key either
a single attribute or all the attributes. Koutris and
Wijsen~\cite{Koutris2017,Koutris2018b} further extended the dichotomy to the
full class of sj-free Boolean CQs, and queries with
negated atoms~\cite{Koutris2018a}. To the best of our knowledge, there
is no known result on this problem for a query family that permits self-joins.
As another example, complexity results on the problem of \emph{query-based pricing}~\cite{KoutrisUBHS15}
are also restricted to the class of sj-free CQs. 
On the closely related topic of \emph{deletion propagation} with view side-effects, Kimelfeld et
al.~\cite{KimelfeldVW12} used a characteristic of the query structure (head
domination) to formalize a complexity dichotomy for the family of
sj-free CQs,
and indicated that self-joins can significantly harden
approximation in the problem of deletion propagation. 
Extensions to the cases of functional
dependencies~\cite{Kimelfeld12} and multi-tuple
deletions~\cite{Kimelfeld:2013} also focused on the same query class.
These examples offer strong indication that self-joins introduce significant
hurdles in the study of a variety of problems, and progress
in cases that account for self-joins is 
rare.\footnote{While some prior work
on related problems does allow for
self-joins~\cite{Buneman:2002,Cong12,Amarilli:2017:CQP:3034786.3056121}, the
complexity characterizations in those results are not specific to the queries, but rather to high-level operators (e.g, join, projection, etc.).
In contrast, our work provides results that are fine-grained and identify
elements of the query structure that render the resilience problem NP-complete
or PTIME-computable.}

In this paper, we give several novel results on the hardness of the resilience problem for CQs with self-joins.
We show some results that hold for any CQ with self-join but later we focus on the class of \emph{binary CQs} 
(those where relations are either unary or binary). We provide various complexity results for binary CQs
where only one relation name can be repeated, which we denote by single-self-join (ssj).
We analyze the case of ssj binary queries in general but emphasize that for the case with at most 2 instances of the 
repeated relation, we prove that a \p\  versus \np-complete dichotomy exists.  
We further provide a unifying criterion for hardness (a ``proof template''),  
and we conjecture that it subsumes and generalizes the criterion of \emph{triads} from Sj-free queries, 
and that it provides a sufficient criterion of hardness for \emph{any} CQ.

\introparagraph{Contributions and outline}

\begin{itemize}[leftmargin=8pt]
    \item Contrasting with current knowledge about the resilience of
    CQs without self-joins (summarized in
    \Cref{sec:background}), we demonstrate how self-joins complicate
    the problem and invalidate several aspects and intuitions from the self-join-free case (\Cref{sec:themess}).
    
    \item We establish foundations for tackling the resilience problem for
    {\emph{conjunctive queries with self-joins}} by identifying important
    conditions on the minimality and connectedness of queries and by revising
    the fundamental notion of query domination (\Cref{sec:general}).
    
    \item We prove that resilience for queries that contain a triad (a
    structure that characterizes hardness in the sj-free case~\cite{FreireGIM15}) remains \np-complete
    in the presence of self-joins (\Cref{sec:triadsRemainHard}).
    
    \item By narrowing our target class to the class of \emph{binary conjunctive queries} (those where relations are either unary or binary)
    and single-self-join queries
    (i.e., only one relation can appear in multiple atoms of the query), we
    identify a new structure that implies hardness, thus expanding the
    \np-complete class compared to the sj-free case
    (\Cref{sec:paths}).

    \item We identify and define the fundamental structures of chains,
    confluences, and permutations, and use them to prove a complete dichotomy
    between \np-complete and PTIME cases for the class of {single-self-join} binary conjunctive
    queries where exactly two atoms in a query correspond to the same relation
    (\Cref{sec:2R}).    
    
    \item We prove several involved results using the chains, confluences, and
    permutations structures in the case of single-self-join binary conjunctive queries where exactly
    3 atoms correspond to the same relation. While a complete dichotomy
    for this class remains elusive, our work creates a roadmap and identifies
    remaining open problems (\Cref{sec:3R}).

    \item 
	{We provide the novel concept of \emph{Independent Join Paths}.
	This general ``proof template'' aims to 
	($i$) provide a sufficient criterion of hardness for \emph{any} CQs,
	($ii$) subsume the prior hardness criterion of triads for SJ-free CQs,
	and ($iii$) provide a hint for an approach that could possibly automate the search for hardness reductions.} 	
	(\cref{sec:generalization}).
	
\end{itemize}

 {Some of our results apply to the general class of self-join CQs, while others apply to more restricted query families. 
We annotate our theoretical results with the symbols detailed in}~\cref{tb: class of queries}  {to indicate the relevant assumptions.}

\begin{table}[t!]
\centering
\begin{tabularx}{\linewidth}{  @{\hspace{0pt}} >{$}l<{$}  @{\hspace{2mm}} X @{}} 	
	\toprule
      & \textbf{Query class}  \\
      \midrule
    $\allsj$  & all self-join conjunctive queries\\
    $\ssjbin$ &  single-self-join (ssj) binary conjunctive queries\\
    \twoR & ssj binary conjunctive queries with exactly 2 $R$-atoms\\
    \threeR & ssj binary conjunctive queries with exactly 3 $R$-atoms\\
    \bottomrule
    \end{tabularx}
   \caption{Annotations specifying the relevant classes of queries. }
   \label{tb: class of queries}
\end{table}

\section{Background and Prior Results}\label{sec:background}
This section introduces our notation, defines \emph{the resilience} of a query, and 
summarizes prior complexity results for sj-free queries.

\looseness-1
\introparagraph{Standard database notations}
We use boldface to denote tuples or ordered sets, 
(e.g., $\vec x = (x_1, \ldots, x_k)$)
and use both subscripts and superscripts as indices 
(e.g., $a^1$ and $a_1$). 
We fix a relational vocabulary $\vec R = (R_1, \ldots, R_\ell)$, and denote $\arity(R_i)$ the arity of a relation $R_i$. 
We call \emph{unary} and \emph{binary} those relations with arity 1 or 2, respectively.
We call ``\emph{binary queries}'' those queries that contain only unary or binary relations.
A database instance over $\vec R$ is $D = (R_1^D, \ldots, R_\ell^D)$, where each $R_i^D$ is a finite relation.
We call the elements of $R_i^D$ tuples and
write $R_i$ instead of $R_i^D$ when $D$ is clear from the context.
With some abuse of notation 
we also denote $D$ as the set of all tuples, 
i.e.\ $D = \bigcup_i R_i$,
where the union is understood to be a disjoint union (thus each tuple belongs to only one relation).
The active domain $\dom(D)$ is the set of all constants occurring in $D$.
The size of the database instance is $n = |D|$, i.e.\ the number of tuples in the database.\footnote{Notice that other work sometimes uses $\dom(D)$ as the size of the database. Our different definition has no implication on our complexity results but simplifies the discussions of our reductions.}

A \emph{conjunctive query} (CQ) is a 
first-order formula $q(\vec y)$ $= \exists
\vec x\,(g_1 \wedge \ldots \wedge g_m)$ where 
the variables $\vec x = (x_1, \ldots, x_k)$ 
are called existential variables,
$\vec y = (y_1, \ldots, y_c)$ are called the head variables (or free variables),
and each atom (also called subgoal) $g_i$ represents a relation 
$g_i= R(\vec z_i)$ where $\vec z_i \subseteq \vec x \cup \vec y$.\footnote{WLOG, we assume that
$\vec z_i$ is a tuple of only variables and don't write the constants.
Selections can always be directly pushed into the database before executing the query.
In other words, for any constant in
the query, we can first apply a selection on each relation and then consider the modified query with
a column removed.}
A \emph{self-join-free CQ} (sj-free CQ) is one where no relation symbol occurs more than once 
and thus every atom represents a different relation. In turn, a \emph{self-join CQ} is one where at least
one relation symbol is repeated, and a \emph{single-self-join (ssj) CQ} is one where only one relation symbol
can be repeated in a query. 
We write $\var(g_j)$ for the set of variables occurring in atom $g_j$.
As usual, we abbreviate a non-Boolean query in Datalog notation by 
$q(\vec y) \datarule g_1, \ldots, g_m$
where $q$ has head variables $\vec y$
and $g_1, \ldots, g_m$ represents the body of the query.

{Unless otherwise stated, a query in this paper denotes
a \emph{Boolean} CQ  $q$ (i.e., $\vec y = \emptyset$). 
We write $D \models q$ to denote that the query $q$ evaluates to \true over the database
instance $D$, and $D \not\models q$ to denote that $q$ evaluates to \false.
For a Boolean query $q$, we write $q(\vec x)$ to indicate that $\vec x$ represents the set of all existentially
quantified variables. We write $[k]$ as short notation for the set $\{1, \ldots, k\}$.}

\introparagraph{Additional notations}

We call a valuation of all existential variables that is permitted by $D$ and that makes $q$ \true
(i.e.\ $D \models q[\vec w/\vec x]$) a \emph{witness} $\vec w$.\footnote{Note that our notion of witness slightly differs from the one commonly seen in  provenance literature where a ``witness'' refers to a subset of the input database records that is sufficient to ensure that a given output tuple appears in the result of a query \cite{DBLP:journals/ftdb/CheneyCT09}.} 
The set of witnesses is then
\[
\witnesses(D,q) = \bigset{\vec w}{D \models q[\vec w /\vec x]}\; .
\]

\noindent
Since every witness implies exactly one set of at most $m$ tuples from $D$ that make the query true, 
we will slightly abuse the notation and also refer to this set of tuples as ``witnesses.'' 
For example, consider the query 
$\chain \datarule R(x,y), R(y,z)$
with $\vec x = (x,y,z)$ over the database 
$D = \{t_{1}:R(1,2),t_{2}:R(2,3), t_{3}:R(3,3)\}$.
Then one can easily see that
$$\witnesses(D,\chain) = \{(1,2,3), (2,3,3), (3,3,3)\}$$
and their respective tuples are 
$\set{t_{1},t_{2}}$, $\set{t_{2},t_{3}}$, and $\set{t_3}$.

In line with prior work~\cite{FreireGIM15,MeliouGMS11}, relations
may be specified as \emph{exogenous}, meaning that tuples from these relations
cannot be deleted.\footnote{In other words, tuples in these atoms provide
context and are outside the scope of possible ``interventions'' in the spirit
of causality~\cite{HalpernPearl:Cause2005}.}
We specify the atoms corresponding to exogenous relations with a superscript
``$\exSymb$''. The remaining atoms are \emph{endogenous}.

\introparagraph{Complexity theory}
We write $S\leq T$ to mean $S\leq_{\textrm{fo}} T$.\footnote{First-order reductions are not required, but 
it is the case that all reductions defined in the paper are expressible in first-order.}
We say that two problems have \emph{equivalent} complexity ($S\equiv T$) iff they are inter-reducible, i.e., $S\leq T$ and $T\leq S$.

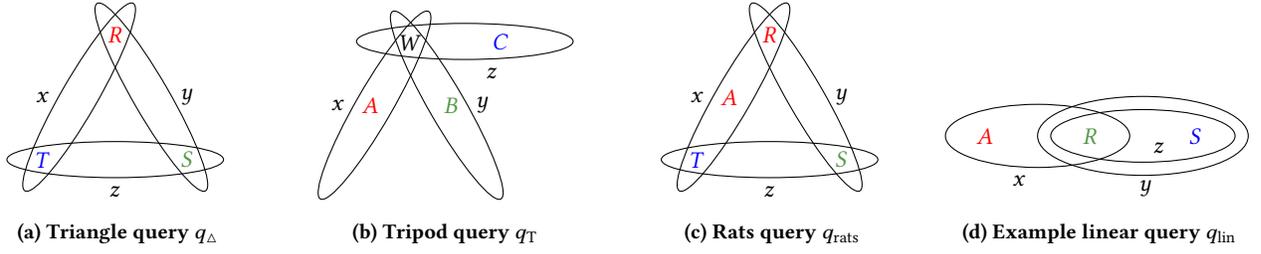
\begin{figure*}
\begin{subfigure}[b]{.24\linewidth}
	\centering
	\begin{tikzpicture}[ scale=.12]
	\draw[color=red] (-8,6.93) node {{\color{black} $x$}};  
	\draw[color=dg] (8,6.93) node {{\color{black}  $y$}};  
	\draw[color=blue] (0,-3.5) node {{\color{black}  $z$}};  
	\draw (0,13.86) node {{\color{red} $R$}};  
	\draw (-8,0) node {{\color{blue}  $T$}};  
	\draw (8,0) node {{\color{dg} $S$}};  
	\draw[rotate=-30] (-6.93,4) ellipse (2 and 12);  
	\draw[rotate=30] (6.93,4) ellipse (2 and 12);  
	\draw[rotate=90] (0,0) ellipse (2 and 12);  
\end{tikzpicture}
\caption{Triangle query $q_\triangle$}\label{fig:triangleHypergraph}
\end{subfigure}
\begin{subfigure}[b]{.24\linewidth}
	\centering
	\begin{tikzpicture}[ scale=.12]
		\draw (-8,6.93) node {$x$};  
		\draw (8,6.93) node {$y$};  
		\draw (9,10.5) node {$z$};  
		\draw (-4.5,6.93) node {{\color{red} $A$}};  
		\draw (10,14) node {{\color{blue} $C$}};  
		\draw (4.5,6.93) node {{\color{dg} $B$}};  
		\draw (0,13.8) node {{\color{black}  $W$}};  
		\draw[rotate=-30] (-6.93,4) ellipse (2 and 12);  
		\draw[rotate=30] (6.93,4) ellipse (2 and 12);  
		\draw[rotate=90] (14,-6) ellipse (2 and 12);  
	\end{tikzpicture}
\caption{Tripod query $q_\Tri$}\label{fig:tripodHypergraph} 
\end{subfigure}
\begin{subfigure}[b]{.24\linewidth}
	\centering
	\begin{tikzpicture}[ scale=.12]
	\draw (-8,6.93) node {$x$};  
	\draw (8,6.93) node {$y$};  
	\draw (0,-3.5) node {$z$};  
	\draw (-4.5,6.93) node {{\color{red}  $A$}};  
	\draw (0,13.86) node {{\color{red} $R$}};  
	\draw (-8,0) node {{\color{blue}  $T$}};  
	\draw (8,0) node {{\color{dg}  $S$}};  
	\draw[rotate=-30] (-6.93,4) ellipse (2 and 12);  
	\draw[rotate=30] (6.93,4) ellipse (2 and 12);  
	\draw[rotate=90] (0,0) ellipse (2 and 12);  
\end{tikzpicture}	
\caption{Rats query $\rats$}\label{fig:ratsHypergraph}	
\end{subfigure}
\begin{subfigure}[b]{.24\linewidth}
	\centering
	\begin{tikzpicture}[ scale=.35]
	\draw (-4,0) node {{\color{red} $A$}};
	\draw (0,0) node {{\color{dg} $ R$}};  
	\draw (4,0) node {{\color{blue} $ S$}};  
	\draw (-2.8,-1.7) node {{  $x$}};
	\draw (2,-2) node {{ $y$}};
	\draw (2.5,-.45) node {{  $z$}};
	\draw[color=black] (-2,0) ellipse (3.5 and 1.2);
	\draw[color=black] (2,0) ellipse (3.5 and 1.0);
	\draw[color=black] (2,0) ellipse (4 and 1.5);
\end{tikzpicture}
\caption{Example linear query $\lin$}\label{linear fig}
\end{subfigure}
\caption{The hypergraphs of queries $q_\triangle$, $q_\Tri$ and $\lin$.
$\set{R, S,T}$ 
is a triad of $q_\triangle$ and 
$\set{A,B,C}$ is a triad of $q_\Tri$.
Thus $\res(q_\triangle)$ and $\res(q_\Tri)$ are \np-complete.
In contrast, $A$ dominates both $R$ and $T$ in $\rats$ which renders both atoms exogenous, thus ``disarming'' what appears to be a triad.
And $\mathcal{H}(\lin)$ is linear. 
Thus $\res(\rats)$ and $\res(\lin)$ are in \p.
}\label{Fig_HardQueries}
\end{figure*}

\subsection{Query resilience}
  
In this paper, we focus on the problem of resilience, a variant of the problem of deletion propagation focusing on Boolean queries:  
Given $D\models q$, what is the minimum number of endogenous tuples that have to be removed
from $D$ to make the query false?
A large minimum set implies that the query is more ``\emph{resilient}'' and requires
the deletion of more tuples to change the query output.
In order to study the complexity of resilience, we focus on the
decision problem:

\begin{definition}[Resilience Decision]\label{def: resilience}
{Given a query $q$, database $D$, and an integer $k$. We say that $(D, k) \in \res(q)$ if and
only if $D \models q$ and there exists a set $\Gamma$ with at most $k$ endogenous tuples s.t. $D -\Gamma \not\models q$.}
{We define $\rho(D,q)$ as the size of a minimum contingency set for input $D$ and $q$.}
\end{definition}

\noindent 
In other words, $(D,k) \in \res(q)$ means that there is a set of $k$ or fewer endogenous tuples
whose removal makes the query false. 
We refer to such a set of tuples $\Gamma$ as a ``contingency set.''
Observe that, for a fixed $q$, we can talk about data complexity and
$\res(q)\in\np$
 {when} $q$ is computable in PTIME.

A central result of the prior work on resilience~\cite{FreireGIM15} is
that the complexity of resilience of an sj-free CQ can be exactly
characterized via a natural property of its \emph{dual hypergraph}
$\mathcal{H}(q)$. The hypergraph of an sj-free query $q$ is usually defined
with its vertices being the variables of $q$ and the hyperedges being the
atoms~\cite{AbiteboulHV:1995}. The dual hypergraph, $\mathcal{H}(q)$, has
vertex set $V=\{g_1,\ldots,g_m\}$, and each variable $x_i \in \var(q)$
determines the hyperedge consisting of all those atoms in which $x_i$ occurs:
$\;e_i=\{g_j\,|\,x_i\in \var(g_j)\}$. A \emph{path} in the graph is an
alternating sequence of vertices and edges, $g_1,x_1,g_2, x_2, \ldots,
g_{n-1},$ $x_{n-1},g_n$, such that for all $i$, $x_i \in \var(g_i)\cap
\var(g_{i+1})$, i.e., the hyperedge $x_i$ joins vertices $g_i$ and~$g_{i+1}$.
We explicitly list the hyperedges in the path, because more than one hyperedge
may join the same pair of vertices.  {Since we only consider dual
hypergraphs, we use the shorter term ``hypergraph'' from now on.}

\begin{example}[Hypergraphs]\label{ex:hypergraphs}
 {We illustrate the prior results with the following 4 queries and
their hypergraphs shown in} \cref{Fig_HardQueries}:
\begin{align*}
	q_\triangle &\datarule R(x,y),S(y,z),T(z,x)			&\textrm{(Triangle)}\\
	q_\Tri 		&\datarule A(x),B(y),C(z), W(x,y,z)		&\textrm{(Tripod)}	\\
	\rats 		& \datarule R(x,y),A(x),T(z,x),S(y,z) 	&\textrm{(Rats)}	\\
	\lin 	& \datarule A(x),R(x,y, z),S(y,z) 		&\textrm{(Example linear)}	
\end{align*}
\end{example}

In the remainder of this section, we summarize the intuition behind three main
constructs---\emph{triads}, \emph{domination}, and \emph{linear queries}---that lead to the result 
presented in \cref{resilience dichotomy thm}. Then, in \cref{sec:themess} we provide an exposition of how
self-joins alter or completely invalidate these prior constructs.

\subsection{Domination}
{We may mark some relations in an input database as exogenous and, the remaining relations 
are endogenous. However, some relations are ``implicitly'' exogenous. 
For example, the relation} $W$ in $q_T$ is given as endogenous,
but is never needed in minimum contingency sets.
{We next define a syntactic property, called \emph{domination}, that captures when endogenous relations are implicitly exogenous.}

\begin{definition}[Domination]\label{sj-free domination}
If a query $q$ has endogenous atoms $A,B$ such that $\var(A)\!\subset\!\var(B)$, we say that $A$
\emph{dominates} $B$.
\end{definition}	

For example, $A(x)$ dominates $W(x,y,z)$ in $q_T$.  
Whenever a contingency set contains tuples from $W$, 
they can always be replaced with a smaller than, or equal, number of tuples from $A$.

\begin{proposition}[ {Domination for resilience}~\cite{MeliouGMS11}]\label{fact: domination does not change complexity}
Let $q$ be an sj-free CQ and $q'$ the query resulting from labeling some dominated atoms as exogenous.
Then $\res(q) \equiv \res(q')$.
\end{proposition}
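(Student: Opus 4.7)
The plan is to prove the proposition by showing that $\rho(D,q)=\rho(D,q')$ for every input database $D$, which immediately yields the first-order reductions in both directions (the reduction is the identity map on instances, since $q$ and $q'$ differ only in labels). The inequality $\rho(D,q)\le\rho(D,q')$ is immediate: every contingency set for $q'$ uses only tuples endogenous in $q'$, and these are a subset of the tuples endogenous in $q$, so the same set serves as a contingency set for $q$.

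The content lies in the reverse inequality $\rho(D,q')\le\rho(D,q)$. First I would fix a minimum contingency set $\Gamma$ for $(D,q)$ and, without loss of generality, assume $\Gamma$ is ``support-minimal'' in the sense that every tuple of $\Gamma$ participates in at least one witness of $q$ in $D$ (any tuple that does not can simply be dropped, which can only decrease $|\Gamma|$). From $\Gamma$ I would construct $\Gamma'$ by leaving every non-dominated endogenous tuple alone, and by replacing each occurrence of a dominated tuple $B(\vec t)\in\Gamma$, where the endogenous atom $A$ dominates $B$ in $q$, by the tuple $A(\vec t\!\upharpoonright_{\var(A)})$ obtained by projecting $\vec t$ onto the variables of $A$. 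Since $A$ remains endogenous in $q'$ and $B$ is the atom being relabeled as exogenous, $\Gamma'$ consists only of $q'$-endogenous tuples.

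Two things then need checking. First, that each replacement tuple actually lies in $D$: by the support-minimality of $\Gamma$, the replaced $B(\vec t)$ participates in some witness $\vec w$, and because $\var(A)\subset\var(B)$ the atom $A$ evaluated on $\vec w$ must give exactly $A(\vec t\!\upharpoonright_{\var(A)})$, which is therefore in $D$. Second, that $D-\Gamma'\not\models q'$: take any witness $\vec w$ of $q'$ (equivalently of $q$, since they share the same body); by assumption $\Gamma$ hit $\vec w$, so some tuple $u\in\Gamma$ lies in the image of $\vec w$. If $u$ was not replaced it is still in $\Gamma'$. If $u=B(\vec t)$ was replaced by $A(\vec t\!\upharpoonright_{\var(A)})$, then because $\var(A)\subset\var(B)$ the $A$-atom of $\vec w$ evaluates precisely to this replacement tuple, which lies in $\Gamma'$, so $\Gamma'$ hits $\vec w$ as well. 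Finally, $|\Gamma'|\le|\Gamma|$ holds because we replaced each dominated tuple by exactly one tuple, and set-duplicates (distinct $B$-tuples projecting to the same $A$-tuple) only decrease the cardinality.

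The only mild subtleties will be the support-minimality reduction and the insistence that the construction be carried out uniformly when several atoms $B_1,\ldots,B_k$ are simultaneously dominated by the same or different atoms $A$; both are handled by doing the replacement one dominated atom at a time and iterating, each step preserving the inequality. Once both directions are in hand, identifying reductions between decision instances is immediate: the map $(D,k)\mapsto(D,k)$ is first-order and, by the equality of $\rho$-values, it is a bireduction between $\res(q)$ and $\res(q')$.
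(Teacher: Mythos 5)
Your proposal is correct and follows essentially the same route as the paper's proof: replace each dominated tuple $B(\vec t)$ in a minimum contingency set by the projection of $\vec t$ onto $\var(A)$ for the dominating atom $A$, and observe that this removes at least as many witnesses, so making dominated atoms exogenous does not change the resilience. Your version merely spells out the supporting details (support-minimality, membership of the projected tuple in $D$, and the cardinality bound) that the paper leaves implicit.
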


When studying resilience, we follow the convention that 
\emph{all dominated atoms are made exogenous}, and we consider that the normal form of a query.  
As we have seen, $A$
dominates $W$ in $q_\Tri$. 
Similarly, the atom $A$ dominates both $R$ and $T$ in $\rats$.
We
thus transform the queries so that the dominated atoms are exogenous.  Exogenous atoms have the superscript ``$\exSymb$''.
\begin{align*}
q_\Tri'	&\datarule A(x),B(y),C(z), W^\exSymb(x,y,z)		\\
\rats' 	& \datarule R^\exSymb(x,y),A(x),T^\exSymb(z,x) ,S(y,z)
\end{align*}

\noindent
\Cref{fact: domination does not change complexity} implies
that 
$\res(\rats) \equiv \res(\rats')$.

\subsection{Triads and hardness}\label{sec: sjfree triad}

We showed in \cite{FreireGIM15} that 
$\res(q_\triangle)$ and
$\res(q_\Tri)$ 
from \cref{ex:hypergraphs}
are \np-complete.
While $q_\triangle$ and $q_\Tri$ appear to be quite different, they
share a key common structural property which alone is responsible for hardness for sj-free CQs.

\begin{definition}[Triad]\label{def: triad}
A \emph{triad} is a set of three endogenous atoms, $\mathcal{T} = \set{S_0,S_1,S_2}$ 
such that for every pair $i,j$, 
there is a path from $S_i$ to $S_j$  {in $\mathcal{H}(q)$} that uses no variable occurring in the other 
atom of $\mathcal{T}$.
\end{definition}

\noindent
Intuitively, a triad is a triple of points with ``robust connectivity.''
Observe that atoms $R,S,T$ form a triad in $q_\triangle$ and atoms $A,B,C$ form a triad in
$q_\Tri$ (see \cref{Fig_HardQueries}).
For example, there is a path from $R$ to $S$ in $q_\triangle$ (across hyperedge $y$) that uses only variables (here $y$) that are not contained in the other atom ($y \not \in \var(T)$).
We showed that triads are responsible for hardness {(see} \Cref{sec: sj-free proofs} {for proof)}:

\begin{lemma}[ {Triads make $\res(q)$ hard}~\cite{FreireGIM15}]\label{hard part dichotomy}
Let $q$ be an sj-free 
CQ where all ``dominated'' atoms are exogenous. If $q$ has a triad, then $\res(q)$ is \np-complete.
\end{lemma}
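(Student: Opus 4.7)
The plan is to establish NP-hardness by a polynomial-time reduction from a canonical hard instance of resilience itself, namely $\res(q_\triangle)$, which is already known to be \np-complete. Membership in \np\ is immediate: given a candidate contingency set $\Gamma$ with $|\Gamma|\leq k$, one can evaluate $q$ on $D - \Gamma$ in polynomial time and check that the query becomes false. So the entire task is to exhibit, for every sj-free CQ $q$ containing a triad $\mathcal{T}=\{S_0,S_1,S_2\}$, a reduction $\res(q_\triangle)\leq \res(q)$.

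The construction plan is as follows. Let $P_{ij}$ denote a path in $\mathcal{H}(q)$ from $S_i$ to $S_j$ that uses no variable occurring in $S_k$ (where $\{i,j,k\}=\{0,1,2\}$); such paths exist by the definition of triad. Given a $q_\triangle$-instance $(D_\triangle,k)$ with endogenous relations $R,S,T$, I would build a database $D$ for $q$ as follows. First, for each tuple of $R^{D_\triangle}$, create one endogenous tuple in $S_0$ (and analogously $S$ into $S_1$ and $T$ into $S_2$), taking care to use the variable layout of $S_i$ in $q$ to embed the two ``interface'' constants of the corresponding triangle-tuple, while padding the remaining positions with a fresh private constant to block unintended self-joins inside $S_i$. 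Next, for each of the three paths $P_{ij}$, populate every intermediate atom along the path with \emph{exogenous} tuples that act as ``wires'': they realize all and only the variable equalities implied by the path, so that a valuation extending an endogenous $S_i$- and $S_j$-choice is feasible iff those two tuples share the interface constant that the path threads between them. All atoms of $q$ not lying on any $P_{ij}$ and not in $\mathcal{T}$ are exogenous (by the domination normalization, or we simply make them exogenous and argue later); fill them with a single ``tautological'' tuple of constants that is compatible with any valuation of the triad atoms.

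The correctness argument will proceed by showing a bijection $W \leftrightarrow W'$ between witnesses of $q_\triangle$ over $D_\triangle$ and witnesses of $q$ over $D$, induced by the mapping of triad atoms to $\{R,S,T\}$. Because every path $P_{ij}$ avoids the variables of $S_k$, the three paths are variable-disjoint modulo the triad endpoints, so the wiring tuples for $P_{01}$, $P_{02}$, $P_{12}$ do not accidentally compose into extra witnesses; this is the key reason the triad definition is the right one. Since all intermediate and padding tuples are exogenous, any minimum contingency set for $D$ consists only of tuples from $S_0,S_1,S_2$, and these correspond, via the bijection, one-to-one with tuples of $R,S,T$ in $D_\triangle$. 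Hence $\rho(D,q)=\rho(D_\triangle,q_\triangle)$, and $(D_\triangle,k)\in\res(q_\triangle)$ iff $(D,k)\in\res(q)$.

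The main obstacle I expect is the no-spurious-witnesses part: guaranteeing that a valuation of the existential variables of $q$ that is realized by the database must decompose into (a) a genuine triangle-witness on the triad and (b) the canonical wiring across each $P_{ij}$, with no alternative decomposition that uses wiring tuples in an unintended way or that glues paths through a variable they share with a different triad atom. The triad condition (each $P_{ij}$ uses no variable of $S_k$) is exactly what rules out such cross-talk between paths, but turning this into a clean invariant requires using fresh, path-local ``color'' constants in the wiring tuples so that distinct paths cannot be confused, and verifying that atoms occurring in multiple paths (e.g., atoms shared by $P_{01}$ and $P_{02}$ at the $S_0$ endpoint) are consistently instantiated. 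Once this rigidity is in place, the bijection and the equality of optimal resilience values follow, yielding \np-hardness of $\res(q)$.
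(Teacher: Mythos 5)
Your high-level strategy (a reduction $\res(q_\triangle)\leq\res(q)$ built around the three triad paths, with the path property used to rule out cross-talk between witnesses) is the same as the paper's, and the \np-membership remark is fine. However, the concrete construction has a genuine gap in how it treats the non-triad atoms. You propose to make every atom off the paths, and implicitly every ``wire'' atom on the paths, exogenous --- ``by the domination normalization, or we simply make them exogenous and argue later.'' Neither option is available: the reduction must target $\res(q)$ for the query $q$ as given, and the normalization only makes \emph{dominated} atoms exogenous; a non-triad atom (on or off your chosen paths) can perfectly well be endogenous and undominated. Once such an atom is endogenous, populating it with a single ``tautological'' tuple, or with one wire tuple per interface constant, is fatal: deleting that one cheap tuple falsifies $q$ (or kills all witnesses through a given interface value) at cost far below the intended triangle-side optimum, so $\rho(D,q)$ no longer equals $\rho(D_\triangle,q_\triangle)$ and the equivalence $(D_\triangle,k)\in\res(q_\triangle)\Leftrightarrow(D,k)\in\res(q)$ fails. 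A second, related problem is that an off-path atom may share variables with the triad atoms, in which case no single constant tuple can be ``compatible with any valuation''; such atoms must receive one tuple per witness, not one tuple overall.

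The paper's proof avoids both issues by a uniform construction rather than path-local wiring: it partitions $\var(q)$ into classes (four classes when $\var(S_0),\var(S_1),\var(S_2)$ are pairwise disjoint, seven classes otherwise) and, for \emph{every} non-triad atom, inserts one tuple per triangle witness $(a,b,c)$ of $D_\triangle$, with each variable receiving the value $\angle{ab}$, $\angle{bc}$, $\angle{ca}$, $\angle{abc}$, $a$, $b$, or $c$ determined by its class, while $S_0,S_1,S_2$ mirror $R,S,T$. The triad paths are then used exactly where you expected --- to force the shared components to agree, so witnesses of $(D',q)$ are in bijection with triangles --- but the resilience-preservation step is handled by a replacement argument: any non-triad tuple in a contingency set contains a component such as $\angle{ab}$ or $\angle{abc}$ and can be swapped for the corresponding $S_0$-tuple, which removes at least as many witnesses (here the hypothesis that all dominated atoms are exogenous is what guarantees no endogenous atom lives entirely inside $\var(S_i)$, so this swap is always possible). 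Your proposal is missing precisely this mechanism; to repair it you would have to abandon the single-tuple/exogenous-wire treatment of non-triad atoms and populate them per witness with values that make them dominated ``in the instance,'' which is essentially the paper's construction.
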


\subsection{Linear queries}

A query $q$ is \emph{linear} if its atoms can be
arranged in a linear order 
s.t.\ each variable occurs in a contiguous sequence of atoms.
Geometrically, a query is linear if all of the vertices of its hypergraph can be drawn along a straight
line and all of its hyperedges can be drawn as convex regions (thus the variables form intervals on a line of relations).
For example $\lin$ is linear (see \cref{linear fig}).

It was shown in \cite{MeliouGMS11} that 
for any sj-free CQ that is linear, 
$\res(q)$ may be computed in a natural way using network flow.
Thus all such queries are easy.

If all sj-free CQs without a triad were linear, then this would complete the dichotomy theorem for
resilience.  While this is not the case, we completed the proof of \Cref{resilience dichotomy thm},
by showing that \emph{every triad-free sj-free CQ may be transformed to a linear query of equivalent
  resilience}.

\subsection{Dichotomy Theorem}

{Now we can present the full characterization of the complexity of sj-free CQs proved in }~\cite{FreireGIM15}.

\begin{theorem}[ {Dichotomy of resilience for sj-free CQs}~\cite{FreireGIM15}]\label{resilience dichotomy thm}
Let $q$ be an sj-free CQ and let $q'$ be the result of 
making all ``dominated'' atoms exogenous.
If $q'$ has a triad, then $\res(q)$ is \np-complete, otherwise
it is in PTIME.
\end{theorem}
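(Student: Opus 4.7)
The theorem is a disjunction of two directions, so I would separate them and invoke the machinery already assembled earlier in the paper. First I would normalize: by \cref{fact: domination does not change complexity}, making all dominated atoms exogenous yields a query $q'$ with $\res(q)\equiv\res(q')$, so it is harmless to work with $q'$ throughout. From here the theorem splits naturally into a hardness direction and an easiness direction.

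For the hardness direction, the work is already done: if $q'$ has a triad, then \cref{hard part dichotomy} gives that $\res(q')$ is \np-complete, and the equivalence $\res(q)\equiv\res(q')$ transports the hardness back to $q$. Membership in \np\ is immediate since a contingency set of size at most $k$ is a polynomial-size certificate that can be verified by evaluating $q$ on $D-\Gamma$. So this direction is essentially a one-line bookkeeping argument on top of the lemma.

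For the PTIME direction, the plan is a two-step reduction. Step one: if $q'$ is triad-free, show that it can be transformed into a \emph{linear} sj-free CQ $q''$ such that $\res(q')\equiv\res(q'')$. Step two: invoke the prior result of \cite{MeliouGMS11} that for any linear sj-free CQ, $\res$ can be computed via a natural network flow construction, hence in PTIME. Since the reductions are equivalences, this places $\res(q)$ in PTIME. The paper explicitly flags step one as the novel component of the dichotomy proof, so I would organize the argument around a structural analysis of triad-free hypergraphs $\mathcal{H}(q')$: identify obstructions to linearity, and show that any such obstruction either produces a triad (contradicting the hypothesis) or can be resolved by a linearity-preserving transformation that does not change $\res$. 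Likely the transformation is built by repeatedly merging/reordering atoms that share a variable pattern forcing them to lie ``between'' the same pair of atoms, so that every variable eventually occupies a contiguous interval along some linear order on the atoms.

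The main obstacle I expect is step one of the PTIME direction, namely the equivalence-preserving transformation to a linear query. The subtlety is twofold: (i) one must argue that triad-freeness is strong enough to rule out every local non-linear configuration in $\mathcal{H}(q')$, which requires a clean combinatorial characterization of what triad-freeness buys beyond the obvious; and (ii) each rewriting step must preserve $\res$ up to inter-reducibility, so one has to exhibit mutual reductions between the resilience instances of the old and new queries, typically by padding databases with exogenous tuples that simulate the identified/merged atoms. Everything else---the equivalence of complexity under domination, NP membership, and the flow-based algorithm for linear queries---is inherited from earlier results and requires only citation and careful assembly.
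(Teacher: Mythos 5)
Your decomposition matches the paper's own treatment of this result exactly: normalize via \cref{fact: domination does not change complexity}, get hardness from triads via \cref{hard part dichotomy}, and get the PTIME direction by transforming every triad-free sj-free CQ into a linear query of equivalent resilience and then invoking the network-flow algorithm of \cite{MeliouGMS11}, which is precisely how the paper (citing \cite{FreireGIM15}) describes the proof. Your admission that the linearization step is the genuinely hard, novel component is also accurate; the paper defers that argument to the prior work rather than reproving it here.
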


\section{Self-joins change everything}\label{sec:themess}

\begin{figure*}[h]
\begin{subfigure}[b]{.25\linewidth}
	\centering
	\begin{tikzpicture}[ scale=.25]
	\draw (-5,0) node {$R$};
	\draw (0,0) node {$S$};
	\draw (5,0) node {$R$}; 
	\draw (-5,2) node {\color{red}{  $x$}};
	\draw (5,2) node {\color{blue}{  $y$}};
	\draw[color=red] (-2.5,0) ellipse (4 and 1.5);
	\draw[color=blue] (2.5,0) ellipse (4 and 1.5);
	\end{tikzpicture}
\caption{Hypergraph for $\vc$}
\label{fig: vc hypergraph}
\end{subfigure}
~
\begin{subfigure}[b]{.2\linewidth}
	\centering
\begin{tikzpicture}[scale=.35,
				every circle node/.style={fill=white, minimum size=7mm, inner sep=0, draw}]
	\node (1) at (0,0) []  {\color{red}{  $x$}};
	\node (2) at (3,0) []  {\color{blue}{  $y$}};
	
	\path[->, line width=1pt, auto]
		(1) 	edge[color=black] node {$S$}	(2)
		(1) 	edge[color=black, out=120, in=60, distance=1.5cm] node {$R$}	(1)
		(2) 	edge[color=black, out=120, in=60, distance=1.5cm] node {$R$}	(2)
		;

\end{tikzpicture}
\caption{Binary graph for $\vc$}
\end{subfigure}
~
\qquad
~
\begin{subfigure}[b]{.25\linewidth}
	\centering
	\begin{tikzpicture}[ scale=.25]
	\draw (-3,0) node {$R$};
	\draw (3,0) node {$ R$};  
	\draw (1,-1) node {\color{dg}{ $z$}};
	\draw (-5,-1) node {\color{red}{  $x$}};
	\draw (0,1) node {\color{blue}{  $y$}};
	\draw[color=red] (-3,0) ellipse (1.5 and 1.5);
	\draw[color=dg] (3,0) ellipse (1.5 and 1.5);
	\draw[color=blue] (0,0) ellipse (8 and 2);
	\end{tikzpicture}
\caption{Hypergraph for $\chain$}
\end{subfigure}
~
\begin{subfigure}[b]{.2\linewidth}
	\centering
	\begin{tikzpicture}[scale=.35,every circle node/.style={fill=white, minimum size=7mm, inner sep=0, draw}]
	\node (x) at (-3,0) [] {\color{red}{$x$}};
	\node (y) at (0,0) [] {\color{blue}{$y$}};
	\node (z) at (3,0) [] {\color{dg}{$z$}};

	\draw[transform canvas={yshift=+0.0ex},->, line width=1pt](x) to node[above]{$R$} (y);
	\draw[transform canvas={yshift=+0.0ex},->, line width=1pt](y) to node[above]{$R$} (z);
	\end{tikzpicture}
\caption{Binary graph for $\chain$}
\end{subfigure}
~
\caption{\emph{Hypergraphs} only represent which variables occur in a given atom, whereas \emph{binary graphs} represent
containment and position within each atom. Both concepts are illustrated here for two basic hard CQs with self-joins: $\chain$ and $\vc$. }\label{fig: binary vs hypergraph}
\end{figure*}
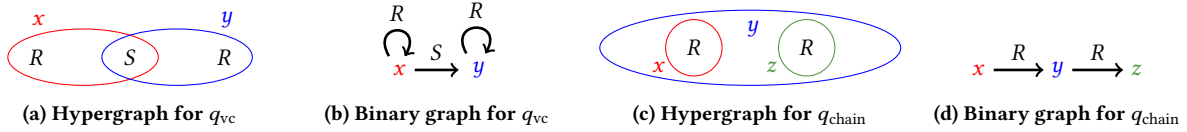

Queries with self-joins are far more complicated than sj-free queries for at least 4 reasons: 
(1) For the sj-free case, triads alone were shown to determine hardness.
Triads need at least 3 existential variables and at least 3 subgoals.
\Cref{sec:newHardQueries} shows that \emph{already 2 atoms or 2 variables} can be enough for hardness;
(2) Linear sj-free queries can be solved using a natural reduction to network flow. For self-join
queries, \emph{linear queries can be hard}.  
Furthermore, \Cref{sec: newEasyQueries} shows that 
we may need \emph{more elaborate reductions to network flow},
even when they are easy.
 {(3) The previous definition of domination does not lead to the desired properties in the presence of self-joins.}
\Cref{sec: domination is broken}  {explains why dominated atoms may still be relevant when computing the minimum contingency set.}
(4) Our previous crucial concept of the dual hypergraph is no longer sufficient to characterize
queries when relations appear multiple times.  The \emph{position at which a variable appears} in a
subgoal may influence the complexity of resilience, including whether an atom has \emph{repeated variables},
e.g., ``$R(x,y), R(y,y)$.''

{In the cases where the variable position is relevant and we are restricted to binary queries, we naturally represent
queries as labeled direct graphs.} This representation captures
all relevant structural information of the binary queries, especially the relative position of variables,
which the hypergraph representation does not reflect.

\begin{definition}[Binary graph]\label{def: binary graph}
Let $q \datarule A_1,$ $\ldots, A_m$ be a binary CQ.
Its \emph{binary graph} has vertex set $V=\var(q)$ and labeled edge sets defined by atoms $ A_1,$ $\ldots, A_m$,
i.e.\ atom $A(x,y)$ translates into 
 {labeled edge~$x \xrightarrow{A} y$.}
For unary atoms, the edge will be a loop.
\end{definition}

\subsection{Basic hard queries: $\vc$ and $\chain$}\label{sec:newHardQueries}

We start by
proving hardness for two queries that will play an important role in our later results.
The first $\vc$ (for ``vertex cover'') has only 2 variables and 3 atoms.
The second $\chain$ (since it ``chains'' two binary relations together) has only 2 atoms and 3 variables:
\begin{align*}
	\vc 	&\datarule R(x), S(x,y), R(y) 	&\textrm{(Vertex cover)}\\
	\chain 	&\datarule R(x,y),R(y,z) 	&\textrm{(Chain query)}
\end{align*}
\Cref{fig: binary vs hypergraph}  {shows graphical representations of both queries
while illustrating the differences between the \emph{dual hypergraph} and the \emph{binary graph} of a binary CQ.}

Recall that 
in the sj-free case, a query needs a triad to be hard and all linear queries are easy.
In particular, an sj-free query must have at least 3 variables and 3 atoms to be hard.

\begin{proposition}[$\vc$]\label{prop:hardness vc}
$\res(\vc)$ is \np-complete.
\end{proposition}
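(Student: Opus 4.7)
Membership in \textsf{NP} is immediate: given $(D,k)$, guess a set $\Gamma$ of at most $k$ endogenous tuples and verify $D-\Gamma\not\models \vc$ in polynomial time. The real content is \textsf{NP}-hardness, which I would establish by a reduction from \textsc{Vertex Cover}, as the name $\vc$ suggests. Given an instance $(G=(V,E),k)$ of \textsc{Vertex Cover} (we may assume $G$ is simple and, by orienting each edge arbitrarily, directed for our purposes), I would construct a database $D$ as follows: for every vertex $v\in V$ insert a tuple $R(v)$, and for every edge $\{u,v\}\in E$ insert both $S(u,v)$ and $S(v,u)$. The intended correspondence is that a contingency set made of $R$-tuples is exactly a vertex cover of $G$.

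The forward direction is clean: if $C\subseteq V$ is a vertex cover of size at most $k$, let $\Gamma=\{R(v)\mid v\in C\}$. For any would-be witness $(a,b)$ of $\vc$ in $D-\Gamma$ we need $S(a,b)\in D$, so $\{a,b\}\in E$, and then by the vertex cover property $a\in C$ or $b\in C$, so $R(a)$ or $R(b)$ is missing from $D-\Gamma$, contradiction. Hence $(D,k)\in\res(\vc)$.

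The reverse direction is where I expect the only real work. Since $S$ is not marked exogenous in $\vc$, a contingency set $\Gamma$ may contain $S$-tuples as well as $R$-tuples, and I need to convert it to one consisting only of $R$-tuples without increasing its size. The key observation is a local exchange argument: if $S(u,v)\in\Gamma$, then replacing $S(u,v)$ in $\Gamma$ by, say, $R(u)$ still kills every witness that $S(u,v)$ killed (namely only the witness $(u,v)$, which is now killed because $R(u)$ is gone), and trivially preserves the killing of all other witnesses. Hence the resulting set is still a contingency set and has the same size. Iterating this swap yields a contingency set $\Gamma'\subseteq R^D$ with $|\Gamma'|\le|\Gamma|\le k$. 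Then $C=\{v\mid R(v)\in\Gamma'\}$ has size at most $k$ and is a vertex cover: for every edge $\{u,v\}\in E$ the tuple $S(u,v)$ is still in $D-\Gamma'$, so killing the witness $(u,v)$ forces $R(u)\in\Gamma'$ or $R(v)\in\Gamma'$, i.e.\ $u\in C$ or $v\in C$.

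The main obstacle, as indicated, is handling the endogenous $S$ atom: one needs to justify that $S$-deletions are never strictly cheaper than $R$-deletions. The above swap argument suffices, but one must be slightly careful that the swap does not accidentally ``uncover'' other witnesses; this is immediate here because $S(u,v)$ participates in the unique witness $(u,v)$, while the replacement $R(u)$ (or $R(v)$) only removes further witnesses, never creates them.
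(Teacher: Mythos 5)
Your proposal is correct and follows essentially the same route as the paper: a reduction from \textsc{Vertex Cover} in which each vertex $v$ becomes a tuple $R(v)$ and each edge becomes an $S$-tuple, so that minimum contingency sets correspond to minimum vertex covers. Your additional touches (duplicating each edge as $S(u,v)$ and $S(v,u)$, and the explicit swap argument replacing $S$-deletions by $R$-deletions) only spell out details the paper leaves implicit; they do not change the approach.
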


\begin{proposition}[$\chain$]\label{chain is npc} 
	\quad $\res(\chain)$ is NP-complete.
\end{proposition}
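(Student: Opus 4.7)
The plan is to handle membership and hardness separately. Membership $\res(\chain) \in \NP$ is immediate: a certificate for $(D, k)$ is a set $\Gamma$ of at most $k$ endogenous tuples, and one can verify $D - \Gamma \not\models \chain$ in polynomial time by checking that no two tuples in $R^{D - \Gamma}$ share a common middle element.

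For hardness, the key step is a purely graph-theoretic reading of $\chain$. View $R^D$ as the edge set of a directed graph $G_D$ on $\dom(D)$. Then $D \models \chain$ iff $G_D$ contains a directed length-two path, i.e., some vertex $y$ has both positive in-degree and positive out-degree (a self-loop $R(v,v)$ counting as its own length-two path with $x = y = z = v$). Consequently, a set $\Gamma$ is a valid contingency set iff in $G_D - \Gamma$ every vertex is either a \emph{pure source} (out-edges only) or a \emph{pure sink} (in-edges only); equivalently, the surviving edges form a directed cut induced by some bipartition $V = V_S \cup V_T$ (isolated vertices may be placed on either side). This yields
\[
\rho(D, \chain) \;=\; |R^D| \;-\; \mathrm{MaxDiCut}(G_D),
\]
once self-loops of $G_D$, which must always be in $\Gamma$, have been accounted for.

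I would then reduce from \textsc{Max-DiCut}, which is NP-hard on simple loopless directed graphs (e.g.\ via the textbook reduction from \textsc{Max-Cut} that replaces each undirected edge with a pair of antiparallel arcs). Given $(G, k')$ with $G = (V, E)$ loopless, build $D$ with $R^D := \{R(u,v) : (u,v) \in E\}$ (all tuples endogenous) and set $k := |E| - k'$. By the displayed identity, $(D, k) \in \res(\chain)$ iff $G$ admits a bipartition with at least $k'$ forward edges; this is a PTIME reduction, so NP-hardness follows.

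The main obstacle, and the single step I would write most carefully, is establishing the correspondence between valid contingency sets and directed cuts: that ``no surviving length-two path'' is \emph{equivalent} to ``surviving edges all cross some bipartition from $V_S$ to $V_T$.'' The subtle points are placing isolated vertices on either side of the cut without changing the edge count, and treating self-loops as forced deletions so that the constructed instance (loopless) behaves cleanly. Once that equivalence is nailed down, NP-completeness transfers immediately from \textsc{Max-DiCut}.
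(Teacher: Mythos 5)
Your proof is correct, but it takes a genuinely different route from the paper. You observe that for $\chain$ a set $\Gamma\subseteq R^D$ is a contingency set iff in the surviving digraph no vertex has both positive in-degree and positive out-degree, which is exactly saying the surviving arcs are the forward arcs of some bipartition; this gives the exact identity $\rho(D,\chain)=|R^D|-\mathrm{MaxDiCut}(G_D)$ (after forcing self-loops into $\Gamma$), and hardness then transfers from \textsc{Max-DiCut}, itself NP-hard via the antiparallel-arcs reduction from \textsc{Max-Cut}. I checked the crux equivalence and it holds, including the degenerate witnesses ($x=z$ from antiparallel pairs and $x=y=z$ from self-loops); the only loose end is the definitional requirement that $(D,k)\in\res(q)$ presupposes $D\models q$, so you should note that instances whose digraph has no directed $2$-path are detected and handled trivially in PTIME (the composed instances from \textsc{Max-Cut} always contain antiparallel arcs, hence a $2$-path, whenever $E\neq\emptyset$), a cosmetic patch rather than a gap. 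The paper instead gives a direct gadget reduction from 3\textsc{SAT}, with variable gadgets that are cycles of length $2m$ and nine-node clause gadgets whose minimum covers cost $5$ or $6$ depending on satisfaction, and budget $k_\psi=(2n+5)m$. Your approach buys a tight, two-line structural characterization (an exact equivalence with \textsc{Max-DiCut}, not just a one-way reduction), which is shorter and makes NP-membership and the optimization picture transparent; the paper's approach buys reusable machinery: its variable/clause gadgets are adapted for several later hardness proofs (the unary expansions $\achain$, $\bchain$, $\acchain$, the self-join variations of $\rats$ and $\brats$), where your clean cut characterization no longer holds once unary atoms become deletable alternatives.
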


\subsection{SJ-Free domination no longer works}\label{sec: domination is broken}

We saw from \cref{fact: domination does not change complexity}
that in sj-free CQs, making all dominated atoms exogenous leaves the query resilience unchanged.
In the presence of self-joins, this is no longer true.

\begin{example}\label{ex: domination does not work}
Query $\rats^{\textrm{sj}_1} \datarule A(x),R(x,y),R(y,z),R(z,x)$ is a self-join variation of $\rats$ with $S,T$ replaced by $R$'s.
{Similar to $\rats$, we have $\var(A) \subseteq \var(R(x,y))$, so $A$ dominates $R$ by} \cref{sj-free domination}. 
 {}{Thus $R$ should become exogenous when searching for the minimal contingency set. }
But this is not the case. Consider the database instance
$$
D = \set{A(1), A(5), R(1,2), R(2,3), R(3,1), R(5,1), R(2,5)}
$$
Our query has 3 witnesses over this database: $(1,2,3)$, $(1,2,5)$, and $(5,1,2)$.
{If $R$ was made exogenous, the only possible minimum contingency set would be $\Gamma = \{A(1), A(5)\}$.
However, if $R$ is considered as endogenous, there is a smaller contingency set, with only $R(1,2)$.}

\end{example}

This example shows that domination as defined in \cref{sj-free domination} no longer implies  {that} a relation can be made exogenous
in the self-join case. This immediately raises the question of whether there is a set of conditions 
 {which implies that}
a relation can be made 
exogenous in the self-join setting, i.e. if there is a self-join version of domination. 
Additionally, does $\rats^{\textrm{sj}_1}$ have a triad?
The answer to both is yes, as we will see in \cref{{sec:domination}} and \cref{sec: sj rats and brats}, respectively.

\subsection{Easy queries that use flow in a trickier way}\label{sec: newEasyQueries}

As mentioned in the discussion of \Cref{resilience dichotomy thm}, resilience for linear sj-free CQ
can be computed directly from network flow.
As we have just seen, in the presence of self-joins, some linear
queries are hard.  For those that are easy,  network flow can still
help us compute resilience, but the arguments become trickier. 

The following queries are two such examples, where modified versions of network flow are used 
to show resilience is easy in these cases.
\begin{align*}
q_{\textrm{conf}}^{AC} &\datarule A(x),R(x,y),R(z,y), C(z)\\ 
q_{\textrm{3perm-R}}^{A} &\datarule A(x), R(x,y), R(y,z), R(z,y) 
\end{align*}

\begin{figure}
\begin{subfigure}[b]{.45\linewidth}
\centering
\begin{tikzpicture}[scale=.35,
			every circle node/.style={fill=white, minimum size=7mm, inner sep=0, draw}]
\node (1) at (0,0) [] {$x$};
\node (2) at (3,0) [] {$y$};
\node (3) at (6,0) [] {$z$};

\draw[transform canvas={yshift=+0.0ex},->, line width=1pt, out=120, in=60, distance=1.5cm](1) to node[above]{$A$} (1);
\draw[transform canvas={yshift=+0.0ex},->, line width=1pt, out=120, in=60, distance=1.5cm](3) to node[above]{$C$} (3);

\path[->, line width=1pt, auto]
	(1) 	edge[color=black] node {$R$}	(2)
	(3) 	edge[color=black] node[above] {$R$}	(2)	
	;
\end{tikzpicture}
\caption{$q_{\textrm{conf}}^{AC}$}
\end{subfigure}
~	
\begin{subfigure}[b]{.45\linewidth}
\centering
\begin{tikzpicture}[scale=.35,every circle node/.style={fill=white, minimum size=7mm, inner sep=0, draw}]
\node (x) at (0,0) [] {$x$};
\node (y) at (3,0) [] {$y$};
\node (z) at (6,0) [] {$z$};

\draw[transform canvas={yshift=+0.5ex},->, line width=1pt](y) to node[above]{$R$} (z);
\draw[transform canvas={yshift=-0.5ex},->, line width=1pt](z) to node[below]{$R$} (y);
\draw[transform canvas={yshift=+0.0ex},->, line width=1pt, out=120, in=60, distance=1.5cm](x) to node[above]{$A$} (x);
\draw[transform canvas={yshift=+0.0ex},->, line width=1pt](x) to node[above]{$R$} (y);
\end{tikzpicture}
\caption{$q_{\textrm{3perm-R}}^{A}$}
\label{fig: easy ks}
\end{subfigure}
\caption{Two example of PTIME queries that require modified version of network flow.
Notice that $q_{\textrm{3-perm-R}}^{A}$ contains the hard query $\chain$ 
and is still in P.}
\end{figure}
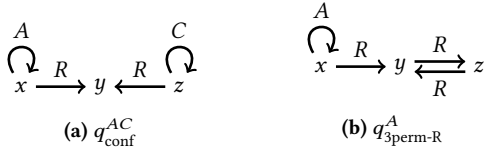

\begin{proposition}[$q_{\textrm{conf}}^{AC}$]\label{2conf is easy}
$\res(q_{\textrm{conf}}^{AC})$ is in \p.
\end{proposition}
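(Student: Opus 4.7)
The plan is to reduce $\res(q_{\textrm{conf}}^{AC})$ to minimum $s$--$t$ cut in a directed flow network built from the input database, and then invoke max-flow/min-cut to obtain a polynomial-time algorithm. The new subtlety, absent from the self-join-free setting, is that a single tuple $R(a,b)$ can simultaneously play both the ``left'' role (as $R(x,y)$) and the ``right'' role (as $R(z,y)$) in a witness---precisely in witnesses of shape $(a,b,a)$. A naïve network with two independent capacity-$1$ edges per $R$-tuple would overcharge the deletion cost of such a tuple, so the construction must force both roles through a single unit of capacity.

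I would build the network $N$ as follows. Introduce source $s$, sink $t$, and three nodes $a_v, m_v, c_v$ per $v \in \dom(D)$. For each $A(v)$ add $s \to a_v$ with capacity $1$ (endogenous) or $\infty$ (exogenous); symmetrically $c_v \to t$ for each $C(v)$. For each $R(a,b)$ insert a \emph{tuple gadget}: a node $r_{a,b}$ split into $r^{\mathrm{in}}_{a,b} \to r^{\mathrm{out}}_{a,b}$ by a single edge of capacity $1$ (or $\infty$ if exogenous), together with infinite-capacity edges $a_a \to r^{\mathrm{in}}_{a,b}$, $m_b \to r^{\mathrm{in}}_{a,b}$, $r^{\mathrm{out}}_{a,b} \to m_b$, and $r^{\mathrm{out}}_{a,b} \to c_a$. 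The crucial design point is that both uses of $R(a,b)$ must cross the same capacity-$1$ bottleneck, so a single unit of cut there corresponds to deleting one tuple and simultaneously kills both roles.

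The first step is to characterize simple $s$--$t$ paths in $N$. By inspecting admissible out-edges layer by layer, every simple $s$--$t$ path has one of two shapes: $s \to a_x \to r_{x,y} \to c_x \to t$, corresponding to the witness $(x,y,x)$ that uses a single $R$-tuple for both atoms; or $s \to a_x \to r_{x,y} \to m_y \to r_{z,y} \to c_z \to t$, corresponding to a witness $(x,y,z)$ that uses two distinct $R$-tuples. Conversely, every witness yields at least one such simple path. Since the only capacity-$1$ elements in $N$ correspond to endogenous $A$-, $C$-, and $R$-tuples, every minimum $s$--$t$ cut is supported on such elements, and minimum cuts are in bijection with minimum hitting sets of witnesses. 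Hence $\rho(D, q_{\textrm{conf}}^{AC})$ equals the value of the minimum cut, which by max-flow/min-cut equals the maximum flow value, computable in polynomial time.

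The main obstacle will be rigorously justifying that the tuple gadget neither over- nor under-counts. In particular, I would need to verify that ``round-trip'' moves of the form $m_y \to r^{\mathrm{in}}_{a,y} \to r^{\mathrm{out}}_{a,y} \to m_y$ cannot route flow units without a corresponding witness interpretation, and that witnesses of shape $(a,b,a)$ consume exactly one unit of capacity at the shared tuple node rather than two. Both points follow from the simple-path characterization above together with a standard flow-decomposition of an integral max flow into $s$--$t$ paths and cycles; the integrality of the capacity-$1$ bottlenecks ensures that the max-flow value is achieved by flows decomposing cleanly into the two allowed path shapes, each of which is a genuine witness.
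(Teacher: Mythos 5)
Your construction is correct, but it takes a genuinely different route from the paper. The paper's proof never builds a gadget for the shared $R$-tuple: it first shows, by an exchange argument exploiting the confluence structure, that no minimum contingency set ever needs an $R$-tuple --- if replacing $R(1,2)$ by $A(1)$ fails and replacing it by $C(1)$ also fails, the two surviving witnesses $A(i),R(i,2),R(1,2),C(1)$ and $A(1),R(1,2),R(j,2),C(j)$ splice into the witness $A(i),R(i,2),R(j,2),C(j)$, which survives the original contingency set, a contradiction. Having made $R$ effectively exogenous, the problem collapses to vertex cover in a bipartite graph over $A$- and $C$-tuples, which is in \p. Your proof instead keeps $R$-tuples as first-class deletion candidates and encodes the self-join directly: the node-split bottleneck $r^{\mathrm{in}}_{a,b}\to r^{\mathrm{out}}_{a,b}$ makes both roles of $R(a,b)$ share one unit of capacity, and the simple-path characterization (paths of the two shapes correspond exactly to witnesses, with the unit edges on a path being exactly the witness's endogenous tuples) gives min cut $=$ minimum contingency set in both directions; the cut-based argument alone suffices, so the flow-decomposition worries you raise are not really needed. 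What each approach buys: the paper's exchange lemma yields the structural insight that $R$ is implicitly exogenous in this query, which feeds directly into its later generalization (\cref{prop: 2conf}, where duplicated $R$-edges plus a minimal-cut lemma show standard flow works for any linear query with a single $2$-confluence); your gadget avoids the case analysis and tackles head-on the difficulty the paper flags in the proof of \cref{AR perm} --- that standard flow cannot express ``these two edges are the same tuple'' --- by forcing both occurrences through one edge, which works here because a single tuple serves both atoms only in witnesses of the shape $(a,b,a)$. Two minor points to tidy up: state explicitly that a finite minimum cut contains no $\infty$-capacity edges (standard, but it is the step that lets you read the cut back as a set of endogenous tuples), and note that for the query as stated all three relations are endogenous, so the degenerate case of an all-exogenous witness does not arise.
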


\begin{proposition}[$q_{\textrm{3perm-R}}^{A}$]\label{AR perm}
$\res(q_{\textrm{3perm-R}}^{A})$ is in \p.
\end{proposition}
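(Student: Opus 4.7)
The plan is to show $\res(q_{\textrm{3perm-R}}^{A}) \in \p$ by reducing the resilience computation to a minimum $s$-$t$ cut in a carefully designed flow network $N(D)$. The useful starting reformulation is to read the query as $A(x) \wedge R(x,y) \wedge B(y)$, where $B(y) \equiv \exists z\,(R(y,z) \wedge R(z,y))$ is a virtual unary predicate asserting that $y$ lies on a 2-cycle of $R$. A contingency set must, for every surviving witness $(x,y,z)$, either delete the $A(x)$ tuple, delete the $R(x,y)$ tuple, or destroy every 2-cycle at $y$ so that $B(y)$ becomes false. The pattern $A(x), R(x,y), B(y)$ is exactly the linear sj-free pattern, and that is what makes a modified flow argument possible even though the query contains $\chain$ as a subquery.

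The network $N(D)$ has a source $s$, a sink $t$, and layered value-nodes: for each $v \in \dom(D)$, a copy $v_x$ ($v$ as $x$), two copies $v_y^{(1)}, v_y^{(2)}$ ($v$ as $y$ before and after the 2-cycle), and a copy $v_z$. Each $A$-tuple $A(a)$ contributes a capacity-$1$ node out of $s$ into $a_x$. Each $R$-tuple $R(a,b)$ contributes a single capacity-$1$ node $\rho_{a,b}$ with role-specific entries and exits: as $R(x,y)$ via $a_x \to \rho_{a,b} \to b_y^{(1)}$, as $R(y,z)$ via $a_y^{(1)} \to \rho_{a,b} \to b_z$ (present only if $R(b,a) \in D$), and as $R(z,y)$ via $a_z \to \rho_{a,b} \to b_y^{(2)}$ (again present only if $R(b,a) \in D$); finally $v_y^{(2)} \to t$. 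A witness $(x,y,z)$ then induces the walk $s \to \alpha_x \to x_x \to \rho_{x,y} \to y_y^{(1)} \to \rho_{y,z} \to z_z \to \rho_{z,y} \to y_y^{(2)} \to t$ through exactly its four tuple nodes. The easy direction min-cut $\geq \rho(D,q)$ follows because any $s$-$t$ cut must hit each such walk and is therefore a contingency set of equal or smaller size.

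The main obstacle is the reverse inequality min-cut $\leq \rho$. The subtlety is the \emph{double-duty} phenomenon: a single tuple $R(c,v)$ can play the role of $R(x,y)$ in one witness (when $A(c)$ holds) and the role of $R(z,y)$ in a 2-cycle at $v$ (when $\{c,v\}$ is a 2-cycle), so the naive construction may admit ``false'' $s$-$t$ walks that route through one $\rho_{a,b}$ by entering in one role and exiting in another, with no corresponding actual witness. I plan to overcome this via a local-exchange argument starting from any optimal contingency set $\Gamma$: whenever $\Gamma$ deletes a back-edge tuple whose removal does not already disconnect $N(D)$, swap it for an equivalently costed forward deletion (either the corresponding $A$-tuple, an $R(x,y)$ tuple, or the 2-cycle partner tuple) that does, without increasing $|\Gamma|$. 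The feasibility of each swap rests on the bipartite matching between $\{R(y,c)\}$ and $\{R(c,y)\}$ tuples inside the 2-cycle gadget at $y$, where König-type integrality guarantees an exchange of equal cost. Once the correspondence min-cut $= \rho(D, q_{\textrm{3perm-R}}^{A})$ is established, membership in \p{} follows from a polynomial-time min $s$-$t$ cut computation on $N(D)$, whose size is polynomial in $|D|$.
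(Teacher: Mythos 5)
Your reformulation of the query as $A(x),R(x,y),B(y)$ with $B(y)\equiv\exists z\,(R(y,z)\wedge R(z,y))$ is a reasonable way to think about the problem, and your easy direction (every cut of $N(D)$ yields a contingency set of no larger size) is fine. The gap is in the reverse direction, and it is not merely an unfinished argument: the network you define does not in general satisfy $\mathrm{min\text{-}cut}(N(D))=\rho(D,q_{\textrm{3perm-R}}^{A})$, so no exchange argument can close it. Take $D=\{A(1),A(2),R(1,2),R(2,1)\}$. The only witnesses are $(1,2,1)$ and $(2,1,2)$, and deleting the single tuple $R(1,2)$ falsifies the query, so $\rho=1$. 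In your network, the role-mixing you call ``double duty'' produces the spurious path $s\to\alpha_1\to 1_x\to\rho_{1,2}\to 2_y^{(2)}\to t$ (enter $\rho_{1,2}$ through its $R(x,y)$ entry, leave through its $R(z,y)$ exit, which is present because $R(2,1)\in D$), and symmetrically $s\to\alpha_2\to 2_x\to\rho_{2,1}\to 1_y^{(2)}\to t$. These two paths pass through disjoint sets of capacity-one nodes, $\{\alpha_1,\rho_{1,2}\}$ versus $\{\alpha_2,\rho_{2,1}\}$, so every cut has size at least $2$ while $\rho=1$. In effect the spurious path forces the witness $(2,1,2)$ to be killed only by $A(2)$ or $R(2,1)$, whereas the true optimum kills it with the shared tuple $R(1,2)$. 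Consequently the local-exchange/K\"onig step cannot succeed: it would have to turn $\Gamma=\{R(1,2)\}$ into an equal-size cut, and no single node disconnects $N(D)$.

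The paper sidesteps exactly this trap by not giving each tuple a single node that serves several roles. It first observes that one-way tuples ($R(a,b)$ with $R(b,a)\notin D$) are never needed, since choosing the corresponding $A$-tuple is always at least as good, and then introduces one unit-capacity element per \emph{unordered two-way pair} $\{a,b\}$ (plus unit elements for the $A$-tuples and infinite edges recording which values reach which pairs). The decision of which of $R(a,b)$ or $R(b,a)$ to actually delete is made only after the min-cut is computed, by a rule that depends on which $A$-tuples survive, and correctness is a short case analysis on the two possible witness shapes. If you want to rescue your layered construction, you would need to forbid a path from entering a tuple node through one role's entry and leaving through another role's exit while still charging both roles to the same unit of capacity; that shared-capacity constraint is precisely what standard network flow cannot express, which is why contracting each two-way pair into a single unit element is the right move for this query.
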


\section{New general observations and plan of attack}\label{sec:general}

We next give 3 new general observations before we describe our plan of attack in the remainder of the paper.

\subsection{Minimal queries}\label{sec:minimization}
Given queries $q_1$ and $q_2$, 
we say that $q_1$ is \emph{contained} in $q_2$ ($q_1 \subseteq q_2$)
if answers to $q_1$ over any database instance $D$ are always a subset of the answers to $q_2$ over $D$.
We say $q_1$ is \emph{equivalent} to $q_2$ ($q_1 \equiv q_2$)
if $q_1 \subseteq q_2$ and $q_2 \subseteq q_1$~\cite{AbiteboulHV:1995}.
We say a conjunctive query $q$ is \emph{minimal} if for every other conjunctive query $q'$ such that $q \equiv q'$, $q'$ has at least as many atoms as $q$.
For every query $q$, there exists a minimal equivalent CQ $q'$ that can be obtained from $q$ by removing zero or more atoms~\cite{DBLP:conf/stoc/ChandraM77}.

From now on, we focus only on \emph{minimal queries}.
This is WLOG, since any non-minimal query can always be minimized as a pre-processing step.
The reason is that our hardness evaluation relies on identifying certain subqueries (or patterns) in
a query that make this query hard. 
However, if a pattern is in a subquery that is removed during minimization, then, this pattern has
no effect on the resilience of the query.

\subsection{Query components}\label{sec:components}

A \emph{connected component} of $q$ (or ``{component}'' in short) is a non-empty subset of atoms that are connected via existential variables.
A query $q$ is \emph{disconnected} if its atoms can be partitioned into two or more
components that do not share any existential variables. For example, 
\begin{align*}
q_\textrm{comp} 	&\datarule A(x), R(x,y), R(z,w), B(w)
\intertext{is disconnected and has
two components:}
q_\textrm{comp}^1 	&\datarule A(x), R(x,y) \\
q_\textrm{comp}^2 	&\datarule R(z,w), B(w).
\end{align*}

The resilience of a query is determined by taking the minimum of the resiliences of each of its components.
In the following, let $\rho(q,D)$ stand for the resilience of query $q$ over database $D$, which is the size of the minimum
contingency set for $(q,D)$.

\begin{lemma}[$\allsj$ Query components]\label{min gamma disconnected}
Let $q \datarule q_1, \ldots, q_k$ be a query that consists of 
	$k$ components $q_i$, $i \in [k]$.
	Then $\rho(q, D) = \min_{i} \rho(q_i, D)$.
\end{lemma}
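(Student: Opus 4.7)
The plan is to prove the equality by establishing both inequalities separately, exploiting only the fact that a conjunction is false iff at least one conjunct is false. Since the components partition the atoms of $q$ and $q$ is a Boolean CQ, we have $q \equiv q_1 \wedge \cdots \wedge q_k$ as Boolean formulas over any database.

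For the upper bound $\rho(q,D) \le \min_i \rho(q_i,D)$, I would fix the index $i^{\ast}$ achieving $\rho(q_{i^\ast},D) = \min_i \rho(q_i,D)$ and pick a minimum contingency set $\Gamma^{\ast}$ for $(q_{i^\ast}, D)$. By definition $D - \Gamma^{\ast} \not\models q_{i^\ast}$, and because $q$ has $q_{i^\ast}$ as a conjunct, $D - \Gamma^{\ast} \not\models q$ as well. Since $\Gamma^{\ast}$ consists of endogenous tuples, it is a valid (not necessarily minimum) contingency set for $(q,D)$, giving $\rho(q,D) \le |\Gamma^{\ast}| = \rho(q_{i^\ast},D)$.

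For the lower bound $\rho(q,D) \ge \min_i \rho(q_i,D)$, I would take a minimum contingency set $\Gamma$ for $(q,D)$. Since $D - \Gamma \not\models q$, there must exist some index $j$ such that $D - \Gamma \not\models q_j$. Hence $\Gamma$ is itself a contingency set for $(q_j, D)$ (it is endogenous and falsifies $q_j$), which gives $\rho(q_j,D) \le |\Gamma| = \rho(q,D)$, and therefore $\min_i \rho(q_i,D) \le \rho(q,D)$.

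The only subtlety worth calling out, and the point where self-joins could tempt one to worry, is whether the lower-bound argument is affected when components share relation names (so a single deleted tuple could contribute to falsifying multiple $q_i$ simultaneously). It is not: the argument never needs to partition $\Gamma$ according to which component a tuple ``belongs'' to. The whole set $\Gamma$ is simply re-used as a contingency set for whichever $q_j$ happens to be falsified; the endogenous/exogenous labels are inherited from the same database $D$, so a tuple that is endogenous for $q$ is endogenous for every $q_i$ containing its relation. No disjointness of relations between components is required, so the lemma is stated and proved uniformly for the $\allsj$ class.
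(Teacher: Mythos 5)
Your proof is correct and takes essentially the same approach as the paper's: both rest on the observation that the components join as a cross-product, so $q$ is false iff at least one component $q_i$ is false, and a contingency set transfers unchanged between $q$ and a component. You simply spell out both inequalities (and the harmlessness of shared relation names) explicitly, where the paper's proof states the one transfer direction and asserts the equality.
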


We can now show that the complexity of a query is determined by the hardest of its components
if the query is minimal:

\begin{lemma}[$\allsj$ Query components complexity]\label{complexity disconnected queries}
Let $q$ be a minimal query that consists of $k$ query components.
$\res(q)$ is \np-complete if there is at least one component $i \in [k]$
for which $\res(q_i)$ is \np-complete. Conversely, if $\res(q_i)$ is in \p\ for all $i$, then 
$\res(q)$ is in~\p.
\end{lemma}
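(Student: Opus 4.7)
The plan is to handle the two directions separately, and to note that $\res(q) \in \NP$ always (guess an endogenous set $\Gamma$ of at most $k$ tuples and verify $D-\Gamma \not\models q$ in PTIME, since $q$ is a fixed CQ). For the PTIME direction, if $\res(q_l) \in \p$ for every component $q_l$, I can compute each $\rho(q_l, D)$ in PTIME (e.g.\ by binary search on the decision problem) and, by \Cref{min gamma disconnected}, decide $(D,k) \in \res(q)$ by testing whether $\min_l \rho(q_l, D) \le k$.

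For the NP-hardness direction, suppose $\res(q_i)$ is NP-complete for some component $q_i$. I would reduce $\res(q_i) \le \res(q)$ by \emph{padding}. Given input $(D_i, k)$, set $D := D_i \sqcup \bigsqcup_{j \neq i} P_j$, where each $P_j$ consists of $k+1$ disjoint copies of the \emph{canonical witness} of $q_j$ (freezing every variable of $q_j$ to a distinct fresh constant, producing one tuple per atom of $q_j$), with each copy using constants disjoint from $\dom(D_i)$ and from every other copy; if $q_j$ has no endogenous atom, a single copy suffices. This ensures $D \models q$ and is polynomial in $|D_i|+k$.

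The correctness of the reduction rests on two observations. First, because each padding block uses its own fresh constants, no tuple of $D$ spans two blocks, and since every component $q_l$ is connected, every witness of $q_l$ in $D$ is contained in a single block or in $D_i$. Second, minimality of $q$ forbids any homomorphism between distinct components: if such a homomorphism $h : q_l \to q_{l'}$ existed, then $q_{l'} \models q_l$ as Boolean queries, so we could drop all atoms of $q_l$ from $q$ to obtain an equivalent but smaller CQ, contradicting minimality. In particular, no homomorphism $q_i \to q_j$ exists for $j \neq i$, and since each block $P_j$ mirrors $q_j$'s body tuple-for-tuple, $P_j$ contains no accidental $q_i$-witness; hence $\rho(q_i, D) = \rho(q_i, D_i)$. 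For $j \neq i$, the $k+1$ tuple-disjoint padding witnesses each contain an endogenous tuple, so $\rho(q_j, D) \ge k+1 > k$ (and $\rho(q_j, D) = \infty$ if $q_j$ is all-exogenous). Applying \Cref{min gamma disconnected}, $\rho(q, D) \le k$ iff $\rho(q_i, D_i) \le k$.

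The main obstacle will be pinning down the role of minimality. Without it, cross-component entailment can create accidental witnesses: for instance, if $q$ contained both a chain $R(x,y), R(y,z)$ and a self-loop $R(u,u)$ (a non-minimal configuration, since the self-loop entails the chain), then padding for the self-loop would automatically manufacture chain witnesses and break $\rho(q_i, D) = \rho(q_i, D_i)$. The chain ``minimality $\Rightarrow$ no cross-component homomorphism $\Rightarrow$ no accidental witnesses'' is exactly what makes the disjoint-constants padding sound, and it is the technical heart of the argument.
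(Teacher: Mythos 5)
Your proposal is correct and takes essentially the same route as the paper's own (sketchy) proof: for hardness, pad all other components with data so resilient (your $k{+}1$ disjoint canonical witnesses) that any minimum contingency set must come from $q_i$'s component, and for the PTIME direction compute the per-component minima and take their minimum via \Cref{min gamma disconnected}. Your explicit use of minimality to exclude cross-component homomorphisms, and hence accidental $q_i$-witnesses inside the padding, just makes rigorous a step the paper's proof leaves implicit.
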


In the remainder of the paper we assume queries to be \emph{connected}.

\subsection{SJ-domination}\label{sec:domination}

As discussed in \Cref{sec: domination is broken}, 
we need to consider the position of the variables in the attribute list of each atom in a sj-query. 
We write $\pos^q_{g}(i) = x$ to express that 
the $i$-th attribute of atom $g$ is variable~$x$ for a query $q$
and omit $q$ when $q$ is clear from the context.

\begin{definition}[ {Domination with Self-Joins}]\label{def:domination}
Let relations $A$ and $B$ be endogenous relations in query $q$. 
We say that $A$ \emph{dominates} $B$ if
there exists a function $$f: [\arity(A)] \to [\arity(B)]$$ such that for each $B$ {atom} $g_B$, there exists an $A$ {atom}
$h_A$ satisfying ${pos}_{h_A}(i) = {pos}_{g_B}(f(i))$, $\forall i \in [\arity(A)]$.
{In other words, each $B$ atom that occurs in $q$ has a corresponding $A$ atom, and each of these pairs will have
matching variables accordingly to function $f$.}
\end{definition}

Notice that when $B$ appears only once, 
the definition of domination is equivalent to the sj-free definition:  $\var(A) \subseteq \var(B)$.

\begin{example}
To illustrate the new self-join domination, consider the following queries:
\begin{align*}
q_1 &\datarule R(x,y),A(y),R(y,z), S(y,z) \\
q_2 &\datarule R(x,y),A(y),R(z,y), S(y,z)
\end{align*}

By following the definition above, $A$ doesn't dominate $R$ in $q_1$ but it does in $q_2$, whereas $S$ is dominated in both queries.
Notice that in~$q_2$, a tuple $R(a,b)$ will always join with tuple $A(b)$ so we can always choose $A(b)$ instead to be in the contingency set.
The same is not true for $q_1$, where a tuple $R(a,b)$ could join with $A(a)$ or $A(b)$.
\end{example}

\begin{proposition}[$\allsj$  {Domination for resilience with Self-Join}]\label{lem: domination sj}
Let $q$ be a CQ and $q'$ the result of labeling some dominated relations exogenous. 
Then $\res(q)\equiv \res(q')$.
\end{proposition}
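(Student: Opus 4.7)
My plan is to prove the stronger statement $\rho(q,D) = \rho(q',D)$ for every database $D$, which immediately yields $\res(q) \equiv \res(q')$. Since multiple dominations can be handled by iterating the argument one relation at a time, it suffices to take a single pair $(A,B)$ with $A$ dominating $B$ via a function $f\colon [\arity(A)] \to [\arity(B)]$, and $q'$ obtained from $q$ by relabeling $B$ as exogenous. One direction is immediate: any contingency set for $q'$ is also a valid contingency set for $q$ (since $q$ has strictly more tuples available for deletion), giving $\rho(q,D) \leq \rho(q',D)$.

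For the reverse inequality, I would start from a \emph{minimum} contingency set $\Gamma$ for $(q,D)$ and build a set $\Gamma'$ for $(q',D)$ of size at most $|\Gamma|$ by a tuple-by-tuple replacement. For each $B$-tuple $t_B = B(b_1,\ldots,b_k) \in \Gamma$, define its \emph{shadow}
\[
\sigma(t_B) \;\define\; A\bigl(b_{f(1)},\,b_{f(2)},\,\ldots,\,b_{f(\arity(A))}\bigr),
\]
and set $\Gamma' = (\Gamma \setminus \{t \in \Gamma : t \text{ is a } B\text{-tuple}\}) \cup \{\sigma(t_B) : t_B \in \Gamma \text{ is a } B\text{-tuple}\}$. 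The crux of the proof is the following claim: for every witness $w$ of $q$ and every $B$-tuple $t_B$ that $w$ uses, the shadow $\sigma(t_B)$ lies in $D$ and is also used by $w$. To see this, suppose $w$ matches a $B$-atom $g_B$ to $t_B$, so $w(\pos_{g_B}(i)) = b_i$ for each $i$. By the definition of domination, there is an $A$-atom $h_A$ with $\pos_{h_A}(i) = \pos_{g_B}(f(i))$, hence $w$ maps $h_A$ to $A(b_{f(1)},\ldots,b_{f(\arity(A))}) = \sigma(t_B)$, forcing $\sigma(t_B) \in D$ and $\sigma(t_B) \in w$.

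With the claim in hand I would verify $\Gamma'$ is a contingency set for $q'$: take any witness $w$ of $q$ over $D$; since $\Gamma$ is a contingency set, $\Gamma$ contains some tuple $t$ of $w$; if $t$ is not a $B$-tuple then $t \in \Gamma'$, and otherwise the claim yields $\sigma(t) \in w \cap \Gamma'$. All tuples in $\Gamma'$ are endogenous in $q'$: the original non-$B$-tuples are unaffected, and the new $A$-tuples are endogenous because only $B$ was relabeled. Finally $|\Gamma'| \leq |\Gamma|$ because the replacement is one-to-at-most-one (distinct $B$-tuples can collapse to the same shadow, but never expand). Thus $\rho(q',D) \leq |\Gamma'| \leq |\Gamma| = \rho(q,D)$.

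The main obstacle I anticipate is bookkeeping in the self-join setting: a single $B$-tuple $t_B$ may be matched by several distinct $B$-atoms simultaneously within one witness, and different witnesses may match $t_B$ via different $B$-atoms. The idea that makes the argument robust is that $\sigma(t_B)$ depends only on the tuple values $(b_1,\ldots,b_k)$ and the globally fixed function $f$, not on which $B$-atom is responsible for picking up $t_B$; hence the same shadow $A$-tuple simultaneously breaks every witness using $t_B$ through every possible $B$-atom. One also needs the mild observation, following from minimality of $\Gamma$, that each $t_B \in \Gamma$ participates in at least one witness—otherwise $t_B$ could be removed from $\Gamma$ outright, contradicting minimality—so the claim is actually applied somewhere for every $B$-tuple we replace.
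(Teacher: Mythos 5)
Your proof is correct and follows essentially the same approach as the paper's: replace each $B$-tuple in a minimum contingency set by the $A$-tuple obtained by projecting along the globally fixed position function $f$, observing that this tuple is present in $D$ and co-occurs in every witness that uses the $B$-tuple, so the replacement hits at least as many witnesses. Your write-up is somewhat more detailed than the paper's (explicitly handling the trivial direction, the size bound, and the fact that minimality guarantees each replaced tuple occurs in some witness), but the core argument is the same.
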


\subsection{Outline of our plan of attack}

To obtain a dichotomy result for the resilience of binary queries in the presence of a single-self-join, we proceed as follows.
(1)~\Cref{sec:nonLinearQueries} shows that triads in any conjunctive queries with self-joins
still imply hardness 
(\Cref{thm: triads in sj}) and furthermore, when triads are absent, 
the endogenous atoms are linearly connected.  We call such queries \emph{pseudo-linear} (\Cref{thm: no triad means linear}).
We conjecture that pseudo-linear queries may be transformed to linear queries of equivalent
resilience (Conjecture~\ref{conj: no triad means linear}).
In any case, it suffices to study the criteria for hardness of pseudo-linear queries.
(2)~\Cref{sec:paths} generalizes the hardness pattern behind $\vc$ to a more general class of hard ssj binary queries that contain ``paths'' between repeated atoms.
(3)~We then focus on the complexity of the resilience of ssj binary CQs with at most a single repetition of a single relation.  
\Cref{sec:2R} gives a complete characterization of the complexity 
for the cases of 2 occurrences of a repeated relation. This is a dichotomy theorem:
we show that for all such queries, $q$, $\res(q)$ is either $\np$-complete or $\res(q)$ is reducible to
network flow and is thus in P. \Cref{sec:3R} presents the remaining challenges that must be overcome in order to
characterize all queries with 3 occurrences of a repeated relation.  
In \cref{sec:generalization}, we present a ``template'' for hardness proofs that we believe will help us make progress in the general self-join case.

\section{Non-linear Queries: NP-Complete}\label{sec:nonLinearQueries}

In this section we prove that queries containing triads remain hard in the presence of self-joins (\Cref{thm: triads in sj}).
We then show that for any query that does not contain a triad, its endogenous atoms are arranged
linearly.  We call such a query \emph{pseudo-linear}.
Thus, we conclude that either a query
contains a triad in which case its resilience problem is \np-complete, or
it is pseudo-linear.  In the following sections, we can thus safely restrict our
attention to pseudo-linear queries.

\begin{definition}[Self-join variation of a CQ]\label{def: sj variation}
{Let $q$ be a sj-free CQ and let $q^\textrm{sj}$ result from $q$ by replacing some atoms $S_i(\ov{v})$ from $q$
with the atom $R_i(\ov{v})$, where the relation~$R_i$ occurs elsewhere in $q$. We say that $q^\textrm{sj}$
is a \emph{self-join variation} of $q$.} 
\end{definition}

\begin{example}[Self-join variations]
Consider sj-free query $q_\triangle$. The following are all its possible self-join variations:
\begin{align*}
q_\triangle &\datarule R(x,y),S(y,z),T(z,x)			&\textrm{(Triangle)}\\
q^{\sj_1}_\triangle	&\datarule R(x,y),R(y,z),R(z,x)			&\\
q^{\sj_2}_\triangle	&\datarule R(x,y),R(y,z),T(z,x)			&\\
q^{\sj_3}_\triangle	&\datarule R(x,y),S(y,z),R(z,x)			&
\end{align*}
\end{example}

 {}{We first observe that the resilience of self-join variations of a query can only be harder than their sj-free counterpart:}

\begin{lemma}[$\allsj$ SJ Can Only Make Resilience Harder]\label{lem: sj makes harder}
 {}{Let $q$ be an sj-free CQ and let $q^\sj$ be a self-join variation of $q$.
If $q^{\sj}$ is minimal, then $\res(q) \leq \res(q^{\sj})$.  }
\end{lemma}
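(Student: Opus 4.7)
The plan is to give a polynomial-time reduction from $\res(q)$ to $\res(q^{\sj})$ via a tagging construction that renames the constants of $D$ according to the variables they occupy in the atoms of $q$. Write $q = g_1 \wedge \dots \wedge g_m$ with $g_i = S_{a_i}(\vec z_i)$ over pairwise distinct relation symbols, and $q^{\sj}$ with atoms $g_i^{\sj} = S'_{b_i}(\vec z_i)$ (same variable tuples, but some relation symbols identified). For each constant $c \in \dom(D)$ and each variable $x$ of $q$, introduce a fresh tagged constant $c^x$. Every tuple $t = S_{a_i}(c_1, \dots, c_r) \in S_{a_i}^D$ is mapped to the \emph{tagged copy} $t' = S'_{b_i}(c_1^{\vec z_i[1]}, \dots, c_r^{\vec z_i[r]}) \in S'_{b_i}^{D'}$, inheriting its endogenous/exogenous label. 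The reduction outputs $(D', k)$.

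First I would check injectivity of the tagging. If two atoms $g_i, g_j$ of $q$ are merged in $q^{\sj}$ (so $S'_{b_i} = S'_{b_j}$), minimality of $q^{\sj}$ forces $\vec z_i \neq \vec z_j$ -- otherwise $g_i^{\sj}$ and $g_j^{\sj}$ would be duplicate atoms that minimization would remove. Consequently, tagged copies of tuples coming from different origin atoms carry different tag-patterns in their positions and land in disjoint subsets of the same $q^{\sj}$-relation, so $t \mapsto t'$ is a bijection onto its image and preserves endogeneity.

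The core of the argument is a bijection between the tuple-sets of witnesses of $q$ on $D$ and those of $q^{\sj}$ on $D'$. In the forward direction, a witness $\vec w$ of $q$ on $D$ lifts to $\vec w'(x) := \vec w(x)^x$, and $\vec w'(g_i^{\sj})$ is exactly the tagged copy of $\vec w(g_i)$, so $\vec w'$ witnesses $q^{\sj}$ on $D'$ using the tagged pre-image tuples. In the backward direction, for any witness $\vec w'$ of $q^{\sj}$ on $D'$, each tuple $\vec w'(g_i^{\sj})$ has a unique pre-image $t_i \in S_{a_{j(i)}}^D$ for some $j(i)$ with $S'_{b_{j(i)}} = S'_{b_i}$; tag-consistency of $\vec w'(x)$ across the atoms containing $x$ forces $\vec z_{j(i_1)}[k_1] = \vec z_{j(i_2)}[k_2]$ whenever $x = \vec z_{i_1}[k_1] = \vec z_{i_2}[k_2]$, i.e., the index map $j$ together with the induced variable-renaming is an automorphism of $q^{\sj}$'s atom–variable incidence structure. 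Applying its inverse to $\vec w'$ then recovers a bona fide witness of $q$ on $D$ using the pre-image tuples. Since witness tuple-sets correspond bijectively and $|\Gamma| = |\Gamma'|$ for $\Gamma' := \{t' : t \in \Gamma\}$, contingency sets transfer without change of size, giving $(D, k) \in \res(q) \Leftrightarrow (D', k) \in \res(q^{\sj})$.

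The main obstacle will be formalizing the backward direction cleanly. A priori a witness of $q^{\sj}$ on $D'$ can stitch tagged tuples together in ``crossed'' ways -- e.g., in $q = S_1(x,y), S_2(y,x)$ with $q^{\sj} = R(x,y), R(y,x)$, the first $R$-atom of $q^{\sj}$ may be satisfied by a tagged $S_2$-tuple rather than by a tagged $S_1$-tuple. I expect to need a careful case analysis showing that every such crossing corresponds to an automorphism of $q^{\sj}$ and therefore still projects down to a legitimate witness tuple-set of $q$. Minimality of $q^{\sj}$ is precisely the ingredient that rules out the degenerate crossings (arising from duplicate atoms) that would otherwise break this correspondence.
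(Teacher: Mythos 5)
Your proposal is correct and follows essentially the same route as the paper's own proof: you construct $D'$ by subscripting (tagging) each constant with the variable position it occupies in the unique $q$-atom of its relation, and you invoke minimality of $q^{\sj}$ to rule out harmful cross-matchings of tagged tuples. Your backward direction, phrased via automorphisms of $q^{\sj}$ (crossed witnesses still project to legitimate witness tuple-sets of $q$), is a more explicit elaboration of the paper's terser ``double duty''/proper-subset minimality argument, but the reduction and the key use of minimality are the same.
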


We need to rely on the fact that $q^{\sj}$ is minimal for the result to hold, as we see in the example below:
\begin{example}\label{counterexample}
Consider query $q\datarule R(x,y), S(z,y), T(z,w), A(x,w)$, and observe that $\res(q)$ is \np-complete
because $q$ contains a triad. A possible self-join variation is $q^{\sj}\datarule R(x,y), R(z,y), R(z,w), R(x,w)$.
Note that $q^{\sj}$ is not minimal, and is equivalent to $R(x,y)$. So $\res(q^{\sj})$ is trivially in \p. 
\end{example}

By \cref{lem: sj makes harder},  {}{the resilience of the self-join variations of $q_\triangle$ are  \np-complete.
Recall from} \cref{def: triad} {that a triad is set of three endogenous atoms, so we can say that the self-join variations of $q_\triangle$
all have triads. However, it does not immediately follow from} \cref{lem: sj makes harder}  {}{that every sj-query with a triad
is hard. The missing cases are when an sj-query includes a triad, but it is not a self-join variation of
an sj-free query with a triad.} We next explore this situation.

\subsection{Self-join variations of  $\rats$ and $\brats$}\label{sec: sj rats and brats}

Recall  {two} important sj-free queries:
\begin{align*}
\rats & \datarule R(x,y),A(x),T(z,x),S(y,z) &\textrm{(Rats)}\\
\brats & \datarule B(y),R(x,y),A(x),T(z,x),S(y,z) &\textrm{(Brats)}
\end{align*}

 {$\res(q_\triangle)$ is \np-complete because it contains the  triad, $R,S,T$.  
However, $\rats$ and $\brats$ are easy because $A$ dominates $R,T$ and $B$ 
dominates $S$ so they only have two endogenous atoms each and thus no triad.

The same doesn't occur with some self-join variations of $\rats$ and $\brats$. 
Below we list two example of variations which contain triads.}
\begin{align*}
\rats^{\textrm{sj}_1} &\datarule R(x,y),A(x),R(y,z),R(z,x)\\ 
\brats^{\textrm{sj}_1} &\datarule B(y),R(x,y),A(x),R(z,x),R(y,z)
\end{align*}

 {In these examples, relation~$R$ is now more robust and not dominated by $A$ or $B$.
Therefore, they still contain triads consisting of their three~$R$-atoms. 
The presence of a triad is a strong indication that these queries are hard but we cannot use} \cref{lem: sj makes harder}
 {to show this because their sj-free counterparts, $\rats$ and $\brats$, are easy. We now proceed to show their complexity is hard.}

\begin{proposition}\label{prop: sj rats and brats are hard}
Let $q$ be a self-join variation of $\rats$ or $\brats$. If~$q$ has a triad, then $\res(q)$ is \np-complete.
\end{proposition}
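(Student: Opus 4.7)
My plan is to proceed by case analysis over the self-join variations of $\rats$ and $\brats$ that retain a triad after applying self-join domination (\Cref{lem: domination sj}), and give a reduction from an NP-hard problem (e.g., 3SAT or Positive 1-in-3-SAT, following the same template used for \Cref{hard part dichotomy}) for each case. First, I would enumerate the variations. Because $\rats$ has only the three binary atoms $R,T,S$ and the unary $A$ (and $\brats$ additionally the unary $B$), the only variations that can merge atoms are those that identify two or three of $\{R,T,S\}$ with a common binary symbol. For each resulting query I would compute which atoms are endogenous (applying \Cref{lem: domination sj} with its position-sensitive definition) and then test \Cref{def: triad}. The cases of interest are precisely those where three endogenous atoms survive and are pairwise connected by variable-disjoint paths—e.g., $\rats^{\sj_1}$, $\brats^{\sj_1}$, and the partial merges like $R(x,y),A(x),R(z,x),S(y,z)$.

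Next, for each triad-containing variation, I would instantiate the classical ``triad reduction.'' Concretely: given a 3SAT instance, build a database in which each clause becomes a potential witness tuple set, and in which each variable is encoded by a small gadget whose endogenous tuples are partitioned into two groups (``true'' and ``false''). The three triad atoms $S_0,S_1,S_2$ naturally play the role of the three literal positions of a clause, and the paths joining them pairwise (which by triadness avoid the third atom's separating variables) transport the clause identifier without forcing unwanted cross-joins. A minimum contingency set of size $n$ (the number of variables) then corresponds exactly to a satisfying assignment, and conversely.

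The main obstacle—and the one new twist relative to the sj-free proof of \Cref{hard part dichotomy}—is that because two or more endogenous atoms now share a relation symbol, a single physical tuple in $R$ may match several atom occurrences simultaneously, threatening to create spurious witnesses that spoil the reduction. I plan to address this with a \emph{typed encoding}: I tag every constant with an extra coordinate (implemented either by fresh constants or by restricting the active domain in each column) so that any $R$-tuple intended to play the role of, say, $R(x,y)$ cannot also play the role of $R(z,x)$ or $R(y,z)$. The specific typing is dictated by the positions of the shared variables inside the triad: one chooses pairwise disjoint ``slots'' for the values at each coordinate. This is exactly the point at which the position-sensitive \Cref{def:domination} and the binary graph representation from \Cref{def: binary graph} become essential, because they tell us which positional coincidences have to be blocked.

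Finally, I would verify in each case that the gadget respects the exogenous atoms ($A$ in $\rats$, and $A,B$ in $\brats$), so that the contingency set can only remove endogenous $R,S,T$-tuples, and that the minimum contingency size equals the minimum 3SAT cost up to an additive constant depending on the gadget. Since the typed encoding ensures witnesses correspond one-to-one to clause instantiations and endogenous deletions correspond to literal falsifications, the reduction is polynomial-time and membership in \NP\ is immediate because the queries are in \ptime. This completes both directions of NP-completeness for every triad-containing self-join variation of $\rats$ and $\brats$.
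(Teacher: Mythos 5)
There is a genuine gap, and it sits exactly where your proposal waves its hands. You assert that the gadget need only ``respect the exogenous atoms ($A$ in $\rats$, and $A,B$ in $\brats$), so that the contingency set can only remove endogenous $R,S,T$-tuples.'' This has the endogenous/exogenous status backwards: in $\rats$ and $\brats$ it is $A$ (and $B$) that are endogenous and that dominate $R,T$ (resp.\ $S$), and in the triad-containing self-join variations such as $\rats^{\textrm{sj}_1} \datarule A(x),R(x,y),R(y,z),R(z,x)$ the domination disappears, so \emph{both} $A$ (and $B$) and all $R$-occurrences are endogenous --- indeed the loss of domination is the only reason a triad exists at all (cf.\ \cref{ex: domination does not work}). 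Consequently a minimum contingency set may a priori contain $A$- or $B$-tuples, and any correct reduction must argue that this never helps. The paper's proof (\cref{prop: sj rats hard,prop: sj brats hard}) does precisely this: it reuses the database $D_\psi$ from the 3SAT reduction for $q_\triangle$ (\cref{thm: hardness of triangle}), collapses $R,S,T$ onto a single relation $R$ over the \emph{same} constants, and then shows that each $R$-tuple is shared by all rotated (and, for $\rats^{\textrm{sj}_2}$, reversed) witnesses while each $A$- or $B$-tuple lies in far fewer witnesses, so $A/B$-tuples are never preferable. Your proposal contains no counterpart of this step.

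Worse, your main technical device --- the ``typed encoding'' that tags constants so that each $R$-tuple can match only one atom occurrence --- actively removes the feature that makes the above argument work. If roles are fully separated, the witness structure of your database becomes isomorphic to that of the sj-free query $\rats$, in which an $R$-tuple covers no more witnesses than the $A$-tuple at its first position; in the triangle-style gadgets an $A$-tuple can then cover as many or more witnesses than a single solid $R$-edge, so the intended correspondence between minimum contingency sets and truth assignments is no longer justified (and cannot be rescued by declaring $A$ exogenous, since it is not). The paper's construction deliberately does the opposite of typing: it lets the three $R$-atoms reuse the same tuples, accepting the spurious rotated witnesses because they consume no new tuples, which is exactly what makes $R$-tuples strictly more valuable than $A/B$-tuples. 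The remaining ``twist'' you mention, the variable order, is handled in the paper not by typing but by adding the reversed tuples $(b,a),(c,b),(a,c)$ for $\rats^{\textrm{sj}_2}$ and re-counting witnesses. So while your overall plan (case analysis over variations plus a triangle-style 3SAT reduction) points in the right direction, the proof as proposed would not go through without (i) correcting the endogenous status of $A,B$ and supplying the ``$A/B$-tuples never help'' argument, and (ii) abandoning or substantially reworking the typed encoding.
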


Using \Cref{prop: sj rats and brats are hard}, we now generalize the fact that triads make sj-free
queries hard (\Cref{hard part dichotomy}) to the same result for general CQs.

\subsection{Triads Make Queries Hard}\label{sec:triadsRemainHard}

\begin{theorem}[$\allsj$ SJ-queries with triads]\label{thm: triads in sj}
If $q$ has a triad, then $\res(q)$ is $\np$-complete.
\end{theorem}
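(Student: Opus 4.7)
The plan is to leverage three prior results: \Cref{hard part dichotomy} (triads in sj-free CQs imply NP-hardness), \Cref{lem: sj makes harder} (minimal self-join variations are at least as hard as their sj-free originals), and \Cref{prop: sj rats and brats are hard} (self-join variations of $\rats$ and $\brats$ that contain a triad are hard). WLOG assume $q$ is minimal (\Cref{sec:minimization}), let $\mathcal{T}=\{S_0,S_1,S_2\}$ be its triad, and form the sj-free lifting $\hat q$ by replacing every atom using a repeated relation with a fresh relation of the same arity; then $q$ is a self-join variation of $\hat q$. Since $q$ and $\hat q$ share variables and hypergraph, the triad paths witnessing $\mathcal T$ in $\mathcal H(q)$ carry over verbatim to $\mathcal H(\hat q)$. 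The only way $\mathcal T$ can fail to remain a triad in $\hat q$ is through sj-free domination (\Cref{sj-free domination}) being strictly stronger than the self-join domination of \Cref{def:domination}, forcing some lift of a triad atom to become exogenous.

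I would then split on this dichotomy. In Case A, if the lifts of $S_0,S_1,S_2$ all remain endogenous in $\hat q$ after applying sj-free domination, then $\mathcal T$ is still a triad in the sj-free query $\hat q$, so \Cref{hard part dichotomy} gives that $\res(\hat q)$ is \np-complete, and \Cref{lem: sj makes harder} transfers this to $\res(q)$. In Case B, some lift $\hat S_i$ is sj-free dominated by an atom $D$ with $\var(D)\subsetneq\var(\hat S_i)$, yet $D$ does not dominate $S_i$ in $q$ under the stricter self-join definition---i.e., $D$ matches the single lifted copy of $S_i$ in $\hat q$ but fails to match at least one other occurrence of $S_i$'s relation in $q$. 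This is precisely the phenomenon at play in $\rats^{\textrm{sj}_1}$ and $\brats^{\textrm{sj}_1}$. I would argue that $q$ structurally embeds a self-join variation of $\rats$ or $\brats$ that contains a triad in the sense of \Cref{prop: sj rats and brats are hard}, and then lift the hardness reduction of that proposition to $q$ by padding: given an input instance $D_{\textrm{hard}}$ for the embedded variation, place it on the triad atoms of $q$ and satisfy every remaining atom of $q$ using fresh exogenous tuples over a disjoint sub-domain. Minimality of $q$ ensures no padding atom is redundant, and domain separation ensures no spurious witnesses arise, so the minimum contingency set of the constructed instance of $q$ coincides with that of $D_{\textrm{hard}}$ for the embedded variation.

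The main obstacle is Case B, specifically the padding step: one must classify how the broken triad can sit inside $q$ (how many of the three triad atoms are dominated in $\hat q$, by which atoms, and whether more than one dominator is required to break the triad, as in the $\brats^{\textrm{sj}_1}$-style situation), and then verify that for each such structural pattern there is a padding construction whose active domain is disjoint from the hard instance yet still satisfies every extra atom of $q$. This reduces to checking that for every atom outside the triad one can find a tuple over the fresh domain that uses no variable forcing it back into the triad region---a property that should follow from minimality of $q$ and the definition of the triad, but requires careful case analysis to verify in full generality.
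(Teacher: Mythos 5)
Your high-level decomposition is the same one the paper uses in its proof sketch: if the triad survives in the sj-free lifting $\hat q$, combine \cref{hard part dichotomy} with \cref{lem: sj makes harder}; if a triad atom would be dominated (hence exogenous) in $\hat q$, fall back on the hard self-join variations of $\rats$ and $\brats$ from \cref{prop: sj rats and brats are hard}. Your Case A is sound and is exactly the paper's first case.

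The gap is in the execution of Case B, and it is exactly the step you flag: the ``embed and pad with fresh exogenous tuples over a disjoint sub-domain'' construction cannot work. Since $q$ is connected, a witness is a single valuation of all of $\var(q)$, so any atom sharing variables with $S_0,S_1,S_2$ must be populated with tuples that take the \emph{same} values as the triad tuples on those shared variables; if the padding tuples live over a disjoint domain, then $D'\not\models q$ and there are no witnesses at all, rather than ``no spurious witnesses.'' Moreover, you are not free to satisfy the remaining atoms with \emph{exogenous} tuples: those atoms may be endogenous in $q$, and the endogenous atoms whose variables lie inside a triad atom or inside the pairwise intersections $\var(S_i)\cap\var(S_j)$ are precisely the dangerous ones, because in any naive construction they supply cheaper contingency tuples and destroy the correspondence of minimum contingency sets. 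The paper's actual proof avoids padding altogether: it re-runs the canonical construction from the proof of \cref{hard part dichotomy}, populating \emph{every} atom of $q$, for each witness $(a,b,c)$ of the source instance, with the tuple dictated by the 7-piece partition of $\var(q)$ in \eqref{variable partition eq} (values $\angle{ab},\angle{bc},\angle{ca},\angle{abc},a,b,c$); the triad path property then gives a 1:1 witness correspondence, and the source problem is chosen as $\res(q_\triangle)$, a self-join variation of $\rats$, or of $\brats$ (or one observes an sj-free triad is already present) according to whether $q$ has endogenous relations over just the $a$-variables, just the $a$- and $b$-variables, or all three. Without this partition-based construction, or an equivalent device that simultaneously realizes all non-triad atoms while keeping the minimum contingency sets inside the intended relations, your Case B remains unproven; minimality of $q$ alone does not supply it.
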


\begin{proof}[ {Proof Sketch}]
We argue that there are only two cases to consider when a sj-query $q$ has a triad. Consider that
$q$ is a self-join variations of a sj-free query $q'$.
The first case is when the resilience of $q'$ is hard, i.e., $\res(q')$ is \np-complete. 
Then, we can use \cref{lem: sj makes harder} to show $\res(q)$ is \np-complete.
The second case is when $\res(q')$ is in \p. Then, we argue that we can show 
a reduction from $\res(q'') \leq \res(q)$, where $q''$ is a self-join variation of either $\rats$ or $\brats$ that
contains a triad. From \cref{prop: sj rats and brats are hard}, $\res(q'')$ is \np-complete, which thus implies that
$\res(q)$ is also \np-complete.
\end{proof}

Thus, if a query contains a triad, it is hard.  
 {In the next section, we discuss queries that do not contain triads and how they
are similar to linear queries, since their endogenous atoms are linearly connected.}

\subsection{No Triad Means Pseudo-Linear}\label{sec:pseudoLinear}

In \cite{FreireGIM15}, we proved that if a sj-free CQ $q$ has no triad, then
$q$ may be transformed to a sj-free CQ query $q'$ which is linear and such that $\res(q) \leq
\res(q')$.  Since linear sj-free CQ's are easy, it follows that $q$ is easy.

This argument no longer works in the presence of self-joins because linear queries can be easy or
hard.  However, we can extend the theorem from \cite{FreireGIM15} to show the
following,

\begin{theorem}[$\allsj$ No Triad Means Pseudo-Linear]\label{thm: no triad means linear} Let $q$ be a CQ
  with no triad.  Then all endogenous atoms in $q$ are connected linearly.
\end{theorem}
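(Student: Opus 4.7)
The plan is to adapt the proof of the analogous sj-free statement from~\cite{FreireGIM15} (which showed that triad-free sj-free CQs can be transformed into linear queries) to the weaker but sufficient conclusion about endogenous atoms only. Since the definition of triad refers only to endogenous atoms (while the paths witnessing a triad are allowed to traverse any atom of $\mathcal{H}(q)$), restricting the conclusion to endogenous atoms mirrors the structural content of the triad notion. By domination (\cref{lem: domination sj}) and minimality (\cref{sec:minimization}), I may also assume throughout that $q$ is minimal and that all dominated atoms have been made exogenous, so that the endogenous atoms are exactly those one must reason about.

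I would argue the contrapositive: assume the endogenous atoms of $q$ cannot be arranged in a linear order in which each variable's occurrence across endogenous atoms forms a contiguous interval. Consider the auxiliary subhypergraph $\mathcal{H}_{\text{en}}(q)$ whose vertices are the endogenous atoms and whose connectivity is induced by $\mathcal{H}(q)$ (with exogenous atoms acting only as connectors for paths). The failure of a linear arrangement corresponds, by the classical consecutive-ones/interval characterization used in the sj-free proof, to a branching configuration of three endogenous atoms $S_0, S_1, S_2$ admitting three pairwise-connecting branches, such that the branch between $S_i$ and $S_j$ uses no variable occurring in the third atom $S_k$. Lifting these branches back to paths in $\mathcal{H}(q)$ yields exactly the paths required by \cref{def: triad}, contradicting the assumption that $q$ has no triad.

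The main obstacle will be the lifting step: a branch detected in $\mathcal{H}_{\text{en}}(q)$ may correspond to a path through exogenous atoms, and we must ensure such a path can be chosen so that \emph{every} intermediate variable avoids $\var(S_k)$. I would handle this by a careful case analysis: whenever a candidate path contains an offending variable $x \in \var(S_k)$, either $x$ appears only in exogenous atoms on the path and can be rerouted (because exogenous atoms contribute no constraint on the triad witness beyond variable sharing), or else $x$ forces a direct endogenous connection to $S_k$ that would already collapse the branching back into a linear arrangement, contradicting the hypothesis. The self-join aspect does not complicate this combinatorics, because $\mathcal{H}(q)$ treats atoms symbolically and the triad property is purely structural in terms of atoms and shared variables; the only subtlety introduced by repeated relations is that a single relation symbol may contribute several atoms to the subhypergraph, which is already accounted for by treating atoms rather than relations as vertices. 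Once the lifting step is resolved, the contrapositive completes the proof.
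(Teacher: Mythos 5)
There is a genuine gap at the heart of your argument. Working contrapositively is fine in principle, but the crucial step---that a failure to arrange the endogenous atoms linearly yields three endogenous atoms $S_0,S_1,S_2$ pairwise connected by paths avoiding the variables of the third---is precisely the content of the theorem, and you discharge it by appealing to ``the classical consecutive-ones/interval characterization.'' That appeal does not do the work: Tucker-style forbidden configurations for the consecutive-ones property (and the asteroidal-triple characterization of interval graphs, which additionally presupposes chordality) are not in any obvious way triads in the sense of \cref{def: triad}, and the sj-free proof in the prior work does not proceed by such a citation either. Your handling of the ``lifting'' obstacle is likewise asserted rather than proved: if a candidate branch uses a variable $x\in\var(S_k)$, the claim that it can always be ``rerouted'' when $x$ occurs only in exogenous atoms needs an argument for why an alternative $x$-free connection exists, and the alternative case (``$x$ forces a direct endogenous connection to $S_k$ that collapses the branching into a linear arrangement'') is not established. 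In short, the proposal restates the theorem modulo a classical result that does not directly apply.

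For comparison, the paper argues in the forward direction and more concretely: it groups endogenous atoms that have exactly the same variable set, and observes that, since $q$ is connected and has no triad, for every pair of groups exactly one of two situations holds---either they are directly connected in $\mathcal{H}(q)$, or some third group $G_k$ is a \emph{separator} (removing $\var(G_k)$ disconnects them). The no-triad hypothesis guarantees that in any triple of endogenous atoms one of them separates the other two, and this separator relation is then used to place the groups consecutively along a line (cf.\ \cref{easyWalkFig}), yielding pseudo-linearity directly. To repair your contrapositive route you would essentially have to re-derive this separator structure (i.e., show that if no triple admits a separator ordering then no linear placement exists, and conversely), at which point the direct construction is simpler. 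A further small point: normalizing by making dominated atoms exogenous is not without loss of generality here, since the statement is about the endogenous atoms of $q$ itself, and relabeling atoms as exogenous changes exactly the set of atoms the conclusion speaks about; the paper's proof needs no such normalization, only connectedness.
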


{We conjecture that pseudo-linearity is equivalent to linearity when considering resilience. 
What makes a query pseudo-linear, instead of linear or containing a triad, is the presence of some exogenous atoms. 
However, the exogenous atoms of a query are mostly only connecting the endogenous atoms,
and also, if necessary, ensuring that $q$ is a minimal query, so we believe they can be modified to obtain a linear query without
altering the complexity of a query.}

\begin{conjecture}[$\allsj$ No Triad Means Linear]\label{conj: no triad means linear} 
Let $q$ be a CQ
  with no triad.  Then we can transform $q$ to a linear CQ $q'$ with $\res(q) \equiv
  \res(q')$.
\end{conjecture}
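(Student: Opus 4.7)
The plan is to build on Theorem~\ref{thm: no triad means linear}, which already guarantees that the endogenous atoms of a triad-free CQ $q$ can be arranged in a linear order $g_1, g_2, \ldots, g_m$ such that each existential variable shared among endogenous atoms appears in a contiguous block. The remaining obstruction to linearity, and thus the entire content of the conjecture, is the behavior of exogenous atoms: an exogenous atom may share variables with endogenous atoms that are far apart in the linear order, creating ``non-convex'' intervals that forbid a straight-line drawing. The goal is to rewrite such exogenous atoms without changing the resilience.

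First, I would classify each exogenous atom $E$ by the set $I(E) \subseteq [m]$ of endogenous positions it touches. If $I(E)$ is an interval of consecutive indices, $E$ already respects the linear order and can be placed between the corresponding endogenous atoms. If $I(E)$ skips indices, the idea is to \emph{split} $E$ into a chain of new exogenous atoms, one sitting in each gap of the interval $[\min I(E), \max I(E)]$, connected by fresh ``helper'' variables that propagate the old values along the linear order. Concretely, an atom $E(\vec u)$ joining $g_{i}$ and $g_{k}$ with $k > i+1$ would be replaced by a sequence $E_{i}(\vec u, \vec h_i), E_{i+1}(\vec h_i, \vec h_{i+1}), \ldots, E_{k-1}(\vec h_{k-2}, \vec u')$ where the $\vec h_j$ are fresh exogenous variables carrying the values of the relevant positions of $\vec u$. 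On the data side, each new relation $E_j$ is populated by the graph of the appropriate projection, so that witnesses of $q'$ are in bijection with witnesses of $q$ restricted to the endogenous content.

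To establish $\res(q) \equiv \res(q')$ I would show both reductions via natural database translations. For $\res(q) \leq \res(q')$, given an instance $D$ of $q$, construct $D'$ by keeping the endogenous relations untouched and computing the deterministic fillings of each $E_j$ from the projections of the original $E$-tuples; an endogenous contingency set for $D$ is literally a contingency set for $D'$ of the same size, because the new exogenous chains exactly reproduce the original witness structure. The reverse direction $\res(q') \leq \res(q)$ uses an analogous encoding in the opposite direction, collapsing the chains of $E_j$-tuples back into single $E$-tuples by existential composition. Minimality of $q'$ can be maintained by a short preprocessing argument showing that splitting never introduces redundant atoms as long as each $E_j$ preserves a distinct pair of endogenous neighborhoods. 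A separate, easier case handles exogenous atoms that are entirely ``local'' to a single endogenous atom: domination (\Cref{lem: domination sj}) allows them to be absorbed.

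The main obstacle I expect is the case where a single exogenous atom shares several variables simultaneously with several far-apart endogenous atoms in a pattern that is itself ``triad-like'' within the exogenous subgraph, since then naive splitting risks introducing spurious joins among the helper variables that change the witness set. Resolving this requires a careful inductive argument, likely by the number of exogenous atoms violating linearity, together with a normal-form lemma asserting that in the absence of any endogenous triad the exogenous atoms can always be replaced by binary ``path'' relations over the endogenous variables they connect. This normal-form step is where I expect the proof to be delicate and where the conjecture resists an immediate proof, explaining why the authors leave it as~\Cref{conj: no triad means linear} rather than a theorem.
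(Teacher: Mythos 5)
The statement you are addressing is not proved in the paper at all: it is stated as \Cref{conj: no triad means linear} and left open, with only the informal remark that exogenous atoms ``mostly only connect the endogenous atoms'' and should therefore be modifiable. Your proposal follows that same informal intuition, and you yourself concede at the end that the key normal-form step is unproven, so what you have is a plan rather than a proof. That is consistent with the paper, but it is worth being explicit that the plan as written has concrete failure points, not just a delicate step left for later.

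Two gaps stand out. First, your case analysis classifies an exogenous atom $E$ by the set of endogenous \emph{positions} it touches and only intervenes when that set skips indices. This misses the central difficulty: the paper's own example $c\!f_p \datarule R(x,y), H^\exSymb(x,z), R(z,y)$ has an exogenous atom touching two \emph{consecutive} endogenous atoms, yet the query is pseudo-linear and not linear (the variable $y$ is the one whose occurrences are non-contiguous, not a variable of $H$), and the exogenous bridge changes the complexity class: $\res(c\!f_p)\equiv\res(\vc)$ is \np-complete (\Cref{2conf hard}), whereas the corresponding query in which every $x$--$z$ connection passes through $y$ is solvable by network flow (\Cref{prop: 2conf}). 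So any rewriting that reroutes the exogenous connection along the linear order of endogenous atoms is \emph{not} resilience-preserving in general; a correct transformation for $c\!f_p$ would have to produce a linear query that is still \np-hard (e.g., something of the flavor of $\vc$), which your local splitting does not do. Second, the splitting you describe, $E(\vec u,\vec h_i), E_{i+1}(\vec h_i,\vec h_{i+1}),\ldots$, keeps the entire tuple $\vec u$ in the first new atom, so the offending hyperedges still span the same range and linearity is not actually achieved; to fix this you must project variables out progressively, and then the asserted bijection of witnesses (and with it both inequalities $\res(q)\leq\res(q')$ and $\res(q')\leq\res(q)$) is exactly what must be proved, including for arbitrary databases of $q'$ whose helper chains are not graphs of projections. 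Until the rewriting is shown to preserve the witness structure on endogenous tuples in both directions, and until the $c\!f_p$-type interaction between exogenous bridges and complexity is handled, the conjecture remains open, as in the paper.
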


\section{Paths are hard}\label{sec:paths}

\Cref{sec:newHardQueries} presented two linear queries that are hard, unlike in the
sj-free case where all linear queries are easy.  {Note that these queries are binary and, in both, only
one relation is part of a self-join. In other words, these are single-self-join binary queries.}

We now identify a pattern characteristic of $\vc$ that we call a \emph{path}.   {The main result of this section is that every ssj binary 
query containing a path is hard. We start by showing the case where the self-join relation is unary.}

\begin{theorem}[$\ssjbin$ Unary path]\label{unary path}
Let $q$ be a minimal ssj-CQ.  If $q$ contains distinct atoms $R(x)$ and $R(y)$, then $\res(q)$ is \np-complete.
\end{theorem}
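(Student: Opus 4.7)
The plan is to polynomial-time reduce the NP-hard minimum vertex cover problem to $\res(q)$; NP membership is immediate because $q$ is polynomially evaluable. Given a graph $G=(V,E)$ and bound $k$, I would construct a database $D$ as follows. For every $v\in V$, add the endogenous unary tuple $R(v)$. For every edge $e=\{u,v\}\in E$, instantiate all atoms of $q$ other than the two distinguished $R$-atoms by the substitution $x\mapsto u$, $y\mapsto v$, each intermediate non-$R$ variable $w\in\var(q)\setminus\{x,y\}$ mapped to a fresh domain element $w_e$ unique to edge $e$, and any additional $R$-variables $z_i$ collapsed to $u$; then add the resulting tuples to $D$. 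Every non-$R$ relation occurs in at most one atom of $q$ by the ssj assumption, so each such atom contributes exactly one new tuple per edge, and these tuples are edge-exclusive.

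Next I would verify that the witnesses of $q$ over $D$ are in bijection with the edges of $G$. By construction each edge $e=\{u,v\}$ contributes at least one witness, namely the edge-specific one above, using $R(u),R(v)$ and the $e$-tagged intermediate tuples. Conversely, any witness must pick a consistent assignment to intermediate variables, and since each fresh element $w_e$ occurs only in tuples generated by edge $e$, no cross-edge mixing of intermediate tuples is possible; the only tuples shared across edges live in $R$ and in atoms whose only variables are $x,y$, but the latter are determined by the endpoint pair $(u,v)$ and so also tied to a single edge. Minimality of $q$ rules out degenerate identifications such as $x=y$ being forced in every witness, which would collapse $q$ into a smaller equivalent query.

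Then I would argue that a minimum contingency set may be taken to be of the form $\{R(v):v\in S\}$ for some $S\subseteq V$. Each non-$R$ tuple in $D$ is edge-exclusive, so its deletion kills only the witness of one edge $e=\{u,v\}$; replacing it in any contingency set by either $R(u)$ or $R(v)$ eliminates that witness and any other witness of an edge incident to the chosen endpoint, without increasing the size of the set. Hence without loss of generality an optimal contingency set is a subset of $\{R(v):v\in V\}$, and it is a contingency set iff $S$ meets every edge, i.e., iff $S$ is a vertex cover of $G$. Therefore $\rho(q,D)$ equals the minimum vertex cover of $G$, giving the reduction $(G,k)\in\mathrm{VC}\Longleftrightarrow(D,k)\in\res(q)$.

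The main obstacle is the bijection in the second paragraph. One must take care that no combination of tuples originating from different edges accidentally satisfies $q$, and that additional $R$-atoms beyond $R(x),R(y)$ do not admit alternative cheaper contingencies. The fresh-element-per-edge convention rules out spurious cross-edge witnesses for non-$R$ atoms; for auxiliary $R$-atoms $R(z_i)$ the collapse $z_i\mapsto u$ avoids introducing new edge-exclusive $R$-tuples that could be deleted more cheaply than $R(u)$ or $R(v)$. A clean alternative to this bookkeeping is to first invoke \Cref{lem: domination sj} whenever the generalized self-join domination of \Cref{def:domination} lets $R$ dominate all other endogenous relations of $q$, which is automatic when every non-$R$ atom has an $R$-atom aligned with one of its positions; in that case the reduction from \Cref{prop:hardness vc} for $\vc$ can be piggybacked directly after making the non-$R$ atoms exogenous.
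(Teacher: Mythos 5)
Your construction follows essentially the same route as the paper's: the paper reduces from $\res(\vc)$ (itself obtained from Vertex Cover, \Cref{prop:hardness vc}), instantiating the atoms on the path between $R(x)$ and $R(y)$ once per edge with $x\mapsto u$, $y\mapsto v$ and fresh pair-indexed constants for the intermediate variables. The only real divergence is how the rest of the query is handled: the paper pads the off-path atoms (including any further occurrences of the self-join relation) with many extra values so that their tuples are never worth deleting, whereas you give every other variable a per-edge fresh constant and collapse the additional $R$-variables $z_i$ to $u$. That alternative bookkeeping does yield a sound database, but your verification of it contains a genuine flaw.

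Concretely, the bijection between witnesses and edges claimed in your second paragraph, and the ``edge-exclusive'' property of all non-$R$ tuples, are false precisely in the situations your bookkeeping is meant to cover. Take $q \datarule R(x),S(x,y),R(y),T(y,z),R(z)$: your database contains $S(u,v)$ and $T(v,u)$ for every oriented edge, so for two edges $\set{u,v}$ and $\set{u',v}$ the assignment $(x,y,z)\mapsto(u,v,u')$ is a witness that mixes tuples coming from different edges; similarly an atom such as $T(x,z_1)$ produces the tuple $T(u,u)$ for \emph{every} edge with $x$-endpoint $u$, so it is not edge-exclusive and its deletion kills witnesses of several edges. As a result, the replacement step ``swap the non-$R$ tuple for either $R(u)$ or $R(v)$'' is not justified as written: when several witnesses pass through the tuple, an arbitrarily chosen endpoint need not hit all of them. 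The reduction can be repaired, because two weaker facts do hold for your $D$: (i) every witness maps $(x,y)$ to the endpoints of some edge, since along any atom path from $x$ to $y$ the fresh constants force a single generating edge and the collapsed $z_i$'s always carry that edge's $x$-endpoint, so the endpoint is preserved until the atom containing $y$; and (ii) for each non-$R$ tuple there is one \emph{fixed} endpoint tuple (determined by the tuple, not chosen freely) that lies in every witness using it, which is what the purification argument actually needs. You must prove (i) and (ii) explicitly, or else adopt the paper's padding device, which sidesteps the issue by making all off-path tuples useless for contingency sets; the closing suggestion that $R$ ``automatically'' dominates the other relations via \Cref{def:domination} is also not true in general and cannot substitute for this argument.
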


\begin{proof}[Proof sketch]
Let $R(x)$ and $R(y)$ be the first two occurrences of the relation $R$ in $q$. Since $q$ is
connected, $R(x)$ and $R(y)$ are connected by at least one non-self-join relation, $S$ 
(see \cref{fig: unary path}).
We prove that $\res(\vc) \leq \res(q)$.  Details are in \Cref{sec: proofs}, but it is not hard to see that
any database $D \models \vc$ can be transformed to a database $D' \models q$ that exactly preserves
resilience.  Here $R', S'$ in~$D'$ come from $A$ and $R$ in $D$, and all the other atoms of $q$
(including any additional occurrences of the self-join relation, $R$, to the right of $R(y)$) are
covered by multiple, extra values which complete the joins but are never chosen in minimum
contingency sets. 
Note that this proof doesn't make any assumption about the existence or not of triads.
\end{proof}

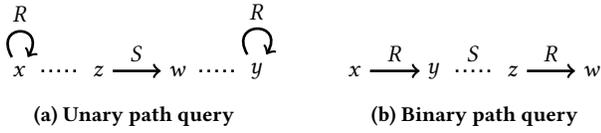
\begin{figure}[t]
\begin{subfigure}[t]{.45\linewidth}
\centering
\begin{tikzpicture}[scale=.35,
				every circle node/.style={fill=white, minimum size=7mm, inner sep=0, draw}]
	\node (1) at (0,0) [] {$x$};
	\node (1a) at (0.5,0) [] {};
	\node (1b) at (2.5,0) [] {};
	\node (2) at (3,0) [] {$z$};
	\node (3) at (6,0) [] {$w$};
	\node (3a) at (6.5,0) [] {};
	\node (3b) at (8.5,0) [] {};
	\node (4) at (9,0) [] {$y$};

	\path[->, line width=1pt, auto]
		(1) 	edge[color=black, out=120, in=60, distance=1.5cm] node {$R$}	(1)
		(4) 	edge[color=black, out=120, in=60, distance=1.5cm] node {$R$}	(4)
		(2) 	edge[color=black] node {$S$}	(3)
		(1a) 	edge[color=black,-, dotted] node {}	(1b)
		(3a) 	edge[color=black, -, dotted] node {}	(3b)	
		;

\end{tikzpicture}
\caption{Unary path query}
\label{fig: unary path}
\end{subfigure}
\qquad
\begin{subfigure}[t]{.45\linewidth}
\centering
\begin{tikzpicture}[scale=.35,
				every circle node/.style={fill=white, minimum size=7mm, inner sep=0, draw}]
	\node (1) at (0,0) [] {$x$};
	\node (2) at (3,0) [] {$y$};
	\node (3) at (3.5,0) [] {};
	\node (4) at (5.5,0) [] {};
	\node (5) at (6,0) [] {$z$};
	\node (6) at (9,0) [] {$w$};

	\path[->, line width=1pt, auto]
		(1) 	edge[color=black] node {$R$}	(2)
		(5) 	edge[color=black] node {$R$}	(6)
		(3) 	edge[color=black, -, dotted] node {$S$}	(4)	
		;

\end{tikzpicture}
\caption{Binary path query}
\label{fig: binary path}
\end{subfigure}
\caption{General structure of path queries.}
\label{fig: paths}
\end{figure}

When the self-join relation is binary, 
if two consecutive atoms, $R(x,y)$, $R(z,w)$, are disjoint, 
then we
call this a \emph{binary path}.  ``Overlapping'' consecutive atoms with shared variables, such as
$R(x,y)$, $R(y,z)$ in 
$\chain$, can also cause hardness and are studied in later sections.

\begin{theorem}[$\ssjbin$ Binary path]\label{binary path}
Let $q$ be a minimal ssj-CQ. If $q$ has distinct consecutive sj atoms $R(x,y), R(z,w)$ with
$\set{x,y}\cap\set{z,w}=\emptyset$,
then $\res(q)$ is $\np$ complete.
\end{theorem}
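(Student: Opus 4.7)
The plan is to mirror the template of the unary-path proof (\Cref{unary path}) and reduce $\res(\vc) \leq \res(q)$. Membership in \np\ is immediate since $q$ is evaluable in PTIME. For hardness, I would identify a ``skeleton'' for the designed witnesses of $q$ that faithfully encodes $\vc$, using the two disjoint $R$-atoms $R(x,y)$ and $R(z,w)$ as the two $R$-atoms of $\vc$, and using the non-sj atoms on a shortest path between them in $q$ as the $S$-atom of $\vc$.

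First, I would pick two fresh constants $\alpha, \beta \notin \Adom(D)$ that will be used to ``pin down'' the extra coordinate of each binary $R$-tuple. Given an instance $(D,k)$ of $\res(\vc)$ with relations $R^D$ and $S^D$, I construct $D'$ as follows: for every endogenous tuple $R(a) \in D$, add two endogenous binary tuples $R(a,\alpha)$ and $R(\beta,a)$ to $R^{D'}$; for every endogenous tuple $S(a,b) \in D$, add to the atoms on the path from $R(x,y)$ to $R(z,w)$ in $q$ a sequence of tuples that joins $y\!=\!\alpha$ through the path variables to $z\!=\!\beta$, using fresh path-specific constants so that distinct $S$-edges never share intermediate values. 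The remaining atoms of $q$ (including any further self-join occurrences of $R$ that lie outside the chosen path, and any side-branches) are filled with a small number of ``universal'' constants and exogenous tuples, chosen so that they silently satisfy any partial valuation fixed by the witness skeleton above.

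Then I would verify the two directions of the reduction. For the forward direction, every contingency set $\Gamma$ for $\vc$ on $D$ of size $k$ lifts to a contingency set for $q$ on $D'$ of size $k$ by replacing each $R(a) \in \Gamma$ with $R(a,\alpha)$ and each $S(a,b) \in \Gamma$ with a designated tuple from the designed path-gadget. For the reverse direction, I argue that (i) every witness of $q$ over $D'$ projects, by reading off the second coordinate of the first $R$-atom and the first coordinate of the second $R$-atom, to a witness of $\vc$ over $D$; and (ii) any minimum contingency set of $q$ on $D'$ may be assumed, by standard exchange arguments, to consist only of ``designed'' endogenous tuples, after which it maps back to a contingency set for $\vc$ of the same size. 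Combined with \Cref{prop:hardness vc}, this yields \np-hardness.

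The hard part, and the reason the proof is not a one-line corollary of \Cref{unary path}, is controlling the side-effects of additional occurrences of $R$ and of any ``shortcut'' paths between $R(x,y)$ and $R(z,w)$ in $q$. I expect the main obstacle to be ensuring that no cheaper contingency set exists that exploits an extra $R$-atom in $q$---in particular, that the construction does not accidentally create an instance of $\chain$ (where two consecutive $R$-atoms share a variable) which would require a different reduction. This is handled by exploiting the separation $\{x,y\}\cap\{z,w\} = \emptyset$ and by using \emph{distinct} fresh constants $\alpha,\beta$ together with gadget-local intermediate values, so that the designed tuples can only be ``glued together'' in the intended way; any other gluing would force a variable clash and fail to produce a witness. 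Minimality of $q$ is invoked to rule out the degenerate situation illustrated in \Cref{counterexample}, where additional $R$-atoms collapse the query and trivialize resilience.
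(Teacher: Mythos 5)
There is a genuine gap: the core gadget of your reduction does not work. You map each vertex $a$ of the $\res(\vc)$ instance to \emph{two} tuples $R(a,\alpha)$ and $R(\beta,a)$ with global fresh constants $\alpha,\beta$ sitting in the coordinates $y$ and $z$, and you route the edge gadgets ``from $y=\alpha$ to $z=\beta$.'' First, the connecting path $p$ may well attach to the first $R$-atom precisely at $y$ and to the second at $z$ (e.g.\ $q \datarule R(x,y), S(y,u), T(u,z), R(z,w)$). In that case the first path tuple of every edge gadget carries the same global value $\alpha$, so \emph{every} tuple $R(c,\alpha)$ joins with \emph{every} edge gadget, producing spurious witnesses that pair arbitrary vertices with arbitrary edges; your claim (i) is then simply false (reading off the $y$- and $z$-coordinates yields $(\alpha,\beta)$, not a witness of $\vc$ over $D$). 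Second, even in the favorable case where the path attaches at $x$ and $w$, splitting each vertex into a ``left'' tuple $R(a,\alpha)$ and a ``right'' tuple $R(\beta,a)$ destroys the cost correspondence: covering a vertex that has both incoming and outgoing edges now costs two deletions, and a minimum contingency set for $(D',q)$ may instead cover each edge gadget by its own private path tuple or by a mix of left/right copies. The resulting optimization is essentially a vertex cover on a \emph{bipartite} split of the vertices (left roles vs.\ right roles, with a private per-edge option), which is polynomial-time solvable, so $\rho(D',q)$ does not track the vertex cover number of $G$ and the equivalence $(D,k)\in\res(\vc) \Leftrightarrow (D',k)\in\res(q)$ fails in both directions.

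The paper avoids exactly this by \emph{not} keeping the two coordinates of the designed $R$-tuples distinct. It defines an equivalence relation on $\var(q)$ ($u \equiv v$ iff there is a path of $R$-atoms between them), so that in particular $x\equiv y$ and $z\equiv w$ while the hypothesis $\set{x,y}\cap\set{z,w}=\emptyset$ (together with consecutiveness) ensures $x\not\equiv z$. For a VC edge $(a,b)$, every variable $\equiv x$ is assigned $a$, every variable $\equiv z$ is assigned $b$, and every other variable $v$ on the path gets an edge-specific value $\angle{ab}_v$; atoms outside the path are padded with extra matching values so they are never profitable to delete. Consequently all designed $R$-tuples are \emph{diagonal}, $R(a,a)$ and $R(b,b)$, so a single tuple per vertex plays the role of the unary relation in $\vc$, the witnesses of $(D',q)$ correspond exactly to the edges of $G$, and the minimum contingency sets $\bigset{R(a,a)}{a\in\Gamma}$ match minimum vertex covers one-for-one. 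This diagonal identification (rather than fresh junction constants) is the missing idea; without it, or some equivalent device that makes one endogenous tuple serve both $R$-atom roles for all edges incident to a vertex, the reduction as you describe it cannot establish \np-hardness.
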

\begin{proof}[Proof Sketch]
Given $R(x,y), R(z,w)$ as in the statement of the theorem, there must be an atom $S(u,v)$, with
$S\ne R$ on the
path between them, and $u\in \set{x,y}$ and $v \not\in \set{x,y}$.  
Now, as in the proof of \Cref{unary path}, we reduce $\res(\vc)$ to $\res(q)$.  We map any database
$D\models \vc$ to a database $D' \models q$, where  $R'$ contains $\bigset{(a,a)}{A(a) \in~D}$ 
plus other multiple, extra values for any other atoms of the relation $R$ in $q$ to the left
of $R(x,y)$ or the right of $R(z,w)$ and $S' = \bigset{(a,b)}{R(a,b) \in D}$. 
Same as in the unary case, there is no assumption about the linearity of the query.
Details are in \Cref{sec: proofs}.
\end{proof}

Unary and Binary Paths are the simplest of the hard patterns.   By \Cref{unary path} and \Cref{binary
  path}, they always force their queries to be hard.  

 {Since we have established that an sj- query either has a triad or is pseudo-linear} (\cref{thm: no triad means linear})
 {and because we have proved that triads imply hardness} (\cref{thm: triads in sj}), 
 {we can now focus on the pseudo-linear queries}.

In the next sections we study the more subtle pseudo-linear ssj binary queries, which do not contain paths.

\section{Queries with exactly two $R$-atoms}\label{sec:2R}

\begin{figure*}[t]
\centering
\begin{tabular}{  | c | c | c | c |  }

\hline
 Pattern Name	& Binary Graph	& PTIME cases	& NP-hard cases \\ 
\hline
	$\chain$
		&
	\begin{tikzpicture}[scale=.35,
				every circle node/.style={fill=white, minimum size=7mm, inner sep=0, draw}]
	\node (1) at (0,0) [] {$x$};
	\node (2) at (3,0) [] {$y$};
	\node (3) at (6,0) [] {$z$};

	\path[->, line width=1pt, auto]
		(1) 	edge[color=black] node {$R$}	(2)
		(2) 	edge[color=black] node {$R$}	(3)
		;

	\end{tikzpicture}
	
	&
	No PTIME case
	&
	\begin{tikzpicture}[scale=.35,
				every circle node/.style={fill=white, minimum size=7mm, inner sep=0, draw}]
	\node (1) at (0,0) [] {$x$};
	\node (2) at (3,0) [] {$y$};
	\node (3) at (6,0) [] {$z$};
	\node (4) at (9,0) [] {};
	\node (5) at (-3,0) [] {};

	\path[->, line width=1pt, auto]
		(1) 	edge[color=black] node {$R$}	(2)
		(2) 	edge[color=black] node {$R$}	(3)
		(1) 	edge[-, color=black, dotted] node {}	(5)
	(3) 	edge[-, color=black, dotted] node {}	(4)
		
		;
    \path[->, dotted, line width=1pt, color=black, out=120, in=60, distance=1.5cm, auto]
		(1) 	edge[color=black] node {$A$}	(1)
		(2) 	edge[color=black] node {$B$}	(2)
        (3) 	edge[color=black] node {$C$}	(3)
		
		;

	\end{tikzpicture}
\\
\hline
	$\conv$
    &
	\begin{tikzpicture}[scale=.35,
				every circle node/.style={fill=white, minimum size=7mm, inner sep=0, draw}]
	\node (1) at (0,0) [] {$x$};
	\node (2) at (3,0) [] {$y$};
	\node (3) at (6,0) [] {$z$};

	\path[->, line width=1pt, auto]
		(1) 	edge[color=black] node {$R$}	(2)
		(3) 	edge[color=black] node[above] {$R$}	(2)
		;

	\end{tikzpicture}
	&
	\begin{tikzpicture}[scale=.35,
			every circle node/.style={fill=white, minimum size=7mm, inner sep=0, draw}]
\node (1) at (0,0) [] {$x$};
\node (2) at (3,0) [] {$y$};
\node (3) at (6,0) [] {$z$};
\node (4) at (9,0) [] {};
\node (5) at (-3,0) [] {};

\path[->, line width=1pt, auto]
	(1) 	edge[color=black] node {$R$}	(2)
	(3) 	edge[color=black] node[above] {$R$}	(2)	
	(1) 	edge[color=black, out=120, in=60, distance=1.5cm] node {$A$}	(1)
	(2) 	edge[color=black, out=120, in=60, distance=1.5cm, dotted] node {$B$}	(2)
	(3) 	edge[color=black, out=120, in=60, distance=1.5cm] node {$C$}	(3)
	(1) 	edge[-, color=black, dotted] node {}	(5)
	(3) 	edge[-, color=black, dotted] node {}	(4)

	;
\end{tikzpicture}
	&
	\begin{tikzpicture}[scale=.35,
			every circle node/.style={fill=white, size=7mm, inner sep=0, draw}]

    \node (x) at (-9,0) [] {$x$};
    \node (y) at (-6,0) [] {$y$};
    \node (z) at (-3,0) [] {$z$};
    \node (1) at (0,0) [] {};
    \node (2) at (-12,0) [] {};

    \draw[transform canvas={yshift=+0ex},->, line width=1pt](x) to node[above]{$R$} (y);
    \draw[transform canvas={yshift=+0ex},->, line width=1pt](z) to node[above]{$R$} (y);

    \path[-, line width=1pt, auto]
	(x)	edge[dotted, color=black, out=120, in=60, distance=1.5cm] node {$H^{x}$}	(z)
	(2) 	edge[-, color=black, dotted] node {}	(x)
	(1) 	edge[-, color=black, dotted] node {}	(z)
	;
	
\end{tikzpicture}

\\
\hline
	$\perm$	
	&
	
	\begin{tikzpicture}[scale=.35,every circle node/.style={fill=white, minimum size=7mm, inner sep=0, draw}]
\node (x) at (0,0) [] {$x$};
\node (y) at (3,0) [] {$y$};

\draw[transform canvas={yshift=+0.5ex},->, line width=1pt](x) to node[above]{$R$} (y);
\draw[transform canvas={yshift=-0.5ex},->, line width=1pt](y) to node[below]{$R$} (x);
\end{tikzpicture}	
	&
	\begin{tikzpicture}[scale=.35,every circle node/.style={fill=white, minimum size=7mm, inner sep=0, draw}]
\node (x) at (0,0) [] {$x$};
\node (y) at (3,0) [] {$y$};
\node (z) at (-3,0) [] {};

\draw[transform canvas={yshift=+0.0ex},->, line width=1pt, out=120, in=60, distance=1.5cm, dotted](x) to node[above]{$A$} (x);
\draw[transform canvas={yshift=+0.5ex},->, line width=1pt](x) to node[above]{$R$} (y);
\draw[transform canvas={yshift=-0.5ex},->, line width=1pt](y) to node[below]{$R$} (x);
 \path[-, line width=1pt, auto]
	(x) 	edge[-, color=black, dotted] node {}	(z)
	;

\end{tikzpicture}
	&
	\begin{tikzpicture}[scale=.35,every circle node/.style={fill=white, minimum size=7mm, inner sep=0, draw}]
\node (x) at (0,0) [] {$x$};
\node (y) at (3,0) [] {$y$};
\node (z) at (-3,0) [] {};
\node (w) at (6,0) [] {};

\draw[transform canvas={yshift=+0.0ex},->, line width=1pt, out=120, in=60, distance=1.5cm](x) to node[above]{$A$} (x);
\draw[transform canvas={yshift=+0.5ex},->, line width=1pt](x) to node[above]{$R$} (y);
\draw[transform canvas={yshift=-0.5ex},->, line width=1pt](y) to node[below]{$R$} (x);
\draw[transform canvas={yshift=+0.0ex},->, line width=1pt, out=120, in=60, distance=1.5cm](y) to node[above]{$B$} (y);
\path[-, line width=1pt, auto]
	(x) 	edge[-, color=black, dotted] node {}	(z)
	(y) 	edge[-, color=black, dotted] node {}	(w)
	;
\end{tikzpicture}

\\

\hline
	REP
	&
	
	\begin{tikzpicture}[scale=.35,
				every circle node/.style={fill=white, minimum size=7mm, inner sep=0, draw}]
	\node (1) at (0,0) [] {$x$};
	\node (2) at (3,0) [] {$y$};

	\path[->, line width=1pt, auto]
		(1) 	edge[color=black,out=120, in=60, distance=1.5cm] node {$R$}	(1)
		(1) 	edge[color=black] node {$R$}	(2)
		
		;

	\end{tikzpicture}
	&
	\begin{tikzpicture}[scale=.35,
				every circle node/.style={fill=white, minimum size=7mm, inner sep=0, draw}]
	\node (1) at (0,0) [] {$x$};
	\node (2) at (3,0) [] {$y$};
	\node (3) at (-3,0) [] {};
	\node (4) at (6,0) [] {};

	\path[->, line width=1pt, auto]
		(1) 	edge[color=black,out=120, in=60, distance=1.5cm] node {$R$}	(1)
		(1) 	edge[color=black] node {$R$}	(2)
		(2) 	edge[color=black,out=120, in=60, distance=1.5cm] node {$A$}	(2)
		(1) 	edge[-,dotted, color=black] node {}	(3)
		(4) 	edge[-,dotted, color=black] node {}	(2)

		;

	\end{tikzpicture}
	&
	No NP-hard case

\\

\hline
\end{tabular}

\caption{Binary graphs representing all the possible self-join patterns with two $R$-atoms, where $R$-atoms share at least one variable.}
\label{Fig_Overview}
\end{figure*}

In this section we cover the complexity of pseudo-linear ssj binary queries with exactly two atoms referring to the same relation. We will refer to this relation 
as $R$. As always, we assume that our query is minimal and connected,
and from now on also assume that $q$ does not contain a triad or a path as described in \Cref{unary path} and \Cref{binary path}; otherwise we would already know
that $\res(q)$ is $\np$-complete.
Even in this restricted setting, 
we will see that there is a surprisingly rich variety of structures, requiring different strategies
to determine their complexity.

Because there are no paths, $R$ must be a binary relation and the two $R$-atoms must have at least one variable in
common.
\begin{itemize}
\item \emph{Chains} have one common variable and join in different attributes, e.g., $R(x,y), R(y,z)$;
\item \emph{Confluences} have one common variable and join in the same attribute, e.g., $R(x,y), R(z,y)$;
\item \emph{Permutations} share both variables but join in different attributes, e.g., $R(x,y), R(y,x)$.
\item Queries with \emph{repeated variables (REP)} have repeated variables in at least one $R$-atom e.g., $R(x,x), R(x,y), B(y)$
\end{itemize}

\Cref{Fig_Overview} shows the binary graphs for each these patterns, which helps visualize the subtle variations in how 
the $R$-atoms can join. We consider each of these possibilities in turn and characterize their complexity.

\subsection{2-Chains}\label{sec: 2chains}

The \emph{chain query} is the simplest possible minimal sj-query with two atoms and we proved earlier that 
its resilience is \np-complete (\cref{chain is npc}). In this section we prove that the chain structure is quite robust and 
that any of its  {expansions} remains \np-complete. 

We call  {``expansions''} of $\chain$ any query obtained by adding new relations to it, i.e. relations that do not self-join. 
We start by presenting the expansions obtained by adding unary relations and then generalize that to any expansion.

\Cref{fig: chain variations} shows how unary relations can be added to $\chain$. Each one can appear by itself
or combined with others. While the proof involves several subcases, the important take-away is that all 8 of these expansions are hard.

\begin{proposition}[Chains with unary relations]\label{unary chain variation in hard}
Any expansion of $\chain$ with unary relations is \np-complete.
\end{proposition}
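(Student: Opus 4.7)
The plan is to establish NP-hardness for each of the $2^3 = 8$ possible subsets of unary atoms from $\{A(x), B(y), C(z)\}$ that can be appended to $\chain$. Membership in NP is immediate, since a candidate contingency set can be verified in polynomial time. For NP-hardness, I would reduce from $\res(\chain)$, which is NP-complete by \cref{chain is npc}, to each $\res(q')$ separately.

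The uniform construction is as follows. Given an instance $(D, k)$ of $\res(\chain)$, I form $(D', k)$ for $\res(q')$ by keeping all $R$-tuples from $D$ and saturating every unary atom of $q'$ by adding $U(a)$ for each value $a$ occurring at the corresponding position in $D$. This puts $\chain$-witnesses of $D$ and $q'$-witnesses of $D'$ in bijection, so the easy direction—any $R$-only $\chain$ contingency set is still a valid contingency set for $q'$ on $D'$—is immediate.

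The main obstacle is the reverse direction: a $q'$ contingency set on $D'$ may use unary tuples, and a deletion such as $A(a_0)$ kills \emph{all} $q'$-witnesses starting with $a_0$, which can be more effective than a single $R$-deletion. To neutralize this, I would strengthen the hardness reduction behind \cref{chain is npc} so that its output instances satisfy a bounded-participation property at the endpoints: each $x$-value appears as the first coordinate of at most one $R$-tuple, and each $z$-value as the second coordinate of at most one $R$-tuple. Under this restriction, deletions of $A(a_0)$ and $C(c_0)$ are equivalent in effect to deleting a single $R$-tuple, and hence are never strictly better. The subcases omitting $B(y)$ then follow directly.

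The subtle cases are those that include $B(y)$: since $y$ is shared by both $R$-atoms, bounding $y$-participation in the same way would collapse $\chain$ into a trivially easy instance. I would handle these subcases via a custom gadget that duplicates the middle layer of $D$, creating multiple fresh $y$-copies per original $y$-value so that each copy participates in exactly one witness while the aggregate structure still encodes the hard underlying $\chain$ instance. This bounds the reach of $B(y)$-deletions without trivializing the reduction. I expect this last step—and, by extension, all cases containing $B(y)$—to be the main technical crux of the proof, while the cases with only $A(x)$ and/or $C(z)$ should fall out cleanly from the endpoint-bounding construction.
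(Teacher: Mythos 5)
There is a genuine gap, and it sits exactly at the step you flag as the crux of the easy cases, not just the $B(y)$ ones. Your plan is to reduce from $\res(\chain)$ after ``strengthening'' the reduction behind \cref{chain is npc} so that in the produced instances every $x$-value is the first coordinate of at most one $R$-tuple and every $z$-value the second coordinate of at most one $R$-tuple. No such strengthening can exist (unless $\p=\np$): since you must saturate the unary atoms over all values occupying the corresponding positions, the condition becomes a global bound of out-degree $\leq 1$ and in-degree $\leq 1$ on the $R$-digraph, i.e.\ the instance is a disjoint union of directed paths and cycles, and on such instances the resilience of $\chain$ (delete the fewest edges so that no directed path of length two survives) is computable by a trivial greedy/DP. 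Even the one-sided bound that would suffice to tame $A$-deletions alone (out-degree $\leq 1$) forces a functional graph, whose components are unicyclic and hence of treewidth at most two, where the problem is again in \p. So the source problem you reduce from is polynomial on exactly the instances your reduction needs, and NP-hardness cannot be transferred this way. The $B(y)$ gadget has a second problem: if each fresh $y$-copy really participates in exactly one witness, then every $B$-tuple and every middle $R$-tuple kills a single witness, and the whole optimization collapses to choosing $A$-values and $C$-values plus paying one per uncovered witness --- a bipartite prize-collecting vertex-cover solvable by min-cut --- so the ``aggregate structure'' does not encode a hard instance; moreover the naive duplication creates unintended witnesses whose middle element is an original value, breaking the claimed witness bijection.

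The paper avoids this tension by not reducing from $\res(\chain)$ at all. It redoes the 3SAT reduction separately for the expansions (\cref{bchain is npc,achain is npc,acchain is npc}), designing gadget databases in which each unary tuple either joins with a unique $R$-tuple of the gadget (so it can always be exchanged for that $R$-tuple, as for the $B$-tuples in $\bchain$) or is deliberately built into the clause/connector gadgets and is \emph{meant} to be chosen in minimum contingency sets when a clause is unsatisfied (the connector $A$-tuples in $\achain$, the $C$-tuples in $\acchain$). In other words, the bounded-participation property you want is engineered locally inside fresh 3SAT gadgets, where it is compatible with hardness, rather than imposed globally on hard $\chain$ instances, where it is not.
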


\begin{proof}[Proof Sketch]
We prove these expansions are hard by a reduction from 3\sat. The same idea used to prove that
$\res(\chain)$ is hard will work here as long as we adapt the variable and clause gadgets to deal with
the existence of the unary relations. 
\Cref{bchain is npc,achain is npc,acchain is npc} in \cref{sec: proofs} contain the details.
\end{proof}

Now we can generalize this hardness result to any chain expansion using a reduction idea similar to the ones used for the proofs
of \cref{unary path,binary path} for paths.

\begin{proposition}[$\twoR$ Chains]\label{2chains are hard}
If a query $q$ contains a 2-chain as its only self-join, then $\res(q)$ is \np-complete.
\end{proposition}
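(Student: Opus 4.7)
The plan is to reduce $\res(\chain)$ to $\res(q)$ for any query $q$ containing $\chain \datarule R(x,y),R(y,z)$ as its only self-join; by \Cref{chain is npc} this yields \np-completeness. The reduction follows the template used to prove \Cref{unary path} and \Cref{binary path} for paths: given an arbitrary $\chain$-database $D$, I construct a $q$-database $D'$ that embeds $R^D$ verbatim and populates every remaining atom of $q$ with tuples built from a mixture of chain values and fresh constants, so that every chain-witness of $D$ extends to one or more $q$-witnesses of $D'$ while no non-$R$ tuple is ever strictly helpful in a minimum contingency set.

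Concretely, let $U = \var(q)\setminus\{x,y,z\}$ be the non-chain variables of $q$ and choose $M = |D|+1$. For each $u \in U$ I would introduce $M$ fresh constants $*_u^1,\ldots,*_u^M \notin \Adom(D)$, and for each non-$R$ atom $T(\vec v)$ of $q$ and each ``layer'' $i \in [M]$ I would add to $T^{D'}$ all tuples obtained by ranging the chain variables in $\vec v$ over $\Adom(D)$ and setting each non-chain variable $u \in \vec v$ to $*_u^i$. The inequality $\rho(D',q) \leq \rho(D,\chain)$ is then immediate, since any $\chain$-contingency $\Gamma \subseteq R^D$ kills every chain-witness and hence every $q$-witness. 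For the reverse inequality, I would show that any minimum $q$-contingency can be transformed, without any increase in size, into one consisting only of tuples from $R^D$. The key step is that each non-$R$ tuple involving at least one non-chain variable lies in only a $1/M$-fraction of the $q$-witnesses extending any given chain-witness, so once $M$ is large enough such a tuple cannot be strictly cheaper than the corresponding $R^D$-tuple.

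The main obstacle is handling non-$R$ atoms whose variables are \emph{all} chain variables (e.g.\ $A(x)$, $A(x,y)$, or $A(x,z)$), because these cannot be padded with fresh constants. In particular an atom like $A(x,z)$ can potentially be strictly cheaper to delete than any $R^D$-based contingency, since a single tuple $A(a,c)$ simultaneously kills every chain-witness of the form $(a,\cdot,c)$. To overcome this I would either preprocess $D$ into an equivalent chain-database $\tilde D$ with $\rho(\tilde D,\chain) = \rho(D,\chain)$ whose chain-witnesses have pairwise-distinct projections onto each subset of chain variables appearing in a pure-chain non-$R$ atom, or else reduce not from $\res(\chain)$ directly but from the already-hard unary expansions of \Cref{unary chain variation in hard}, absorbing the problematic pure-chain unary atoms into the hard starting instance and extending only to the remaining (mixed or non-chain) atoms via the fresh-constant construction above. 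Either route gives a polynomial-time reduction that preserves resilience, so $\res(q)$ inherits \np-completeness.
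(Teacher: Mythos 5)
Your overall strategy coincides with the paper's: reduce from $\res(\chain)$, or from the hard unary expansions of \cref{unary chain variation in hard} when unary atoms on $x,y,z$ are present, and populate the remaining atoms with fresh constants so that no non-$R$ tuple is ever preferable to an $R$-tuple. The paper implements the padding with per-witness constants $\angle{abc}_v$, while you use $M=|D|+1$ shared layers plus a counting argument; both variants work for atoms containing at least one non-chain variable (your ``$1/M$-fraction'' phrasing is loose, but with $M>|D|$ and the bound $\rho(D',q)\le|R^D|$ the layer-counting argument does go through).

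The genuine gap is in your treatment of binary atoms whose two variables are both chain variables. You correctly flag an atom over $\{x,z\}$ as the dangerous case, but neither of your remedies closes it. The preprocessing route is asserted without justification and cannot work in general: for an atom over $\{x,y\}$, two witnesses sharing the tuple $R(a,b)$ necessarily share their $(x,y)$-projection, so ``pairwise-distinct projections'' is unattainable (for such atoms no fix is needed anyway, since deleting $S(a,b)$ kills exactly the witnesses that use $R(a,b)$ and is never strictly better); for $\{x,z\}$, the existence of a resilience-preserving transformation with distinct projections is precisely what would require proof, and it is far from clear it exists. Your second route only absorbs the \emph{unary} pure-chain atoms into the hard starting instance, so binary atoms over two chain variables fall through it entirely --- they are neither ``mixed'' nor ``non-chain,'' yet they still must be populated and controlled. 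The missing observation, which is how the paper dispatches this case in one line, is that it cannot arise: throughout \cref{sec:2R} the query is assumed minimal, connected, pseudo-linear and triad-free, and an endogenous atom containing both $x$ and $z$ would form a triad with $R(x,y)$ and $R(y,z)$ (so that case is already covered by \cref{thm: triads in sj}); an \emph{exogenous} atom over $\{x,z\}$ is harmless, since it cannot be deleted and populating it with the witness projections creates no new witnesses. Atoms over $\{x,y\}$ or $\{y,z\}$ are likewise harmless by the swap argument above. With these observations your construction closes; as written, the proposal does not establish resilience preservation for all queries in the class.
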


\begin{figure}
\begin{subfigure}[t]{.45\linewidth}
\centering
\begin{tikzpicture}[scale=.35,
			every circle node/.style={fill=white, minimum size=7mm, inner sep=0, draw}]
\node (1) at (0,0) [] {$x$};
\node (2) at (3,0) [] {$y$};
\node (3) at (6,0) [] {$z$};

\path[->, line width=1pt, auto]
	(1) 	edge[color=black] node {$R$}	(2)
	(2) 	edge[color=black] node {$R$}	(3)	
	(1) 	edge[color=black, out=120, in=60, distance=1.5cm, dotted] node {$A$}	(1)
	(2) 	edge[color=black, out=120, in=60, distance=1.5cm, dotted] node {$B$}	(2)
	(3) 	edge[color=black, out=120, in=60, distance=1.5cm, dotted] node {$C$}	(3)
	;
\end{tikzpicture}
\caption{Expansions of $\chain$.}
\label{fig: chain variations}
\end{subfigure}
~
\begin{subfigure}[t]{.45\linewidth}
\centering
\begin{tikzpicture}[scale=.35,
			every circle node/.style={fill=white, minimum size=7mm, inner sep=0, draw}]
\node (1) at (0,0) [] {$x$};
\node (2) at (3,0) [] {$y$};
\node (3) at (6,0) [] {$z$};

\path[->, line width=1pt, auto]
	(1) 	edge[color=black] node {$R$}	(2)
	(3) 	edge[color=black] node[above] {$R$}	(2)	
	(1) 	edge[color=black, out=120, in=60, distance=1.5cm] node {$A$}	(1)
	(2) 	edge[color=black, out=120, in=60, distance=1.5cm, dotted] node {$B$}	(2)
	(3) 	edge[color=black, out=120, in=60, distance=1.5cm] node {$C$}	(3)
	;
\end{tikzpicture}
\caption{Expansions of $\conv$.}
\label{fig: 2conf}
\end{subfigure}
\caption{Expansions of $\chain$ and $\conv$ with unary relations.}
\end{figure}

\subsection{2-Confluences}\label{sec: 2conf}

\emph{Confluences} are defined by relation $R$ joining only in the same attribute. We refer to this 
pattern as $\conv$ (\cref{fig:  2conf}). 

Note that as a stand-alone query $\conv$ is not minimal, so we need 
other atoms connected to both $x$ and $z$. An example of a minimal query containing a confluence is
$\conv^{AC}$ $\datarule$ $A(x),R(x,y),R(z,y), C(z)$.

We next show that the standard \flow algorithm without any modifications works correctly for linear queries with 
no self-join other than one $2$-confluence, thus generalizing the idea of \cref{2conf is easy}.

\begin{proposition}[$\twoR$ $\conv$]\label{prop: 2conf}
$\res(q)$ for any linear query $q$ with $\conv$ as its only self-join pattern can be solved in \p\ by standard
network flow.
\end{proposition}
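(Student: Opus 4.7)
My plan is to adapt the standard linear-query network-flow reduction, letting each endogenous $R$-tuple contribute \emph{two} edges to the flow network, and then showing that the confluence forces these two edges to interact in such a constrained way that the standard max-flow / min-cut correspondence still computes $\rho(D, q)$. Because $q$ is linear, I can order its atoms so that each variable occurs in a contiguous interval, with the two $R$-atoms of $\conv$ sitting symmetrically around the shared variable $y$. I then build a layered network $N$ with source $s$, sink $t$, a node for each (variable, value) pair, and one capacity-$1$ edge per endogenous tuple between consecutive variable nodes (capacity $\infty$ for exogenous tuples), exactly as in the sj-free case. The only twist is that each endogenous tuple $R(a,b)$ contributes two capacity-$1$ edges to $N$: an edge $x_a \to y_b$ for its role in $R(x,y)$, and an edge $y_b \to z_a$ for its role in $R(z,y)$.

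The pivotal structural observation is that in any $s$-$t$ cut $(S, T)$ of $N$, the two edges arising from a single $R$-tuple $R(a,b)$ can never both cross the cut: the first crosses iff $x_a \in S$ and $y_b \in T$, while the second crosses iff $y_b \in S$ and $z_a \in T$, and these would place $y_b$ on both sides simultaneously. This is exactly the feature of confluence (both $R$-atoms join $y$ in the same attribute position) that forces the two roles of each $R$-tuple to meet at a single pivot node $y_b$, and it is precisely what fails for chains. From this, the inequality $\rho(D, q) \leq \textup{mincut}(N)$ follows easily: given a min cut, let $\Gamma$ be the set of tuples underlying its edges; by the observation no $R$-tuple is counted twice, so $|\Gamma| = \textup{mincut}(N)$, and every witness $(a,b,c)$ gives an $s$-$t$ path $s \to x_a \to y_b \to z_c \to t$ which must traverse a cut edge, so $\Gamma$ is a contingency set.

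For the reverse inequality, given a minimum contingency set $\Gamma$, I use the natural forward-reachability partition: let $S$ be the set of nodes reachable from $s$ along edges whose underlying tuple is \emph{not} in $\Gamma$, and $T = V \setminus S$. Because $\Gamma$ is a contingency set, $t \notin S$, so $(S, T)$ is a valid cut; every crossing edge comes from a $\Gamma$-tuple by construction, and by the observation above each $R$-tuple in $\Gamma$ contributes at most one crossing edge (each $A$- or $C$-tuple trivially contributes at most one). Hence the cut size is at most $|\Gamma|$, giving $\textup{mincut}(N) \leq \rho(D, q)$. Combining both inequalities with the max-flow / min-cut theorem yields the claimed polynomial-time algorithm. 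The step I expect to be the main obstacle is formally verifying this second direction — in particular, checking that the forward-reachability cut genuinely never double-counts an $R$-tuple, which ultimately rests on the confluence-specific fact that both $R$-atoms join $y$ in the same attribute position.
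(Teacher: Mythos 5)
Your overall strategy is the same as the paper's: build the linear-query flow network, let each endogenous $R$-tuple contribute two unit-capacity edges (one per $R$-atom role), and prove that min cut equals minimum contingency set by showing that the cuts one cares about never charge an $R$-tuple twice. Direction 1 is fine (and in fact does not need your observation at all, since merging the two edges of a doubly-cut tuple only shrinks $\Gamma$). The genuine gap is in your ``pivotal structural observation,'' which is false as stated for the full scope of the proposition. The statement covers \emph{any} linear query with $\conv$ as its only self-join, so the two $R$-atoms need not be adjacent: consider $q \datarule A(x), R(x,y), B(y), R(z,y), C(z)$. In the layered network there are now \emph{two distinct} $y$-layer nodes for the same value $b$ (one after $R(x,y)$, one after $B(y)$), joined by the $B(b)$ edge. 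The first $R$-edge of $R(a,b)$ ends at the first copy, the second $R$-edge starts at the second copy, so your ``$y_b$ would be on both sides'' contradiction does not apply: a partition can put the first copy in $T$ and the second in $S$ (the $B(b)$ edge is then a backward edge and is not counted), and both $R$-edges of $R(a,b)$ cross. Your proof leans on this observation exactly where it matters, namely to show the forward-reachability cut built from a minimum $\Gamma$ has value at most $\abs{\Gamma}$, and as written that step fails for such queries. You also silently assume the clean $x$--$y$--$z$ layering; the paper spends its first paragraph normalizing the query (an atom $F(x,y)$ in $q_\ell$ may be treated as exogenous since choosing $R(a,b)$ is never worse, and an occurrence of $x$ between the two $R$-atoms can be moved to the left), precisely so that the separators flanking the middle part are exactly $\{y\}$.

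The gap is repairable, but it needs an argument you have not supplied. The paper's route is its ``minimal cut lemma'': a \emph{minimal} cut never contains both copies $R_\ell(a,b), R_r(a,b)$, proved by splicing --- if removing either copy alone restores a flow, the two flows $A(a,b)R_\ell(a,b)B(b)R(c,b)C(b,c)$ and $A(a',b)R(a',b)B(b)R_r(a,b)C(b,a)$ recombine at the shared $y$-value $b$ into a flow $A(a',b)R(a',b)B(b)R(c,b)C(b,c)$ that avoids both copies, contradicting cuthood; this works even with a nonempty middle section because everything between the two $R$-atoms carries $y$. Alternatively, staying within your reachability-cut argument, you must prove the confluence-specific fact in its correct form: after the normalization, every separator strictly between the two $R$-atoms contains $y$ and every edge there preserves the $y$-value, so any $\Gamma$-avoiding path that reaches the $y=b$ node just before $R(z,y)$ must have passed through the $y=b$ node just after $R(x,y)$; hence if the tail of the second $R$-edge of $R(a,b)$ is reachable, the head of its first $R$-edge is too, and at most one of the two edges crosses the reachability cut. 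Either patch completes your proof; without one of them, the second inequality $\textup{mincut}(N) \leq \rho(D,q)$ is not established for the general linear queries the proposition covers.
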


In \cref{prop: 2conf} we assume that $q$ is linear, thus guaranteeing that every path
in $q$ from $x$ to $z$ involves the variable $y$, and therefore we are able to create a network flow to solve the problem.
Note that this is not true in general for pseudo-linear queries containing $\conv$.
For example, consider $c\! f_p \datarule R(x,y) H^\exSymb(x,z) R(z,y)$. It is easy to see that $c\! f_p$ is pseudo-linear
but we have $\res(c\! f_p) \equiv \res(\vc)$.   Thus, we cover all possible cases for $\conv$ 
by observing,
	
\begin{proposition}[$\twoR$]\label{2conf hard}
Let $q$ be a pseudo-linear query with $\conv$ as its only self-join pattern.  If $q$ contains an
exogenous path from $x$ to $z$ not involving the variable $y$, then $\res(q)$ is $\np$-complete;
otherwise it is in \p.
\end{proposition}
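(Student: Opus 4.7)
The plan is to handle the two directions separately, using the characterization of pseudo-linear queries from \cref{thm: no triad means linear}.

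For the hardness direction, I would mimic the reduction sketched for the motivating example $c\!f_p \datarule R(x,y), H^\exSymb(x,z), R(z,y)$. Given any pseudo-linear $q$ with $\conv$ as its only self-join and an exogenous path $\pi$ from $x$ to $z$ avoiding $y$, I would reduce from $\res(\vc)$. Given a \vc-instance $(D_0,k)$ with $R_0,S_0$ as the two $R$-atoms and $S_0$ as the edge relation, build $D' \models q$ as follows: pick a fresh constant $*$, populate the two self-join $R$-atoms of $q$ with the set $\{R(v,*)\mid R_0(v)\in D_0\}$ (playing the role of vertices), and populate the exogenous atoms along $\pi$ so that the endpoints realize exactly the edges of $S_0$ (introducing fresh intermediate constants when $\pi$ has length greater than one — these do not appear anywhere else and so are harmless). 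All other exogenous atoms of $q$ are filled with a single dummy tuple that completes the joins, and all other endogenous relations of $q$ (which, by $\conv$ being the only self-join, are not repeated) are made exogenous by domination or populated saturatedly so that deleting their tuples is never optimal. Witnesses of $q$ on $D'$ then correspond bijectively to edges of $G$, and the endogenous $R$-tuples needed to break all witnesses correspond exactly to vertex covers of $G$; hence $(D_0,k)\in\res(\vc)$ iff $(D',k)\in\res(q)$.

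For the PTIME direction, I would extend the flow construction of \cref{prop: 2conf} from linear to pseudo-linear queries in which every path from $x$ to $z$ in the hypergraph of $q$ passes through $y$. The assumption means $y$ is a \emph{vertex separator} between $x$ and $z$ in $q$'s hypergraph: every witness assigns some value $b$ to $y$ and decomposes into an ``$x$-side'' valuation (all atoms whose variables, under the witness, are reachable from $x$ without going through $y$) and a ``$z$-side'' valuation. Build a network $N(D,q)$ with source $s$, sink $t$, a node for every endogenous $R$-tuple and for every partial valuation of the two sides, and capacity-$1$ edges along each endogenous $R$-tuple with $\infty$-capacity elsewhere. Specifically, $s$ is connected through the $x$-side witnesses into the first $R$-atom $R(a,b)$, and $R(c,b)$ exits into the $z$-side witnesses which lead to $t$. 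Since the separator property guarantees that no witness can ``bypass'' the $R$-atoms, the standard argument shows that every contingency set induces an $s$-$t$ cut and vice versa, so $\rho(D,q)=\text{mincut}(N(D,q))$, computable in PTIME.

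The main obstacle will be the PTIME direction: justifying that pseudo-linearity together with the absence of an exogenous $x$-to-$z$ path bypassing $y$ is enough to make the two ``sides'' of a witness independent, so that the flow construction is sound. Concretely, I need to argue (i) that any endogenous atom lies entirely on one side of $y$ (which follows from pseudo-linearity combined with minimality), and (ii) that the presence of exogenous atoms touching both the $x$-side and the $z$-side cannot create ``diagonal'' witnesses that would invalidate the cut argument, because any such diagonal would be precisely the forbidden exogenous path. Once these two points are established, the correctness of the min-cut/max-flow argument proceeds exactly as in the linear case, completing the dichotomy.
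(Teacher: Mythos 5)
Your high-level plan coincides with the paper's (the paper states this proposition as an observation and supplies no separate proof): hardness by generalizing the fact that $\res(c\!f_p)\equiv\res(\vc)$, treating the exogenous $x$--$z$ path as the edge relation of a vertex-cover instance, and tractability by extending the flow argument of \cref{prop: 2conf}. On the hardness side your construction is essentially right, but two details are stated too loosely: atoms of $q$ that contain $x$ or $z$ cannot be filled with ``a single dummy tuple'' (they need one tuple per vertex, exactly as in the $t(v,a,b)$/$m(v,a,b)$ constructions of \cref{unary path,binary path}), and endogenous atoms containing $y$ (e.g.\ an atom $F(x,y)$) cannot be ``populated saturatedly'' since their tuples are forced by the witness; for those you need the replacement argument that deleting such a tuple is never better than deleting the corresponding $R$-tuple. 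Both are standard fixes.

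The genuine gap is in the \p\ direction. As written, your network puts capacity $1$ only on endogenous $R$-tuples and $\infty$ ``elsewhere''; that computes the wrong value as soon as $q$ has other endogenous atoms, which is the generic case. For $q_{\textrm{conf}}^{AC}\datarule A(x),R(x,y),R(z,y),C(z)$ and $D=\{A(1),R(1,b_1),R(1,b_2),R(3,b_1),R(4,b_2),C(3),C(4)\}$ the resilience is $1$ (delete $A(1)$), while every contingency set consisting only of $R$-tuples has size $2$; with your capacities the min cut would report $2$. So, as in \cref{prop: 2conf}, every endogenous tuple must contribute a capacity-$1$ edge. Once you do that, the crux of that proof reappears: each $R$-tuple occurs as \emph{two} edges of the network (once in the $R(x,y)$ role and once in the $R(z,y)$ role), and the whole point of the paper's argument is the minimal-cut lemma showing that a minimal cut never pays for both copies, which is what makes min-cut equal to minimum contingency size. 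Saying the argument ``proceeds exactly as in the linear case'' skips precisely this step; you must either carry the lemma over to your separator-based network or redesign the network so that each endogenous tuple is a single capacity-$1$ node and then re-prove both directions of the cut/contingency correspondence (including the degenerate witnesses where the same $R$-tuple fills both $R$-atoms). Finally, the separator claim you flag --- that absence of an \emph{exogenous} $x$--$z$ path avoiding $y$ forces \emph{every} connection between the two sides to pass through $y$, using pseudo-linearity and minimality --- is indeed a needed ingredient that your proposal leaves open; the paper does not discharge it explicitly either, but without it the cut argument is not sound.
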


\subsection{2-Permutations}\label{sec: 2perm}

We call two $R$-atoms sharing both variables a \emph{permutation}.
The smallest pattern that has this property is $R(x,y), R(y,x)$ (\cref{Fig_Overview}).
We show that permutations have both $\np$-complete and PTIME instances.

\introparagraph{Easy permutations}
We start with two easy permutations.
\begin{align*}
q_\textup{perm} &\datarule R(x,y),R(y,x)  \quad & \quad q_\textup{perm}^A &\datarule A(x),R(x,y),R(y,x) 
\end{align*}

\begin{proposition}\label{unbounded perm is easy}
$\res(q_\textup{perm})$ and $\res(q_\textup{perm}^A)$ are in \p.
\end{proposition}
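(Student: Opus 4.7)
For $\res(\perm)$ my plan is a direct combinatorial argument. A database $D$ satisfies $\perm$ exactly when some pair $(a,b)$, possibly with $a=b$, has both $R(a,b)$ and $R(b,a)$ in $D$. The minimal witnessing tuple-sets split cleanly into self-loop units (each consisting of the single tuple $R(a,a)$, which must be deleted) and symmetric-$2$-cycle units $\{R(a,b),R(b,a)\}$ with $a\neq b$ (each killed by deleting either of its two tuples). Since the tuple-sets of distinct units are pairwise disjoint, a minimum contingency set has size exactly $(\#\text{self-loops})+(\#\text{symmetric pairs})$, and this number is computable in polynomial time by a single scan of $R^D$.

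For $\res(q_\textup{perm}^A)$ the situation is more delicate because one $A$-tuple can participate in many witnesses at once. My plan is to parameterize any candidate contingency set by the subset $T\subseteq V_A:=\{a:A(a)\in D\}$ of $A$-tuples selected for deletion, and then compute the optimal $R$-cost $c_R(T)$ in closed form. The key observation is that for any unordered pair $\{a,b\}$ with $R(a,b),R(b,a)\in D$, deleting a \emph{single} $R$-tuple simultaneously kills both potential witnesses $(a,b)$ and $(b,a)$. Thus, given $T$, an $R$-deletion is needed for such a pair precisely when $\{a,b\}\not\subseteq T$, and for a self-loop $R(a,a)$ precisely when $a\notin T$. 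Substituting and simplifying, $|T|+c_R(T)$ equals a constant minus
\[
g(T)\;=\;E_D(T)\;+\;\sum_{a\in T}w(a),
\]
where $E_D$ is the undirected graph on $V_A$ whose edges are the symmetric pairs with both endpoints in $V_A$, $E_D(T)$ counts the edges of $E_D$ that lie entirely inside $T$, and $w(a)=\ell(a)+[R(a,a)\in D]-1$ with $\ell(a)$ the number of symmetric pairs whose unique $V_A$-endpoint is $a$.

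Maximizing $g$ over $T\subseteq V_A$ is a textbook instance of maximum-weight closure (project selection): introduce a profit-$1$ project for each edge of $E_D$, a profit-$w(a)$ project for each vertex $a\in V_A$, and the precedence constraint that selecting an edge-project forces selecting both endpoint vertex-projects. By the standard reduction of project selection to minimum $s$-$t$ cut, an optimal $T^\star$ can be computed in polynomial time, which in turn yields an optimal contingency set. Hence $\res(q_\textup{perm}^A)\in\p$.

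The step I expect to demand the most care is the algebraic rewriting of $|T|+c_R(T)$ as \emph{constant $-\,g(T)$}: one must classify each symmetric pair as double (both endpoints in $V_A$), leaf (exactly one endpoint in $V_A$), or self-loop, and check that once $T$ is fixed, mixing $A$- and $R$-deletions on a single pair is never strictly better than using one $R$-deletion. Once this bookkeeping is done, invoking the min-cut reduction and concluding PTIME membership for both queries is routine.
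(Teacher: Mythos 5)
Your argument for $q_\textup{perm}$ is essentially the paper's: the witness tuple-sets (self-loops and symmetric pairs) are pairwise disjoint, so the resilience is just their count. For $q_\textup{perm}^A$ you are correct but take a genuinely different route. The paper notes that a minimum contingency set never deletes both tuples of a symmetric pair, so each witness $(a,b)$ forces a choice between $A(a)$ and (one representative of) the unit $\{R(a,b),R(b,a)\}$; this is precisely vertex cover in the bipartite graph with $A$-tuples on one side and pair/self-loop units on the other, hence in \p. You instead fix the set $T$ of deleted $A$-tuples, argue the residual $R$-cost is forced (one deletion per unit not covered by $T$, units having disjoint $R$-tuples), and rewrite $|T|+c_R(T)$ as a constant minus $g(T)=E_D(T)+\sum_{a\in T}w(a)$, which you maximize as a maximum-weight closure / project-selection instance via min $s$-$t$ cut. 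The bookkeeping does check out: with $w(a)=\ell(a)+[R(a,a)\in D]-1$ the formula correctly credits leaf pairs and self-loops, even though your intermediate sentence that an $R$-deletion is needed ``precisely when $\{a,b\}\not\subseteq T$'' is literally false for pairs with only one endpoint in $V_A$ (there $a\in T$ already suffices); this is exactly the double/leaf/self-loop case analysis you flag, and your final formula resolves it, so I count it as a wording slip rather than a gap. Both routes bottom out in a flow computation; the paper's bipartite-vertex-cover formulation is shorter and makes the per-witness choice explicit, while your closure formulation costs more algebra but spells out why one never mixes $A$- and $R$-deletions within a unit and would extend directly to weighted or more heterogeneous variants.
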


\begin{proof}
Given a database $D_1$ satisfying $q_\textup{perm}$, each tuple that is part of a witness for $D_1, q_\textup{perm}$ is
part of exactly one witness.  Therefore the size of a minimum contingency set for $D_1, q_\textup{perm}$ is
exactly the number of witnesses.

Given a database $D_2$ satisfying $q_\textup{perm}^A$, for each join $(a,b)$, we have 2 possible choices. Either $A(a)$ will be in the min 
$\Gamma$ or either one of $R(a,b)$ and $R(b,a)$ but never both. 
Therefore we can reduce $\res(q_\textup{perm}^A)$ to vertex cover in a bipartite
graph, which is in \p.
\end{proof}

\introparagraph{Hard permutations}
Surprisingly, adding another unary atom to $q_\textup{perm}^A$, thus bounding it on both ends.
leads to a hard query.
\begin{align*}
	q_\textup{perm}^{AB} &\datarule A(x), R(x,y), R(y,x), B(y)
\end{align*}

It is still true
that for any pair $R(a,b), R(b,a)$ participating in a join, a minimum contingency set will only contain one tuple
from the pair. This might lead to the wrong conclusion that network flow could solve this problem. 
We will next show that this is incorrect.

\begin{proposition}\label{perm is npc}
$\res(q_\textup{perm}^{AB} )$ is \np-complete.
\end{proposition}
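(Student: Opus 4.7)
Membership in NP is immediate: given a candidate contingency set $\Gamma$ of size at most $k$, evaluating $q_\textup{perm}^{AB}$ on $D - \Gamma$ is polynomial. For NP-hardness, my plan is to reduce from 3-SAT. The starting intuition is that while $\res(q_\textup{perm}^{A})$ is tractable because each witness pair is resolved by a bipartite vertex cover (see \Cref{unbounded perm is easy}), adding $B(y)$ doubles the constraint per pair: for each $R$-pair $\{R(a,b), R(b,a)\}$, witness $(a,b)$ requires killing $A(a)$, $B(b)$, or one of the two $R$-tuples, while the ``mirror'' witness $(b,a)$ requires $A(b)$, $B(a)$, or an $R$-tuple. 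Deleting one $R$-tuple at cost $1$ still kills both witnesses, but the $A/B$ obligations on the two sides are otherwise independent, breaking the bipartiteness that made $q_\textup{perm}^{A}$ easy.

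Given a 3-CNF formula $\phi$ with variables $x_1, \ldots, x_n$ and clauses $C_1, \ldots, C_m$, I plan to build a database $D_\phi$ from two kinds of gadgets. A \emph{variable gadget} for each $x_i$ consists of auxiliary constants and $A$-, $B$-, $R$-tuples chosen so that the gadget has exactly two minimum-cost local contingency sets of size one, each representing a truth value. A \emph{clause gadget} for each $C_j$ consists of constants and $R$-tuples joined to the three literals' variable gadgets, so that the clause's witnesses are killed economically if and only if at least one literal is set to true. I would then pick a threshold $k = f(n,m)$ such that $\phi$ is satisfiable iff $D_\phi$ admits a contingency set of size at most $k$. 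The overall accounting should split as $k = n + m$ (or a simple linear function thereof), with the ``$+m$'' being absorbed by a clause being already killed whenever its satisfying literal's variable gadget is resolved the right way.

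\textbf{Main obstacle.} The delicate point is preventing $R$-tuple deletions from offering a strictly cheaper ``shortcut'' that circumvents the intended $A/B$ encoding. In particular, within a variable gadget no single $R$-deletion should undercut the cost of the intended $A$- or $B$-deletion; within a clause gadget no local $R$-deletion should kill the clause witness without forcing a literal to be true; and globally the construction must not accidentally create a triad or a path in the sense of \Cref{thm: triads in sj}, \Cref{unary path}, or \Cref{binary path}, since those would trivialize the hardness through a different mechanism and obscure the role of the permutation structure. I expect to address this by attaching ``shadow witnesses'' through auxiliary constants whose $A$- or $B$-deletions are cheap but whose presence penalizes any shortcut $R$-deletion, effectively doubling the cost of an $R$-deletion back to the $A/B$ choice. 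The final step is a careful accounting showing that the sum of gadget optima equals the threshold $k$ in the satisfiable case and strictly exceeds $k$ in the unsatisfiable case, taking advantage of the symmetry of $q_\textup{perm}^{AB}$ under swapping $A \leftrightarrow B$ and $x \leftrightarrow y$ to keep the gadget constructions symmetric and verifiable.
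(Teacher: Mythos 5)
Your overall strategy (reduce from 3SAT with variable and clause gadgets, and add auxiliary structure so that a lone $R$-deletion cannot undercut the intended $A$/$B$ encoding) is the same route the paper takes, and your ``shadow witnesses'' are exactly the role played by the paper's auxiliary starred constants in its variable gadget. But as written the proposal has a genuine gap: the gadgets are never constructed, and the one concrete structural commitment you do make is not realizable. A variable gadget whose two minimum contingency sets have size one cannot encode a truth value for this query: for an isolated pair $\{R(a,b),R(b,a)\}$ with $A$- and $B$-tuples on both constants, the only cost-$1$ solutions are the two $R$-deletions, and which of the two directed tuples you delete is invisible to any clause gadget, since clause witnesses attached at $a$ or $b$ use \emph{other} $R$-pairs and are only affected by deleting $A$- or $B$-tuples at the shared constant. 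So the truth value must be carried by $A$/$B$ deletions, which locally cost $2$ per pair, and making them competitive with the $R$-shortcut forces extra per-occurrence structure. This is why the paper's variable gadget is not of constant size: it contains $m$ copies (one per clause), arranged in a cycle with auxiliary constants $*_i^j$, and its two minimum local contingency sets have size $3m$ (all positive $A,B$-tuples plus $m$ specific $R$-tuples, or the negative analogue), yielding the threshold $k_\psi=(3n+5)m$ rather than anything like $n+m$. Without an explicit construction and this kind of exact accounting (satisfied clause gadget costs $5$, unsatisfied costs $6$, no unintended witnesses across gadget boundaries), the reduction is not yet a proof.

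One smaller point: your worry that the construction might ``accidentally create a triad or a path'' conflates query-level and instance-level structure. Triads (\Cref{thm: triads in sj}) and paths (\Cref{unary path}, \Cref{binary path}) are properties of the query, and the query $q_\textup{perm}^{AB}$ is fixed by the statement; no choice of database $D_\phi$ can introduce them. What you actually need to verify is purely instance-level: that joining the variable and clause gadgets creates no witnesses beyond the intended ones, which in the paper is ensured by the cyclic layout of the variable gadget and the fresh constants used for each clause connection.
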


\introparagraph{The criterion}
The main structural difference between the hard and easy permutations defined above is whether or not there 
are relations that ``bound'' the permutation on both ends, i.e.\ whether there are endogenous relations $S, T$, 
such that $S$ contains variable $x$ but not $y$, and $T$ contains variable
$y$ but not $x$. Thus, the hard permutation, $q_\textrm{perm}^{AB}$, is bound, but the easy ones,
$q_\textrm{perm}, q_\textrm{perm}^A$, are not bound.
Using this characterization, we identify when 2-permutations are hard.

\begin{proposition}[$\twoR$]\label{general perm proof}
Let $q$ be a pseudo-linear query with $R(x,y), R(y,x)$ as its only self-join. If $q$ is bound, then $\res(q)$ is \np-complete;
otherwise, $\res(q)$ is in \p.
\end{proposition}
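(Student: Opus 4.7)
The plan is to prove both directions separately: a reduction from an already-hard permutation query for the bound case, and a flow-based algorithm for the unbound case.

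For the hardness direction, I reduce from $\res(q_\textup{perm}^{AB})$, which is \np-complete by \Cref{perm is npc}. Given $D \models q_\textup{perm}^{AB}$, I build $D' \models q$ as follows: use the endogenous atom $S$ (which contains $x$ but not $y$) to encode the $A$-tuples of $D$ by placing each $A(a)$ into $S$ with $a$ in the $x$-position and a dedicated fresh constant $\star$ in every other position; symmetrically, encode the $B$-tuples into the endogenous $T$; and populate the two $R$-atoms of $q$ directly from $R^D$. For every remaining atom of $q$ (exogenous, or endogenous but not of $S$- or $T$-type), I populate it with padding tuples over the fresh constant $\star$ that always satisfy their joins but are shared across so many witnesses that none of them is ever worth including in a contingency set instead of an $S$, $T$, or $R$ tuple. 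A witness-preservation argument then gives $\rho(q,D') = \rho(q_\textup{perm}^{AB}, D)$; this mirrors the padding templates already used successfully in \Cref{unary path,binary path,2chains are hard}.

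For the PTIME direction, by the $x\leftrightarrow y$ symmetry of the permutation pair I may assume WLOG that no endogenous atom of $q$ contains $y$ but not $x$. The structural consequence is that every endogenous atom either (i)~is one of the two $R$-atoms, or (ii)~lies on the ``$x$-side,'' that is, it constrains values that appear in the $x$-slot of $R(x,y)$. Generalizing the bipartite-vertex-cover argument behind \Cref{unbounded perm is easy}, I construct a flow network whose minimum $s$--$t$ cut equals $\rho(q,D)$. For each pair $(a,b)$ with $R(a,b), R(b,a) \in D$ completable to a full witness of $q$, I create a ``pair node''; I add a source-side subnetwork whose cuts correspond to deleting an $x$-side endogenous tuple binding $a$, and a unit-capacity sink-side edge for each of the two $R$-tuples. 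Pseudo-linearity ensures the $x$-side structure is a triad-free sj-free fragment, so it can be handled by the standard network-flow machinery of \Cref{resilience dichotomy thm} plugged into the overall network.

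The main obstacle lies in the PTIME direction. The hardness reduction follows a now-standard padding template, but the flow construction must accommodate arbitrarily complex endogenous $x$-side structure together with exogenous atoms linking the permutation to the rest of $q$, and must exploit pseudo-linearity to rule out the kind of anomaly exhibited by $c\!f_p$ in the confluence case (cf.\ \Cref{2conf hard}). I expect the crux to be showing that the min-cut on this network has an integral optimum that exactly realizes a minimum contingency set, so that efficient max-flow delivers the resilience value rather than a fractional relaxation.
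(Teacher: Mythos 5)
Your high-level plan matches the paper's (reduce from $\res(q_\textup{perm}^{AB})$, which is hard by \Cref{perm is npc}, in the bound case; network flow in the unbound case), but your hardness construction has a genuine flaw. Padding every remaining atom with tuples over one fixed fresh constant $\star$ breaks the reduction whenever $q$ has an endogenous atom that is neither an $R$-atom nor your chosen $S$ or $T$: for instance, in the bound, pseudo-linear query $q \datarule C(w), U^\exSymb(w,x), S(x), R(x,y), R(y,x), B(y)$ the endogenous atom $C$ would receive the single tuple $C(\star)$, which lies in \emph{every} witness of $D'$, so $\rho(q,D')=1$ no matter what $D$ was, and the claimed equality $\rho(q,D')=\rho(q_\textup{perm}^{AB},D)$ fails. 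The justification you give is also inverted: an endogenous tuple shared by many witnesses is the \emph{most} attractive choice for a contingency set, not the least; the padding in \Cref{unary path,binary path} works precisely because off-path atoms are matched with many distinct fresh values per witness, not a single global constant. The paper's proof avoids padding entirely: it writes $q = q_\ell(x), G(x,y), q_r(y)$, declares a variable isLike~$x$ if it occurs in $q_\ell(x)$ and isLike~$y$ otherwise, and for each witness $(a,b)$ of $D \models q_\textup{perm}^{AB}$ populates every atom of $q$ by sending isLike-$x$ variables to $a$ and isLike-$y$ variables to $b$. Then every $x$-side endogenous tuple acts exactly like $A(a)$ and every $y$-side one like $B(b)$, so witnesses and minimum contingency sets are preserved exactly.

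The unbound direction also needs repair rather than just more work on integrality. Since $q$ contains both $R(x,y)$ and $R(y,x)$, the two witnesses of a pair $\{a,b\}$ use both tuples $R(a,b)$ and $R(b,a)$, so deleting either one of them kills both witnesses; accordingly the paper gives each pair node a \emph{single} unit-capacity edge to the sink, whereas ``a unit-capacity sink-side edge for each of the two $R$-tuples,'' read as parallel edges, charges $2$ for what costs $1$. Moreover the pair node must be fed by the $q_\ell$-subnetworks of \emph{both} $a$ and $b$ (the paper adds $\infty$-weight edges from the rightmost output of $q_\ell(a)$ and of $q_\ell(b)$ to the pair node): if, as you describe, only the side ``binding $a$'' feeds it, a cut could destroy the pair by cutting $q_\ell(a)$ alone even though the witness with $x=b$, $y=a$ survives. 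Finally, the paper does not invoke the machinery of \Cref{resilience dichotomy thm}; it simply turns each endogenous atom of the pseudo-linear $q_\ell(x)$ into a unit-weight edge and each exogenous atom into an $\infty$-weight edge, and min-cut integrality is then the standard max-flow/min-cut fact rather than a separate crux.
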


\subsection{Queries with REP}\label{sec:2-patterns-REP}

We call queries with \emph{repeated variables} (or REP in short) those 
where atoms contain the same variable twice, 
e.g.\ occurrences of $R(x,x)$. Note that this is only relevant for the case where $R$ is part of a self-join, otherwise it could be considered as $R(x)$.

There are only three patterns to consider when we are restricted to two $R$-atoms, either one or both atoms have repeated variables. The following queries
are the smallest examples of this class of queries:
\begin{align*}
z_1 &\datarule R(x,x), S(x,y), R(y,y)   \\
z_2 &\datarule R(x,x), S(x,y), R(y,z)\\
z_3 &\datarule R(x,x), R(x,y), A(y)  
\end{align*}
Notice that queries $z_1$ and $z_2$ satisfy the condition for hardness of binary paths (\cref{binary path}), since 
their set of variables is disjoint. Therefore, we can conclude that $\res(z_1)$ and $\res(z_2)$ are \np-complete, as well as any 
expansion of those queries.
We show that 
any REP queries that contain $z_3$ are in \p.

\begin{proposition}[$\twoR$]\label{easy case REP}
Any pseudo-linear query $q$ with exactly two $R$-atoms that contains $z_3$ is in \p.
\end{proposition}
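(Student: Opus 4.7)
The plan is to reduce $\res(q)$ to a minimum $s$-$t$ cut computation on a bipartite-style flow network, placing the problem in \p.

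Let $L = \{a \mid R(a,a) \in D\}$ and $T = \{b \mid A(b) \in D\}$. Every witness of $q$ must assign $x$ to some $a \in L$ and $y$ to some $b \in T$. At the core of the construction is a flow network with source $s$, sink $t$, a node $u_a$ for each $a \in L$, a node $v_b$ for each $b \in T$, and unit-capacity arcs $s \to u_a$ (corresponding to the deletion of $R(a,a)$), $v_b \to t$ (deletion of $A(b)$), and $u_a \to v_b$ whenever $R(a,b) \in D$ with $a \neq b$ and the pair $(a,b)$ extends to a full witness of $q$ (deletion of $R(a,b)$). The case $a = b$ needs no middle arc, since $R(a,a)$ is already captured by the source arc $s \to u_a$. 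Any additional endogenous atoms from $q$ beyond the $z_3$-core are attached as further layers on the source or sink side according to their variables.

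The main claim is that $\res(q)$ equals the minimum cut in this network. The upper bound is direct: a min cut identifies tuples whose deletion cuts every $s$-$u_a$-$v_b$-$t$ path, and since each witness of $q$ projects to such a path, all witnesses are killed. The lower bound uses an exchange argument on a minimum contingency set $\Gamma$. The central move is replacing $R(a,b) \in \Gamma$ (with $a \neq b$) by $R(a,a)$: any witness covered by $R(a,b)$ has $x = a$ and hence is covered by $R(a,a) \in D$, which must lie in $D$ whenever a witness $(a,b)$ exists. Analogous swaps handle tuples from auxiliary endogenous atoms on the $x$-side (swap for $R(a,a)$), on the $y$-side (swap for $A(b)$), or spanning both (swap for either).

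The main obstacle will be carrying out the exchange argument in the presence of auxiliary atoms that introduce new existential variables, since a single tuple of such an atom may cover many witnesses and cannot always be replaced one-for-one by a $z_3$-core tuple. Here the pseudo-linearity of $q$ is essential: it guarantees that each auxiliary variable appears in a contiguous sub-chain of endogenous atoms, so each auxiliary atom can be modelled as an additional layer on the source or sink side of the flow network, with its own source arcs and intermediate arcs. The min cut then accounts for the cost of these layers uniformly. Once the flow network is set up correctly, its min cut equals $\res(q)$, and the proposition follows.
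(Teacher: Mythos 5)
Your route is essentially the paper's: the decisive step in both is the exchange observation that an off-diagonal tuple $R(a,b)$ with $a\neq b$ never needs to be in a minimum contingency set, because every witness it covers has $x=a$ and is therefore also covered by $R(a,a)$ (or by $A(b)$); after that, both you and the paper reduce the problem to a minimum-cut computation, handling the extra atoms of a pseudo-linear expansion as additional flow layers (the paper is no more detailed than you are on that last point). The problem is a concrete bug in your network construction for the diagonal witnesses.

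You state that ``the case $a=b$ needs no middle arc,'' so $u_a$ and $v_a$ are simply not connected. But then nothing in the network forces a cut to destroy a witness of the form $(a,a,a)$, which uses only the two tuples $R(a,a)$ and $A(a)$. Take $D=\set{R(1,1),A(1)}$: here $D\models z_3$ and $\rho(z_3,D)=1$, yet your network consists only of the arcs $s\to u_1$ and $v_1\to t$ with no path from $s$ to $t$, so the min cut is $0$. The failure is precisely in your ``upper bound is direct'' step: the cut-to-contingency-set direction needs every witness to project to an $s$--$t$ path, and diagonal witnesses do not. The fix is small: whenever $R(a,a)$ and $A(a)$ jointly extend to a witness, add an \emph{infinite-capacity} arc $u_a\to v_a$ --- the deletion of $R(a,a)$ is already charged on the arc $s\to u_a$, so the middle occurrence of that same tuple must be treated as exogenous rather than omitted. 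With that correction (and with the off-diagonal middle arcs either unit-capacity as you have them, or infinite-capacity as your own exchange argument licenses --- both work), the cut/contingency correspondence holds and your argument coincides with the paper's.
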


\subsection{The dichotomy}\label{sec: 2dichotomy}

Combining our results so far, with at most two occurrences of the self-join relation, we have
proved a complete characterization of the complexity of resilience:

\begin{theorem}[$\twoR$ Two-Atom Dichotomy]\label{thm: 2atom dichotomy}
Consider $q$ an ssj-CQ, with at most two occurrences of the self-join relation.
If $q$ has any of the following
\begin{enumerate}
\item triad
\item path
\item chain
\item bounded permutation
\item confluence with exogenous path
\end{enumerate}
then $\res(q)$ is $\np$-complete. 
Otherwise, $\res(q)$ is PTIME via a reduction to network flow. In addition there is a PTIME
algorithm that on input $q$ determines which case occurs.
\end{theorem}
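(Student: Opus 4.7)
The plan is to prove this as a collection theorem that assembles the results established throughout the paper, rather than developing new machinery. By \cref{min gamma disconnected,complexity disconnected queries}, I may assume $q$ is connected, and by \cref{lem: domination sj}, I may assume $q$ is in normal form with all dominated atoms exogenous. First I would dispatch the hardness side by walking through conditions (1)--(5) in order: if $q$ has a triad, \cref{thm: triads in sj} yields \np-completeness; if $q$ has a unary path, \cref{unary path} applies; if $q$ has a binary path (two $R$-atoms with disjoint variable sets), \cref{binary path} applies; if $q$ contains a chain as its only self-join, \cref{2chains are hard} applies; if $q$ has a bounded permutation, \cref{general perm proof} applies; and if $q$ has a confluence with an exogenous $xz$-path avoiding~$y$, \cref{2conf hard} applies.

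Next I would handle the PTIME direction, which is the heart of the argument. Assume none of (1)--(5) holds. Since condition (1) fails, by \cref{thm: no triad means linear} the endogenous atoms of $q$ are linearly connected, so $q$ is pseudo-linear. Since conditions (2)--(3) fail, the two $R$-atoms share at least one variable and their pattern is not a chain. The key step is a case analysis based on how the two $R$-atoms overlap, matching the taxonomy of \cref{Fig_Overview}: either (a)~one or both $R$-atoms has a repeated variable (REP), (b)~the atoms share exactly one variable in the same position (confluence), or (c)~they share both variables in swapped positions (permutation). For case (a), \cref{easy case REP} places $q$ in \p\ (any REP query containing $z_1$ or $z_2$ would fall under the binary-path rule, already excluded). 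For case (b), since condition (5) fails, there is no exogenous $xz$-path bypassing $y$, so \cref{prop: 2conf} gives a standard network-flow algorithm. For case (c), since condition (4) fails, $q$ is not bound, and \cref{unbounded perm is easy} together with the easy direction of \cref{general perm proof} gives \p.

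The main obstacle is ensuring the case split in the PTIME direction is genuinely exhaustive once triads and paths have been excluded. Concretely I need to verify that every pseudo-linear binary ssj-CQ with exactly two $R$-atoms whose variable sets are not disjoint falls into exactly one of chain/confluence/permutation/REP, and that the ``absence of condition~$i$'' for each hardness criterion translates cleanly into the hypothesis of the corresponding PTIME result. This boils down to checking the eight combinatorial possibilities for how two binary atoms $R(\cdot,\cdot)$ can share variables once repeated variables are permitted, and confirming the normal-form pre-processing does not interact badly with the pattern detection.

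Finally, for the algorithmic claim I would argue that each of conditions (1)--(5) is decidable in polynomial time in $|q|$ (which is constant under data complexity anyway, but the statement is about uniform detection): triads by inspecting triples of endogenous atoms and paths in $\mathcal{H}(q)$; paths and chains by inspecting pairs of $R$-atoms and their shared variables; boundedness of a permutation by checking whether endogenous atoms witness $x$ without $y$ and $y$ without $x$; and the exogenous-path condition for a confluence by a reachability check in the exogenous subgraph of $\mathcal{H}(q)$ restricted to paths avoiding $y$. Combining these pieces yields both the dichotomy and the classifying algorithm.
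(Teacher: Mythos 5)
Your proposal is correct and follows essentially the same route as the paper's own proof: both assemble \cref{thm: triads in sj}, \cref{thm: no triad means linear}, \cref{unary path}, \cref{binary path}, \cref{2chains are hard}, \cref{general perm proof}, \cref{2conf hard}, and \cref{easy case REP}, with the same exhaustive split of two-$R$-atom patterns (disjoint atoms covered by paths; shared-variable patterns being chain, confluence, permutation, or REP). The only small slip is invoking \cref{prop: 2conf} (which assumes a \emph{linear} query) for the easy confluence case where $q$ is only known to be pseudo-linear; the correct citation is the ``otherwise in \p'' direction of \cref{2conf hard}, which you already use for the hard case, so the argument goes through unchanged.
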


\section{Queries with exactly three $R$-atoms}\label{sec:3R}

In \Cref{thm: 2atom dichotomy} we completely characterized the complexity of resilience of all CQs
with at most one repetition of a single relation, thus extending the dichotomy for sj-free CQs into
the land of self-joins.  

In this section, we present an overview of what can happen when we allow a third $R$-atom to
self-join. Since we only have to consider pseudo-linear queries that do not have a path, all three
$R$-atoms must connect to each other directly or through the third $R$-atom. Even though this 
is still a restrictive setting, we will see that it brings non-trivial complications to the characterization. 
We will present some complexity results; but also some remaining open problems.

\subsection{3-Chains}\label{sec: 3chains}

We obtain a 3-chain by adding an extra $R$-atom to a 2-chain in a way such that the new atom joins
in a different attribute  from the other two. 
\begin{align*}
q_{3\textrm{chain}} \datarule R(x,y), R(y,z), R(z,w)
\end{align*}

Analogous to the 2-chain case, 3-chains are always hard.  In fact this holds for 4-chains, 5-chains, etc.

\begin{proposition}[$\ssjbin$]\label{3chains are hard}
For all $k\geq 2$, if $q$ contains a $k$-chain as its only self-join, then $\res(q)$ is \np-complete.
\end{proposition}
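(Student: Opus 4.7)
The plan is to prove this in two stages. First, I would show that the pure $k$-chain query
\[
q_{k\textrm{chain}} \datarule R(x_1,x_2), R(x_2,x_3), \ldots, R(x_k,x_{k+1})
\]
is \np-complete for every $k \geq 2$ by induction on $k$, with the base case $k=2$ being \cref{chain is npc}. Second, I would extend hardness to an arbitrary minimal query $q$ whose only self-join is a $k$-chain, by showing $\res(q_{k\textrm{chain}}) \leq \res(q)$ via the same padding technique used in the proofs of \cref{2chains are hard,unary path,binary path}: instantiate the $R$-atoms of $q$ using the tuples from the $k$-chain instance, and populate each non-$R$ atom with many fresh dummy values so that each $R$-witness extends to a $q$-witness while no dummy tuple ever enters a minimum contingency set.

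For the inductive step from $k-1$ to $k$, given an instance $(D, c)$ of $\res(q_{(k-1)\textrm{chain}})$ with $n = |D|$, construct $D'$ by keeping all $R$-tuples of $D$ and, for each $v \in \dom(D)$ and each $i \in [n+1]$, adding a fresh constant $s_v^i$ together with the source tuple $R(s_v^i, v)$. Because source constants appear only in the first position of $R$, any $k$-chain in $D'$ either lies entirely in $D$ (and therefore contains a $(k-1)$-chain as its length-$(k{-}1)$ suffix) or has the form $s_v^i \to v \to a_2 \to \cdots \to a_k$ where $v \to a_2 \to \cdots \to a_k$ is a $(k-1)$-chain in $D$. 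Conversely, any $(k-1)$-chain in $D$ extended by any prepending source edge yields a $k$-chain in $D'$.

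The counting that glues the equivalence $\rho(D, q_{(k-1)\textrm{chain}}) = \rho(D', q_{k\textrm{chain}})$ together is as follows. Since deleting all of $D$ is a trivial contingency of size $n$ for $q_{k\textrm{chain}}$ on $D'$, a minimum contingency $\Gamma'$ satisfies $|\Gamma'| \leq n$, hence $\Gamma'$ cannot contain all $n+1$ source edges entering any single vertex $v$. Therefore if some $(k-1)$-chain starting at $v$ survived $\Gamma' \cap D$ in $D$, a surviving source edge would combine with it to form a $k$-chain in $D' \setminus \Gamma'$, contradicting that $\Gamma'$ is a contingency; thus $\Gamma' \cap D$ is itself a contingency for $q_{(k-1)\textrm{chain}}$ in $D$ with $|\Gamma' \cap D| \leq |\Gamma'|$. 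The opposite inequality is immediate because any contingency for $q_{(k-1)\textrm{chain}}$ in $D$ already destroys all $k$-chains in $D'$.

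The main obstacle I anticipate lies in the second stage: specifying a uniform padding scheme for the non-$R$ atoms that works regardless of how the $k$-chain is embedded inside $q$ (including unary atoms attached to chain variables, and non-$R$ atoms spanning multiple chain variables or reaching outside the chain). The rigidity of the chain pattern, where each interior chain variable occurs in exactly two consecutive $R$-atoms and no two $R$-atoms share both coordinates, should ensure that fresh padding values never become attractive to a minimum contingency set, exactly as in the case analyses behind \cref{unary chain variation in hard} and \cref{2chains are hard}. Membership in \np\ is immediate since $q$ is PTIME-computable.
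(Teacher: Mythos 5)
Your first stage is fine and is genuinely different from the paper: the paper's proof of this proposition is a single reduction from $\res(\chain)$ to $\res(q)$ in the style of the unary-path argument (\cref{unary path}), whereas you first establish hardness of the pure $k$-chain query by induction, and your inductive step (multiplicity-$(n{+}1)$ source edges $R(s_v^i,v)$, with the counting argument showing $\rho(D,q_{(k-1)\textrm{chain}})=\rho(D',q_{k\textrm{chain}})$) is correct and self-contained.

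The gap is in your second stage, and it is exactly the case you flag but then wave away: endogenous non-$R$ atoms \emph{all of whose variables are chain variables}, most importantly unary atoms such as $A(x_1)$. These cannot be ``populated with many fresh dummy values'' --- to complete the joins you are forced to insert the real chain values (e.g.\ $A(a)$ for every witness starting at $a$), and such tuples can be strictly cheaper than $R$-tuples, so the claimed reduction $\res(q_{k\textrm{chain}})\leq\res(q)$ breaks. Concretely, take $q \datarule A(x_1),R(x_1,x_2),R(x_2,x_3),R(x_3,x_4)$ (minimal, pseudo-linear, 3-chain as its only self-join) and let $D$ consist of $m$ edge-disjoint 3-chains $a\to b_i\to c_i\to d_i$ all starting at the same vertex $a$: then $\rho(D,q_{3\textrm{chain}})=m$, but your $D'$ contains the single tuple $A(a)$ whose deletion kills every witness, so $\rho(D',q)=1$ and resilience is not preserved. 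Note that $A(x_1)$ is not dominated, so it cannot be made exogenous, and this is precisely why the paper, for $k=2$, first proved every unary expansion of $\chain$ hard with dedicated 3SAT gadgets (\cref{unary chain variation in hard}, via \cref{achain is npc,bchain is npc,acchain is npc}) and then, in \cref{2chains are hard}, reduced from the \emph{appropriate expansion} rather than from the bare chain. To close your argument you would need the analogous gadget-level hardness results for unary expansions of the $k$-chain (or restructure along the paper's lines, reducing directly from $\res(\chain)$ or its expansions with the multiplicity trick reserved for atoms having at least one non-chain variable); as written, stage two only covers atoms that reach outside the chain. The claim that ``rigidity of the chain'' keeps fresh values unattractive addresses the wrong tuples --- the problem is the unavoidable real-valued tuples, not the dummies.
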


\subsection{3-Confluences}\label{sec: 3conf}

Adding a third $R$-atom to a 2-confluence and making sure that it joins in the same attribute with one 
of the two existing $R$-atoms produces a 3-confluence.
\begin{align*}
q_{3\textrm{conf}} &\datarule R(x,y), R(z,y), R(z,w)
\end{align*}

As in the 2-confluence case, $q_{3\textrm{conf}}$ is not minimal, so other atoms are required to make it minimal.
Here are a few examples of minimal
queries containing $q_{3\textrm{conf}}$.
\begin{align*}
q_{3\textrm{conf}}^{AC} &\datarule A(x), R(x,y), R(z,y), R(z,w), C(w)\\
q_{3\textrm{conf}}^{TS} &\datarule T(x,y)^\exSymb, R(x,y), R(z,y), R(z,w), S(z,w)^\exSymb
\end{align*}

\begin{figure*}
\begin{subfigure}[t]{.3\linewidth}
\centering
\begin{tikzpicture}[scale=.35,
			every circle node/.style={fill=white, size=7mm, inner sep=0, draw}]

\node (x) at (-9,0) [] {$x$};
\node (y) at (-6,0) [] {$y$};
\node (z) at (-3,0) [] {$z$};
\node (w) at (0,0) [] {$w$};

\draw[transform canvas={yshift=+0ex},->, line width=1pt](x) to node[above]{$R$} (y);
\draw[transform canvas={yshift=+0ex},->, line width=1pt](z) to node[above]{$R$} (y);
\draw[transform canvas={yshift=+0ex},->, line width=1pt](z) to node[above]{$R$} (w);

\path[->, line width=1pt, auto]
	(x)	edge[color=black, out=120, in=60, distance=1.5cm] node {$A$}	(x)
	(w)	edge[color=black, out=120, in=60, distance=1.5cm] node {$C$}	(w)
	;
\end{tikzpicture}
\caption{$q_{3\textrm{conf}}^{AC}$}
\label{}
\end{subfigure}
~
\begin{subfigure}[t]{.3\linewidth}
\centering
\begin{tikzpicture}[scale=.35,
			every circle node/.style={fill=white, size=7mm, inner sep=0, draw}]

\node (x) at (-9,0) [] {$x$};
\node (y) at (-6,0) [] {$y$};
\node (z) at (-3,0) [] {$z$};
\node (w) at (0,0) [] {$w$};

\draw[transform canvas={yshift=+0.5ex},->, line width=1pt](x) to node[above]{$R$} (y);
\draw[transform canvas={yshift=-0.5ex},->, line width=1pt](x) to node[below]{$T^\exSymb$} (y);
\draw[transform canvas={yshift=+0ex},->, line width=1pt](z) to node[above]{$R$} (y);
\draw[transform canvas={yshift=+0.5ex},->, line width=1pt](z) to node[above]{$R$} (w);
\draw[transform canvas={yshift=-0.5ex},->, line width=1pt](z) to node[below]{$S^\exSymb$} (w);

\end{tikzpicture}
\caption{$q_{3\textrm{conf}}^{TS}$}
\end{subfigure}
~\begin{subfigure}[t]{.3\linewidth}
\centering
\begin{tikzpicture}[scale=.35,
			every circle node/.style={fill=white, size=7mm, inner sep=0, draw}]

\node (x) at (-9,0) [] {$x$};
\node (y) at (-6,0) [] {$y$};
\node (z) at (-3,0) [] {$z$};
\node (w) at (0,0) [] {$w$};

\draw[transform canvas={yshift=+0ex},->, line width=1pt](x) to node[above]{$R$} (y);
\draw[transform canvas={yshift=+0ex},->, line width=1pt](z) to node[above]{$R$} (y);
\draw[transform canvas={yshift=+0.5ex},->, line width=1pt](z) to node[above]{$R$} (w);
\draw[transform canvas={yshift=-0.5ex},->, line width=1pt](z) to node[below]{$S^\exSymb$} (w);

\path[->, line width=1pt, auto]
	(x)	edge[color=black, out=120, in=60, distance=1.5cm] node {$A$}	(x)
	;
\end{tikzpicture}
\caption{$q_{3\textrm{conf}}^{AS}$}
\end{subfigure}
\caption{Three main queries containing a 3-confluence.}
\label{fig: 3conf variants}
\end{figure*}
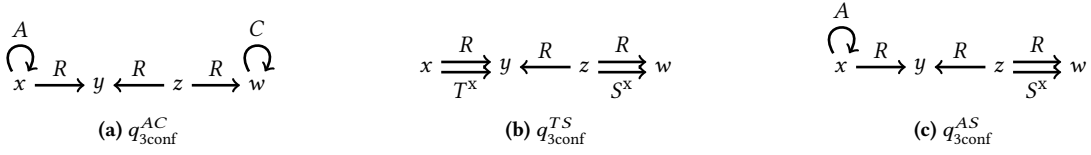

These queries are very similar but one of them is hard, while the other one is easy.

\begin{proposition}\label{cf3.1 is npc}
$\res(q_{3\textrm{conf}}^{AC})$ is \np-complete.
\end{proposition}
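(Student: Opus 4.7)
The plan is to prove NP-completeness by the two standard steps. Membership in \np is immediate: a contingency set has size at most $|D|$, can be guessed, and after its deletions $q_{3\textrm{conf}}^{AC}$ can be evaluated in polynomial time to verify that it has become false. The bulk of the work is therefore establishing \np-hardness.

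For hardness I plan a reduction from a suitable NP-hard problem, most naturally 3-SAT (following the style of the earlier $\vc$ and $\chain$ proofs in the paper). Given a 3-CNF formula $F$ with variables $v_1,\ldots,v_n$ and clauses $\phi_1,\ldots,\phi_m$, the construction builds a database $D$ and a threshold $k$ so that $F$ is satisfiable iff $(D,k)\in \res(q_{3\textrm{conf}}^{AC})$. Each variable $v_i$ receives a variable gadget: a single witness $(a_i,b_i,c_i,d_i)$ whose middle values $b_i,c_i$ are fresh, so the three $R$-tuples of that witness appear in no other witness of $D$. By this isolation, the only ``useful'' deletions in a variable gadget (up to an exchange argument) are of $A(a_i)$ or $C(d_i)$, and I identify these two options with the two truth assignments for $v_i$. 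Each clause $\phi_j = L_{j,1}\lor L_{j,2}\lor L_{j,3}$ contributes a clause gadget with extra witnesses whose $A$- or $C$-atoms are reused from variable gadgets in such a way that the gadget's witnesses are automatically killed precisely when at least one literal of $\phi_j$ is true under the induced assignment. The threshold $k$ is then chosen so that the minimum number of deletions matches exactly in the satisfiable case.

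The forward direction of the equivalence follows directly by constructing a contingency set from a satisfying assignment. The backward direction is an exchange argument: given any contingency set of size at most $k$, I would iteratively replace non-canonical deletions (fresh $R$-tuples, or tuples inside single-witness portions of a clause) by the corresponding canonical $A/C$ deletions without increasing the size, eventually obtaining a set that deletes exactly one $A$- or $C$-tuple per variable gadget; this defines a truth assignment for $F$, and the size bound forces it to satisfy every clause.

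The main obstacle, and the place where I expect most of the technical effort to lie, is designing the clause gadget so that it distinguishes satisfied from unsatisfied clauses without being exploitable by bulk $R$-tuple deletions. Because a single $q_{3\textrm{conf}}^{AC}$ witness carries only one $A$-atom and one $C$-atom, the three literals of a clause must inhabit three separate witnesses, and one must decide how much of the middle $R$-structure of those witnesses to share. Too little sharing makes the per-clause cost the same whether a clause is satisfied or not, so the cost function fails to separate satisfiable from unsatisfiable formulas; too much sharing lets a single $R$-deletion destroy several clause witnesses at once, undercutting the variable-level encoding. Finding the right calibration -- and verifying that the exchange argument really does push every short contingency set into canonical form -- is the technical heart of the proof.
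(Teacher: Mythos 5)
There is a genuine gap: the clause gadget, which you yourself identify as ``the technical heart of the proof,'' is never constructed, and the difficulties you flag are not incidental---they are the reason the paper does not reduce from 3-SAT at all. Each witness of $q_{3\textrm{conf}}^{AC}$ contains exactly one $A$-atom and one $C$-atom, so a single witness can ``read'' at most two literals, and it can only read a literal encoded as an $A$-deletion from its $A$-end and a literal encoded as a $C$-deletion from its $C$-end. In your single-witness variable gadget, ``$v_i$ true'' is available only as the deletion of $A(a_i)$ and ``$v_i$ false'' only as the deletion of $C(d_i)$, so a clause containing, say, two positive literals cannot be tested by one witness, and splitting a 3-literal clause across several witnesses runs straight into the cost-calibration problem you describe but do not solve. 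Your variable gadget is also too weak on its own: since its three $R$-tuples are fresh, deleting any one of them kills the gadget at the same cost as a canonical $A$/$C$ deletion, so the exchange argument must lean entirely on the (missing) clause structure to force canonical choices.

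The paper's proof resolves these obstacles differently. It reduces from Max 2-SAT rather than 3-SAT: the clause gadgets it builds handle only clauses of size 1 and 2 (size-1 clauses cost $0$ when satisfied and $1$ otherwise; size-2 clauses cost $1$ versus $2$), and the threshold $k = n(2s+2) + d + m - r$ absorbs the number of falsified clauses, which is exactly what the Max 2-SAT formulation needs (plain 2-SAT would be useless, being in \p). Crucially, its variable gadget is not a single witness but a long two-sided structure with a \emph{crossover} component, whose purpose is precisely to expose each variable's truth value both as deletable $A$-tuples and as deletable $C$-tuples, so that every clause occurrence can be wired to the correct end of a witness regardless of the literal's polarity. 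Without an analogue of this crossover idea and a concrete, verified clause gadget (or a switch to an optimization-type source problem), your proposed reduction cannot be completed as stated.
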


\begin{proposition}[$\threeR$]\label{cf3 variations}
Any variation of $q_{3\textrm{conf}}^{AC}$ obtained by including unary relations is \np-complete.
\end{proposition}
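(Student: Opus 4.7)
The plan is to reduce 3-SAT to $\res(q')$ for each variation $q'$ of $q_{3\textrm{conf}}^{AC}$ obtained by including one or more unary atoms on the variables $x, y, z, w$. This follows the same strategy as the proof of Proposition \ref{cf3.1 is npc} (and parallels the proof sketch of Proposition \ref{unary chain variation in hard} for $\chain$), by adapting the variable and clause gadgets to absorb the extra unary atoms. Membership of $\res(q')$ in \np\ is immediate since $q'$ is a fixed CQ.

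Starting from the 3-SAT reduction that establishes Proposition \ref{cf3.1 is npc}, I would identify its variable and clause gadgets together with the $A$, $R$, $C$ tuples they produce, and the witness structure used to encode satisfying assignments and contingency sets. For each new atom $U(v)$ with $v \in \{x, y, z, w\}$ in $q'$, I would extend the construction by populating $U$ with exactly the values of $v$ that already appear at that position in the gadgets. This ensures that every witness of $q_{3\textrm{conf}}^{AC}$ over the original database extends uniquely to a witness of $q'$ over the new database, so the query evaluates to true on the new instance precisely when the 3-SAT formula is satisfiable, and the intended bijection between contingency sets and satisfying assignments is preserved.

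The main obstacle is preventing the newly added $U$-atoms from providing a cheaper route to falsify $q'$: deleting a single tuple $U(a)$ destroys every witness through $v = a$ at unit cost, which could in principle beat the original vertex-cover-style budget. To rule this out I would tune the gadgets so that each value $a$ placed into $U$ is shared by the same set of witnesses as a specific $R$, $A$, or $C$ tuple in the construction, e.g., by using fresh distinct constants at positions that are not meant to join across gadget components. Under this design, for every $U(a) \in D'$ there is a corresponding $R$, $A$, or $C$ tuple whose deletion breaks a superset of the witnesses destroyed by $U(a)$, so a standard swap argument converts any optimal contingency set of $(D', q')$ into one of equal or smaller size containing no $U$-tuples, and the resulting contingency set is exactly one produced by the original reduction. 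A short case analysis over which subset of $\{x, y, z, w\}$ receives unary atoms then completes the argument.
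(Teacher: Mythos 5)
Your handling of the new unary atoms is exactly what the paper does: populate each added relation $U(v)$ with the values already occurring at that position in the gadgets so that the set of joins (witnesses) is unchanged, and then argue by a swap that $U$-tuples are never needed in a minimum contingency set, so the properties of the base reduction for $q_{3\textrm{conf}}^{AC}$ carry over. That part of your proposal is sound and matches \cref{cf3 variations} as proved in the paper.

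The problem is the foundation you claim to build on. \Cref{cf3.1 is npc} is \emph{not} proved by a reduction from 3-SAT: the paper reduces \emph{Max 2-SAT} to $\res(q_{3\textrm{conf}}^{AC})$, with variable gadgets using a crossover construction and clause gadgets only for clauses of size one and two, and with a numeric threshold $k = n(2s+2) + d + m - r$. Consequently your stated correctness criterion is wrong on two counts. First, in any resilience reduction the constructed database always satisfies the query (that is a precondition of $\res$); satisfiability is never encoded in whether ``the query evaluates to true on the new instance.'' Second, there is no ``bijection between contingency sets and satisfying assignments'' here: the correspondence is between the size of a minimum contingency set and the maximum number of simultaneously satisfiable clauses, compared against the threshold $k$. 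If you executed your plan literally --- taking ``the 3-SAT reduction that establishes \cref{cf3.1 is npc}'' and extending it --- you would be extending a reduction that does not exist, and the clause gadgets you would need for 3-clauses are not the ones available for this confluence pattern. The fix is minor but necessary: base the argument on the Max 2-SAT reduction of \cref{cf3.1 is npc}, keep your tuple-addition and swap argument verbatim, and state the preserved invariant as ``$\phi$ has an assignment satisfying at least $r$ clauses iff $(D',k) \in \res(q')$'' rather than as a satisfiability/bijection claim.
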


\begin{proof}
We define a reduction from Max 2SAT similar to the one used for $q_{3\textrm{conf}}^{AC}$ by adding the appropriate tuples to obtain the same set of joins. The
contingency set doesn't change with the new tuples and therefore the properties of the reduction hold.
\end{proof}

\begin{proposition}\label{cf3.2 is easy}
$\res(q_{3\textrm{conf}}^{TS})$ is in \p.
\end{proposition}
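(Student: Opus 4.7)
I plan to reduce the resilience of $q_{3\textrm{conf}}^{TS}$ to a polynomial-time network-flow computation preceded by a simple pre-processing step, in the spirit of the modified-flow arguments used in \cref{2conf is easy} and \cref{AR perm}. The structural feature that distinguishes $q_{3\textrm{conf}}^{TS}$ from its \np-complete cousin $q_{3\textrm{conf}}^{AC}$ is that the exogenous atoms $T^\exSymb(x,y)$ and $S^\exSymb(z,w)$ carry the \emph{entire} variable sets of the first and third $R$-atoms; every witness $(a,b,c,d)$ therefore forces both $T(a,b), S(c,d) \in D$, so the first and third $R$-atoms are effectively restricted to the filtered relations $R \cap T$ and $R \cap S$ while the middle atom $R(z,y)$ is the only one that floats freely.

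The first step of the algorithm is pre-processing. Call a tuple $R(u,v) \in D$ \emph{self-sufficient} if both $T(u,v)$ and $S(u,v)$ lie in $D$; such a tuple alone constitutes the singleton witness $(u,v,u,v)$ and must belong to every contingency set. I collect every self-sufficient tuple into $\Gamma$ and delete it from $D$. The crucial consequence is that, after this step, no remaining $R$-tuple belongs simultaneously to the left fiber $A_v := \{R(a,v) \mid T(a,v), R(a,v) \in D\}$ and the right fiber $D_u := \{R(u,d) \mid S(u,d), R(u,d) \in D\}$, so every residual tuple plays at most two of the three possible roles (left, middle, right).

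On the reduced instance I build a bipartite flow network with source $s$, sink $t$, a node $\ell_b$ for each $b \in Y^+ := \{v \mid A_v \neq \emptyset\}$, a node $r_c$ for each $c \in Z^+ := \{u \mid D_u \neq \emptyset\}$, an edge $s \to \ell_b$ of capacity $|A_b|$, an edge $r_c \to t$ of capacity $|D_c|$, and a unit-capacity edge $\ell_b \to r_c$ for every middle tuple $R(c,b) \in D$ with $b \in Y^+, c \in Z^+$. Every remaining $s$--$t$ path corresponds to the tuple set of a witness, and I claim that $\rho(D, q_{3\textrm{conf}}^{TS})$ equals the number of self-sufficient tuples plus the value of the minimum $s$--$t$ cut. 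Standard max-flow algorithms then compute this in polynomial time, establishing $\res(q_{3\textrm{conf}}^{TS}) \in \p$.

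The main obstacle I expect is showing that the cut-value/contingency-set correspondence is \emph{tight} on the reduced instance. Without the pre-processing step, a self-sufficient tuple $R(u,v)$ would contribute to the capacities of both $s \to \ell_v$ and $r_u \to t$, and a min cut cutting both edges for independent reasons would pay twice for that single physical tuple, strictly overestimating $\rho$. After pre-processing, $A_v$ and $D_u$ are disjoint for every $u, v$, so no physical tuple is counted twice by cuts on the source and sink sides; additionally, any cut containing both $s \to \ell_v$ and the middle edge $\ell_v \to r_u$ can be reduced by dropping the middle edge without losing validity, since $s \to \ell_v$ already disconnects all paths through $\ell_v$. Turning these two observations into a rigorous exchange argument that converts any minimum contingency set $\Gamma^*$ on the reduced instance into a cut of value exactly $|\Gamma^*|$ --- and that covers all size-$2$ and size-$3$ witnesses that survive pre-processing --- is the technical heart of the proof.
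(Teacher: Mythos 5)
Your plan is correct and shares the paper's overall strategy---first force into the contingency set every tuple $R(u,v)$ with both $T(u,v)$ and $S(u,v)$ present (your ``self-sufficient'' tuples are exactly the paper's forced set $\Gamma_{TS}$), then reduce the residual instance to a minimum cut---but the network you build and the way you establish tightness are genuinely different from the paper's. The paper runs the standard linear-query flow on $D'$, in which each of the three $R$-occurrences contributes its own copy of every $R$-tuple, and then rules out double-counting a posteriori in the style of \cref{prop: 2conf}: if one tuple could break two witnesses in which it plays different roles (e.g.\ first atom in one, middle atom in the other), the two witnesses would splice into a third witness avoiding that tuple, so no contingency set can beat the min cut. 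You instead avoid duplicate copies by design, aggregating the $T$-supported left fiber $A_b$ into a single source edge of capacity $|A_b|$ and the $S$-supported right fiber $D_c$ into a single sink edge, keeping only the middle role as unit edges; this buys a cleaner witness-to-path correspondence at the price of the charging argument you flag as the technical heart. That argument does go through: map a contingency set $\Gamma^*$ of $D'$ to the cut that takes $s\to\ell_b$ iff $A_b\subseteq\Gamma^*$, takes $r_c\to t$ iff $D_c\subseteq\Gamma^*$, and takes the middle edge $\ell_b\to r_c$ iff $R(c,b)\in\Gamma^*$ and neither incident fiber is fully contained in $\Gamma^*$. Any uncut path $s\to\ell_b\to r_c\to t$ yields a surviving witness (a surviving left tuple in $A_b$, the middle tuple $R(c,b)$, a surviving right tuple in $D_c$), so this is a cut; and its value is at most $|\Gamma^*|$ because the charged tuples are pairwise distinct: fibers for distinct $b$ (resp.\ $c$) are disjoint, left and right fibers are disjoint precisely because of your pre-processing, and a middle charge is never made when the same tuple is already charged through a fully-deleted fiber. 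Together with the easy converse (from a cut, delete the fibers and middle tuples of its edges) and the observation that witnesses in which one tuple plays two roles still correspond to paths of this network, you get that the resilience of $D$ equals the number of self-sufficient tuples plus the min-cut value, as you claim; so your route is a valid alternative to the paper's cross-witness argument.
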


\textbf{Open problem.} There is a third variant of 3-confluences which somewhat mix queries $q_{3\textrm{conf}}^{AC}$ and $q_{3\textrm{conf}}^{TS}$ (\cref{fig: 3conf variants}).
\begin{align*}
q_{3\textrm{conf}}^{AS} &\datarule A(x), R(x,y), R(z,y), R(z,w), S(z,w)^\exSymb
\end{align*}

The complexity of $\res(q_{3\textrm{conf}}^{AS} )$ remains unknown.

\subsection{3-Chain-Confluence}\label{sec: 3cc}

With 3 $R$-atoms, it is possible that different patterns will occur at the same time. This feature of this
case makes it harder to analyze the queries, since the result of these interactions might diverge from
what we expect when we see each pattern in isolation.

In this section we present some queries where a 2-chain and a 2-confluence occur at the same time.
\begin{align*}
q_{\textrm{3cc}}^{AC} &\datarule A(x) R(x,y) R(y,z) R(w,z) C(w) \\
q_{\textrm{3cc}}^{AS} &\datarule A(x) R(x,y) R(y,z) R(w,z) S(w,z)\\
q_{\textrm{3cc}}^{C} &\datarule R(x,y) R(y,z) R(w,z) C(w)
\end{align*}

The resilience of these queries is hard but they require different reductions. If $x$ is bound, then
we can use a reduction from $\res(\chain)$. Otherwise we need a reduction from Max 2SAT.
 
\begin{proposition}\label{bounded 3cc}
$\res(q_{\textrm{3cc}}^{AC})$ and $\res(q_{\textrm{3cc}}^{AS})$ are \np-complete.
\end{proposition}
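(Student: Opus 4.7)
The plan is to prove \np-completeness by showing that membership in \np\ is standard (verify a candidate contingency set in polynomial time) and that \np-hardness follows by reducing from $\res(\chain)$, which is \np-complete by \Cref{chain is npc}. The paper's remark that a reduction from $\chain$ is appropriate whenever $x$ is bound applies here, since $A(x)$ binds $x$ in both queries.

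Given an instance $(D, k)$ of $\res(\chain)$ with relation $R^D$, the plan is to construct $D'$ in which the chain structure of $R^D$ is embedded as the middle $R(y, z)$ atom of $q_{\textrm{3cc}}^{AC}$, with the $A(x)$- and $C(w)$-endpoints realized by fresh-constant gadgets that prevent cheap endpoint deletions. Concretely: for each tuple $R(a, b) \in R^D$ I introduce a fresh constant $\gamma_{a, b}$ and add both $A(\gamma_{a, b})$ and $R(\gamma_{a, b}, b)$ to $D'$; symmetrically, for each $R(b, c) \in R^D$ I add a fresh $\delta_{b, c}$ together with $C(\delta_{b, c})$ and $R(\delta_{b, c}, c)$; and I keep $R^D$ itself inside $R^{D'}$. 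A case analysis using that $A^{D'}$ and $C^{D'}$ contain no $\dom(D)$-constants confirms that every witness in $D'$ has the form
\[
A(\gamma_{a, b}),\; R(\gamma_{a, b}, b),\; R(b, c),\; R(\delta_{b', c}, c),\; C(\delta_{b', c}),
\]
with $R(a, b), R(b, c), R(b', c) \in R^D$.

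Per-chain uniqueness of the $\gamma$- and $\delta$-tags confines the effect of any single gadget deletion to witnesses involving a single chain-tuple of $R^D$ in its designated role, which rules out any ``shortcut'' contingency via the unary endpoints. I will then establish, through a two-way translation between contingency sets, that $(D, k) \in \res(\chain)$ if and only if $(D', k') \in \res(q_{\textrm{3cc}}^{AC})$ for an appropriate polynomially-computable $k'$. The forward direction sends each $R(a, b) \in \Gamma$ to a gadget deletion $A(\gamma_{a, b})$ when $R(a, b)$ acts as a chain's first tuple, and to the middle-atom deletion $R(b, c)$ when it acts as a second; the reverse direction maps each of the five tuple-types that can appear in $\Gamma'$ back to the unique $R^D$-tuple it represents, and considering the ``diagonal'' witness with $b' = b$ for every chain $(a, b), (b, c)$ shows that the induced $R^D$-set breaks all chains.

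The construction for $q_{\textrm{3cc}}^{AS}$ is identical except that $C(\delta_{b, c})$ is replaced by $S(\delta_{b, c}, c)$; by \Cref{def:domination}, $R$ dominates $S$ so $S$ is effectively exogenous---its tuples are always present to enable witnesses but are never candidates for deletion---and the contingency analysis on $A$, $R^D$, and the $R$-gadgets proceeds unchanged. I expect the main obstacle to lie in the bookkeeping of the two-way translation: because a single tuple $R(a, b) \in R^D$ can simultaneously act as the first tuple of one chain and the second tuple of another, the translation must correctly account for these dual roles, possibly via a padding argument or a refinement of the gadget construction, so that the optimal $q$-contingency tracks the chain resilience up to a computable offset.
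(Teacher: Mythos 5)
Your overall strategy---\np-membership plus a reduction from $\res(\chain)$, with $S$ made exogenous in $q_{\textrm{3cc}}^{AS}$ via domination---is the same route the paper takes (its proof is stated only as ``reduction from $\res(\chain)$''), and your witness characterization and backward translation (via the diagonal witnesses with $b'=b$) are sound. The gap is the forward direction, and it is not a bookkeeping issue that padding can absorb: the gadget as constructed provably destroys the resilience correspondence. By routing the first and third $R$-atoms exclusively through fresh copies $R(\gamma_{a,b},b)$ and $R(\delta_{b',c},c)$, you decouple the roles that a single chain tuple plays, which is exactly the feature a reduction from $\chain$ must preserve. Concretely, take $D=\{R(1,2),R(2,3),R(3,4)\}$: here $\rho(D,\chain)=1$ (delete $R(2,3)$, which is the second tuple of one chain and the first tuple of the other), but in your $D'$ the only witnesses are $(R(1,2),R(2,3),R(2,3))$ and $(R(2,3),R(3,4),R(3,4))$, and the sets of tuples hitting them---$\{A(\gamma_{1,2}),R(\gamma_{1,2},2),R(2,3),R(\delta_{2,3},3),C(\delta_{2,3})\}$ versus $\{A(\gamma_{2,3}),R(\gamma_{2,3},3),R(3,4),R(\delta_{3,4},4),C(\delta_{3,4})\}$---are disjoint, so $\rho(D',q_{\textrm{3cc}}^{AC})=2$. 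No single deletion in $D'$ can play the double role that $R(2,3)$ plays in $D$, because the first-position occurrence of a chain tuple lives only in its $\gamma$-gadget while its middle-position occurrence is the original tuple.

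Nor can this be repaired by a computable offset $k'$: on a path of $n$ tuples one gets $\rho(D',q_{\textrm{3cc}}^{AC})=n-1$ against $\rho(D,\chain)=\lceil (n-1)/2\rceil$, while on a star ($R(a_i,b)$, $R(b,c_j)$) the two optima coincide; the discrepancy equals, roughly, the amount of double duty performed in an optimal chain solution, which is information essentially equivalent to solving the instance. A correct reduction must keep the original binary relation itself serving in all three $R$-positions (so a deleted $R$-tuple retains its full covering power) and instead add $A$- and $C$-tuples, with padding ensuring that deleting them is never cheaper than deleting $R$-tuples; compare the constructions in \Cref{2chains are hard} and \Cref{achain is npc}, where the data for the non-repeated atoms is layered on top of an unchanged chain structure rather than substituted for it. Your treatment of $q_{\textrm{3cc}}^{AS}$ (making $S$ exogenous by \Cref{def:domination} and \Cref{lem: domination sj}) is fine once a correct reduction for the $R$-part is in place.
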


\begin{proposition}\label{unbounded 3cc}
$\res(q_{\textrm{3cc}}^{C})$ is \np-complete.
\end{proposition}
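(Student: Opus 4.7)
The plan is to reduce Max 2SAT to $\res(q_{\textrm{3cc}}^{C})$, following the template of the proof of \cref{cf3.1 is npc} for $q_{3\textrm{conf}}^{AC}$. For the bounded siblings $q_{\textrm{3cc}}^{AC}$ and $q_{\textrm{3cc}}^{AS}$ treated in \cref{bounded 3cc}, the atom $A(x)$ pins down the left endpoint of the chain $R(x,y),R(y,z)$, which is precisely what enables the direct reduction from $\res(\chain)$. Without $A(x)$, any $R$-tuple $R(a,y)$ can serve as the launcher of a witness, the counting behind the chain reduction breaks, and a more delicate reduction from Max 2SAT is needed instead.

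Given a Max 2SAT instance with variables $v_1,\ldots,v_n$ and 2-literal clauses $C_1,\ldots,C_m$, I would build $D$ by introducing, for each variable $v_i$, fresh constants $b_i, z_i, w_i^T, w_i^F$ together with tuples $R(b_i,z_i)$, $R(w_i^T,z_i)$, $R(w_i^F,z_i)$, $C(w_i^T)$, $C(w_i^F)$; and, for each clause $C_j$, a short structure that threads through the two literal tuples $R(w^?_{v},z_v)$ corresponding to its two literals plus one fresh $C$-tuple, so that the clause witness is killed for free whenever at least one of its literal tuples has been deleted, and otherwise costs one extra deletion. A single global constant $*$ would play the role of the free variable~$x$ in every intended witness (all launcher $R$-tuples have shape $R(*,b_i)$ or $R(*,b_j')$), so that the unboundedness of $x$ does not introduce gadget-crossing witnesses. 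The target threshold is $\tau = n + (m-k)$, where $k$ is the target number of satisfied clauses: $n$ counts the forced deletion of exactly one literal tuple per variable gadget (the truth assignment), and $m-k$ counts the clauses left unsatisfied.

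One direction of the equivalence is immediate from the construction: a satisfying assignment yields a contingency set of size $n + (m-k)$. The other direction, showing that every contingency set of size at most $\tau$ can be massaged into this canonical form, is the technical heart and the \emph{main obstacle}. Because $x$ is unbounded, the adversary has many alternative candidate deletions available, including $C$-tuples, the launcher tuples $R(*,b_i)$, and the shared middle tuples $R(b_i,z_i)$. The reduction must be engineered so that none of these alternatives ever beats the intended scheme: I would pick the constants so that each launcher participates in exactly one intended witness (making its deletion no cheaper than deleting a literal tuple) and so that every $C$-tuple and every middle tuple is private to a single gadget (so that deleting it is no cheaper than paying the per-clause cost). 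Once this rigidity is in place, a standard exchange argument transforms any optimal contingency set into one that first spends $n$ deletions on literal tuples — inducing a truth assignment — and then spends one deletion per unsatisfied clause, matching $\tau$. Membership in $\np$ is immediate since a contingency set of size at most $\tau$ is a polynomially-verifiable certificate.
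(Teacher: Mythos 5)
Your high-level plan---a Max 2SAT reduction modeled on the one for $q_{3\textrm{conf}}^{AC}$ (\cref{cf3.1 is npc})---is exactly the route the paper takes for this proposition, but the concrete gadget you sketch has a genuine gap, and it is not only in the ``rigidity'' direction you flag. Your variable gadget together with its launcher, i.e.\ the tuples $R(*,b_i)$, $R(b_i,z_i)$, $R(w_i^T,z_i)$, $R(w_i^F,z_i)$, $C(w_i^T)$, $C(w_i^F)$, generates exactly two witnesses of $q_{\textrm{3cc}}^{C}$, namely $(*,b_i,z_i,w_i^T)$ and $(*,b_i,z_i,w_i^F)$, and these two witnesses share both $R(*,b_i)$ and $R(b_i,z_i)$. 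So (i) your stated invariant that each launcher participates in exactly one intended witness is already false; (ii) the canonical set you charge against $\tau=n+(m-k)$---one literal tuple per variable plus one tuple per unsatisfied clause---is not a contingency set at all, since deleting, say, $R(w_i^F,z_i)$ leaves the witness through $w_i^T$ intact and no clause-gadget deletion (on fresh tuples) can destroy it, so even the forward direction fails; and (iii) the cheapest way to kill the gadget is to delete the single shared tuple $R(b_i,z_i)$ or $R(*,b_i)$, which costs $1$, encodes no truth value, and cannot be exchanged for a literal tuple without losing coverage. Raising the per-variable budget to two deletions does not repair this: deleting the shared connector is then strictly cheaper, and for instances with many satisfiable clauses a ``cheating'' set of size $n+m$ undercuts the intended $2n+(m-k)$, so the backward direction breaks too. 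Note also that in $q_{\textrm{3cc}}^{C}$ every atom is endogenous, so you cannot protect the connectors by making them exogenous.

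This is precisely why the reduction for $q_{3\textrm{conf}}^{AC}$ that the paper tells you to imitate uses a long, two-sided variable gadget (a chain of $2s$ nodes per side plus a crossover), engineered so that the only minimum-cost ways to destroy all of its witnesses are ``all $x$-nodes plus two helpers'' or ``all $\overline{x}$-nodes plus two helpers,'' of equal size, with enough private tuples on each side for the clause gadgets to hook into; an analogous balanced structure, adapted so that the unbounded $x$ is routed through the chain part $R(x,y),R(y,z)$ without creating cross-gadget witnesses, is exactly what your proposal is missing. Your clause accounting also needs more care than stated: any connector tuple that threads the two literal tuples of a clause into one witness (e.g.\ a fresh $R(z_u,z_v)$) necessarily creates further witnesses through the middle tuples and the opposite literals of those variables, and these spurious witnesses must be analyzed, so the clean ``free when satisfied, one otherwise'' bookkeeping is unlikely to survive; the paper's gadgets settle for a $1$-versus-$2$ cost per size-two clause instead. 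Until you design and verify such a variable gadget (and the accompanying no-spurious-witness claims), the reduction is not established.
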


\textbf{Open Problem.}  In this category of queries with chain and confluence, we don't know the complexity of 
$q_{\textrm{3cc}}^{S} \datarule R(x,y) R(y,z) R(w,z) S(w,z)$.

\subsection{3-Permutation plus R}\label{sec: 3permR}

It is not possible to obtain two permutations in a query with only 3 $R$-atoms. In fact, there are only two ways that a new $R$-atom can 
be connected to a permutation: either by joining with $x$ or $y$, and those are equivalent. 
\begin{align*}
q_{\textrm{3perm-R}} \datarule R(x,y), R(y,z), R(z,y)
\end{align*}

Similar to the $q_{\textrm{3conf}}$ case, $q_{\textrm{3perm-R}}$ is not a minimal query, so additional atoms are necessary. 
We list the main examples of how this query can be made minimal and discuss the complexity of their resilience.

First we start with a query we have already seen and another one that is a slight variation on the first (\cref{fig: easy ks}).
\begin{align*}
q_{\textrm{3perm-R}}^{A} &\datarule A(x) R(x,y) R(y,z) R(z,y)\\
q_{\textrm{3perm-R}}^{S_{wx}} &\datarule S(w,x) R(x,y) R(y,z) R(z,y)
\end{align*}

We proved in \cref{AR perm} that $\res(q_{\textrm{3perm-R}}^{A})$ is in \p\ by using network flow. A similar argument proves that
$\res(q_{\textrm{3perm-R}}^{S_{wx}})$ is also in \p.

\begin{proposition}\label{KS2}
$\res(q_{\textrm{3perm-R}}^{S_{wx}})$ is in \p.
\end{proposition}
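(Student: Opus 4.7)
The plan is to lift the flow argument from \cref{AR perm} to a weighted setting by observing that the extra variable $w$ only ``tags'' copies of each $x$-value. Since no other atom mentions $w$, a tuple $(d,a,b,c)$ is a witness of $q_{\textrm{3perm-R}}^{S_{wx}}$ iff $S(d,a)\in D$ and $R(a,b),R(b,c),R(c,b)\in D$. Thus the set of witnesses depends on $S$ only through its projection $\pi_2(S^D)$.

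First, I would define for each $a\in\pi_2(S^D)$ the \emph{disabling cost} $\sigma(a):=|\{w\mid S(w,a)\in D\}|$: the number of $S$-tuples that must be removed in order to destroy every witness that uses value $a$ in the $x$-position. Every contingency set partitions into an $R$-part $\Gamma_R$ and an $S$-part whose effect is fully captured by the set $V\subseteq\pi_2(S^D)$ of values it completely disables; the size of the contingency set equals $|\Gamma_R|+\sum_{a\in V}\sigma(a)$, and it destroys all witnesses iff removing $\Gamma_R$ from $R^D$ together with ``disabling'' the values in $V$ would destroy all witnesses of $q_{\textrm{3perm-R}}^{A}$ on the database with $A^{D'}:=\pi_2(S^D)$ and $R^{D'}:=R^D$. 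Thus $\rho(D,q_{\textrm{3perm-R}}^{S_{wx}})$ equals the \emph{weighted} resilience of $q_{\textrm{3perm-R}}^{A}$ on $D'$, where each $A$-tuple $A(a)$ carries weight $\sigma(a)$ and $R$-tuples carry unit weight.

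Second, I would invoke the flow construction of \cref{AR perm} with arbitrary nonnegative integer capacities rather than unit capacities: the same source-to-sink network handles the weighted case because min-cut/max-flow remains polynomial for arbitrary integer capacities bounded by $|D|$. Assigning capacity $\sigma(a)$ to the node representing value $a$ (in place of the unit capacity previously used for $A(a)$) yields a min-cut equal to the minimum weighted contingency set, and the translation between cuts and contingency sets is a mechanical adaptation of the unweighted correctness proof.

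The main obstacle I anticipate is this final correctness translation: verifying that cutting the capacity-$\sigma(a)$ node in the flow network corresponds in the database to deleting exactly the $\sigma(a)$ tuples $\{S(w,a)\mid w\in\dom(D)\}$, and, conversely, that every minimum contingency set induces a valid cut of the same total weight. Both directions should mirror the unweighted arguments from \cref{AR perm}, the only subtlety being this one-to-many mapping between a single ``$A$-node'' cut and a block of $S$-tuple deletions, which does not affect correctness or polynomial runtime.
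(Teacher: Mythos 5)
Your first step is sound: in a minimum contingency set the deleted $S$-tuples form complete blocks $\{S(w,a)\mid S(w,a)\in D\}$ (a partial block is wasted, since the surviving $S(w',a)$ forces the same witnesses to be destroyed by $R$-deletions anyway), so $\res(q_{\textrm{3perm-R}}^{S_{wx}})$ does reduce to a weighted version of $\res(q_{\textrm{3perm-R}}^{A})$ in which $A(a)$ costs $\sigma(a)$ and $R$-tuples cost 1. The gap is in the second step. The network of \cref{AR perm} assigns \emph{infinite} weight to 1-way $R$-tuples, and that is justified there only by a domination argument: every witness through a 1-way tuple $R(a,b)$ also uses $A(a)$, and both cost 1, so $A(a)$ can always be chosen instead. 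Once $A(a)$ carries weight $\sigma(a)>1$ this domination fails, and it can be strictly cheaper to delete a single 1-way tuple than either the whole $S$-block or the relevant 2-way pairs. Concretely, take $D=\{S(e_1,a),S(e_2,a),R(a,b),R(b,c),R(c,b),R(b,d),R(d,b)\}$. The minimum contingency set is $\{R(a,b)\}$, of size 1, but in your network the 1-way tuple $R(a,b)$ only contributes $\infty$-weight edges from the $a$-node to the pair nodes $\angle{bc}$ and $\angle{bd}$, so every cut must either pay $\sigma(a)=2$ or cut both unit-weight pair edges; the min-cut is 2, and your algorithm reports the wrong value. So the claim that the same network with only the $A$-capacities changed ``yields a min-cut equal to the minimum weighted contingency set'' is false, and the real subtlety is not the one-to-many mapping between an $A$-node cut and a block of $S$-deletions that you single out.

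This is exactly the point the paper's proof turns on: it observes that $S(e_1,a)$ does not dominate the 1-way tuple $R(a,b)$ when there may be many tuples $S(e_i,a)$, and therefore it modifies the flow graph so that 1-way $R$-tuples become cost-1 (cuttable) edges alongside the unit-cost $S$-edges and the unit-cost pair edges, and it re-proves the cut-to-contingency-set correspondence with an adjusted rule for which member of a 2-way pair to delete. To repair your argument you would have to make the same modification --- give each 1-way tuple a finite unit capacity in your weighted network --- and then redo, rather than ``mechanically adapt,'' the correctness argument, since the statement that minimum cuts never need to use 1-way edges is precisely what no longer holds in the weighted setting.
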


The next query we will see is $q_{\textrm{3perm-R}}^{S_{xy}}$. 
Although very similar to $q_{\textrm{3perm-R}}^{A}$ and 
$q_{\textrm{3perm-R}}^{S_{wx}}$, $\res(q_{\textrm{3perm-R}}^{S_{xy}})$ is hard. It is surprising that
such a small difference can already change the complexity of the resilience problem. Moreover, the proof
requires a new reduction instead of a reduction similar to the one used in \cref{perm is npc}.
\begin{align*}
q_{\textrm{3perm-R}}^{S_{xy}} &\datarule S^\exSymb(x,y) R(x,y) R(y,z) R(z,y)
\end{align*}

\begin{proposition}\label{KS0}
$\res(q_{\textrm{3perm-R}}^{S_{xy}})$ is \np-complete.
\end{proposition}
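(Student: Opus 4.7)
The plan is to prove NP-completeness in the usual two-step fashion. Membership in \np{} is immediate: a contingency set $\Gamma$ with $|\Gamma| \le k$ is a polynomial-size certificate, since we can verify in polynomial time that $D - \Gamma \not\models q_{\textrm{3perm-R}}^{S_{xy}}$. So the substantive work is the hardness reduction.

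To set up the reduction, I would first analyze the witness structure carefully. A witness is a triple $(a,b,c)$ with $S^\exSymb(a,b), R(a,b), R(b,c), R(c,b) \in D$, and it is killed by deleting exactly one of the three endogenous tuples $R(a,b), R(b,c), R(c,b)$. A key observation is that deleting $R(a,b)$ is ``local'': it kills only witnesses whose first two coordinates are $(a,b)$. In contrast, deleting $R(b,c)$ or $R(c,b)$ breaks the permutation pair at $b$--$c$ and hence kills \emph{every} witness of the form $(\cdot, b, c)$. This asymmetry between local deletions on the $R(x,y)$-side and global deletions on the permutation side is, I believe, the source of hardness, and the reduction needs to exploit it.

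Given this, I would aim for a reduction from Vertex Cover on 3-regular graphs (with a back-up plan of reducing from Max 2SAT if VC turns out not to work cleanly). The gadget would associate to each vertex $v$ of the input graph a permutation pair $R(v, \phi_v), R(\phi_v, v)$ -- breaking this pair corresponds to ``covering'' $v$ at unit cost -- and to each edge $e = \{u, v\}$ a family of exogenous tuples $S^\exSymb(a_e^i, u), S^\exSymb(a_e^i, v)$ together with the matching $R(a_e^i, u), R(a_e^i, v)$, generating witnesses $(a_e^i, u, \phi_u)$ and $(a_e^i, v, \phi_v)$. With enough copies indexed by $i$, the cost of \emph{not} covering a vertex is driven high enough that optimal contingency sets must choose a vertex cover, and a careful balancing argument then identifies the minimum contingency size with the minimum vertex cover number plus a fixed additive term depending only on $|V|, |E|$.

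The main obstacle -- and the reason this requires a genuinely new argument rather than a variant of the proof of \Cref{perm is npc} -- is that the reduction must crucially distinguish $q_{\textrm{3perm-R}}^{S_{xy}}$ from the two superficially similar queries $q_{\textrm{3perm-R}}^{A}$ (\Cref{AR perm}) and $q_{\textrm{3perm-R}}^{S_{wx}}$ (\Cref{KS2}), both of which are solvable by network flow. What makes the hard case hard is that $S^\exSymb(x,y)$ shares \emph{both} of its variables with $R(x,y)$, and in particular shares the hub variable $y$ of the permutation; this coupling prevents the source side and the permutation side from being decoupled into independent ``flow layers.'' Any candidate reduction must exploit precisely this coupling, which is why the bipartite-matching style used for $q_\textup{perm}^{AB}$ does not transfer. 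A secondary obstacle is avoiding ``trivialization traps'' that ruin naive gadgets: if covering a vertex is too cheap relative to the edge penalty, the optimum covers everything (giving cost $|V|$ independent of the vertex-cover number); if too expensive, the optimum covers nothing. Tuning the amplification so that the minimum genuinely tracks the minimum vertex cover, and proving that no ``mixed'' strategy (deleting some permutation tuples \emph{and} some $R(a_e^i, \cdot)$ tuples) can undercut the vertex-cover value, is where I expect most of the technical effort to lie.
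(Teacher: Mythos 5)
Your NP-membership remark and your analysis of the witness structure (local deletions of the $R(x,y)$-tuple versus ``global'' deletions of a permutation-pair tuple) are both correct, but the concrete reduction you sketch does not work, and the flaw is structural rather than a matter of ``tuning the amplification.'' In your edge gadget for $e=\{u,v\}$, the witnesses $(a_e^i,u,\phi_u)$ and $(a_e^i,v,\phi_v)$ share \emph{no} endogenous tuple: the first uses $R(a_e^i,u),R(u,\phi_u),R(\phi_u,u)$ and the second uses $R(a_e^i,v),R(v,\phi_v),R(\phi_v,v)$. Consequently ``covering'' $u$ (deleting one tuple of the pair $u$--$\phi_u$) does nothing for the witnesses at $v$, and the whole instance decomposes vertex by vertex: for each vertex $v$ the optimum is $\min(1,\,N\cdot\deg(v))$, so on a $3$-regular graph with any $N\ge 1$ the minimum contingency set has size exactly $|V|$, independent of the vertex-cover number. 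The disjunction ``cover $u$ or cover $v$'' cannot be realized this way, because a single witness of $q_{\textrm{3perm-R}}^{S_{xy}}$ involves only one permutation pair, and your gadget provides no endogenous tuple coupling the $u$-side to the $v$-side; no choice of the multiplicity $N$ repairs this.

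The paper's proof takes a different route that supplies exactly the missing coupling mechanism. It reduces from 3SAT and builds gadgets out of ``full pairs'' $F(e,f)=\{R(e,f),R(f,e),S(e,f),S(f,e)\}$: because the exogenous $S$ appears in \emph{both} directions on the pair, the pair itself generates witnesses $(e,f,e)$ and $(f,e,f)$, which force at least (and, in an optimal solution, exactly) one of $R(e,f)$, $R(f,e)$ into the contingency set. These binary choices are then chained: the variable gadget for $x$ is a cycle of full pairs $F(x_i,\ov{x_i})$ linked by bare pairs $P(x_i,x_{i+1})$, $P(\ov{x_i},\ov{x_{i+1}})$, whose two minimum solutions encode the truth assignment; the clause gadget consists of nine pair structures and admits a contingency set of size $8$ iff one of the three literal pairs was already broken, i.e.\ iff the clause is satisfied. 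If you want to salvage a reduction, the lesson is that hardness here comes from propagating forced either/or choices along pairs tied together by bidirectional exogenous $S$-tuples, not from a per-edge cost tradeoff between local and global deletions.
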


Some other examples of queries that are hard but these are somewhat related to $q_{\textrm{perm}}^{AB}$.
\begin{align*}
q_{\textrm{3perm-R}}^{AC} &\datarule A(x) R(x,y) R(y,z) R(z,y) C(z) \\
q_{\textrm{3perm-R}}^{AB} &\datarule A(x) R(x,y)B(y) R(y,z) R(z,y) \\
q_{\textrm{3perm-R}}^{S_{xy}BC} &\datarule S(x,y) R(x,y) B(y) (R(y,z) R(z,y) C(z)
\end{align*}

\begin{proposition}\label{KS1.5 et al}
$\res(q_{\textrm{3perm-R}}^{AC})$, $\res(q_{\textrm{3perm-R}}^{AB})$ and $\res(q_{\textrm{3perm-R}}^{S_{xy}BC})$ are \np-complete.
\end{proposition}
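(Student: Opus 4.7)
Membership in \np\ is immediate for all three queries: guess a contingency set $\Gamma$ of size at most $k$ and verify in polynomial time that $D\setminus\Gamma\not\models q$. For the first two queries, my plan is a cost-preserving reduction from $\res(q_\textup{perm}^{AB})$, which is \np-complete by \cref{perm is npc}. The reason this works is that both target queries contain the bounded permutation $R(y,z), R(z,y)$ with bindings on both variables of the permutation: on $y$ via $A(x),R(x,y)$ (and, for $q_{\textrm{3perm-R}}^{S_{xy}BC}$, also via $B(y)$), and on $z$ via $C(z)$.

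For $q_{\textrm{3perm-R}}^{AC}\datarule A(x), R(x,y), R(y,z), R(z,y), C(z)$, I would translate a source instance $D$ of $q_\textup{perm}^{AB}$ into $D'$ as follows: (i) for each $A(p)\in D$, introduce a fresh constant $*_p$ and add $A'(*_p)$ and $R'(*_p, p)$; (ii) for each $B(q)\in D$, add $C'(q)$; (iii) for each $R(p,q)\in D$, add $R'(p,q)$. A direct unfolding of the target-query evaluation shows that target-witnesses in $D'$ are exactly the triples $(*_p, p, q)$ for which $(p,q)$ is a source-witness of $q_\textup{perm}^{AB}$ in $D$. Each source tuple has a corresponding cost-$1$ deletion option in the target --- the pair $A'(*_p), R'(*_p, p)$ being interchangeable and both equivalent to deleting $A(p)$ --- so minimum contingency sets transfer exactly and $\rho(D, q_\textup{perm}^{AB}) = \rho(D', q_{\textrm{3perm-R}}^{AC})$. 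The same construction handles $q_{\textrm{3perm-R}}^{S_{xy}BC}$ after additionally inserting the exogenous $S^\exSymb(*_p, p)$ and a redundant $B'(p)$ for each $A(p)\in D$, since $S$ cannot be deleted and $B'(p)$ is never strictly cheaper to delete than the existing $A'(*_p)$.

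The main obstacle is $q_{\textrm{3perm-R}}^{AB}\datarule A(x), R(x,y), B(y), R(y,z), R(z,y)$, whose $z$-variable of the permutation is entirely unbounded. Here a source-deletion of $B(q)$ kills a whole fan of witnesses at cost $1$, while a source-deletion of $A(p)$ kills a different fan, and in the target no single tuple is incident to the ``$z$-fan'' through a given value; thus a direct simulation of $q_\textup{perm}^{AB}$ would inflate the cost. My plan to surmount this is to bypass $q_\textup{perm}^{AB}$ and instead give a direct reduction from \sat\ in the spirit of the proof of \cref{chain is npc}: variable-gadgets instantiate the bounded $y$-handle paired with two alternative $z$-values (encoding ``true'' and ``false''), clause-gadgets are covered iff at least one literal-witness survives, and the budget is calibrated so that meeting it forces a satisfying assignment. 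The step I expect to be hardest is engineering the gadgets so that the extra $R(x,y)$-atom on the $y$-bound cannot be exploited by a ``cheating'' deletion that bypasses the literal-encoding; this will require an exchange argument showing that any minimum contingency set can be normalized to one that encodes a satisfying assignment, so that the satisfiability cutoff matches the resilience threshold cleanly.
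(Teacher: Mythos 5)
Your reductions for $q_{\textrm{3perm-R}}^{AC}$ and $q_{\textrm{3perm-R}}^{S_{xy}BC}$ are essentially the paper's own: for the first, the paper likewise maps each source tuple $A(a)$ to a fresh constant $a'$ with $A'(a')$ and $R'(a',a)$ while keeping $R$, so your $*_p$ construction and the cost-preservation argument (either of $A'(*_p)$, $R'(*_p,p)$ stands in for deleting $A(p)$, and $C'$ stands in for $B$) coincide with it. One slip in the $S_{xy}BC$ case: that query has no $A$-atom at all, so the tuples $A'(*_p)$ you carry over are invisible to the query and the justification ``$B'(p)$ is never strictly cheaper to delete than the existing $A'(*_p)$'' is vacuous; the correct observation is that $B'(p)$ (or $R'(*_p,p)$, with $S$ treated as exogenous since it is dominated by $B$) now plays the role of the source tuple $A(p)$, killing exactly the target witnesses with $y=p$, so the correspondence of minimum contingency sets still holds. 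That is a wording defect, not a gap.

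The genuine gap is $q_{\textrm{3perm-R}}^{AB}$. You correctly diagnose why a cost-preserving simulation of $\res(q_\textup{perm}^{AB})$ breaks down (the $z$-side of the permutation is unbounded, which matches the paper's own remark that no obvious reduction exists), but your replacement is only a declared plan: variable gadgets, clause gadgets, a calibrated budget, with the decisive step---ruling out ``cheating'' deletions that exploit the extra $R(x,y)$ atom, together with the exchange/normalization argument---explicitly left open. As written, \np-hardness of $\res(q_{\textrm{3perm-R}}^{AB})$ is not established, so the proposition is only two-thirds proved. The paper closes this case by a different and much shorter route: it reuses the 3SAT gadget database from the proof of \cref{perm is npc} unchanged, asserting that the same construction also works for the three-atom query (plausibly because every node of that gadget carries both an $A$- and a $B$-tuple and all its $R$-edges are two-way, so the witness structure and the contingency-set analysis carry over). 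Verifying that claim on the existing gadget would complete your proof far more economically than engineering new gadgets from scratch.
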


\textbf{Open Problems.} Despite the similarities with the queries presented in this section, we were not able to 
determine the complexity of the following queries:
\begin{align*}
q_{\textrm{3perm-R}}^{AS_{xy}} &\datarule A(x) S(x,y) R(x,y) R(y,z) R(z,y) \\
q_{\textrm{3perm-R}}^{S_{xy}B} &\datarule S(x,y) R(x,y) B(y) R(y,z) R(z,y) \\
q_{\textrm{3perm-R}}^{S_{xy}C} &\datarule S(x,y) R(x,y) R(y,z) R(z,y) C(z) 
\end{align*}

\subsection{Queries with REP}\label{sec: rep 3 Ratoms}

If all three occurrences of $R$ have repeated variables, then we are in the path case.
\begin{align*}
z_4 &\datarule R(x,x) R(x,y) S(x,y) R(y,y)\\
z_5 &\datarule A(x) R(x,y) R(y,z) R(z,z)
\end{align*}

\begin{proposition}\label{rep 3R}
$\res(z_4)$ and $\res(z_5)$ are \np-complete.
\end{proposition}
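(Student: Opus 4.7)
My plan is to establish each claim by a polynomial-time reduction from a known NP-complete resilience problem.

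For $z_4$, I would reduce from $\res(\vc)$ since the diagonal atoms $R(x,x)$ and $R(y,y)$ syntactically mirror the unary atoms of $\vc \datarule R(x), S(x,y), R(y)$. Given a graph $G=(V,E)$, I build the $z_4$-instance $D$ with $R^D = \{(v,v) : v\in V\} \cup \{(u,v),(v,u) : \{u,v\}\in E\}$ and $S^D = \{(u,v),(v,u) : \{u,v\}\in E\}$, so the witnesses of $z_4$ on $D$ are precisely the oriented edges of $G$. The heart of the argument is a replacement step: a non-diagonal tuple $R(u,v)$ or $S(u,v)$ appears in only one witness $(u,v)$, whereas $R(u,u)$ appears in every witness of the form $(u,\cdot)$ or $(\cdot,u)$. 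Hence any tuple $R(u,v)$ or $S(u,v)$ with $u\neq v$ in a minimum contingency set $\Gamma$ can be swapped for $R(u,u)$ (or $R(v,v)$) without increasing $|\Gamma|$ while still covering every witness $\Gamma$ covered. Iterating, $\Gamma$ consists entirely of diagonal tuples $\{R(v,v) : v\in C\}$ and $C$ is a vertex cover of $G$, giving $\rho(D, z_4)$ equal to the minimum vertex cover size of $G$.

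For $z_5$, I would reduce from the bounded chain $q_b \datarule A(x), R(x,y), R(y,z), B(z)$, which is NP-complete by \cref{unary chain variation in hard}. Take three fresh pairwise-disjoint copies $\Sigma_1, \Sigma_2, \Sigma_3$ of $\dom(D_0)$ and set $A^D = \{a_1 : a \in A^{D_0}\}$; add $(a_1, b_2)$ for each $(a,b) \in R^{D_0}$ with $a \in A^{D_0}$ (``role-1''), $(b_2, c_3)$ for each $(b,c) \in R^{D_0}$ with $c \in B^{D_0}$ (``role-2''), and the self-loop $(c_3, c_3)$ for each $c \in B^{D_0}$, so that the self-loops are the only diagonal tuples of $R^D$ and thereby play the role of $B(c)$. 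Because the three copies are disjoint, a variable-by-variable match forces every $z_5$-witness on $D$ to lie in $\Sigma_1\times\Sigma_2\times\Sigma_3$ and decompose uniquely as $(a_1, b_2, c_3)$ in bijection with the $q_b$-witnesses $(a,b,c)$ of $D_0$.

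The main obstacle in the $z_5$ reduction is that an $R^{D_0}$-tuple $(a,b)$ satisfying both $a \in A^{D_0}$ and $b \in B^{D_0}$ would produce two distinct $R^D$-tuples — a role-1 copy $(a_1,b_2)$ and a role-2 copy $(a_2,b_3)$ — that could be deleted independently, breaking the one-to-one correspondence between minimum contingency sets. I would handle this by first preprocessing $D_0$ into a \emph{layered} instance in which $A^{D_0}$, $B^{D_0}$, and the set of middle elements are pairwise disjoint, using a standard role-tagged renaming of each element that preserves the hardness of $\res(q_b)$. In a layered instance every $R^{D_0}$-tuple has a unique role and every $R^D$-tuple corresponds to exactly one $R^{D_0}$-tuple, so the minimum contingency sets of $D_0$ and $D$ have the same size, completing the reduction.
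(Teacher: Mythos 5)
Your $z_4$ argument is correct and is exactly the paper's route: the appendix proof is a one-line reduction from $\res(\vc)$ together with precisely your observation that tuples $R(a,b)$ with $a\neq b$ never need to appear in a minimum contingency set, so minimum contingency sets consist of diagonal tuples and coincide with vertex covers.

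The $z_5$ half, however, has a genuine gap, and the flaw is in the fix rather than in the obstacle you correctly identified. In a layered instance every $R$-tuple that participates in any witness has a unique role, so the resilience problem on such instances is literally the resilience problem of a self-join-free \emph{linear} query ($A(x),R_1(x,y),R_2(y,z),C(z)$ for $q_b$, and analogously with a unary $R_3$ for $z_5$), which is solvable by network flow; hence $\res(q_b)$ restricted to layered instances is in \p, and no ``hardness-preserving'' preprocessing into layered instances can exist unless $\p=\np$. Concretely, the role-tagged renaming does not preserve resilience: if some $R(a,b)$ serves as the first atom in one witness and as the second atom in another, the renaming splits it into two tuples that must be deleted independently, so the resilience can strictly increase --- this is exactly the splitting problem you set out to avoid, pushed one step back into the preprocessing. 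The hardness of $\acchain$ (\cref{acchain is npc}) lives precisely in instances where $R$-tuples play both roles; your tagging construction, like the marking trick in \cref{lem: sj makes harder}, reduces in the wrong direction (sj-free to self-join), and every database your reduction produces is an easy instance of $z_5$. The paper instead proves $\res(z_5)$ \np-hard by a reduction from Max 2SAT patterned on the one for $q_{3\textrm{conf}}^{AC}$ in \cref{cf3.1 is npc}; the crossover part of that variable gadget contains the structure that makes $z_5$ hard (cf.\ the IJP exhibited for $z_5$ in \cref{ex:z5_inverter}), and some gadget of this kind --- not a chain reduction through role-disjoint layers --- appears to be genuinely needed.
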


\textbf{Open problems.} We don't know the complexity of other queries that fall in this
category of having three $R$-atoms with REP but the following open ones are intriguing.
\begin{align*}
z_6 &\datarule A(x) R(x,y) R(y,y) R(y,z) C(z) \\
z_7 &\datarule A(x) R(x,y) R(y,x) R(y,y)
\end{align*}

Query $z_6$ has a similar structure to $\chain$ but a similar reduction doesn't seem to work.
Similarly, a reduction from $\res(q_{\textrm{perm}}^{AB})$ doesn't work for $z_7$.

\section{Independent Join Paths: a unifying hardness criterion}\label{sec:generalization}

\introparagraph{Motivation}
 {We now define a particular ``template'' for hardness reductions which we call \emph{Independent Join Paths} or IJPs.
The idea is that if we can construct a particular database that fulfills 5 criteria for a query $q$, 
then we can conclude safely that $\res(q)$ is \np-complete.

This recent development is exciting for several reasons:
1)~In our earlier attempts to prove hardness for queries, we amassed a plethora of different hardness proofs, 
with little immediate intuition of how one hardness proof immediately helps facilitate the hardness proof of another query.
Now we expect that the task can be simplified to the task of searching for any particular database that serves as ``proof'' of hardness based on a generalized reduction from Vertex Cover.
2)~We were able to look at our existing hardness proofs and post-hoc identify some part in some gadget that formed an IJP. 
In other words, \emph{IJPs were already present in our hardness proofs} (we give examples in} \cref{app:IJPs}).
Thus IJPs are 
really a \emph{unifying} common denominator for all hard queries known so far.
3)~The search for hardness proofs could now, in theory, be automated. 
While we have not yet explored this idea, we give the intuition in \cref{app:IJPs}.
4)~The hardness based on IJPs is not restricted to the particular fragment of CQs that we have analyzed in this paper; 
rather they are a universal criterion. Even the original criterion of \emph{triads} for sj-free CQs can be subsumed under IJPs.
5)~We conjecture that the inability to form IJPs for those queries that are 
in \PTIME 
can be deduced from the structure of a query, 
and future work will discover the reason.

\begin{figure}[t]
\begin{subfigure}[b]{.27\linewidth}
	\centering
	\includegraphics[scale=0.24]{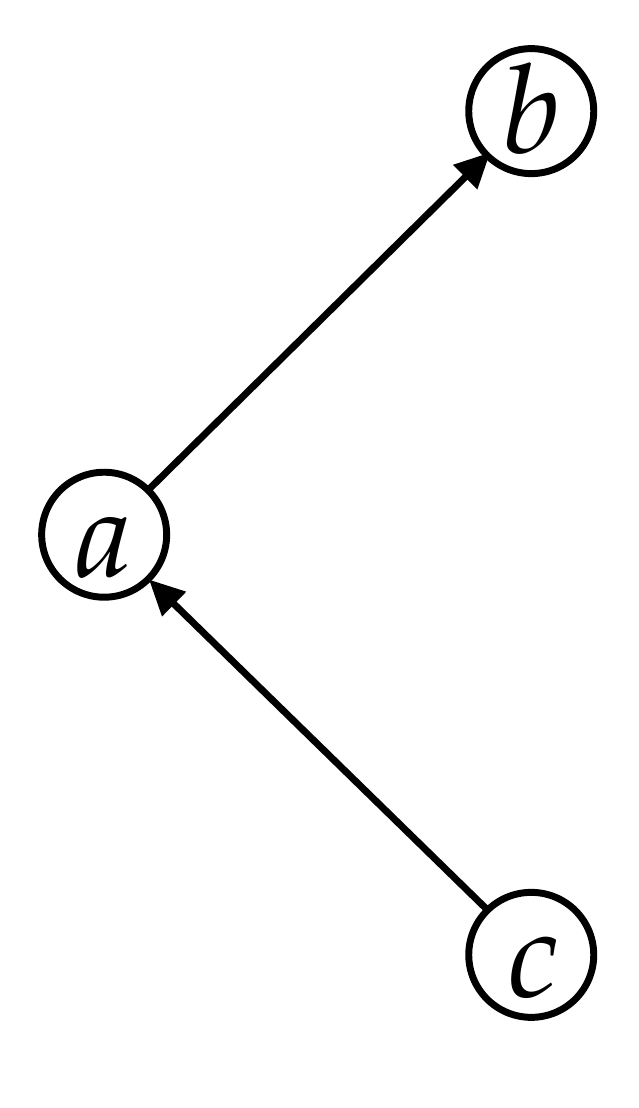}
	\caption{VC($G$)}\label{fig:Fig_VC_intuition_a}
\end{subfigure}
\begin{subfigure}[b]{.27\linewidth}
	\centering
	\includegraphics[scale=0.24]{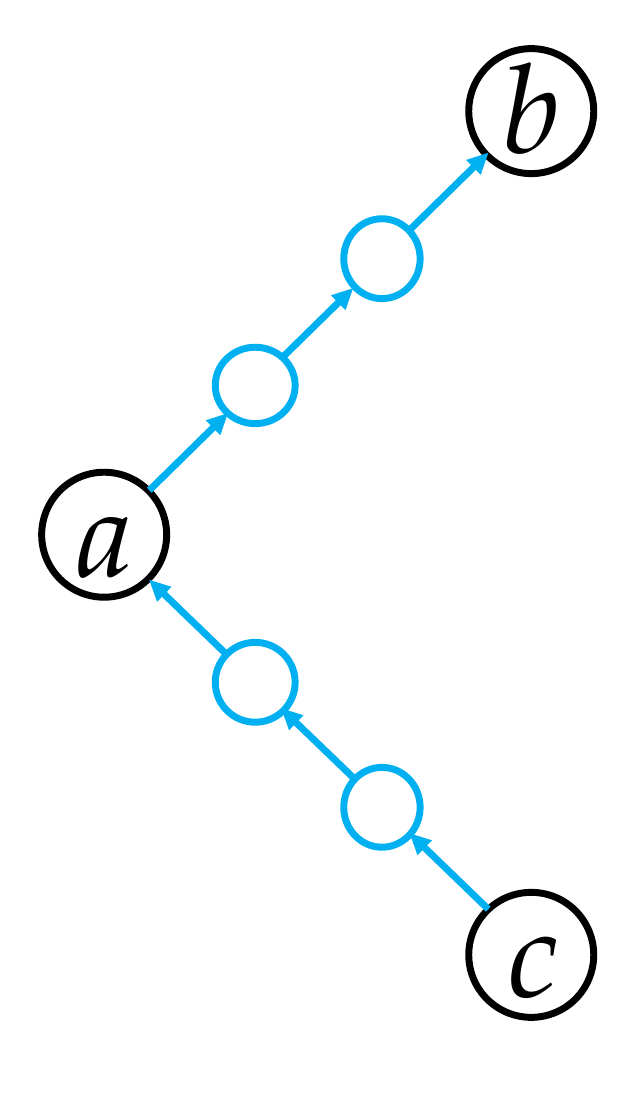}
	\caption{VC($G'$)}\label{fig:Fig_VC_intuition_b}
\end{subfigure}
\begin{subfigure}[b]{.42\linewidth}
	\centering
	\includegraphics[scale=0.24]{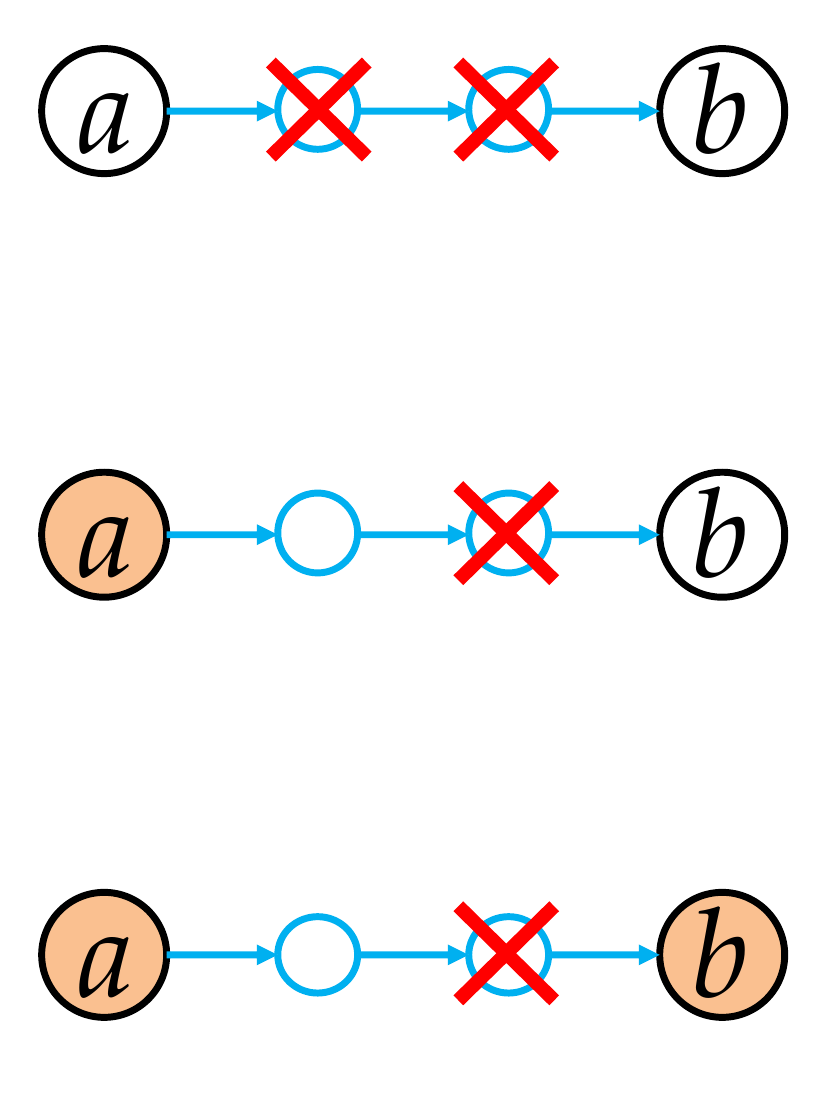}
	\caption{``Or-property'' with $c=1$}\label{fig:Fig_VC_intuition_c}
\end{subfigure}
\caption{Intuition behind IJPs and the ``or-property'': 
if at least one end point of the IJP is removed, 
then the minimal VC for the remaining part of the IJP is reduced by one.}\label{fig:Fig_VC_intuition}
\end{figure}

\introparagraph{The intuition of IJPs}
We have already seen that paths between two subgoals $g_1$ and $g_n$ 
that refer to the same relation are a sufficient condition for hardness under ``certain circumstances''.
Recall our simplest example for a path implying hardness: 
$\vc 	\datarule R(x), S(x,y), R(y)$.
The intuition of our construction is now as follows: 
Take any minimal VC problem for a graph $G(V,E)$
(see \cref{fig:Fig_VC_intuition_a}).
Replace any existing arc with 3 arcs instead to create $G'$
(see \cref{fig:Fig_VC_intuition_b}).
Then $G$ has a VC of size $k$ iff
$G'$ has a VC of size $k+|E|$.
Similarly, replace each arc with 5 instead of 3 arcs, then 
the condition for $G'$ is $k+2|E|$.
The key property we needed for this to work is the fact that 3 arcs form 
a particular path with the following ``OR-property'' (see \cref{fig:Fig_VC_intuition_c}):
As long as at least one end point of the path is removed, 
then the minimal VC is exactly one additional node per path.

\introparagraph{Formalization of IJPs}
We next use this idea to define a particular canonical database instance which we call ``Independent Join Path.''
We conjecture that whenever a query has such a canonical database, then resilience is hard by a proof that generalizes the idea from above.
We give the formal definition here and provide intuition for each of the conditions 
in \cref{app:IJPs}.
In the following, we write $\vec x_{\vec j}$ to denote the subvector of $\vec x$ that retains only the entries indexed by $\vec j$.
For example if $\vec x = (1,2,3,4,5)$ and $\vec j = (2,4,5)$ then $\vec x_{\vec j} = (2,4,5)$

\begin{definition}[Independent Join Path]\label{def:IJP}
A database $D$ forms an \emph{Independent Join Path} for query $q$ if the following conditions hold:
\begin{enumerate}

\item There is a relation $R$ containing at least two tuples 
$R(\vec a)$ and $R(\vec b)$ 
with
$\vec a \not \subseteq \vec b$
and
$\vec b \not \subseteq \vec a$.

\item  In $D$, $R(\vec a)$ and $R(\vec b)$ each participate in exactly one witness
$\vec w_a, \vec w_b$ of $D \models q$.  Both $\vec w_a$ and $\vec w_b$ have exactly $m$ tuples, where $m$
is the number of atoms in $q$.

\item There is no endogenous relation $S$ containing a tuple $S(\vec c)$ 
with $\vec c \subset \vec a$ or $\vec c \subset \vec b$.

\item If there is an exogenous relation $T^\x$ containing a tuple $T^\x(\vec d)$ 
with $\vec d = \vec a_{\vec j}$ for some $\vec j$,
then $T^\x$ also contains $T^\x(\vec e)$ with $\vec e = \vec b_{\vec j}$.

\item Let $c$ be the resilience of $q$ on $D$: $\rho(q,D) = c$. 
Then the resilience is $c-1$ in all 3 cases of removing either 
$R(\vec a)$, or
$R(\vec b)$, or both.
\end{enumerate}
\end{definition}

\begin{conjecture}[IJPs imply hardness]\label{th:ICPs}
	If there is a database $D$ that forms an IJP for a query $q$, 
	then $\res(q)$ is \np-complete.
\end{conjecture}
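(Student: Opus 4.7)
The plan is to reduce from \textsc{Vertex Cover}. Given a graph $G(V,E)$ with parameter $k$, the goal is to construct in polynomial time a database $D_G$ such that $G$ has a vertex cover of size at most $k$ if and only if $\rho(q, D_G) \leq k + (c-1)|E|$, where $c = \rho(q, D)$. This generalizes the three-arcs intuition sketched after \cref{def:IJP}: each edge of $G$ is replaced by a full copy of the IJP gadget, glued at its two distinguished tuples $R(\vec a)$ and $R(\vec b)$ to the vertex endpoints. Concretely, for each vertex $v \in V$ introduce a vertex tuple $r_v$ of relation $R$; for each edge $e = (u,v) \in E$, take a fresh isomorphic copy $D_e$ of $D$ using entirely new constants, then identify its copy of $R(\vec a)$ with $r_u$ and its copy of $R(\vec b)$ with $r_v$ (orienting edges arbitrarily). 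Wherever condition~4 forces an exogenous tuple $T^\x(\vec a_{\vec j})$ to coexist with its companion $T^\x(\vec b_{\vec j})$, preserve both so that the gluing is consistent regardless of whether a vertex plays the $\vec a$-role or the $\vec b$-role across different edges.

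\emph{Witness decomposition.} I would first show that every witness of $q$ in $D_G$ lies entirely inside a single edge gadget $D_e$. Condition~2 ensures that $R(\vec a)$ and $R(\vec b)$ participate in exactly one witness each inside $D$, so no additional witnesses are created when the same $r_v$ is shared across gadgets; condition~1 ensures $R(\vec a)$ and $R(\vec b)$ are genuinely distinct and cannot substitute for each other inside a witness; condition~3 rules out any shortcut subwitness through a smaller dominated endogenous tuple that could bridge two gadgets; and condition~4 ensures that the only cross-gadget combinations of tuples compatible with both the $\vec a$- and $\vec b$-roles are already present symmetrically inside a single $D_e$. Consequently, killing all witnesses of $D_G$ reduces to killing the witnesses of every edge gadget independently.

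\emph{Quantitative correspondence via the or-property.} The upper bound is immediate: given a vertex cover $C \subseteq V$ of size $k$, delete $\{r_v : v \in C\}$ and, for each edge gadget $D_e$, apply condition~5 to delete $c-1$ further tuples that finish off its witnesses (since at least one of $R(\vec a_e), R(\vec b_e)$ has already been deleted). This gives $\rho(q, D_G) \leq k + (c-1)|E|$. For the lower bound, let $\Gamma$ be a contingency set with $|\Gamma| \leq k + (c-1)|E|$ and set $C = \{v : r_v \in \Gamma\}$: if $C$ were not a vertex cover, some edge $e = (u,v)$ would have neither $r_u$ nor $r_v$ in $\Gamma$, forcing $\Gamma$ to contain at least $c$ internal tuples from $D_e$ to kill its witnesses, whereas condition~5 guarantees only $\geq c-1$ per ``covered'' edge. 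A straightforward counting argument then shows $|C| \leq k$.

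\emph{Main obstacle.} The delicate step is the witness decomposition: ruling out cross-gadget witnesses that mix tuples of $D_e$ and $D_{e'}$ for distinct edges. The five IJP conditions are tailored precisely for this, but verifying it rigorously requires careful case analysis of how variable positions and repeated constants inside $\vec a$ and $\vec b$ could conspire to produce unintended joins. Condition~4 (exogenous symmetry) is the subtlest: without it, an exogenous tuple at some $\vec a_{\vec j}$ with no $\vec b_{\vec j}$ companion would create asymmetric ``leaks'' from the $R(\vec a)$-role of one gadget into the $R(\vec b)$-role of an adjacent gadget, spoiling the per-edge additivity. A secondary but merely bookkeeping challenge is handling queries where $R(\vec a)$ or $R(\vec b)$ contain repeated constants, which forces the reduction to respect the algebraic shape of the two tuples rather than position alone.
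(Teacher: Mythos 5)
First, note that the paper does not actually prove this statement: it appears only as \Cref{th:ICPs}, a conjecture, and the authors merely sketch the intended generalized Vertex Cover reduction (the ``or-property'' picture of \cref{fig:Fig_VC_intuition_b}). So there is no paper proof to compare against; your proposal has to be judged as an attempt to settle the conjecture, and as such it contains a genuine gap rather than a complete argument.

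The gap is exactly the step you yourself flag as the ``witness decomposition,'' and it is not a routine consequence of conditions 1--5 of \cref{def:IJP}: those conditions constrain a single copy of $D$ in isolation, while the difficulty comes from the identifications your gluing performs. Identifying the copy of $R(\vec a)$ in gadget $D_e$ with a vertex tuple $r_u$, and simultaneously identifying the copy of $R(\vec b)$ in an adjacent gadget $D_{e'}$ with the same $r_u$, merges the constants of $\vec a$ and $\vec b$ at $u$; this is not even well defined unless $\vec a$ and $\vec b$ have compatible repetition patterns (condition 1 only forbids containment of the constant sets), and once constants are shared across gadgets, tuples of $D_e$ and $D_{e'}$ that mention them can join into witnesses that exist in no single copy of $D$. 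Condition 2 (``exactly one witness inside $D$'') says nothing about such cross-gadget witnesses, and condition 4 may force you to add exogenous tuples whose presence itself creates new joins --- the paper's own \cref{ex: ind path} exhibits precisely this failure mode, and the buffer (``sad'') segments in the triangle reduction of \cref{thm: hardness of triangle} exist precisely because naive gluing of join gadgets produces spurious witnesses. Your quantitative argument (upper bound from a vertex cover, lower bound by counting $c$ versus $c-1$ tuples per gadget) is fine modulo this decomposition, but condition 5 is likewise only guaranteed for $D$ itself, not for a copy whose endpoint tuples have been merged with other gadgets, so even the per-edge cost accounting needs an argument after gluing. Until the decomposition lemma is proved (or \cref{def:IJP} is strengthened so that it follows), the proposal restates the conjecture's intended proof template rather than establishing the conjecture.
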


\introparagraph{The conjecture}
For the fragment of CQs we are considering in this paper, we have been able to simplify some hardness proofs, 
which at times use very different constructions (reductions from VC, 3-SAT, Max 2-SAT),
by looking at our existing hardness proofs and identifying IJPs in our \emph{existing} gadgets. 

We conjecture that the existence of IJPs for a query is also a \emph{necessary} condition for hardness,
that there is an algorithm to verify whether a query can form IJPs or not,
and that the fact that a query cannot form IJPs (such as linear SJ-free CQs) translates immediately into a 
\PTIME algorithm for solving $\res(q)$.

\section{Related work}

In prior work~\cite{FreireGIM15}, we identified the concept of a
triad, a novel structure that allowed us to fully characterize the complexity
of resilience (and consequentially for deletion propagation) for the class of
self-join-free conjunctive queries with potential functional dependencies. Our
work in this paper considers self-joins, which have long-plagued the study of
many problems in database theory; results for such queries have been few and
far between.

\textbf{Deletion propagation and view updates.}
The problem of resilience is a special case of deletion propagation, focusing
on Boolean queries. Deletion propagation generally refers to non-Boolean
queries. Given a non-Boolean query $q$ and database $D$, the typical goal is
to determine the minimum number of tuples that must be removed from $D$, so
that a tuple $\vec t$ is no longer in the query result~\cite{Buneman:2002,
Dayal82} (source side-effects). Variants of deletion propagation consider
side-effects in the query result rather than the source~\cite{KimelfeldVW12,
Kimelfeld12}, and multi-tuple deletions~\cite{Cong12,Kimelfeld:2013}.
Resilience and deletion propagation are special cases of the view update
problem~\cite{Bancilhon81,Cong12,Cosmadakis84,Dayal82,Fagin83,
GottlobPZ1988:ConsistentViews,Keller85}, which consists of finding the set of
operations that should be applied to the database in order to obtain a certain
modification in the view.

\textbf{Causality and explanations.}
Database causality is geared towards providing explanations for query results,
but typically relies on the concept of responsibility~\cite{MeliouGMS11,
MeliouGNS2011}, which is harder than resilience. The idea of interventions
appears in other explanation settings, but often apply to queries instead of
the data~\cite{SudeepaSuciu14,DBLP:journals/pvldb/0002M13,Roy:2015}.
Finally, the problem of explaining \emph{missing} query results \cite{DBLP:conf/sigmod/ChapmanJ09,
DBLP:journals/pvldb/HerschelH10,
DBLP:journals/pvldb/HuangCDN08, 
DBLP:journals/pvldb/HerschelHT09,
DBLP:conf/sigmod/TranC10} is a problem analogous to deletion propagation, but
in this case, we want to add, rather than remove tuples from the view.

\textbf{Provenance and view updates.}
Data provenance studies formalisms that can characterize the 
relation between the input and the output of a given query~\cite{DBLP:conf/icdt/BunemanKT01,
DBLP:journals/ftdb/CheneyCT09, DBLP:journals/tods/CuiWW00,GKT07-semirings}.
``Why-provenance'' is the provenance type most closely related to resilience.
The motivation behind Why-provenance is to find the ``witnesses'' for the query answer, i.e., the tuples or group of tuples in the input 
that can produce the answer. Resilience, searches to find a \emph{minimum} set of input tuples that can make a query false.

\section{Final Remarks}

In this paper, we studied the problem of resilience for conjunctive
queries with self-joins. We identified fundamental query structures that
impact hardness, and proved a complete dichotomy for the restricted class of
single-self-join binary CQs where exactly two atoms can correspond to the same relation.

We also present results towards the for the case of binary CQs with a single self-join relation
that appears in $3$ atoms, and identifies some open problems and
challenges towards completing the dichotomy for this class (\cref{sec:3R}).

Our work also presents a roadmap for tackling the analysis of more extended
query families. \Cref{sec:generalization} provides 
towards 
a possible generalization of our results to all class of self-join queries, by using a unifying 
criterion that we call \emph{Independent Join Paths}.

Overall, our work in this paper contributes important progress in the
theoretical analysis of self-joins, which has long been stalled for many
related problems. We hope that our results, even though they apply to a
restricted class, will provide the foundations to help solve the general case
for CQs with self-joins in the future.

\bibliographystyle{ACM-Reference-Format}
\bibliography{causation}

\appendix
\clearpage
\section{Detailed proofs}\label{sec: proofs}

\subsection{Proofs for \cref{sec:newHardQueries}}

\begin{proof}[Proof of \Cref{prop:hardness vc}]
A database with unary $R$ and binary $S$ is simply a directed graph. 
For a directed graph $G=(V,E)$, we can create a database instance $D_G$
where for each node $v_i \in V$, we add tuple $R(v_i)$ in $D_G$, and for each edge
$(v_i, v_j) \in E$, we add tuple $S(v_i, v_j)$ in $D_G$. 
Furthermore, $D_G \models \vc$ iff graph $G$ has at least one edge.
Note that any vertex cover $C$ of $G$ has a correspondent set of tuples $\Gamma_C$ in $D_G$, 
and it is easy to see that $D_G - \Gamma \not\models \vc$. 

More precisely, $$(G,k) \in VC \Leftrightarrow (D_G,k)\in\res(\vc).$$

Therefore, $\res(\vc)$ is \np-complete.

\end{proof}

\begin{proof}[Proof of \Cref{chain is npc}]
We reduce 3SAT to $\res(\chain)$. 
Let $\psi$ be a 3CNF formula with $n$ variables $x, y, z, \ldots, v_n$ and $m$ clauses 
$C_1, \ldots, C_{m}$.  We map any such $\psi$ to a pair $(D_\psi,k_\psi)$ where $D_\psi$ is a database 
satisfying~$\chain$,  $k_\psi=(2n+5)m$ and 
\[\psi\in 3\sat \quad\Leftrightarrow\quad (D_\psi,k_\psi) \in \res(\chain)\; .\]

\Cref{Fig_red_RxyRyz} 
shows part of $D_\psi$ consisting of the gadgets for $x,y,z, C_1$
where in this example, $C_1 = ( x \lor \ov{y} \lor z)$.  
 {The nodes correspond to tuples in $D_\psi$ and there is a directed edge between any
two nodes those are witnesses for $\chain, D_\psi$.}
The variable gadgets are 
 {cycles of length $2m$}
whose minimum contingency sets are the set of $m$ blue nodes indicating the variable is assigned
true, or the set of $m$ red nodes, indicating the variable is assigned false.  The 9-node clause
gadgets have minimum contingency sets of size 5 when the clause is assigned true, and 6 otherwise.
\end{proof}

\subsection{Proofs for \cref{sec: newEasyQueries}}

\begin{proof}[Proof of \Cref{2conf is easy}]
We first argue that $R$-tuples are not the optimal choice for a contingency set. 
Let $\Gamma$ be a minimum contingency set containing tuple $R(1,2)$. 

\emph{Case 1: } $D$ contains only $A(1)$ or $C(1)$ but not both. 
WLOG, suppose it contains only $A(1)$. We can then obtain a contingency set
$\Gamma' = (\Gamma - R(1,2)) \cup A(1)$ of size $k$.
Similar if it contains only $C(1)$.

\emph{Case 2:} $D$ contains both $A(1)$ and $C(1)$. 
Consider $\Gamma' = (\Gamma \cup A(1)) - R(1,2)$ and $\Gamma'' = (\Gamma \cup C(1)) - R(1,2)$, and suppose
that neither of those is a contingency set. Then we have $A(i),R(i,2),R(1,2),C(1)$ in $D - \Gamma'$ and 
$A(1),R(1,2),R(j,2),C(j)$ in $D - \Gamma''$. 
However, the existence of those witnesses implies that
$D - \Gamma$ has the witness $A(i),R(i,2),R(j,2),C(j)$ contradicting the fact that $\Gamma$ is a contingency set. 
Therefore, at least one of $\Gamma', \Gamma''$ must be a contingency set and we can replace $R(1,2)$ by $A(1)$
or $C(1)$.

Since $R$ can be made exogenous, solving resilience for this query is the same as solving vertex cover
in a bipartite graph, and therefore is in \p.
\end{proof}

\begin{proof}[Proof of \Cref{AR perm}]
For a linear sj-free query, we can represent its resilience problem as a network flow 
making each endogenous tuple an edge of weight 1.  Each flow is a witness and the min-cuts are exactly the
minimum contingency sets (see \cite{MeliouGMS11} for details).
It is not clear what to do with repeated relations because 
there is no obvious way to add to a standard network
flow algorithm an extra constraint that two or more edges represent the same tuple,
and can thus be removed \emph{together} at the reduced cost of only 1.

To handle $q_{\textrm{3perm-R}}^{A}$, consider an input database  {$D$ with}  $A$ and $R$ tuples. 
We refer to $R$-tuples that have an inverse as 2-way tuples, and the ones that don't as 1-way
tuples.  We construct a flow graph by creating 1-weight edges $(a_l, a_r)$ for all tuples $A(a)$, 
and 1-weight edges $(\angle{ab}_l, \angle{ab}_r)$ for pairs $\set{a,b}$of 2-way tuples. 
There are $\infty$-weight edges $(s, a_l)$ for all tuples $A(a)$, where $s$ is the source,
$\infty$-weight edges $(x_r, \angle{uv}_l)$ if and only if $x \in \set{u,v}$ or there is a 1-way tuple $R(x,u)$ or $R(x,v)$,
and  $\infty$-weight edges $(\angle{ab}_r, t)$ for pairs $\set{a,b}$of 2-way tuples, where $t$ is the target.
 {Note that 1-way tuples are never the optimal choice}, since we can always pick an $A$-tuple instead, so they have
infinite weight in the flow graph. Below we refer to the tuple that the edges represent, instead of the edge itself.

We show that from the min-cut, $M$, of the flow graph, we can construct a minimum contingency set,
$\Gamma$, as follows:  $\Gamma$ contains all the $A(a)$'s from $M$. For each 
edge $\set{a,b}\in M$, we add one of $R(a,b)$ or $R(b,a)$ to $\Gamma$ as follows:
If $A(a)\in (D - M)$ but $A(b)\not\in (D - M)$  then we add  $R(a,b)$ to $\Gamma$. Symmetrically,
if $A(b)\in (D - M)$ but $A(a)\not\in (D - M)$  then we add  $R(b,a)$
to $\Gamma$; {otherwise, arbitrarily add one or the other}.

We claim that the resulting $\Gamma$ is a minimum contingency set.
Because it comes from a min-cut, it suffices to show that $\Gamma$ is a
contingency set, i.e., $D-\Gamma \not\models q_{\textrm{3perm-R}}^{A}$.
Suppose for the sake of a contradiction, that $D-\Gamma$ has a wtiness $A(a)$, $R(a,b)$, $R(b,a)$,
$R(a,b)$, i.e., some tuple, $R(a,b)$, occurs twice in the join. This is impossible because since $A(a) \not\in M$, at least one of $R(a,b)$ or
$R(b,a)$ must be in $\Gamma$.

The other possible wtiness is $A(c), R(c,a), R(a,b), R(b,a)$.
Note that if $R(c,a)$ is a 1-way tuple, then this wtiness would be a flow contradicting the fact that
$M$ is a cut.  Thus, $R(c,a)$ is a 2-way tuple.  Since $A(c)\set{c,a}$ can't be a flow, the pair
$\set{c,a}$ must be in $M$.

Since $R(c,a)$ was not chosen in $\Gamma$, it must
be that $A(a)\in (D - M)$.  This means that there is still a flow from $A(a)$ to $\set{a,b}$, so $M$
was not a cut.
\end{proof}

\subsection{Proofs for \cref{sec:components}}

\begin{proof}[Proof of \Cref{min gamma disconnected}]
First observe that disconnected components join as a cross-product,
so for a query to be made false it is enough that at least one of its query
components is made false. Hence, for each query component 
$q_i$, if $D - \Gamma_i \not\models q_i$, then 
$D - \Gamma_i \not\models q$, which then implies $\rho(q, D) = \min_{i} \rho(q_i, D)$.
\end{proof}

\begin{proof}[Proof of \Cref{complexity disconnected queries}]
This is easy to see because the resilience problem for  $q$ consists of the union of the $k$ independent resilience problems for its components.  If $\res(q_i)$ is \np-complete, then we can take a database that has the relevant instance of $q_i$ and all the other components can be extremely resilient, so the minimum contingency sets is always a subset of $q_i$'s component.  Conversely, if each $\res(q_i)$ is in \p, then to solve the minimum contingency set, we find the minimum contingency of each component, and the global minimum is simply the minimum of these minima.
\end{proof}

\subsection{Proofs for \cref{sec:domination}}

\begin{proof}[Proof of \Cref{lem: domination sj}]
We show that tuples from dominated relations don't need to be used in minimum contingency sets.
Assume $q$ is a connected query and let $\Gamma$ be a minimum contingency set of $q$ in $D$.

Suppose that relation $A$ dominates relation $B$ and 
there is some tuple $B(\mathbf{t})$ that is in $\Gamma$. 
Tuple $B(\mathbf{t})$ can participate in joins as one or more of the $B$-atoms in $q$. 
Let's call those atoms $B_i$, for $i \in [k]$. Our definition of domination guarantees that 
there exists an atom $A_j$ for each atom $B_i$ such that the projection of $\mathbf{t}$ 
onto $\var(A_j)$ always produces the same tuple $\mathbf{p}$.
Then we can replace $B(\mathbf{t})$ by $A(\mathbf{p})$ and we remove at least 
as many witnesses if $D \models q$.

As a result we show the complexity of $\res(q)$ is the same if $B$ is made exogenous and therefore $\res(q) \equiv \res(q')$.
\end{proof}

\subsection{Proofs for \cref{sec:nonLinearQueries}}

\begin{proof}[Proof of \Cref{lem: sj makes harder}]
Let $D\models q$ be a database.  We map $D$ to $D'$ by marking all the tuples according to which
variables they refer to in witnesses of~$q$.  For each witness $j$ assigning the variables of $q$ to domain values ($\dom(D)$), 
we add the tuples $T(j(v_1)_{v_1}, \ldots j(v_k)_{v_k})$ to $D'$, where $T(\ov{v})$ occurs in $q^{\sj}$. 
In particular, if $S_i$ was replaced by $R_i$ to obtain $q^{\sj}$, $S_i(j(v_1), \ldots j(v_k))\in D$ results in adding the tuple
$R_i(j(v_1)_{v_1}, \ldots j(v_k)_{v_k})$ to $D'$. 

For example, consider that atom $S(x,y,z)$ was replaced by atom $R(x,y.z)$.
If $S(a,b,c)$ is part of a witness $j$, we have $j(x)=a, j(y)=b, j(z)=c$. Then $R(j(x)_x, j(y)_y, j(z)_z) = R(a_x,b_y, c_z)$ is included in $D'$.

Since the variables mark the tuples in $D'$, the new self-joins have no effect:  if the subscripted
variables are $\ov{v}$ in a tuple of $R_i$ in $D'$, then it came from a tuple of $S_i$ in $D$.  It then
follows that there is a 1:1 correspondence of contingency sets for $(D,q)$ and $(D',q^{\sj})$.  We need
the minimality of $q^{\sj}$, because if there were an assignment where $D-\Gamma' \models q^{\sj}$ when
$D-\Gamma\not\models q$, this would correspond to a reassignment of the variables, $\var(q^{\sj})$ to a
proper subset, so that some $R_i$ would be doing ``double duty''.  This would mean that a proper subset of
$q^{\sj}$ implies $q^{\sj}$, i.e, $q^{\sj}$ is not minimal.

\end{proof}

\subsection{Proofs for \cref{sec: sj rats and brats}}

\begin{proof}[Proof of \Cref{prop: sj rats and brats are hard}]
The proofs essentially follow the same strategy used to reduce 3SAT to $\res(q_\triangle)$ with a few adjustments
to handle the self-joining relation and also the variable order, which is relevant in some
cases. See \cref{prop: sj rats hard} 
and \cref{prop: sj brats hard} for the details.
\end{proof}

\begin{lemma}\label{prop: sj rats hard}
$\res(\rats^{\textrm{sj}_1})$ and $\res(\rats^{\textrm{sj}_2})$ are \np-complete. 
\end{lemma}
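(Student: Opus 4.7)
The plan is to reduce 3-SAT directly to each of $\res(\rats^{\textrm{sj}_1})$ and $\res(\rats^{\textrm{sj}_2})$ by adapting the known 3-SAT reduction for $\res(q_\triangle)$. We cannot appeal to \Cref{lem: sj makes harder} here, because $\rats$ itself is easy; hardness must come from the fact that once the sj-free domination is broken (as in \Cref{ex: domination does not work}), the three $R$-atoms form a genuine triad of endogenous relations.

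The key preparatory step is to impose a \emph{typed-constant} discipline on the active domain. Partition the constants into three pairwise disjoint sets $V_x, V_y, V_z$, and insert only $R$-tuples that lie in $V_x \times V_y$, $V_y \times V_z$, or $V_z \times V_x$. Because these three products are pairwise disjoint sets of pairs, an $R$-tuple placed to play the role of $R(x,y)$ in a witness can never be matched to $R(y,z)$ or $R(z,x)$, and analogously for the other roles. Consequently, the witnesses of $\rats^{\textrm{sj}_1}$ over the constructed database are in bijection with the witnesses of the sj-free query $R_1(x,y), A(x), R_2(y,z), R_3(z,x)$, which is a self-join-free instance of the triangle-like pattern with a triad and is therefore hard by the triangle reduction. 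The $A$-atom is placed on all active constants in $V_x$; since the self-join domination condition of \Cref{def:domination} fails (the $V_x$-position does not line up across the three $R$-atoms), $A$ must remain endogenous, but padding it fully ensures that replacing any intended $R$-deletion by an $A$-deletion does not decrease the contingency cost.

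With typed constants in place, I would transplant the variable and clause gadgets from the triangle reduction verbatim, reinterpreting the roles of $R, S, T$ in $q_\triangle$ as the three typed roles of $R$ in $\rats^{\textrm{sj}_1}$. A variable gadget for a propositional variable $v$ is a short alternating cycle of the three typed $R$-tuples whose two minimum contingency sets of equal size encode the assignments $v = \true$ and $v = \false$. A clause gadget for $C_j = \ell_1 \vee \ell_2 \vee \ell_3$ is a three-witness attachment sharing one tuple with each literal's variable gadget, whose minimum local cover cost is smaller by one whenever at least one literal is set to true. Summing the per-variable and per-clause costs yields a threshold $k_\psi$ for which $(D_\psi, k_\psi) \in \res(\rats^{\textrm{sj}_1})$ iff $\psi$ is satisfiable. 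Membership in \np{} is automatic.

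The main obstacle I expect is verifying that the typing truly kills \emph{every} accidental witness—most subtly, witnesses that cross the interface between two gadgets via a shared constant, or use a tuple in two different roles because its two coordinates happen to land in different type-classes. This is the ``variable order'' subtlety mentioned in the proof sketch. I would handle it by always reusing shared constants in a single type, and by using fresh constants of the appropriate type for every internal gadget node; a careful case analysis on which $R$-tuple can match which atom of $\rats^{\textrm{sj}_1}$ then forces each witness to come from either a variable gadget or a clause gadget, never a hybrid. For $\rats^{\textrm{sj}_2}$, which repeats $R$ in only two of the three positions rather than all three, the same typed construction applies on just the two colliding roles and the bookkeeping strictly simplifies; hardness follows by the same accounting argument.
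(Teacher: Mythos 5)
There is a genuine flaw in your construction, and it is fatal rather than cosmetic. By typing the constants into $V_x,V_y,V_z$, restricting $R$ to $(V_x\times V_y)\cup(V_y\times V_z)\cup(V_z\times V_x)$, and putting $A$ on \emph{all} of $V_x$, you force every witness of $\rats^{\textrm{sj}_1}$ to use $x\in V_x$, $y\in V_y$, $z\in V_z$ (the $A$-atom pins $x$ to $V_x$, and the typing then propagates around the cycle). Consequently every tuple in $V_x\times V_y$ is only ever matched to the atom $R(x,y)$, and every tuple in $V_z\times V_x$ only to $R(z,x)$; all witnesses through such a tuple share the value of $x$, so each of these tuples can be swapped for the single tuple $A(a)$ without increasing the contingency set. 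In other words, your typed instances reinstate exactly the domination structure that makes the sj-free query $\rats$ easy: minimum contingency sets may be assumed to consist of $A$-tuples and $R(y,z)$-role tuples only, which turns every database in your family into a bipartite vertex-cover (network-flow) instance solvable in \p. A reduction from 3\sat\ that only produces instances from a \p-solvable family cannot establish \np-hardness, so the "transplant the triangle gadgets verbatim" step cannot be made to work: the gadget counting of \cref{thm: hardness of triangle} relies on only the three edge relations being endogenous, and your ubiquitous endogenous $A$-tuples (each killing all triangles at its vertex) undercut the intended covers regardless of whether $\psi$ is satisfiable. Your parenthetical claim that "padding $A$ fully ensures that replacing any intended $R$-deletion by an $A$-deletion does not decrease the contingency cost" is exactly the point that fails.

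The paper's proof goes in the opposite direction. It keeps the triangle database $D_\psi$ untyped, merges $R,S,T$ into the single relation $R$ (adding $A$ on all constants), and embraces the rotated witnesses this creates: each original witness $\langle a,b,c\rangle$ yields three witnesses $\langle a,b,c\rangle$, $\langle b,c,a\rangle$, $\langle c,a,b\rangle$ that reuse the \emph{same} three $R$-tuples (six witnesses, with reversed tuples added, for $\rats^{\textrm{sj}_2}$, whose last atom is $R(x,z)$ rather than "$R$ in only two positions" as you describe). This multiplicity is precisely what defeats domination: an $R$-tuple now participates in $3$ or $6$ witnesses while an $A$-tuple participates in at most $2$ (resp.\ $4$), so $R$-tuples are never worse choices, the minimum contingency sets coincide with those of the triangle reduction, and the 3\sat\ accounting survives. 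If you want to salvage your write-up, the key missing idea is that you must \emph{preserve} the ability of a single $R$-tuple to serve several atom positions (that is what breaks \cref{sj-free domination} and makes these queries hard), rather than engineer it away with types.
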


\begin{proof}[Proof of \Cref{prop: sj rats hard}]
We first show that $\res(\rats^{\textrm{sj}_1})$ is \np-complete by a reduction from 3SAT, 
similar to the one used to prove $\res(q_\triangle)$ is \np-complete (\cref{thm: hardness of triangle}).

Let $\psi$ be a 3CNF formula with $n$ variables $v_1, \ldots, v_n$ and $m$ clauses 
$C_0, \ldots, C_{m-1}$. Our reduction will map  any such $\psi$ to a pair $(D^1_\psi,k_\psi)$ where $D^1_\psi$ is a database 
satisfying $\rats^{\textrm{sj}_1}$, and $$\psi\in 3\sat \quad\Leftrightarrow\quad (D^1_\psi,k_\psi) \in \res(\rats^{\textrm{sj}_1})$$

In our construction, if $\psi \in 3\sat$, then the size of each minimum contingency set for $\rats^{\textrm{sj}_1}$ in
$D^1_\psi$ will be $k_\psi=6mn$, whereas if $\psi \not\in 3\sat$, then the size of all contingency sets
for $\rats^{\textrm{sj}_1}$ in $D^1_\psi$ will be greater than $k_\psi$.

We construct $D^1_\psi$ by taking $D_\psi$ from the proof of \cref{thm: hardness of triangle}, and
adding the following tuples for each witness $\angle{a,b,c}$ in $D_\psi, q_\triangle$:
\begin{align*}
R &= \set{(a,b),(b,c),(c,a)} \\
A &= \set{(a),(b),(c)}
\end{align*}

Notice that for each witness $\angle{a,b,c}$ in $D_\psi$ we thus create 3 witnesses, $\angle{a,b,c}$,
$\angle{b,c,a}$, $\angle{c,a,b}$ in $D_\psi^1$ but they all use the same $R$-tuples.

We know from \cref{thm: hardness of triangle} that some $R$-tuples participate in 2 witnesses
(triangles) and some only in 1 within a variable gadget. Thus, in $D_\psi^2$ these numbers are 6
witnesses or 3 witnesses.  
Observe that $A$-tuples participate in at most 2 witnesses each, so it is never better to choose an
$A$-tuple instead of an $R$ tuple. Therefore it follows that the same choice of tuples
for the minimum contingency set for $D_\psi, q_\triangle$ will also work for $D^1_\psi, \rats^{\textrm{sj}_1}$ by choosing the corresponding $R$-tuples in $D^1_\psi$
based on the $R,S,T$-tuples chosen from $D_\psi$.

For $\rats^{\textrm{sj}_2}$ the reduction is similar, but the final atom -- $R(x,z)$
instead of $R(z,x)$ --  must be handled.  The solution is that for each witness 
$\angle{a,b,c}$ in $D_\psi, q_\triangle$, we add the following tuples to $D^2_\psi$:
\begin{align*}
R &= \set{(a,b),(b,a),(b,c),(c,b),(c,a),(a,c)} \\
A &= \set{(a),(b),(c)}
\end{align*}

Now, each witness from $D_\psi, q_\triangle$ leads to 6 witnesses in $D^2_\psi, \rats^{\textrm{sj}_2}$ --
the three from the above proof plus their reversals. Thus, the $R$-tuples for solid edges from
\Cref{fig:gadgetgi} are used in 6 witnesses each, whereas $A$-tuples are in at most 4 witnesses each.  
Thus, based on the minimum contingency sets for $D_\psi, q_\triangle$, we create minimum
contingency sets for $D^2_\psi, \rats^{\textrm{sj}_2}$ by including the corresponding $R$-tuples
and their reversals.
\end{proof}

\begin{lemma}\label{prop: sj brats hard}
$\res(\brats^{\textrm{sj}_1})$, $\res(\brats^{\textrm{sj}_2})$ and $\res(\brats^{\textrm{sj}_3})$ are \np-complete.
\end{lemma}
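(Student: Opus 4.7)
The plan is to reduce $3\sat$ to each $\res(\brats^{\textrm{sj}_i})$ by adapting the reductions from \cref{prop: sj rats hard} for the corresponding $\rats^{\textrm{sj}_i}$. Because $\brats$ differs from $\rats$ only by the extra (dominated) atom $B(y)$, each $\brats^{\textrm{sj}_i}$ is a ``$B$-augmented'' version of $\rats^{\textrm{sj}_i}$ containing the same triad of $R$-atoms, so the natural strategy is to reuse the database $D^i_\psi$ built for $\rats^{\textrm{sj}_i}$ and enrich it with $B$-tuples so that every $\rats^{\textrm{sj}_i}$-witness extends to exactly one $\brats^{\textrm{sj}_i}$-witness.

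Concretely, starting from $D_\psi$ as in the proof of \cref{thm: hardness of triangle}, for each triangle witness $\langle a,b,c \rangle$ in $(D_\psi, q_\triangle)$ I would add: the $R$-tuples prescribed by $\brats^{\textrm{sj}_i}$ (with reversals as in the $\rats^{\textrm{sj}_2}$ construction where needed), the $A$-tuples $A(a), A(b), A(c)$, and, crucially, the $B$-tuples $B(a), B(b), B(c)$. Since $B(y)$-tuples exist for every value that can appear as a $y$-coordinate, the $B(y)$ subgoal is satisfied ``for free'' on top of any $\rats^{\textrm{sj}_i}$-witness, yielding a bijection between $\witnesses(D^i_\psi, \brats^{\textrm{sj}_i})$ and $\witnesses(D^i_\psi, \rats^{\textrm{sj}_i})$.

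Next, I would argue that $R$-tuples remain the optimal choice for a minimum contingency set. The counts from \cref{prop: sj rats hard} carry over: each ``solid-edge'' $R$-tuple participates in $3$ (respectively $6$ in the doubled-symmetry variant) witnesses, whereas each $A$- or $B$-tuple participates in at most $2$ witnesses, so swapping any $A$- or $B$-tuple out of a contingency set for the corresponding $R$-tuple never increases its size. Using the same threshold $k_\psi$ as in \cref{prop: sj rats hard}, this yields $\psi \in 3\sat \Leftrightarrow (D^i_\psi, k_\psi) \in \res(\brats^{\textrm{sj}_i})$.

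The main obstacle I expect is verifying, for $\brats^{\textrm{sj}_2}$ and $\brats^{\textrm{sj}_3}$, that the added $B(y)$ atom does not disturb the witness counts that made the $\rats^{\textrm{sj}_i}$ argument go through. The position at which $y$ appears in each $R$-atom determines which triangle vertices must be tagged by $B$-tuples, and hence how many rotational or reflectional witnesses of $\rats^{\textrm{sj}_i}$ survive the $B(y)$ constraint; handling this should require a case analysis parallel to, but slightly more delicate than, the one separating $\rats^{\textrm{sj}_1}$ from $\rats^{\textrm{sj}_2}$ in \cref{prop: sj rats hard}.
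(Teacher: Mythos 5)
Your proposal is correct and follows essentially the same route as the paper: reuse the $3\sat$ gadget databases built for the $\rats^{\textrm{sj}_i}$ variants, add $B$-tuples on the same values as the $A$-tuples so that every witness extends uniquely, and observe that $B$-tuples (like $A$-tuples) participate in fewer witnesses than the solid $R$-tuples and hence are never better choices, so the same threshold $k_\psi$ works. The only slight difference is bookkeeping (in the reversal-augmented variant the unary tuples appear in up to $4$ witnesses rather than $2$), which does not affect the argument.
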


\begin{proof}[Proof of \Cref{prop: sj brats hard}]
The same idea used above to prove that $\res(\rats^{\textrm{sj}_1})$ is hard, works for query $\brats^{\textrm{sj}_1}$. 
When defining $D^1_\psi$ for this case, we just need to add the appropriate $B$-tuples:
\begin{align*}
R &= \set{(a,b),(b,c),(c,a)} \\
A &= \set{(a),(b),(c)}\\
B &= \set{(a),(b),(c)}
\end{align*}

Since $B$-tuples have the same properties as the $A$-tuples, they are never better choices than
$R$-tuples and we can obtain a minimum contingency set 
with only $R$-tuples, as we saw in \cref{prop: sj rats hard} above. Similar reduction thus follow for $\res(\brats^{\textrm{sj}_2})$ and $\res(\brats^{\textrm{sj}_3})$.
\end{proof}

\subsection{Proofs for \cref{sec:triadsRemainHard}}

\begin{proof}[Proof of \Cref{thm: triads in sj}]
This mostly follows from the fact that triads make sj-free queries hard and adding self-joins to a
hard query keeps it hard (\cref{hard part dichotomy}, \cref{lem: sj makes harder}).

The case we haven't covered yet is where the triad in $q$ involves self-join relations which would
be dominated and thus exogenous in the corresponding sj-free query.  Examples are self-join variations
of $\rats$ and $\brats$ which are hard even though -- because of domination -- their sj-free cases
are easy (\cref{prop: sj rats and brats are hard}).

We now follow and extend the proof of \cref{hard part dichotomy} when $q$ has a triad, ${\mathcal T}
= (S_0, S_1, S_2)$, even though if ${\mathcal T}$ did not include a self join, one or more of its
members would be dominated.  In Case 1, $\var(S_i)$, $i= 1, 2, 3$, are pairwise disjoint.  Here the
reduction from $\res(q_\triangle)$ to $\res(q)$ goes through exactly as in the proof of \cref{hard part dichotomy}.
We can choose a single relevant variable for each $S_i$, so no domination is
possible.  Any minimum contingency set consists of elements of $S_0 (\angle{ab})$, $S_1
(\angle{bc})$ or $S_2 (\angle{ca})$, and the reduction from $\res(q_{\triangle})$ goes through.

In Case 2, where $\var(S_i)$ are not pairwise disjoint, we have to consider a partition of the
variables into 7 pieces (Eqn.\ \ref{variable partition eq} from the proof of \cref{hard part
  dichotomy}). As argued there, there is still a 1:1 
correspondence between witnesses of $(D,q_{\triangle})$ and witnesses of $(D',q)$.

If there are no (endogenous) relations containing just the $a$, $b$ or $c$ variables, then the reduction from
$\res(q_{\triangle})$ goes through.  If there is a relation containing just $a$, then we instead use
the same reduction but 
from the appropriate self-join variation of $\rats$.  If there are relations containing just $a$ and
$b$ but not $c$, then we get a reduction from 
the appropriate self-join variation of $\brats$. If there are relations for $a$, $b$ and $c$, 
then these form an sj-free triad and thus we already know that $\res(q)$ is hard.
\end{proof}

\subsection{Proofs for \cref{sec:pseudoLinear}}

\begin{proof}[Proof of \Cref{thm: no triad means linear}]
We are given $q$, a CQ with no triad.  Let $n$ be the number of groups of endogenous atoms in $q$,
where we put two atoms in the same group iff they contain exactly the same variables, so $A(x,y)$
and $R(y,x)$ belong in the same group, but $B(x)$ and $R(x,z)$ do not. We refer to the groups of 
endogenous atoms as $G_1, G_2, \ldots, G_n$.

Since $q$ is connected but has no triad, for any pair $G_i, G_j$, either these atoms are connected directly in $\mathcal{H}(q)$,
or they are connected via at least another group $G_k$, but both cases cannot occur. If they are connected directly, then they must appear
consecutive in an order of the endogenous atoms. Otherwise, $G_k$ must be placed between them. Note that in the latter case,  
removing the variables of $G_k$ separates the atoms of $q$ into two connected components, one containing $G_i$ and the other
containing $G_j$, so we call $G_k$ the separator of $G_i, G_j$.

Now, for any set $A,B,C$ of endogenous atoms from different groups, when $A$ and $B$ are already placed
along the line, say with $B$ to the right of $A$, then it is easy to see where $C$ must go.  
If $A$ is the separator, $C$ goes to the left of $A$, if $B$ is the separator, $C$ goes
to the right of $B$ and if $C$ is the separator, then it goes between $A$ and $B$, and that's
what guarantees the endogenous atoms are linearly connected. Looking at \Cref{easyWalkFig}, 
we see that the endogenous atoms of $q$ are arranged linearly.

\end{proof}

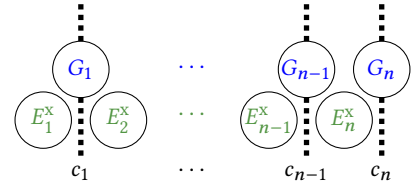
\begin{figure}
\begin{center}
\begin{tikzpicture}[ scale=.5]
\draw (3,.65)  circle [radius=.75] node {{\color{dg} $E^\exSymb_1$}};  
\draw (4,-.75) node {{\color{black} $c_1$}};  
\draw (5,.65)  circle [radius=.75]  node {{\color{dg} $E^\exSymb_2$}};  
\draw (7,.75) node {{\color{dg} $\cdots$}};  
\draw (7,-.75) node {{\color{black} $\cdots$}};  
\draw (7,2) node {{\color{blue} $\cdots$}};  
\draw (9,.65)  circle [radius=.75]  node {{\color{dg} $E^\exSymb_{n-1}$}};  
\draw (10,-.75) node {{\color{black} $c_{n-1}$}};  
\draw (11,.65)   circle [radius=.75] node {{\color{dg} $E^\exSymb_{n}$}};  
\draw (12,-.75) node {{\color{black} $c_{n}$}};  
\draw (4,2) circle [radius=.75] node {{\color{blue} $G_1$}};
\draw (10,2) circle [radius=.75] node {{\color{blue} $G_{n-1}$}};
\draw (12,2) circle [radius=.75] node {{\color{blue} $G_{n}$}};
 \draw[line width=2pt,dotted] (4,2.75) -- (4,3.75); 
 \draw[line width=2pt,dotted] (4,1.25) -- (4,-.3); 
 \draw[line width=2pt,dotted] (10,2.75) -- (10,3.75); 
 \draw[line width=2pt,dotted] (10,1.25) -- (10,-.3); 
 \draw[line width=2pt,dotted] (12,2.75) -- (12,3.75); 
 \draw[line width=2pt,dotted] (12,1.25) -- (12,-.3); 

\end{tikzpicture}
\end{center}
\caption{A walk along the endogenous atoms.  The cut $c_i$ results from removing all the variables
  (edges) from group $G_i$.}
\label{easyWalkFig}
\end{figure}

 \subsection{Proofs for \cref{sec:paths}}

\begin{proof}[Proof of \Cref{unary path} (Unary Path)]
We define a reduction from $\res(\vc)$.
Given a database $D$ we want to define a database $D'$ such that
\begin{equation}\label{unaryPathReductionEq}
(D,k) \in \res(\vc) \qLra (D',k) \in \res(q) 
\end{equation}

We can assume that $A(x)$ and $A(y)$ are consecutive occurrences of $A$ so let $p$ be a subquery of
$q$ consisting of a path from $A(x)$ to $A(y)$ with no intervening occurrences of $A$.
Thus, $q = q_\ell A(x)pA(y)q_r$.  
Since $A$ is the only sj relation, the relations that occur in $p$ occur only in $p$.

For each atom $R_i(v_1, v_2)$ occurring in $p$, we define 
\[
R_i' = \bigset{ (t(v_1,a,b), t(v_2,a,b))}{D \models \vc(a,b)}
\]
where
\[
t(v,a,b) \;\eqdef\;
\begin{cases}
a   & \textrm{if } v = x \\
b   & \textrm{if } v = y \\
\angle{ab}_v   & \textrm{otherwise}
\end{cases}
\]

In other words, $x$ maps to $a$, $y$ maps to $b$, and any other variable $v$ maps to $\angle{ab}_v$.
Thus, we have made a faithful copy of $D$ capturing~$\vc$.  For the other atoms, $S_j(v_1,v_2)$,
not in $p$,  let 
\[
S_j' = \bigset{ (m(v_1,a,b), m(v_2,a,b))}{D \models \vc(a,b)}
\]

where $m(v,a,b)$ matches with $t(v,a,b)$ as well as with a set of $n$ new values, where $n =
\abs{\dom(D)}$.  It follows that there is always a minimum contingency sets for $(D',q)$ with only $A$-tuples,
in particular, the sets $\bigset{A(a)}{V(a)\in \Gamma}$ for $\Gamma$ any minimum contingency set for $(D,\vc)$.
\end{proof}

\begin{proof}[Proof of \Cref{binary path} (Binary Path)]
Similar to the unary case, we define a reduction from $\res(\vc)$.
Given a database $D$ we want to define a database $D'$ such that
\begin{equation}
(D,k) \in \res(\vc) \qLra (D',k) \in \res(q) 
\end{equation}

Consider $q = q_\ell R(x,y)pR(z,w)q_r$, and that $p$ is a subquery of
$q$ consisting of a path from $R(x,y)$ to $R(z,w)$ with no intervening occurrences of $R$.
By assumption, there is no path of just $R$'s from $R(x,y)$ to $R(z,w)$, so we may assume that
$R(x,y)$ and $R(z,w)$ have such an $R$-free path, $p$, between them.

In order to define the reduction, we define an equivalence relation, $\equiv$, on the variables
occurring in $q$, namely $u\equiv v$ iff $q$ has an $R$-path from $u$ to $v$, i.e., there is a path
of $R$-atoms occuring in $q$ that takes us from $u$ to $v$.  (For example, for the query
$R(x,y),S(u,z), R(z,w), Q(w,x), R(x,v)$, the equivalence classes of~$\equiv$ are $\set{x,y,v},
\set{z,w}, \set{u}$.) Note that by assumption, for the equivalence relation defined by $q$, $x \not\equiv z$.

For any atom $S_i(v_1, v_2)$ occurring in $R(x,y)pR(z,w)$, we define 
\[
S_i' = \bigset{ (t'(v_1,a,b), t'(v_2,a,b))}{D \models \vc(a,b)}
\]
where
\[
t'(v,a,b) \;\eqdef\;
\begin{cases}
a   & \textrm{if } v \equiv  x \\
b   & \textrm{if } v \equiv z \\
\angle{ab}_{v}   & \textrm{otherwise}
\end{cases}
\]

Additionally, for atoms $T_j(v_1,v_2)$ occurring in $q_l, q_r$,  let 
\[
T_j' = \bigset{ (m(v_1,a,b), m(v_2,a,b))}{D \models \vc(a,b)}
\]
where $m(v,a,b)$ matches with $t'(v,a,b)$ as well as with a set of $n$ new values, where $n =
\abs{\dom(D)}$.

We have that all $R$-tuples in $D'$ will have the same value as first and second attributes, so $R$ can be seen as
corresponding to relation $A$ in $D$. Similar to the unary case, we have made a copy of $D$ capturing $\vc$ and 
there is always a minimum contingency sets for $(D',q)$ with only $R$-tuples,
in particular, the sets $\bigset{R(a,a)}{V(a)\in \Gamma}$ for $\Gamma$ any minimum contingency set for $(D,\vc)$.
\end{proof}

\subsection{Proofs for \cref{sec: 2chains}}

These are the expansions of $\chain$ with unary relations: 
\begin{align*}
\achain &\datarule A(x),R(x,y),R(y,z) &\quad\Cref{achain is npc}\\
\bchain &\datarule R(x,y),B(y),R(y,z) &\quad\Cref{bchain is npc}\\
\cchain &\datarule R(x,y),R(y,z),C(z) &\quad\Cref{achain is npc}\\
\abchain &\datarule A(x),R(x,y),B(y),R(y,z)&\quad\Cref{achain is npc}\\
\bcchain &\datarule R(x,y),B(y),R(y,z),C(z)&\quad\Cref{achain is npc}\\
\acchain &\datarule A(x),R(x,y),R(y,z),C(z)&\quad\Cref{acchain is npc}\\
\abcchain &\datarule A(x),R(x,y),B(y),R(y,z),C(z)&\quad\Cref{acchain is npc} 
\end{align*}

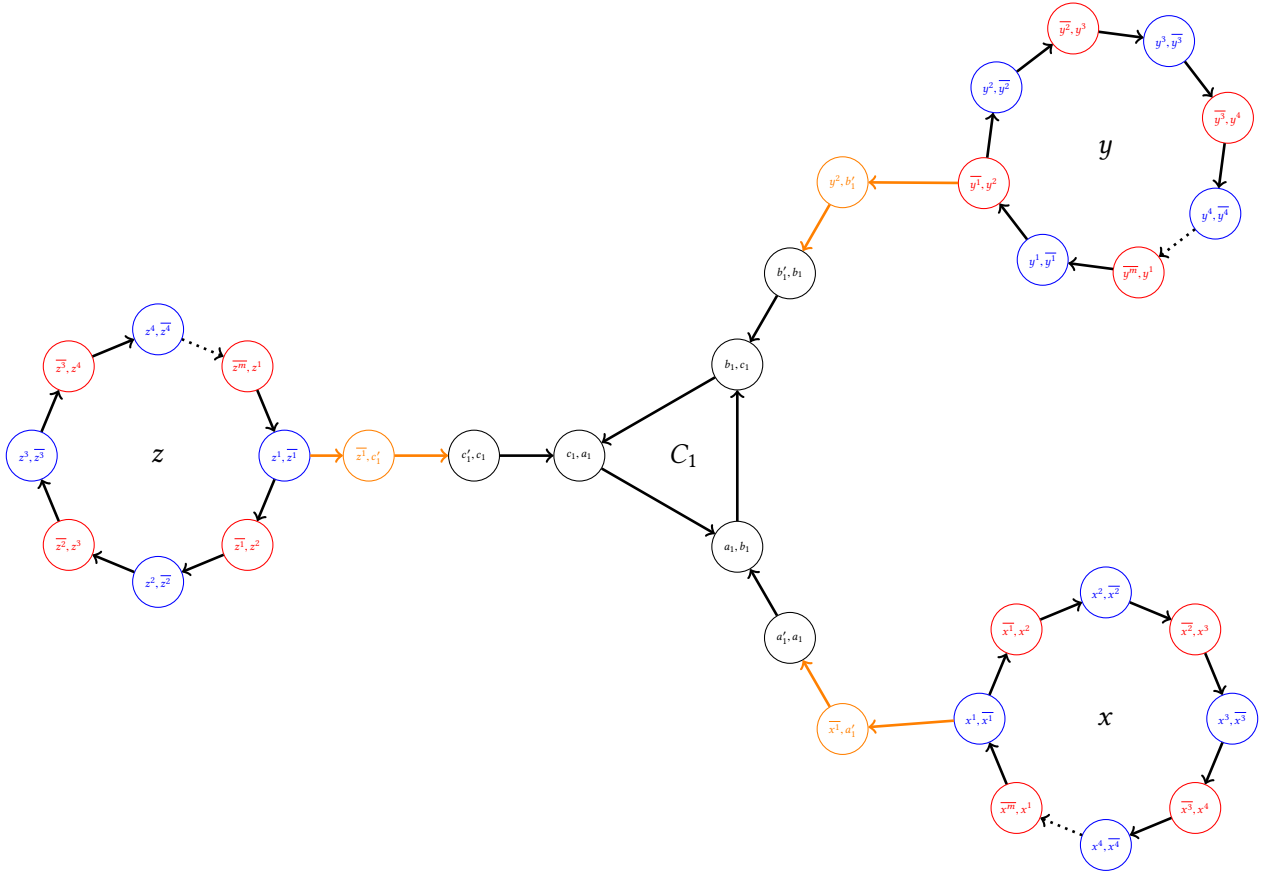
\begin{figure*}[h]
\centering
\begin{tikzpicture}[scale=1.4,
			    every circle node/.style={fill=white, minimum size=3cc, inner sep=0, scale=.5, draw}]


\def \radius {1cm}

\begin{scope}[rotate=-90]

\begin{scope}[rotate=30]
\node (a1) at ({360/3 * (0)}:\radius) [circle] {$a_{1},b_{1}$};
\node (b1) at ({360/3 * (1)}:\radius) [circle] {$b_{1},c_{1}$};
\node (c1) at ({360/3 * (2)}:\radius) [circle] {$c_{1},a_{1}$};
\node (a2) at ({360/3 * (0)}:{\radius+1cm}) [circle] {$a_{1}',a_{1}$};
\node (b2) at ({360/3 * (1)}:{\radius+1cm}) [circle] {$b_{1}',b_{1}$};
\node (c2) at ({360/3 * (2)}:{\radius+1cm}) [circle] {$c_{1}',c_{1}$};
\node (a3) at ({360/3 * (0)}:{\radius+2cm}) [circle, color=orange] {$\ov{x^{1}},a_{1}'$};
\node (b3) at ({360/3 * (1)}:{\radius+2cm}) [circle, color=orange] {$y^{2}, b_{1}'$};
\node (c3) at ({360/3 * (2)}:{\radius+2cm}) [circle, color=orange] {$\ov{z^{1}},c_{1}'$};
\end{scope}

\def \radius {1.2cm}
\def \n {8}

\begin{scope}[xshift=2.5cm, yshift=4cm, rotate=-90]
\node (11) at ({360/\n * (0)}:\radius)  [circle, color=blue] {$x^{1},\ov{x^{1}}$};
\node (12) at ({360/\n * (-1)}:\radius)  [circle, color=red] {$\ov{x^{1}},x^{2}$};
\node (13) at ({360/\n * (-2)}:\radius) [circle, color=blue] {$x^{2},\ov{x^{2}}$};
\node (14) at ({360/\n * (-3)}:\radius) [circle, color=red] {$\ov{x^{2}},x^{3}$};
\node (15) at ({360/\n * (-4)}:\radius) [circle, color=blue] {$x^{3},\ov{x^{3}}$};
\node (16) at ({360/\n * (-5)}:\radius) [circle, color=red] {$\ov{x^{3}},x^{4}$};
\node (17) at ({360/\n * (-6)}:\radius) [circle, color=blue] {$x^{4},\ov{x^{4}}$};
\node (18) at ({360/\n * (-7)}:\radius) [circle, color=red] {$\ov{x^{m}},x^{1}$};
\end{scope}

\begin{scope}[xshift=-2.9cm,yshift=4cm, rotate=-30]
\node (21) at ({360/\n * (0)}:\radius)  [circle, color=blue] {$y^{1},\ov{y^{1}}$};
\node (22) at ({360/\n * (-1)}:\radius)  [circle, color=red] {$\ov{y^{1}},y^{2}$};
\node (23) at ({360/\n * (-2)}:\radius) [circle, color=blue] {$y^{2},\ov{y^{2}}$};
\node (24) at ({360/\n * (-3)}:\radius) [circle, color=red] {$\ov{y^{2}},y^{3}$};
\node (25) at ({360/\n * (-4)}:\radius) [circle, color=blue] {$y^{3},\ov{y^{3}}$};
\node (26) at ({360/\n * (-5)}:\radius) [circle, color=red] {$\ov{y^{3}},y^{4}$};
\node (27) at ({360/\n * (-6)}:\radius) [circle, color=blue] {$y^{4},\ov{y^{4}}$};
\node (28) at ({360/\n * (-7)}:\radius) [circle, color=red] {$\ov{y^{m}},y^{1}$};
\end{scope}

\begin{scope}[xshift=0,yshift=-5cm,rotate=90]
\node (31) at ({360/\n * (0)}:\radius)  [circle, color=blue] {$z^{1},\ov{z^{1}}$};
\node (32) at ({360/\n * (-1)}:\radius)  [circle, color=red] {$\ov{z^{1}},z^{2}$};
\node (33) at ({360/\n * (-2)}:\radius) [circle, color=blue] {$z^{2},\ov{z^{2}}$};
\node (34) at ({360/\n * (-3)}:\radius) [circle, color=red] {$\ov{z^{2}},z^{3}$};
\node (35) at ({360/\n * (-4)}:\radius) [circle, color=blue] {$z^{3},\ov{z^{3}}$};
\node (36) at ({360/\n * (-5)}:\radius) [circle, color=red] {$\ov{z^{3}},z^{4}$};
\node (37) at ({360/\n * (-6)}:\radius) [circle, color=blue] {$z^{4},\ov{z^{4}}$};
\node (38) at ({360/\n * (-7)}:\radius) [circle, color=red] {$\ov{z^{m}},z^{1}$};
\end{scope}

\end{scope}

\path[->, line width=1pt, auto] 
	(a2) 	edge[color=black] 	node {} (a1) 
	(a1) 	edge[color=black] 	node {} (b1)
	(b1) 	edge[color=black] 	node {} (c1)
	(c1) 	edge[color=black] 	node {} (a1)
	(b2) 	edge[color=black] 	node {} (b1)
	(c2) 	edge[color=black]	node {} (c1)
	(a3) 	edge[color=orange] 	node {} (a2)
	(b3) 	edge[color=orange] 	node {} (b2)
	(c3) 	edge[color=orange] 	node {} (c2);

\path[->, line width=1pt, auto] 
	(11) 	edge[color=orange] 	node {} (a3)
	(22) 	edge[color=orange] 	node {} (b3)
	(31) 	edge[color=orange] 	node {} (c3);

\node (v3) at (-5,0) {\Large $z$};
\node (v2) at (4,2.9) {\Large $y$};
\node (v1) at (4,-2.5) {\Large $x$};
\node (c1) at (0,0) {\Large $C_1$};

\path[->, line width=1pt, auto] 
	(11) 	edge[color=black] 	node {} (12) 
	(12) 	edge[color=black] 	node {} (13)
	(13) 	edge[color=black] 	node {} (14)
	(14) 	edge[color=black] 	node {} (15)
	(15) 	edge[color=black] 	node {} (16)
	(16) 	edge[color=black]	node {} (17)
	(18) 	edge[color=black] 	node {} (11);
	
\path[->, line width=1pt, dotted]
	(17) 	edge[color=black] 	node {} (18);

\path[->, line width=1pt, auto] 
	(21) 	edge[color=black] 	node {} (22) 
	(22) 	edge[color=black] 	node {} (23)
	(23) 	edge[color=black] 	node {} (24)
	(24) 	edge[color=black] 	node {} (25)
	(25) 	edge[color=black] 	node {} (26)
	(26) 	edge[color=black]	node {} (27)
	(28) 	edge[color=black] 	node {} (21);
	
\path[->, line width=1pt, dotted]
	(27) 	edge[color=black] 	node {} (28);

\path[->, line width=1pt, auto] 
	(31) 	edge[color=black] 	node {} (32) 
	(32) 	edge[color=black] 	node {} (33)
	(33) 	edge[color=black] 	node {} (34)
	(34) 	edge[color=black] 	node {} (35)
	(35) 	edge[color=black] 	node {} (36)
	(36) 	edge[color=black]	node {} (37)
	(38) 	edge[color=black] 	node {} (31);
	
\path[->, line width=1pt, dotted]
	(37) 	edge[color=black] 	node {} (38);

\end{tikzpicture}
\caption{Part of $D_\psi$ for variables $x,y,z$ and clause $C_1 = (x \lor \ov{y} \lor   z)$ in the proof of  \Cref{chain is npc}.
  In the variable gadgets, blue nodes represent \true and red nodes, \false.}
\label{Fig_red_RxyRyz}
\end{figure*}

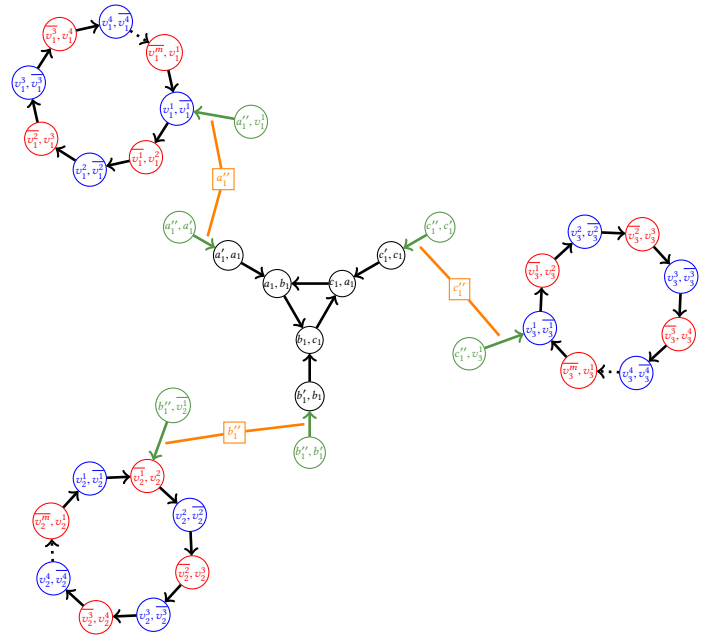
\begin{figure*}[t]
\centering

\begin{tikzpicture}[scale=.5,
			    every circle node/.style={fill=white, minimum size=7mm, inner sep=0, scale=.5, draw}]


\def \radius {1cm}

\begin{scope}[rotate=90]

\begin{scope}[rotate=60]
\node (a1) at ({360/3 * (0)}:\radius) [circle] {$a_{1},b_{1}$};
\node (b1) at ({360/3 * (1)}:\radius) [circle] {$b_{1},c_{1}$};
\node (c1) at ({360/3 * (2)}:\radius) [circle] {$c_{1},a_{1}$};
\node (a2) at ({360/3 * (0)}:{\radius+1.5cm}) [circle] {$a_{1}',a_{1}$};
\node (b2) at ({360/3 * (1)}:{\radius+1.5cm}) [circle] {$b_{1}',b_{1}$};
\node (c2) at ({360/3 * (2)}:{\radius+1.5cm}) [circle] {$c_{1}',c_{1}$};
\node (a3) at ({360/3 * (0)}:{\radius+3cm}) [circle, color=dg] {$a_{1}'',a_{1}'$};
\node (b3) at ({360/3 * (1)}:{\radius+3cm}) [circle, color=dg] {$b_{1}'', b_{1}'$};
\node (c3) at ({360/3 * (2)}:{\radius+3cm}) [circle, color=dg] {$c_{1}'',c_{1}'$};

\node (a4) at ({-25}:{\radius+3cm}) [draw, scale=.5, color=orange] {$a_{1}''$};
\node (b4) at ({-270}:{\radius+3cm}) [draw,scale=.5,  color=orange] {$b_{1}''$};
\node (c4) at ({-145}:{\radius+3cm}) [draw,scale=.5,  color=orange] {$c_{1}''$};

\end{scope}

\def \radius {2cm}
\def \n {8}

\begin{scope}[xshift=5.5cm, yshift=5.5cm, rotate=-100]
\node (11) at ({360/\n * (0)}:\radius)  [circle, color=blue] {$v^{1}_{1},\ov{v^{1}_{1}}$};
\node (12) at ({360/\n * (-1)}:\radius)  [circle, color=red] {$\ov{v^{1}_{1}},v^{2}_{1}$};
\node (13) at ({360/\n * (-2)}:\radius) [circle, color=blue] {$v^{2}_{1},\ov{v^{2}_{1}}$};
\node (14) at ({360/\n * (-3)}:\radius) [circle, color=red] {$\ov{v^{2}_{1}},v^{3}_{1}$};
\node (15) at ({360/\n * (-4)}:\radius) [circle, color=blue] {$v^{3}_{1},\ov{v^{3}_{1}}$};
\node (16) at ({360/\n * (-5)}:\radius) [circle, color=red] {$\ov{v^{3}_{1}},v^{4}_{1}$};
\node (17) at ({360/\n * (-6)}:\radius) [circle, color=blue] {$v^{4}_{1},\ov{v^{4}_{1}}$};
\node (18) at ({360/\n * (-7)}:\radius) [circle, color=red] {$\ov{v^{m}_{1}},v^{1}_{1}$};

\node (19) at ({360/\n * (0)}:\radius+2cm)  [circle, color=dg] {$a_{1}'',v^{1}_{1}$};
\end{scope}

\begin{scope}[xshift=-6.5cm,yshift=5cm, rotate=25]
\node (21) at ({360/\n * (0)}:\radius)  [circle, color=blue] {$v^{1}_{2},\ov{v^{1}_{2}}$};
\node (22) at ({360/\n * (-1)}:\radius)  [circle, color=red] {$\ov{v^{1}_{2}},v^{2}_{2}$};
\node (23) at ({360/\n * (-2)}:\radius) [circle, color=blue] {$v^{2}_{2},\ov{v^{2}_{2}}$};
\node (24) at ({360/\n * (-3)}:\radius) [circle, color=red] {$\ov{v^{2}_{2}},v^{3}_{2}$};
\node (25) at ({360/\n * (-4)}:\radius) [circle, color=blue] {$v^{3}_{2},\ov{v^{3}_{2}}$};
\node (26) at ({360/\n * (-5)}:\radius) [circle, color=red] {$\ov{v^{3}_{2}},v^{4}_{2}$};
\node (27) at ({360/\n * (-6)}:\radius) [circle, color=blue] {$v^{4}_{2},\ov{v^{4}_{2}}$};
\node (28) at ({360/\n * (-7)}:\radius) [circle, color=red] {$\ov{v^{m}_{2}},v^{1}_{2}$};

\node (29) at ({360/\n * (-1)}:\radius+2cm)  [circle, color=dg] {$b_{1}'',\ov{v^{1}_{2}}$};
\end{scope}

\begin{scope}[xshift=0,yshift=-8cm,rotate=110]
\node (31) at ({360/\n * (0)}:\radius)  [circle, color=blue] {$v^{1}_{3},\ov{v^{1}_{3}}$};
\node (32) at ({360/\n * (-1)}:\radius)  [circle, color=red] {$\ov{v^{1}_{3}},v^{2}_{3}$};
\node (33) at ({360/\n * (-2)}:\radius) [circle, color=blue] {$v^{2}_{3},\ov{v^{2}_{3}}$};
\node (34) at ({360/\n * (-3)}:\radius) [circle, color=red] {$\ov{v^{2}_{3}},v^{3}_{3}$};
\node (35) at ({360/\n * (-4)}:\radius) [circle, color=blue] {$v^{3}_{3},\ov{v^{3}_{3}}$};
\node (36) at ({360/\n * (-5)}:\radius) [circle, color=red] {$\ov{v^{3}_{3}},v^{4}_{3}$};
\node (37) at ({360/\n * (-6)}:\radius) [circle, color=blue] {$v^{4}_{3},\ov{v^{4}_{3}}$};
\node (38) at ({360/\n * (-7)}:\radius) [circle, color=red] {$\ov{v^{m}_{3}},v^{1}_{3}$};

\node (39) at ({360/\n * (0)}:\radius+2cm)  [circle, color=dg] {$c_{1}'',v^{1}_{3}$};
\end{scope}

\path[->, line width=1pt, auto] 
	(a2) 	edge[color=black] 	node {} (a1) 
	(a1) 	edge[color=black] 	node {} (b1)
	(b1) 	edge[color=black] 	node {} (c1)
	(c1) 	edge[color=black] 	node {} (a1)
	(b2) 	edge[color=black] 	node {} (b1)
	(c2) 	edge[color=black]	node {} (c1);

\path[->, line width=1pt, color=dg, inner sep=0, auto] 
	(19) 	edge 	node (av) {} (11)
	(29) 	edge 	node (bv) {} (22)
	(39) 	edge 	node (cv) {} (31)
	(a3) 	edge 	node (ag) {} (a2)
	(b3) 	edge 	node (bg) {} (b2)
	(c3) 	edge 	node (cg) {} (c2);
	
\path[line width=1pt, color=orange] 
	(a4) edge node {} (av)
	(b4) edge node {} (bv)
	(c4) edge node {} (cv)
	(a4) edge node {} (ag)
	(b4) edge node {} (bg)
	(c4) edge node {} (cg);

\path[->, line width=1pt, auto] 
	(11) 	edge[color=black] 	node {} (12) 
	(12) 	edge[color=black] 	node {} (13)
	(13) 	edge[color=black] 	node {} (14)
	(14) 	edge[color=black] 	node {} (15)
	(15) 	edge[color=black] 	node {} (16)
	(16) 	edge[color=black]	node {} (17)
	(17) 	edge[color=black, dotted] 	node {} (18)
	(18) 	edge[color=black] 	node {} (11);

\path[->, line width=1pt, auto] 
	(21) 	edge[color=black] 	node {} (22) 
	(22) 	edge[color=black] 	node {} (23)
	(23) 	edge[color=black] 	node {} (24)
	(24) 	edge[color=black] 	node {} (25)
	(25) 	edge[color=black] 	node {} (26)
	(26) 	edge[color=black]	node {} (27)
	(27) 	edge[color=black ,dotted] 	node {} (28)
	(28) 	edge[color=black] 	node {} (21);
	
\path[->, line width=1pt, auto] 
	(31) 	edge[color=black] 	node {} (32) 
	(32) 	edge[color=black] 	node {} (33)
	(33) 	edge[color=black] 	node {} (34)
	(34) 	edge[color=black] 	node {} (35)
	(35) 	edge[color=black] 	node {} (36)
	(36) 	edge[color=black]	node {} (37)
	(37) 	edge[color=black, dotted] 	node {} (38)
	(38) 	edge[color=black] 	node {} (31);

\end{scope}

\end{tikzpicture}

\caption{Excerpt from the construct showing the gadget for clause $C_1 = (v_1 \vee \bar v_2 \vee v_3)$. We omit the $A$-tuples that participate
in only one witness, since they shall never be chose for a minimum contingency set, as well as green nodes.}
\label{Fig_red_AxRxyRyz}
\end{figure*}

We next show all of them are hard queries.

\begin{lemma}\label{bchain is npc}
$\res(\bchain)$ is \np-complete.
\end{lemma}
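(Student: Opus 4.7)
The plan is to reduce 3SAT to $\res(\bchain)$ by adapting the gadget construction already used in \cref{chain is npc} for $\res(\chain)$. Given a 3CNF formula $\psi$ with $n$ variables and $m$ clauses, I would reuse the variable cycles and clause triangles of \cref{Fig_red_RxyRyz} verbatim to populate the relation $R$ of $D_\psi$, and additionally populate $B$ with the tuple $B(v)$ for every value $v$ that appears as the middle coordinate in some chain witness. This extra population is exactly what is needed so that the set of witnesses of $\bchain$ on $D_\psi$ coincides with the set of witnesses of $\chain$ on the underlying $R$-database. Because the witness structure is preserved, the target $k_\psi$ from the $\chain$ reduction (modulo the potential need to renormalize by the number of $B$-tuples that will be forced exogenous in the argument) is a natural candidate.

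The heart of the proof is a replacement argument showing that there is always a minimum contingency set for $(\bchain, D_\psi)$ containing no $B$-tuples. I would analyze, for each value $v$, the witnesses in which $v$ appears in the middle position. In the variable cycles, every such $v$ is the $y$-value of exactly one witness, so $B(v)$ is interchangeable with either of the two $R$-tuples of that witness. In the clause triangle (e.g.\ the triangle on $a_1, b_1, c_1$) each shared vertex appears as $y$ in two witnesses, but those witnesses share a common $R$-tuple (e.g.\ $R(a_1, b_1)$ lies in both witnesses with $y=a_1$). Consequently, whenever a minimum contingency set $\Gamma$ contains $B(v)$, I can exhibit a single $R$-tuple $t_v$ lying in \emph{every} witness that has $y=v$, and observe that $(\Gamma \setminus \{B(v)\}) \cup \{t_v\}$ is again a contingency set of the same size. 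Iterating this transformation gives an $R$-only minimum contingency set.

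Once we are restricted to $R$-only contingency sets, the counting and correctness analysis from \cref{chain is npc} transfers directly: a minimum contingency set of size $k_\psi$ exists iff each variable gadget is ``cut'' consistently (encoding a truth assignment) and each clause gadget admits its cheaper 5-tuple cut (encoding clause satisfaction). Combined with the obvious membership in \NP{} (since $\bchain$ is evaluable in PTIME), this yields \np-completeness.

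The main obstacle I anticipate is the bookkeeping in the clause gadgets: it must be verified that, for every $y$-value $v$ introduced by the construction (including $a'_1, b'_1, c'_1$ and the connectors to the variable cycles, not just $a_1, b_1, c_1$), one can actually name a single $R$-tuple that simultaneously lies in \emph{every} chain-witness with $y=v$. If for some $v$ the witnesses with $y=v$ do not share an $R$-tuple, the replacement must either use two $R$-tuples (forcing a careful recount that shows the swap is still non-increasing in $\Gamma$-size on average) or the gadget must be minimally modified (e.g.\ by freshening value names) so that each middle coordinate is used in at most a controlled number of witnesses with a shared $R$-edge. Ensuring this property without introducing unintended witnesses elsewhere is where most of the proof's effort will be spent.
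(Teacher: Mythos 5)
Your proposal takes essentially the same route as the paper's proof: reuse the $R$-tuples of the $\chain$ reduction, add $B$-tuples only at the middle-coordinate values so the witness structure is unchanged, and then eliminate $B$-tuples from any minimum contingency set by swapping $B(d)$ for an $R$-tuple common to all witnesses with middle value $d$. The bookkeeping worry you flag resolves exactly as you anticipate, since in the gadget every middle value has either a unique incoming or a unique outgoing $R$-tuple (the paper uses the unique incoming tuple for variable-cycle values and the unique $R(d,\ast)$ for the clause values $a,b,c,a',b',c'$), so the swap is always possible and the counting from \cref{chain is npc} carries over.
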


\begin{proof}[Proof of \Cref{bchain is npc}]
For this case we are going to use almost the same reduction as the one used for $\res(\chain)$, just with the added $B$-tuples. 
Then we argue that there is always a min $\Gamma$ that only uses $R$-tuples.

Let $\psi$ be a 3CNF formula with $n$ variables $v_1, \ldots, v_n$ and $m$ clauses 
$C_1, \ldots, C_{m}$. Our reduction will map  any such $\psi$ to a pair $(D_\psi,k_\psi)$ where $D_\psi$ is a database 
satisfying $\bchain$, and 
$$\psi\in 3\sat \quad\Leftrightarrow\quad (D_\psi,k_\psi) \in \res(\bchain)$$

In our construction, if $\psi \in 3\sat$, then the size of each minimum contingency set for 
$\bchain$ in $D_\psi$ will be $k_\psi=(n+5)m$, whereas if $\psi \not\in 3\sat$, then the size of all contingency sets
for $\bchain$ in $D_\psi$ will be greater than $k_\psi$.

First, include in $D_{\psi}$ all the same $R$-tuples included in the proof of \cref{chain is npc}. In addition to that add the following
$B$-tuples:

\begin{enumerate}

\item Variable gadget: 
For each variable $v_i$ and each $j \in [m]$ insert the following two tuples into the database:
$B(v_i^j)$ and $B(\ov{v_{i}^{j}})$.

\item Clause gadget:
For each clause $j \in [m]$ insert the following 6 tuples into the database:
$B(a_j)$, $B(b_j)$, $B(c_j)$,
$B(a_j')$, $B(b_j')$, $B(c_j')$.

\end{enumerate}

By adding those tuples, we obtain the same structure and witnesses of the reduction for $\res(\chain)$.
Now suppose that $t = B(d)$ is in a minimum contingency set $\Gamma$. 
If $d = v_{i}^{j}$ (or $\ov{v_{i}^{j}}$) for some $i,j$,
we know that $t$ must join with $t' = R(\ov{v_{i}^{j-1}}, v_{i}^{j})$ (or $R(\ov{v_{i}^{j}}, v_{i}^{j})$) by our construction. 
Thus, we can exchange $t$ for $t'$ and obtain contingency set $\Gamma'$. 
Similar, if $d \in \set{a,b,c,a',b',c'}$, then $t$ must join with tuple $R(d, *)$, since there is only tuple of that kind for each possible
value of $d$.

This shows that there is a minimum contingency set for $D_{\psi}$ without $B$-tuples, and the properties of 
the reduction in \cref{chain is npc} also hold in this case.
\end{proof}

\begin{lemma}\label{achain is npc}
$\res(\achain)$, $\res(\cchain)$, $\res(\abchain)$ and $\res(\bcchain)$ are \np-complete.
\end{lemma}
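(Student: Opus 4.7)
The plan is to adapt the 3\sat reduction from \cref{chain is npc,bchain is npc} to each of the four queries. For every target query $q\in\set{\achain,\cchain,\abchain,\bcchain}$, I would start from the database $D_\psi$ built for $\chain$---the cycle variable gadgets of length $2m$ and the nine-tuple clause gadgets---and augment it with exactly those unary tuples needed so that the $\chain$-witnesses lift to $q$-witnesses: an $A$-tuple for each value appearing as first coordinate of some $R$-tuple (for $\achain$ and $\abchain$), a $C$-tuple for each value appearing as last coordinate (for $\cchain$ and $\bcchain$), and $B$-tuples as in \cref{bchain is npc}. This immediately yields that the minimum contingency set has size at most the $k_\psi$ already computed in the earlier proofs.

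The key technical step is a swap argument establishing the converse direction: that the minimum contingency set can always be taken to consist only of $R$-tuples. For $B$-tuples this is exactly the argument of \cref{bchain is npc}; for $A$-tuples (resp.\ $C$-tuples) I would argue analogously, exchanging $A(d)$ for the unique $R$-tuple having $d$ as first coordinate (resp.\ $C(d)$ for its unique $R$-predecessor). The main obstacle, and the reason the earlier proofs do not apply verbatim, is that in the $\chain$ construction certain variable-cycle nodes have \emph{two} outgoing $R$-edges---one along the cycle and one into a clause gadget---so $A(d)$ could kill two witnesses simultaneously while a single $R$-tuple kills only one. To restore uniqueness I would modify the literal-to-clause connections: for each literal occurrence $\ell$ in clause $C_j$, insert a fresh relay node $\ell^{(j)}$ and replace the direct edge $R(\ell, a_j')$ by the two-step chain $R(\ell,\ell^{(j)}),\,R(\ell^{(j)},a_j')$. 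The symmetric modification on incoming edges enables the $C$-swap.

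Once uniqueness holds the swap becomes immediate and the contingency-set count carries over from \cref{chain is npc} essentially unchanged: each variable gadget contributes $m$ tuples, each satisfied clause contributes $5$, each unsatisfied clause at least $6$, giving a threshold $k_\psi$ that separates $3\sat$-satisfiable from unsatisfiable formulas. The most delicate point---and what I expect to be the main source of case analysis---is verifying that the added relay nodes do not create spurious $2$-witnesses that could cheapen the contingency set; this is controlled by the fact that every relay node has in- and out-degree exactly one, so any new witness involving a relay node coincides with part of an existing witness and is therefore killed by a tuple already chosen in $\Gamma$. The four cases $\achain$, $\cchain$, $\abchain$, $\bcchain$ are then handled uniformly by combining the $A$-swap, $B$-swap, and $C$-swap.
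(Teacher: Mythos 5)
There is a genuine gap in the key step of your argument. Your whole plan rests on a swap lemma: every unary tuple $A(d)$ (resp.\ $C(d)$) in a minimum contingency set can be exchanged for \emph{the unique} $R$-tuple with $d$ as first (resp.\ second) coordinate, and you propose to restore this uniqueness by inserting relay nodes on the literal-to-clause connections. But the relay insertion does not do what you claim. If $\ell$ is a literal value of the variable cycle that feeds a clause, then after replacing $R(\ell,a_j')$ by $R(\ell,\ell^{(j)}),R(\ell^{(j)},a_j')$, the value $\ell$ \emph{still} has two outgoing $R$-tuples (one along the cycle, one to the relay), so $A(\ell)$ still destroys witnesses through both branches and cannot be covered by any single $R$-tuple; the relay only lengthens the clause path, it does not move the branch point. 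The same problem occurs, independently of the connections, inside the clause triangle for the $C$-swap: each triangle value $a_j$ has in-degree $2$ (from $c_j$ and from its spoke $a_j'$), so $C(a_j)$ kills witnesses arriving along both in-edges, and a relay on the spoke leaves the in-degree at $2$. Hence the claim that minimum contingency sets consist only of $R$-tuples is unsupported, and with it the assertion that the $5$-versus-$6$ clause counting from \cref{chain is npc} ``carries over essentially unchanged''; once unary tuples can hit two or more witnesses at unit cost, the threshold $k_\psi$ has to be re-derived, not inherited.

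This is exactly where the paper's proof diverges from your plan. Rather than forcing $R$-only contingency sets by a global swap, the paper builds new gadgets per query in which the branch points are \emph{fresh} connector values (e.g.\ $a_j''$ with tuples $A(a_j'')$, $R(a_j'',a_j')$, $R(a_j'',\ell)$ in \cref{Fig_red_AxRxyRyz}), and the unary tuples at those branch points are deliberately allowed into minimum contingency sets: they are chosen precisely when the clause is unsatisfied, and this is built into the $5$-versus-$6$ count ($k_\psi=(n+5)m$). Unary tuples elsewhere (on cycle and triangle values) are argued to participate in only one witness, or, as in the $\abchain$ case, a swap is justified because in the \emph{redesigned} gadget every relevant value really does have a unique outgoing $R$-tuple. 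So your high-level instinct (reuse the 3\sat{} gadget skeleton and control unary tuples) matches the paper's, but the mechanism you propose to control the $A$- and $C$-tuples fails at the branch points, and repairing it requires redesigning the connections (or the counting) along the lines the paper actually takes.
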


\begin{proof}[Proof of \Cref{achain is npc}]
We again define a reduction from 3\sat, using gadgets similar to the one in \cref{chain is npc}.
The variable gadget remains such that a minimum cover will choose either blue nodes 
(variable is set to \true), or red nodes (variable is set to \false).
The clause gadget (black nodes) is chosen as to enforce a clause: if one or more of the outermost black nodes are chosen, 
then the minimum cover is 5, otherwise 6.

We next reduce 3\sat\ to $\res(\achain)$.
Let $\psi$ be a 3CNF formula with $n$ variables $v_1, \ldots, v_n$ and $m$ clauses 
$C_1, \ldots, C_{m}$.  
Our reduction will map  any such $\psi$ to a pair $(D_\psi,k_\psi)$ where $D_\psi$ is a database 
satisfying $\achain$, and 
$$\psi\in 3\sat \quad\Leftrightarrow\quad (D_\psi,k_\psi) \in \res(\achain)$$

In our construction, if $\psi \in 3\sat$, then the size of each minimum contingency set for 
$\achain$ in $D_\psi$ will be $k_\psi=(n+5)m$, whereas if $\psi \not\in 3\sat$, then the size of all contingency sets
for $\achain$ in $D_\psi$ will be greater than $k_\psi$.

\begin{enumerate}

\item Variable gadget: 
For each variable $v_i$ and each $j \in [m]$ insert the following tuples into the database:
$R(v_i^j,\ov{v_{i}^{j}})$, 
$R(\ov{v_{i}^{j}}, v_{i}^{j+1})$ and
$A(v_{i}^{j})$, 
$A(\ov{v_{i}^{j}})$. If $j+1 > m$, then make the superscript 1.
The resulting witnesses between the tuples form a cycle of length~$2m$. The minimum contingency sets are to either choose all 
tuples $R(v_i^j,\ov{v_{i}^{j}})$ representing a variable to have assignment \true,
or all tuples $R(\ov{v_{i}^{j}}, v_{i}^{j+1})$ representing a variable to have assignment \false. Note that any $A$-tuple only joins once,
therefore it is better to choose an $R$-tuple, since all of these join at least twice.

\item Clause gadget:
For each clause $j \in [m]$ insert the following tuples into the database:
$R(a_j,b_j)$, $R(b_j,c_j)$, $R(c_j,a_j)$,
$R(a_j',a_j)$, $R(b_j',b_j)$, $R(c_j',c_j)$,
$A(a_j)$, $A(b_j)$, $A(c_j)$,
$A(a_j')$, $A(b_j')$, $A(c_j')$.
The resulting witnesses form a triangle. If either of the $R(*',*)$ is removed, then the remaining witnesses can be destroyed 
by choosing only 2 or more tuples, otherwise we need 3. Similar to the variable gadget, $A$-tuples are not an optimal 
choice because they only participate in one witness each.

\item Connecting the gadgets:
For each variable $i$ that appears in clause $j$ at position 1, add the following tuples: 
$R(a_{j}'',a_{j}')$ and $A(a_{j}'')$. If $v_{i}$ appears as positive add tuple $R(a_{j}'', v_{i}^{j})$, 
if it appear as negative add tuple $R(a_{j}'', \ov{v_{i}^{j}})$.
Analogously use $b_j', b_{j}''$ or $c_{j}', c_j''$ instead of $a_j', a_{j}''$ for positions 2 and 3 instead of position 1.
\end{enumerate}

Observe that if the clause is not satisfied, then we need to choose the $A$-tuples (orange squares 
in \cref{Fig_red_AxRxyRyz}), and not choose the outer black nodes ($R$-tuples) in the clause gadget, resulting in choosing 6 tuples
in total in order to delete all the witnesses, otherwise we just need 5 tuples. 

The reduction for $\abchain$ is very similar to the one presented above. 
First, use the same $D_{\psi}$ just adding the appropriate $B$-tuples, i.e., $B$-tuples that preserve the witnesses.

Now note that for any $t = B(d) \in D_{\psi}$, there is only one $R$-tuple such that $t' = R(d,*)$, therefore $t$ must join with $t'$.
Therefore, any occurrence of $B$-tuple in a contingency set can be exchanged by its correspondent $R$-tuple, and we are guaranteed
this reduction has the same properties as the one for $\achain$.
\end{proof}

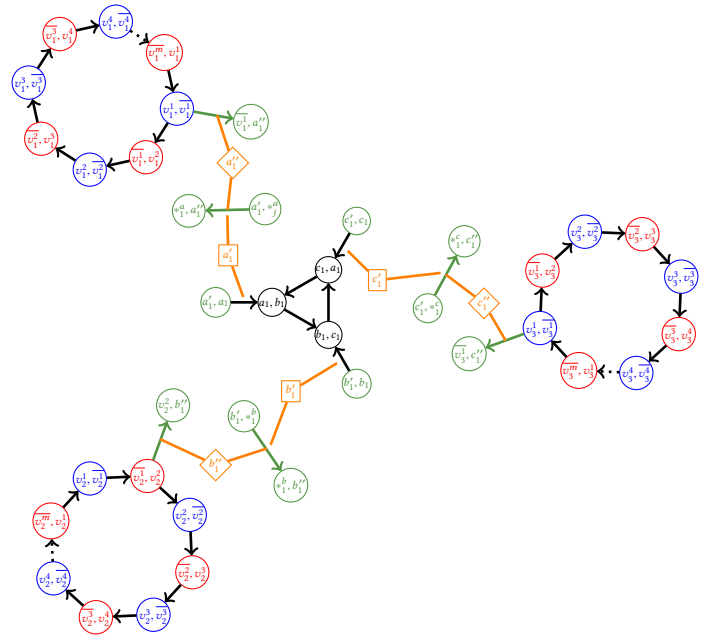
\begin{figure*}[t]
\centering

\begin{tikzpicture}[scale=.5,
			    every circle node/.style={fill=white, minimum size=7mm, inner sep=0, scale=.5, draw}]


\def \radius {1cm}

\begin{scope}[rotate=90]
\begin{scope}[rotate=90]
\node (a1) at ({360/3 * (0)}:\radius) [circle] {$a_{1},b_{1}$};
\node (b1) at ({360/3 * (1)}:\radius) [circle] {$b_{1},c_{1}$};
\node (c1) at ({360/3 * (2)}:\radius) [circle] {$c_{1},a_{1}$};
\node (a2) at ({360/3 * (0)}:{\radius+1.5cm}) [circle, color=dg] {$a_{1}',a_{1}$};
\node (b2) at ({360/3 * (1)}:{\radius+1.5cm}) [circle, color=dg] {$b_{1}',b_{1}$};
\node (c2) at ({360/3 * (2)}:{\radius+1.5cm}) [circle, color=dg] {$c_{1}',c_{1}$};
\end{scope}

\begin{scope}[rotate=60]
\begin{scope}[xshift=0cm, yshift=-.5cm]
\node (a3) at ({360/3 * (0)}:{\radius+3cm}) [circle, color=dg] {$*_{1}^{a},a_{1}''$};
\node (a5) at ({-25}:{\radius+1.5cm}) [circle, color=dg] {$a_{1}', *_{j}^{a}$};
\node (a4) at ({-25}:{\radius+3cm}) [diamond, draw, inner sep=1, color=orange, scale=.5] {$a_{1}''$};
\end{scope}
\node (a6) at ({0}:{\radius+1.5cm}) [draw, color=orange, scale=.5] {$a_{1}'$};
\begin{scope}[xshift=0cm, yshift=1cm]
\node (b3) at ({360/3 * (1)}:{\radius+3cm}) [circle, color=dg] {$*_{1}^{b}, b_{1}''$};
\node (b5) at ({-270}:{\radius+1.5cm}) [circle, color=dg] {$b_{1}', *_{1}^{b}$};
\node (b4) at ({-270}:{\radius+3cm}) [diamond, draw, inner sep=1, color=orange, scale=.5] {$b_{1}''$};
\node (b6) at ({-240}:{\radius+.5cm}) [draw, color=orange, scale=.5] {$b_{1}'$};
\end{scope}
\begin{scope}[xshift=-.75cm, yshift=0cm]
\node (c3) at ({360/3 * (2)}:{\radius+3cm}) [circle, color=dg] {$*_{1}^{c},c_{1}''$};
\node (c5) at ({-145}:{\radius+1.5cm}) [circle, color=dg] {$c_{1}', *_{1}^{c}$};
\node (c4) at ({-145}:{\radius+3cm}) [diamond, draw, inner sep=1, color=orange, scale=.5] {$c_{1}''$};
\node (c6) at ({-110}:{\radius+.5cm}) [draw, color=orange, scale=.5] {$c_{1}'$};
\end{scope}

\end{scope}

\def \radius {2cm}
\def \n {8}

\begin{scope}[xshift=5.5cm, yshift=5.5cm, rotate=-100]
\node (11) at ({360/\n * (0)}:\radius)  [circle, color=blue] {$v^{1}_{1},\ov{v^{1}_{1}}$};
\node (12) at ({360/\n * (-1)}:\radius)  [circle, color=red] {$\ov{v^{1}_{1}},v^{2}_{1}$};
\node (13) at ({360/\n * (-2)}:\radius) [circle, color=blue] {$v^{2}_{1},\ov{v^{2}_{1}}$};
\node (14) at ({360/\n * (-3)}:\radius) [circle, color=red] {$\ov{v^{2}_{1}},v^{3}_{1}$};
\node (15) at ({360/\n * (-4)}:\radius) [circle, color=blue] {$v^{3}_{1},\ov{v^{3}_{1}}$};
\node (16) at ({360/\n * (-5)}:\radius) [circle, color=red] {$\ov{v^{3}_{1}},v^{4}_{1}$};
\node (17) at ({360/\n * (-6)}:\radius) [circle, color=blue] {$v^{4}_{1},\ov{v^{4}_{1}}$};
\node (18) at ({360/\n * (-7)}:\radius) [circle, color=red] {$\ov{v^{m}_{1}},v^{1}_{1}$};

\node (19) at ({360/\n * (0)}:\radius+2cm)  [circle, color=dg] {$\ov{v^{1}_{1}},a_{1}''$};
\end{scope}

\begin{scope}[xshift=-6.5cm,yshift=5cm, rotate=25]
\node (21) at ({360/\n * (0)}:\radius)  [circle, color=blue] {$v^{1}_{2},\ov{v^{1}_{2}}$};
\node (22) at ({360/\n * (-1)}:\radius)  [circle, color=red] {$\ov{v^{1}_{2}},v^{2}_{2}$};
\node (23) at ({360/\n * (-2)}:\radius) [circle, color=blue] {$v^{2}_{2},\ov{v^{2}_{2}}$};
\node (24) at ({360/\n * (-3)}:\radius) [circle, color=red] {$\ov{v^{2}_{2}},v^{3}_{2}$};
\node (25) at ({360/\n * (-4)}:\radius) [circle, color=blue] {$v^{3}_{2},\ov{v^{3}_{2}}$};
\node (26) at ({360/\n * (-5)}:\radius) [circle, color=red] {$\ov{v^{3}_{2}},v^{4}_{2}$};
\node (27) at ({360/\n * (-6)}:\radius) [circle, color=blue] {$v^{4}_{2},\ov{v^{4}_{2}}$};
\node (28) at ({360/\n * (-7)}:\radius) [circle, color=red] {$\ov{v^{m}_{2}},v^{1}_{2}$};

\node (29) at ({360/\n * (-1)}:\radius+2cm)  [circle, color=dg] {$v^{2}_{2}, b_{1}''$};
\end{scope}

\begin{scope}[xshift=0,yshift=-8cm,rotate=110]
\node (31) at ({360/\n * (0)}:\radius)  [circle, color=blue] {$v^{1}_{3},\ov{v^{1}_{3}}$};
\node (32) at ({360/\n * (-1)}:\radius)  [circle, color=red] {$\ov{v^{1}_{3}},v^{2}_{3}$};
\node (33) at ({360/\n * (-2)}:\radius) [circle, color=blue] {$v^{2}_{3},\ov{v^{2}_{3}}$};
\node (34) at ({360/\n * (-3)}:\radius) [circle, color=red] {$\ov{v^{2}_{3}},v^{3}_{3}$};
\node (35) at ({360/\n * (-4)}:\radius) [circle, color=blue] {$v^{3}_{3},\ov{v^{3}_{3}}$};
\node (36) at ({360/\n * (-5)}:\radius) [circle, color=red] {$\ov{v^{3}_{3}},v^{4}_{3}$};
\node (37) at ({360/\n * (-6)}:\radius) [circle, color=blue] {$v^{4}_{3},\ov{v^{4}_{3}}$};
\node (38) at ({360/\n * (-7)}:\radius) [circle, color=red] {$\ov{v^{m}_{3}},v^{1}_{3}$};

\node (39) at ({360/\n * (0)}:\radius+2cm)  [circle, color=dg] {$\ov{v^{1}_{3}},c_{1}''$};
\end{scope}

\path[->, line width=1pt, inner sep=0, auto] 
	(a2) 	edge[color=black] 	node (ag2) {} (a1) 
	(a1) 	edge[color=black] 	node {} (b1)
	(b1) 	edge[color=black] 	node {} (c1)
	(c1) 	edge[color=black] 	node {} (a1)
	(b2) 	edge[color=black] 	node (bg2) {} (b1)
	(c2) 	edge[color=black]	node (cg2) {} (c1);

\path[->, line width=1pt, color=orange, inner sep=0, auto] 
	(11) 	edge [color=dg] 	node (av) {} (19)
	(22) 	edge [color=dg]     	node (bv) {} (29)
	(31) 	edge [color=dg]	 	node (cv) {} (39)
	(a5) 	edge [color=dg] 	node (ag1) {} (a3)
	(b5) 	edge [color=dg] 	node (bg1) {} (b3)
	(c5) 	edge [color=dg] 	node (cg1) {} (c3);
	
\path[line width=1pt, color=orange, inner sep=0, auto] 
	(a4) edge node {} (av)
	(b4) edge node {} (bv)
	(c4) edge node {} (cv)
	(a4) edge node {} (ag1)
	(b4) edge node {} (bg1)
	(c4) edge node {} (cg1)
	(b6) edge node {} (bg1)
	(b6) edge node {} (bg2)
	(a6) edge node {} (ag1)
	(a6) edge node {} (ag2)
	(c6) edge node {} (cg1)
	(c6) edge node {} (cg2);

\path[->, line width=1pt, auto] 
	(11) 	edge[color=black] 	node {} (12) 
	(12) 	edge[color=black] 	node {} (13)
	(13) 	edge[color=black] 	node {} (14)
	(14) 	edge[color=black] 	node {} (15)
	(15) 	edge[color=black] 	node {} (16)
	(16) 	edge[color=black]	node {} (17)
	(17) 	edge[color=black, dotted] 	node {} (18)
	(18) 	edge[color=black] 	node {} (11);

\path[->, line width=1pt, auto] 
	(21) 	edge[color=black] 	node {} (22) 
	(22) 	edge[color=black] 	node {} (23)
	(23) 	edge[color=black] 	node {} (24)
	(24) 	edge[color=black] 	node {} (25)
	(25) 	edge[color=black] 	node {} (26)
	(26) 	edge[color=black]	node {} (27)
	(27) 	edge[color=black, dotted] 	node {} (28)
	(28) 	edge[color=black] 	node {} (21);

\path[->, line width=1pt, auto] 
	(31) 	edge[color=black] 	node {} (32) 
	(32) 	edge[color=black] 	node {} (33)
	(33) 	edge[color=black] 	node {} (34)
	(34) 	edge[color=black] 	node {} (35)
	(35) 	edge[color=black] 	node {} (36)
	(36) 	edge[color=black]	node {} (37)
	(37) 	edge[color=black, dotted] 	node {} (38)	
	(38) 	edge[color=black] 	node {} (31);
	
\end{scope}

\end{tikzpicture}

\caption{Excerpt from the construct showing the gadget for clause $C_1 = (v_1 \vee \bar v_2 \vee v_3)$. We omit the $A$-tuples and $C$-tuples
that would not be chosen for a minimum contingency set, as well as green nodes.}
\label{Fig_red_AxCzRxyRyz}
\end{figure*}

\begin{lemma}\label{acchain is npc}
$\res(\acchain)$ and $\res(\abcchain)$ are \np-complete.
\end{lemma}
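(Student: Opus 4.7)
The plan is to extend the 3SAT reduction used in the proof of $\res(\achain)$ (Lemma \ref{achain is npc}). I would reuse the same $R$-tuples in $D_\psi$ verbatim---variable cycles of length $2m$, triangular clause gadgets with arms $R(a_j', a_j), R(b_j', b_j), R(c_j', c_j)$, and connecting tuples $R(a_j'', a_j')$ together with $R(a_j'', v_i^j)$ or $R(a_j'', \overline{v_i^j})$ linking clauses to variables---and then add $A(d)$ for every value $d$ that occurs as the first attribute of an $R$-tuple together with $C(d)$ for every value $d$ that occurs as the last attribute. This produces a database satisfying $\acchain$ whose witnesses stand in bijection with the $\chain$-style witnesses of the $\achain$ construction, extended by the forced boundary unary tuples on both sides.

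The main step is to prove that minimum contingency sets may be restricted to $R$-tuples, so that the threshold $k_\psi = (n+5)m$ transfers directly from the $\achain$ analysis and characterizes satisfiability of $\psi$. Each $A(d)$ participates only in witnesses whose second component is an $R$-tuple with first attribute $d$, and each $C(d)$ participates only in witnesses whose third component is an $R$-tuple with last attribute $d$. In our gadgets, every value $d$ is the first attribute of very few $R$-tuples (typically one, and at most two at the bridging nodes $a_j'', b_j'', c_j''$), and each of those $R$-tuples participates in at least as many witnesses as $A(d)$ does. Hence swapping an $A$-tuple in a candidate contingency set for any $R(d,\cdot)$ it must join with destroys at least as many witnesses at no extra cost; a symmetric argument applies to each $C$-tuple. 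Iterating these swaps yields a minimum contingency set composed entirely of $R$-tuples, and the reduction then mirrors the $\achain$ counting verbatim.

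For $\res(\abcchain)$, I would proceed identically but additionally insert $B(d)$ for every value $d$ that occurs as the second attribute of some $R$-tuple (equivalently, as the middle variable in some $\chain$-witness), using the swap argument from the proof of $\res(\bchain)$ (Lemma \ref{bchain is npc}): each $B(d)$-tuple is forced to join with a unique preceding $R(\cdot, d)$, which itself participates in at least as many witnesses as $B(d)$. The main obstacle is verifying carefully that the bridging nodes $a_j'', b_j'', c_j''$---where two $R$-tuples share a first attribute, so the corresponding $A$-tuple appears in two witnesses rather than one---do not break the swap argument, and checking that the boundary unary tuples introduce no unintended ``shortcut'' witnesses that change the counting. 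This amounts to a finite case analysis on witness-locations within each gadget, entirely analogous to the one already carried out for $\achain$; once it is discharged, both \np-completeness results follow.
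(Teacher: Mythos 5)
Your plan has a genuine gap at its central step: the claim that minimum contingency sets can be restricted to $R$-tuples, so that the threshold $k_\psi=(n+5)m$ ``transfers verbatim'' from the $\achain$ analysis. The swap argument fails exactly at the tuples that drive the reduction. In the $\achain$ construction you are reusing, the bridging tuple $A(a_j'')$ participates in \emph{two} witnesses ($a_j''\to a_j'\to a_j$ and $a_j''\to v_i^j\to \ov{v_i^j}$), while each of the two $R$-tuples leaving $a_j''$ participates in only one, so no single $R$-tuple can replace it; indeed the intended minimum contingency sets for $\achain$ \emph{do} contain $A(a_j''),A(b_j''),A(c_j'')$ precisely when a clause is unsatisfied, so ``restricted to $R$-tuples'' is false already for the base construction. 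Adding the boundary $C$-tuples makes this worse, not better: $C(a_j)$ lies in three witnesses (the two clause-gadget witnesses ending at $a_j$ plus the arm witness $a_j''\to a_j'\to a_j$), and neither $R(c_j,a_j)$ nor $R(a_j',a_j)$ covers that same set, so it cannot be swapped away either; in fact $\{C(a_j),C(b_j),C(c_j)\}$ covers nine clause-local witnesses with three tuples, so optimal covers of your database genuinely use unary tuples. Consequently the 5-versus-6 per-clause counting cannot be inherited verbatim; it must be re-proved for the augmented gadget with these new options available (your database may well still give 5 when the clause is satisfied and 6 otherwise, but that is exactly what your argument does not establish, and it is not the routine finite check you defer to).

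The paper avoids this problem by not reusing the $\achain$ connection gadget at all. In its proof of this lemma the link is re-oriented: the clause arm goes $a_j'\to *_j^a\to a_j''$ and the variable gadget points \emph{into} $a_j''$, so that $C(a_j'')$ (not $A(a_j'')$) is the two-witness choice tuple; the construction is arranged so that every value outside the designated sets $\{a_j',b_j',c_j'\}$ and $\{a_j'',b_j'',c_j''\}$ has a unique outgoing, respectively incoming, $R$-edge, which is what makes the swap argument sound for all remaining $A$- and $C$-tuples, and the 6-versus-5 analysis is then carried out explicitly for the designated choice tuples. If you want to keep your ``reuse $\achain$ and decorate with unary atoms'' route, you must (i) drop the $R$-only lemma, (ii) identify which unary tuples survive the swaps (at least $A(a_j'')$ and $C(a_j),C(b_j),C(c_j)$ do), and (iii) redo the clause-gadget lower bound with those tuples as admissible choices; your treatment of $\abcchain$ inherits the same issue, whereas the paper's $B$-swap goes through because in \emph{its} gadget every relevant value has a unique $R(d,-)$ or a unique $R(-,d)$.
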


\begin{proof}[Proof of \Cref{acchain is npc}]
We define a reduction from 3\sat.
As in the previous cases, the variable gadget remains such that a minimum cover will choose either blue nodes 
(variable is set to \true), or red nodes (variable is set to \false).
The clause gadget (center black nodes) is chosen as to enforce a clause: if one or more of the outermost joins (black edges) are deleted 
by choosing the corresponding $A$-tuple (orange square), then the minimum cover for the black subgraph is 2, otherwise 3.

We next reduce 3\sat\ to $\res(\acchain)$.
Let $\psi$ be a 3CNF formula with $n$ variables $v_1, \ldots, v_n$ and $m$ clauses 
$C_1, \ldots, C_{m}$.  
Our reduction will map  any such $\psi$ to a pair $(D_\psi,k_\psi)$ where $D_\psi$ is a database 
satisfying $\chain$, and 
$$\psi\in 3\sat \quad\Leftrightarrow\quad (D_\psi,k_\psi) \in \res(\acchain)$$

In our construction, if $\psi \in 3\sat$, then the size of each minimum contingency set for 
$\acchain$ in $D_\psi$ will be $k_\psi=(n+5)m$, whereas if $\psi \not\in 3\sat$, then the size of all contingency sets
for $\acchain$ in $D_\psi$ will be greater than $k_\psi$.

\begin{enumerate}

\item Variable gadget: 
For each variable $v_i$ and each $j \in [m]$ insert the following tuples into the database:
$R(v_i^j,\ov{v_{i}^{j}})$, $R(\ov{v_{i}^{j}}, v_{i}^{j+1})$ and
$A(v_{i}^{j})$, $A(\ov{v_{i}^{j}})$ and
$C(v_{i}^{j})$, $C(\ov{v_{i}^{j}})$. 
If $j+1 > m$, then make the superscript 1.
The resulting witnesses between the tuples form a cycle of length $2m$. The minimum contingency sets are to either choose all 
tuples $R(v_i^j,\ov{v_{i}^{j}})$ representing a variable to have assignment \true,
or all tuples $R(\ov{v_{i}^{j}}, v_{i}^{j+1})$ representing a variable to have assignment \false. If we only consider those tuples, note
that $A$- and $C$-tuples participate in only one witness, so the optimal choice is to delete $R$-tuples. 

\item Clause gadget:
For each clause $j \in [m]$ insert the following tuples into the database:
$R(a_j,b_j)$, $R(b_j,c_j)$, $R(c_j,a_j)$,
$R(a_j',a_j)$, $R(b_j',b_j)$, $R(c_j',c_j)$,
$A(a_j)$, $A(b_j)$, $A(c_j)$,
$A(a_j')$, $A(b_j')$, $A(c_j')$,
$C(a_j)$, $C(b_j)$, $C(c_j)$.
The resulting witnesses form a triangle. If either of the $A(*')$ is removed, then the remaining witnesses can be destroyed 
by choosing only 2 or more tuples, otherwise we need 3. We later argue that these tuples only need be $R$-tuples.

\item Connecting the gadgets:
For each variable $i$ that appears in clause $j$ at position 1, add the following tuples: 
$R(a_{j}',*_{j}^{a}), R(*_{j}^{a}, a_{j}'')$ and $C(a_{j}'')$. If $v_{i}$ appears as positive add tuple $R(\ov{v_{i}^{j}}, a_{j}'')$, 
if it appear as negative add tuple $R(v_{i}^{j}, a_{j}'')$.
Analogously use $b_j', b_{j}''$ or $c_{j}', c_j''$ instead of $a_j', a_{j}''$ for positions 2 and 3 instead of position 1.
\end{enumerate}

With our gadget, if the clause cannot be satisfied, then we need to choose all the $C$-tuples (orange diamonds on \cref{Fig_red_AxCzRxyRyz}),
since we can delete two witnesses by doing deleting each. In that case, in order to delete the remaining witnesses we need to delete 3 tuples, namely
the 3 black nodes in the triangle, resulting on the total deletion of 6 tuples.  

We now need to argue that, besides the tuples depicted in \cref{Fig_red_AxCzRxyRyz}, we don't need other $A$- or $C$-tuples for a minimum
contingency set. Assume there is a tuple $t = A(d)$ in a min $\Gamma$. Given that $d \notin \set{a_{j}',b_{j}',c_{j}'}$, our construction guarantees
there is only one $R$-tuple such that $t' = R(d,-)$, therefore we can have $\Gamma' = \Gamma - t + t'$, and $\Gamma'$ is also a minimum contingency
set. Similarly, if there is a tuple $t = C(d)$ in $\Gamma$, and assuming $d \notin \set{a_{j}'',b_{j}'',c_{j}''}$, there is only one $R$-tuple
$t' = R(-, d)$, and therefore the same follows.

For $\abcchain$ use almost the same construction as above. We just add the appropriate
$B$-tuples and show that there is a minimum contingency set that does not contain those.

Consider $D_{\psi}$ as initially defined for $\acchain$. Now we include the appropriate $B$-tuples:
\begin{enumerate}

\item Variable gadget: 
For each variable $v_i$ and each $j \in [m]$ insert the following tuples into the database:
$B(v_{i}^{j})$, $B(\ov{v_{i}^{j}})$

\item Clause gadget:
For each clause $j \in [m]$ insert the following tuples into the database:
$B(a_j)$, $B(b_j)$, $B(c_j)$.

\item Connecting the gadgets:
For each variable $i$ that appears in clause $j$ at position 1, add tuple $B(*_{j}^{a})$.
Analogously $B(*_{j}^{b})$ and $B(*_{j}^{c})$ for positions 2 and 3, respectively.
\end{enumerate}

By adding those $B$-tuples we obtain the same witnesses we saw in the reduction for $\res(\acchain)$. With this construction we guarantee
that for any tuple $t = B(d)$, there is either only one tuple $R(d, -)$ or only one tuple $R(-, d)$, which means we can always
choose one of those $R$-tuples instead and obtain another minimum contingency set without $B$-tuples. 
\end{proof}

\begin{proof}[Proof of \cref{2chains are hard}]
Suppose that $R(x,y), R(y,z)$ are the unique $R$-atoms in $q$.  Assume first that there are no unary
atoms $A(x), B(y), C(z)$. We 
define a reduction from $\res(\chain)$ to $\res(q)$ as follows:

Consider a database $D$ with $D \models \chain$ and we may assume that there are no loops $R(a,a) \in D$, since those 
would have to be in any~$\Gamma$. We define a new database $D'$ such that for each atom $S_i(v_1, v_2)$ or $A(v)$ 
occurring in $q$, we define 
\begin{align*}
S_i &= \bigset{ (t(v_1,a,b,c), t(v_2,a,b,c))}{D \models \chain(a,b,c)}\\
A &= \bigset{(t(v,a,b,c))}{D \models \chain(a,b,c)}
\end{align*}
where
\[
t(v,a,b) \;\eqdef\;
\begin{cases}
a   & \textrm{if } v = x \\
b   & \textrm{if } v = y \\
c   & \textrm{if } v = z \\
\angle{abc}_v   & \textrm{otherwise.}
\end{cases}
\]

Now we want to show
$$(D,k) \in \res(\chain) \Leftrightarrow (D',k) \in \res(q)$$

Notice that this mapping from $D$ to $D'$ preserves the witnesses in~$D,\chain$. Moreover, there are no new witnesses created
where variables $x,y,z$ are mapped to values that did not correspond to witnesses before.
Since $q$ is pseudo-linear, no endogenous atom of $q$ contains both $x$ and $z$. 
Therefore, any minimum contingency set for $D,\chain$ is also a minimum contingency set for $D',q$.
This completes our reduction.

Now, if any subset of unary relations $A(x), B(y), C(z)$ does appear in $q$, then we define a
reduction from the appropriate  
unary expansion of $\chain$. The same mapping used above to define $D'$ from $D$ preserves all
minimum contingency sets, as desired. 
\end{proof}

\subsection{Proofs for \cref{sec: 2conf}}

\begin{proof}[Proof of \Cref{prop: 2conf}]
For $q \datarule q_\ell,R(x,y),q_m,R(z,y), q_r$, let $D$ be any database satisfying $q$ and let
$j$ be a witness of $D$ satisfying $q$. Note that if $y$ occurs in $q_\ell$ then, by linearity, it must
be as an atom  $F(x,y)$ immediately to the left of $R(x,y)$.  Furthermore, any such atom may be
considered exogenous because it is never better to choose $F(a,b)$ over~$R(a,b)$.  Furthermore, if
$x$ occurs in $q_m$, then it would be via an atom $F(x,y)$ immediately to the right of $R(x,y)$.  If
so, we can assume it is immediately to the left of $R(x,y)$.  In particular, we may assume that
neither $x$ nor $z$ occurs in $q_m$.

We can write $j = A(a,b)R(a,b)B(b)R(c,b)C(b,c)$ where $A(a,b)$, $B(b)$, $C(b,c)$
stand for the atoms of $q_\ell(a,b), q_m(b), q_r(b,c)$, respectively.

Let $N_D$ be a network flow for $D,q$ ignoring the fact that $q$ has a self-join.  Thus $N_D$ has
duplicates edges for its $R$-tuples, i.e., for each $R(a,b) \in D$ 
there are two edges, $R_\ell(a,b),R_r(a,b)$ in $N_D$.  Assume that each edge corresponding to an
endogenous, resp.\  exogenous tuple has weight 1, resp.\  $\infty$.

Let $M$ be a min cut for $N_D$.  Let $\Gamma_M$ be the corresponding set of atoms of $D$, where any
edges $R_\ell(a,b), R_r(a,b)$ are replaced by the atom $R(a,b)$.  Observe that since there is no flow
through $N_D - M$, $\Gamma_M$ is a contingency set for $(D,q)$.

We claim that in fact $\Gamma_M$ is a minimum contingency set for $(D, q)$.
The key idea is the following:

\begin{lemma}\label{minimal cut lemma}
Let $M$ be a minimal cut of $N_D$.  Then $M$ does not include more than one instance of
any $R$ tuple.
\end{lemma}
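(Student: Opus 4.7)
The plan is to argue by contradiction, exploiting the layered structure of $N_D$ that arises from the linearity of $q$. Because the two $R$-atoms share the variable $y$ and $q$ is linear, the flow network places $y$ on a single ``middle'' layer, and the two edges associated with a single tuple $R(a,b)$ take the form $R_\ell(a,b)=a_x\to b_y$ and $R_r(a,b)=b_y\to a_z$; crucially they meet at the common node $b_y$.

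First I would suppose, for contradiction, that a minimal cut $M$ contains both $R_\ell(a,b)$ and $R_r(a,b)$. By minimality of $M$, the set $M\setminus\{R_\ell(a,b)\}$ is not a cut, so there is a source-to-sink path $Q$ in $N_D$ whose only edge from $M$ is $R_\ell(a,b)$. Symmetrically, there is a source-to-sink path $P$ whose only edge from $M$ is $R_r(a,b)$. I would then decompose $P=P_1\cdot R_r(a,b)\cdot P_2$ and $Q=Q_1\cdot R_\ell(a,b)\cdot Q_2$; since $R_r(a,b)$ starts at $b_y$ and $R_\ell(a,b)$ ends at $b_y$, the prefix $P_1$ terminates at $b_y$ and the suffix $Q_2$ emanates from $b_y$, so the concatenation $P_1\cdot Q_2$ is a well-defined source-to-sink path in $N_D$.

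To close the argument, I would verify that the crossover $P_1\cdot Q_2$ uses no edge of $M$. The only $M$-edge on $P$ is $R_r(a,b)$, which sits in the middle segment, so $P_1$ is disjoint from $M$; in particular $P_1$ does not contain $R_\ell(a,b)$. Symmetrically $Q_2$ is disjoint from $M$ and does not contain $R_r(a,b)$. Hence $P_1\cdot Q_2$ is a source-to-sink path that avoids $M$ entirely, contradicting that $M$ is a cut. This will establish that no minimum cut can contain both copies of an $R$-tuple.

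The hard part will be making the decomposition step fully rigorous: one must confirm that the flow network really does give $P_1$ and $Q_2$ a common gluing point at $b_y$ and that no structural constraint blocks the concatenation. This is exactly where the linearity hypothesis on $q$ is doing the work -- it guarantees that the ``left arm'' of atoms up to and including $R_\ell$ is cleanly separated from the ``right arm'' from $R_r$ onwards, so any prefix terminating in the $y$-layer concatenates compatibly with any suffix starting there. Without linearity, an exogenous path connecting $x$ and $z$ could bypass $y$, breaking the layer structure and invalidating the crossover -- which is consistent with \Cref{2conf hard}, where such pseudo-linear cases become $\np$-complete.
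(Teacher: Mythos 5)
Your proposal uses the same exchange (crossover) strategy as the paper's proof: by inclusion-minimality of $M$, extract a path $Q$ whose only $M$-edge is $R_\ell(a,b)$ and a path $P$ whose only $M$-edge is $R_r(a,b)$, then splice a prefix of one with a suffix of the other to get a source-to-sink path avoiding $M$, contradicting that $M$ is a cut. That is exactly the paper's argument, where the spliced flow is $f = A(a',b)\,R(a',b)\,B(b)\,R(c,b)\,C(b,c)$, and you chose the correct splice direction ($P_1$ with a suffix of $Q$, rather than $Q_1$ with $P_2$).

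The one step that fails as written is the gluing. You assert that $R_\ell(a,b)$ ends and $R_r(a,b)$ begins at a single common node $b_y$, so that $P_1\cdot Q_2$ is literally a path. That holds only when the middle subquery $q_m$ between the two $R$-atoms is empty. In general $q$ has the shape $q_\ell,\,R(x,y),\,q_m,\,R(z,y),\,q_r$ with $q_m$ nonempty (e.g.\ a $B(y)$ atom), and then the head of $R_\ell(a,b)$ and the tail of $R_r(a,b)$ are two \emph{distinct} $y$-layer nodes, both labeled $b$, joined by the middle segment of $B(b)$-edges; concatenating $P_1$ with $Q_2$ as you define them would have mismatched endpoints (and would traverse the middle segment twice), so it is not a path of $N_D$. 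The repair is short but needs a fact you did not invoke: since neither $x$ nor $z$ occurs in $q_m$ (this is established just before the lemma, in the proof of \cref{prop: 2conf}), the middle segment of any witness with $y=b$ is the same sequence of $B(b)$-edges; hence $P$ and $Q$ share it, it lies inside $P_1$ (before $R_r(a,b)$) and therefore outside $M$, and the correct splice is $P_1$ (which already ends at the $y$-node immediately preceding the second $R$-atom) followed by the suffix of $Q$ starting at that same node, namely $R_r(c,b)\,C(b,c)$. With that adjustment your argument coincides with the paper's proof; your closing remark correctly identifies that linearity (no exogenous bypass of $y$, cf.\ \cref{2conf hard}) is what makes the crossover legitimate.
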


\begin{proof}
Suppose to the contrary, that $M$ is a minimal cut for $N_D$ and  contains both $R_\ell(a,b)$ and
$R_r(a,b)$.  Since $M$ is minimal, it follows that $N_D - (M - \set{R_\ell(a,b)})$ and $N_D - (M -
\set{R_r(a,b)})$ both contain flows:

$f_1 = A(a,b) R_\ell(a,b)B(b) R(c,b) C(b,c)$ \quad \text{and}

$f_2 = A(a',b) R(a',b)B(b) R_r(a,b) C(b,a)$. \quad
But then $N_d -M$ contains the flow

$f = A(a',b) R(a',b)B(b) R(c,b) C(b,c)$, \quad
contradicting the fact that $M$ is a cut. See \cref{fig: flow 2conf} for a depiction in
the graph.
\end{proof}

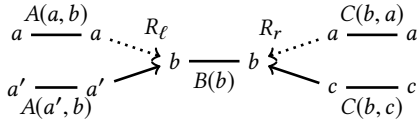
\begin{figure}[h]
\begin{tikzpicture}[scale=.35,
			every circle node/.style={fill=white, size=7mm, inner sep=0, draw}]

\node (zero) at (-16,0){};

\node (x1) at (-12,0) [] {$a$};
\node (x2) at (-12,-2) [] {$a'$};

\node (x1') at (-9,0) [] {$a$};
\node (x2') at (-9,-2) [] {$a'$};

\node (y1) at (-6,-1) [] {$b$};
\node (y2) at (-3,-1) [] {$b$};

\node (z1') at (0,0) [] {$a$};
\node (z2') at (0,-2) [] {$c$};

\node (z1) at (3,0) [] {$a$};
\node (z2) at (3,-2) [] {$c$};

\path[->, line width=1pt, auto]
	(x1')	edge[color=black,  dotted] node {$R_\ell$}(y1)
	(x2')	edge[color=black] (y1)
	(z1')	edge[color=black, dotted] node [swap] {$R_r$}(y2)
	(z2')	edge[color=black] (y2)

	;

\path[-, line width=1pt, auto]
	(x1)	edge[] node {$A(a,b)$} (x1')
	(x2)	edge[] node [swap]{$A(a',b)$} (x2')
	(y1)	edge[] node [swap]{$B(b)$} (y2)
	(z1)	edge[] node [above] {$C(b,a)$} (z1')
	(z2)	edge[] node {$C(b,c)$} (z2')
	
;

\end{tikzpicture}
\caption{Graph depicting flow $f_1, f_2, f$ described in \cref{minimal cut lemma}. Flow $f$ is represent by bold edges.}
\label{fig: flow 2conf}
\end{figure}

Now, let $\Gamma$ be any contingency set.  We claim  that $\Gamma$ is the same size as some
cut of $N_D$.  To see this, let us first let $S$ be the result of replacing each atom
$R(a,b)\in \Gamma$ with both possible edges, $R_\ell(a,b),R_r(a,b)$ in $N_D$.  Since $\Gamma$ is a
contingency set, it follows that $S$ is a cut of $N_D$.  Now, let $S'$ be a minimal
subset of $S$ that is still a cut, where some of the extra $R$-edges, i.e., either $R_\ell(a,b)$
or $R_r(a,b)$  have been removed.

By the proof of \Cref{minimal cut lemma}, we know that $S'$ has only one edge for each atom
$R(a,b)\in \Gamma$.  Thus, $\abs{S'} = \abs{\Gamma}$ as claimed.  It follows that the size of a min
cut of $N_D$ is the same as the size of a minimum contingency set for $(D,q)$.
\end{proof}

\subsection{Proofs for \cref{sec: 2perm}}

\begin{figure*}[t]
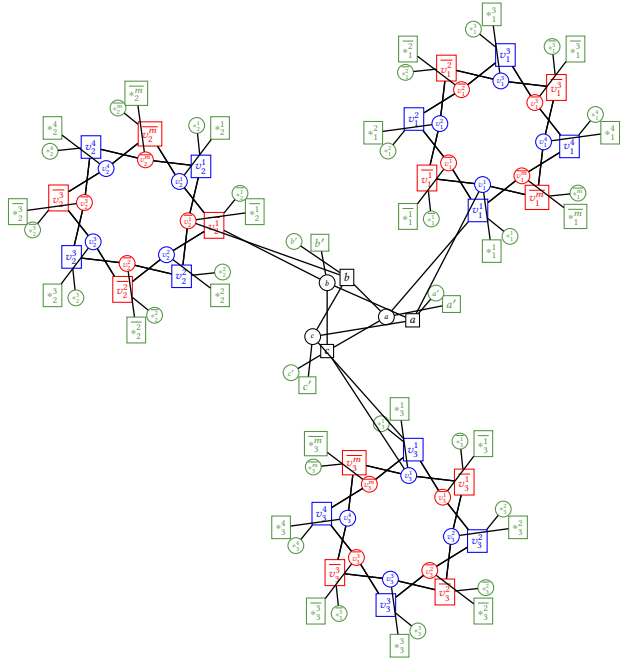

\centering

\include{figs/AxByRxyRyx_gadget}

\caption{
$q_\textup{perm}^{AB}$:  Gadgets for variables $v_1, v_2, v_3$ and clause $C_1 = (v_1 \vee \ov{v_2}
  \vee v_3)$ in proof of  \Cref{perm is npc}.
Circles represent $A$-tuples
and squares $B$-tuples. 
$R$-tuples are the edges between circles and squares.}
\label{Fig_red_AxByRxyRyx}
\end{figure*}

\begin{proof}[Proof of \Cref{perm is npc}]
We define a reduction from 3SAT to $\res(q_\textup{perm}^{AB})$, see \Cref{Fig_red_AxByRxyRyx}.
Similar to the previous cases, we want to create variable gadgets such that a minimum cover will choose either blue nodes 
(variable is set to \true), or red nodes (variable is set to \false), and a clause gadget (black nodes) such that 
if the clause is satisfied, then the minimum cover is 5, otherwise 6.

Let $\psi$ be a 3CNF formula with $n$ variables $v_1, \ldots, v_n$ and $m$ clauses $C_1, \ldots, C_{m}$.  
Our reduction will map  any such $\psi$ to a pair $(D_\psi,k_\psi)$ where $D_\psi$ is a database 
satisfying $\perm$, and 
$$\psi\in 3\sat \quad\Leftrightarrow\quad (D_\psi,k_\psi) \in \res(q_\textup{perm}^{AB})$$

In our construction, if $\psi \in 3\sat$, then the size of each minimum contingency set for 
$q_\textup{perm}^{AB}$ in $D_\psi$ will be $k_\psi=(3n+5)m$, whereas if $\psi \not\in 3\sat$, then the size of all contingency sets
for $q_\textup{perm}^{AB}$ in $D_\psi$ will be greater than $k_\psi$.

\begin{enumerate}

\item Variable gadget: 
For each variable $v_i$ and each $j \in [m]$ insert the following tuples into the database:
$A(v_{i}^{j})$, $B(v_{i}^{j})$, 
$A(\ov{v_{i}^{j}})$, $B(\ov{v_{i}^{j}})$ and
$R(v_{i}^{j}, \ov{v_{i}^{j}})$, $R(\ov{v_{i}^{j}}, v_{i}^{j})$,
$R(v_{i}^{j+1}, \ov{v_{i}^{j}})$, $R(\ov{v_{i}^{j}}, v_{i}^{j+1})$. If $j+1 > m$, then make the superscript 1.

We want to join those tuples such that the minimum contingency sets are to either choose all 
tuples $A(v_i^j), B(v_{i}^{j})$ representing a variable to have assignment \true,
or all tuples $A(\ov{v_i^j}), B(\ov{v_{i}^{j}})$ representing a variable to have assignment \false, plus some $R$-tuples.
To obtain that property, we need the following additional tuples:
$A(*_{i}^{j})$, $B(*_{i}^{j})$, $A(\ov{*_{i}^{j}})$, $B(\ov{*_{i}^{j}})$ and
$R(*_{i}^{j}, v_{i}^{j})$, $R(v_{i}^{j}, *_{i}^{j})$,
$R(\ov{*_{i}^{j}}, \ov{v_{i}^{j}})$, $R(\ov{v_{i}^{j}}, \ov{*_{i}^{j}})$.

With this construction we guarantee that we can ``cover'' the variable gadget by choosing either all positive $A,B$-tuples plus
the $m$ tuples $R(\ov{*_{i}^{j}}, \ov{v_{i}^{j}})$, 
or all negative $A,B$-tuples plus the $m$ tuples $R(*_{i}^{j}, v_{i}^{j})$.
In both cases, we choose $3m$ tuples.

\item Clause gadget:
For each clause $j \in [m]$ insert the following tuples into the database:
$A(a_j)$, $B(a_j)$, $A(b_j)$, $B(b_j)$, $A(c_j)$, $B(c_j)$,
$R(a_{j}, b_{j})$, $R(b_{j}, a_{j})$, $R(b_{j}, c_{j})$, $R(c_{j}, b_{j})$, $R(c_{j}, a_{j})$, $R(a_{j}, c_{j})$ and
$A(a_j')$, $B(a_j')$, $A(b_j')$, $B(b_j')$, $A(c_j')$, $B(c_j')$,
$R(a_{j}, a_{j}')$, $R(a_{j}', a_{j})$, $R(b_{j}, b_{j}')$, $R(b_{j}', b_{j})$, $R(c_{j}, c_{j}')$, $R(c_{j}', c_{j})$ and

For this gadget, we have 3 options to choose only 5 tuples in order to delete all the witnesses. For example:
$A(a_j)$, $B(a_j)$, $A(b_j)$, $B(b_j), R(c_{j}, c_{j}')$.

\item Connecting the gadgets:
For each variable $i$ that appears in clause $j$ at position 1, add the following tuples: 
$R(v_{i}^{j},a_{j}), R(a_{j}, v_{i}^{j})$ if $v_{i}$ appears as positive, and 
$R(\ov{v_{i}^{j}},a_{j}), R(a_{j}, \ov{v_{i}^{j}})$ if it appear as negative.
Analogously use $b_{j}$ or $c_{j}$ instead of $a_{j}$ for positions 2 and 3 instead of position 1.
\end{enumerate}

After connecting the variable gadgets with the clause gadgets, the witnesses are formed such that if a clause cannot be
satisfied, then we need to pick all $A$- and $B$-tuples from the clause gadget (the black triangle), totaling 6 tuples. 
Otherwise, we can  delete all witnesses by picking 5 tuples, namely 2 pairs of $A,B$-tuples and one $R$-tuple.
\end{proof}

\begin{proof}[Proof of \cref{general perm proof}]
There are 2 cases.

\textbf{Case 1:} $q$ is not bound. We can write  $q =  q_\ell(x), G(x,y)$ where $q_\ell(x)$ 
does not contain the variable $y$. $G(x,y)$ includes $R(x,y), R(y,x)$ and may
include exogenous atoms containing the variable  $y$.  Think of $G(x,y)$ as the rightmost group in
\Cref{easyWalkFig}.

For any database, $D \models q$, $\res(D,q)$ is equivalent to the following Network Flow.
As usual, each endogenus atom from the pseudo-linear $q_\ell(x)$ becomes a 1-weight edge and each
exogenus atom is an $\infty$-weight edge.  Whenever $\set{R(c,d),R(d,c)} \subseteq D$, we add $\infty$-weight
edges from the rightmost output of $q_\ell(c)$ and $q_\ell(d)$ to $\set{c,d}$  and a 1-weight edge from
$\set{c,d}$ to the terminal node, $t$.

\textbf{Case 2:} $q$ is bound. We can write  $q =  q_\ell(x), G(x,y), q_r(y)$ where $G(x,y)$
includes $R(x,y), R(y,x)$ and may include an essentially exogenous atom $D(x,y)$ if that occurs in
$q$.  The relevant issues are that removing $G(x,y)$ separates $q_\ell(x)$ from $q_r(y)$ and these
contain at least one endogenous atom each.  

We define a reduction from $\res(q_\textrm{perm}^{AB})$ to $\res(q)$.
We say that variable $z \ \text{isLike}\ x$, if $z$ occurs in $q_\ell(x)$.  Otherwise,  $z$ $\text{isLike}\ y$.

Now consider a database $D$ with $D \models q_\textrm{perm}^{AB}$. We define a new database $D'$ such that for each atom $S_i(v_1, v_2)$ or $A(v)$ 
occurring in $q$, we define 
\begin{align*}
S_i' &= \bigset{ (t(v_1,a,b),t(v_2,a,b))}{D \models  q_\textrm{perm}^{AB}(a,b)}\\
A' &= \bigset{(t(v,a,b))}{D \models  q_\textrm{perm}^{AB}(a,b)}
\end{align*}
where
\[
t(v,a,b) \;\eqdef\;
\begin{cases}
a   & \textrm{if } v \ \text{isLike}\ x  \\
b   & \textrm{if } v \ \text{isLike}\ y \\
\end{cases}
\]

It is clear that the witnesses and minimum contingency sets  of $D \models q_\textrm{perm}^{AB}$ are
exactly preserved in $D'\models q$.
\end{proof}

\subsection{Proof for \cref{sec:2-patterns-REP}}

\begin{proof}[Proof of \cref{easy case REP}]
First consider $q = z_3$. Given a database $D$ such that $D \models z_3$, {witnesses can be of two forms:} 
\begin{align*} 
(a,a,a) &= \set{R(a,a),A(a)}\\
(a,a,b) &= \set{R(a,a), R(a,b), A(b)}
\end{align*}

From that, we can conclude that no tuple $R(a,b)$ with $a\neq b$ needs to be in a contingency set, 
since we can choose either $R(a,a)$ or $A(b)$ instead. Thus, we can construct a network flow
that doesn't include tuples $R(a,b)$ and solve resilience for $z_3$. 
Note that when we consider any expansion of $z_3$ that is pseudo-linear, we always have that 
$R(a,b)$ with $a\neq b$ is not needed in a minimum contingency set. This property together with the assumption that 
query $q$ is pseudo-linear, allows for a construction of a network flow to solve resilience. Therefore, $\res(q)$ is in \p.
\end{proof}

\subsection{Proof for \cref{sec: 2dichotomy}}

\begin{proof}[Proof of \cref{thm: 2atom dichotomy}]
If $q$ has a triad, then $\res(q)$ is \np-complete by \cref{thm: triads in sj}. By \cref{thm: no triad means linear}, we only need to consider 
the cases where $q$ is pseudo-linear. 

In this case, if $q$ has a path (\cref{unary path}, \cref{binary path}), then $\res(q)$ is \np-complete.
Paths cover all the queries where $R$-atoms do not share a variable, including cases with variable repetition. It remains to characterize
the complexity of the queries where $R$-atoms share at least one variable. Note that chain, permutation, and confluence are the only three 
possible patterns for a query with exactly two $R$-atoms and no variable repetition.

If $q$ has a chain , then $\res(q)$ is np-complete (\cref{2chains are hard}). If $q$ has a permutation, then $\res(q)$ is \np-complete when
the permutation is bounded, and it is in \p, when the permutation is unbounded (\cref{general perm proof}). These are the only two possible ways
a permutation can occur. If $q$ has a confluence, then $\res(q)$ is \np-complete when there is an exogenous path, and it is in \p\ otherwise (\cref{2conf hard}).

Now we only have left the case where $q$ has variable repetition and the $R$-atoms share a variable, which implies $\res(q)$ is in \p\ (\cref{easy case REP}).

Since we have exhausted all the cases to consider, we show that there is a dichotomy for the class of ssj binary queries with only two $R$-atoms.
\end{proof}

\subsection{Proofs for \cref{sec: 3chains}}

\begin{proof}[Proof of \cref{3chains are hard}]

We define a reduction from $\res(\chain)$ to $\res(q)$, using a strategy similar to the proof of
 \cref{unary path}.
\end{proof}

\subsection{Proofs for \cref{sec: 3conf}}

\begin{proof}[Proof of \cref{cf3.1 is npc}]
We reduce Max 2-SAT to $\res(q_{3\textrm{conf}}^{AC})$.  Given a 2CNF formula, $\phi$, with $n$ variables and $m$
clauses, and a number $r<m$, we produce a database, $D$, and bound $k$, such that $\phi$ has an
assignment satisfying at least $r$ clauses iff $(D,k) \in \res(q_{3\textrm{conf}}^{AC})$.  The construction is drawn
in \Cref{Fig_cf3}.  A sample variable gadget for variable $x$ is shown. The two minimum contingency sets
consist of $2s$  $x$ nodes, plus 2 helper nodes in the two crossover gadgets or $2s$  $\ov{x}$
nodes, plus 2 helper nodes, corresponding to variable $x$ being true or false, respectively.
The reason for the crossover is so that each variable can be instantiated via diamonds and hexagons
corresponding to the atoms $A,C$, respectively.

The clause gadgets for clauses of size 1 and size 2 are also drawn.  Clauses of size 1 need no nodes
chosen when they are true and one node otherwise.  Clauses of size 2 need 1 node chosen when they
are true and 2 when they are false.  Let $d$ be the number of clauses of size 2 in $\phi$.
Saying that at least $r$ clauses of $\phi$ are true means that at most $m-r$ clauses are false.
Thus, the size of the minimum contingency set is $k = n(2s+2) + d + m -r$.
\end{proof}

\begin{proof}[Proof of \cref{cf3.2 is easy}]
First observe that any contingency set contains only $R$-tuples, since $S,T$ are dominated and therefore exogenous. For any tuple $R(a,b) \in D$, if $S(a,b), T(a,b) \in D$, then $R(a,b)$ must be in all contingency sets, since those 3 tuples form a witness. Let $\Gamma_{TS}$ be the set of all such tuples. We then proceed to create a \flow with tuples $D' = D - \Gamma_{TS}$ and we claim that $\Gamma = \Gamma_{TS} \cup C$ is a min contingency set for $(q_{3\textrm{conf}}^{TS}, D)$, where $C$ is a min cut found by \flow. 

Let $C$ be a min cut and suppose there is a $\Gamma'$ such that $D' -\Gamma' \not\models q_{3\textrm{conf}}^{TS}$ and $|C| > |\Gamma'|$. That implies that there are at least 2 witnesses that can be broken by deleting one tuple but the min cut chose to delete 2 edges. Consider the tuple $R(a,b)$ and these witnesses to be 
\begin{align*}
T(a,b)R(a,b)R(1,b)R(1,2)S(1,2)\\
T(3,b)R(3,b)R(a,b)R(a,4)S(a,4)
\end{align*}

Note that with this set of tuples we also have witness $$T(3,b)R(3,b)R(1,b)R(1,2)S(1,2)$$ which cannot be deleted by deleting $R(a,b)$, contradicting the assumption that it was possible.
\end{proof}

\begin{figure*}
\begin{center}
\begin{tikzpicture}[scale=.45,
		every diamond node/.style={draw=black, inner sep=0pt, minimum size=2cc, line
                  width=1pt,  fill=red!60},
			every ellipse node/.style={draw=black, inner sep=0pt, minimum size=1.5cc, line
                          width=1pt,  fill=blue!40},
			every regular polygon node/.style={draw=black, inner sep=0pt, minimum
                          size=2cc, line width=1pt,  fill=green!40}]
\node[diamond] (x1) at (-6,0) {$\lnot x_1$};
\node[diamond] (x2) at (-6,-3) {$\lnot x_2$};
\node[diamond] (xs) at (-6,-7) {$\lnot x_s$};
\node[ellipse] (x11) at(-2,0) { $\lnot x_1,5$};
\node[ellipse] (x21) at(-2,-3) { $\lnot x_2,5$};
\node[ellipse] (xs1) at(-2,-7) { $\lnot x_s,5$};
\node[regular polygon,regular polygon sides=6] (55) at (10,0) {5};
\node at (-6,-4.75) {$\vdots$};
\node at (-2,-4.75) {$\vdots$};
\node[diamond] (5) at (6,-7) {$5$};
\node[ellipse] (5a) at (6,-3.5) {$5,a$};
\node[ellipse] (ba) at (8,-3.5) {$b,a$};
\node[ellipse] (b5) at (10,-3.5) {$b,5$};
\node[regular polygon,regular polygon sides=6] (y1) at (22,0) {$\ov{y_1}$};
\node[regular polygon,regular polygon sides=6] (y2) at (22,-3) {$\ov{y_2}$};
\node[regular polygon,regular polygon sides=6] (ys) at (22,-7) {$\ov{y_s}$};
\node[ellipse] (y11) at(18,0) { $5,\ov{y_1}$};
\node[ellipse] (y21) at(18,-3) { $5,\ov{y_2}$};
\node[ellipse] (ys1) at(18,-7) { $5,\ov{y_s}$};
\node at (22,-4.75) {$\vdots$};
\node at (18,-4.75) {$\vdots$};
\node at (8,-9)  { Gadget for clause $(\lnot x \lor \lnot y)$};
\foreach \from/\to in
  {x1/x11,x2/x21,xs/xs1,x11/55,x21/55,xs1/55,5/5a,5a/ba,ba/b5,b5/55,
   5/y11,5/y21,5/ys1,y11/y1,y21/y2,ys1/ys}
\draw[->,line width=1pt] (\from) -- (\to);
\end{tikzpicture}

\vspace*{.3in}

\begin{tikzpicture}[scale=.5,
		every diamond node/.style={draw=black, inner sep=0pt, minimum size=2cc, line
                  width=1pt,  fill=red!60},
			every ellipse node/.style={draw=black, inner sep=0pt, minimum size=2cc, line
                          width=1pt,  fill=blue!40},
			every regular polygon node/.style={draw=black, inner sep=0pt, minimum
                          size=2cc, line width=1pt,  fill=green!40}]
\node[diamond] (x1) at (-6,0) {$x_1$};
\node[diamond] (x2) at (-6,-3) {$x_2$};
\node[diamond] (xs) at (-6,-7) {$x_s$};
\node[ellipse] (x11) at(-2,0) { $x_1,1$};
\node[ellipse] (x21) at(-2,-3) { $x_2,1$};
\node[ellipse] (xs1) at(-2,-7) { $x_s,1$};
\node[regular polygon,regular polygon sides=6] (11) at (2,-3) {1};
\node at (-6,-4.75) {$\vdots$};
\node at (-2,-4.75) {$\vdots$};
\node at (-2,-9)  {Gadget for clause $(x)$};
\foreach \from/\to in
  {x1/x11,x2/x21,xs/xs1,x11/11,x21/11,xs1/11}
\draw[->,line width=1pt] (\from) -- (\to);
\end{tikzpicture}
\hspace*{.5in}
\begin{tikzpicture}[scale=.45,
		every diamond node/.style={draw=black, inner sep=0pt, minimum size=2cc, line
                  width=1pt,  fill=red!60},
			every ellipse node/.style={draw=black, inner sep=0pt, minimum size=2cc, line
                          width=1pt,  fill=blue!40},
			every regular polygon node/.style={draw=black, inner sep=0pt, minimum
                          size=2cc, line width=1pt,  fill=green!40}]
\node[diamond] (x1) at (-4,0) {$x_1$};
\node[ellipse] (x1x1b) at(0,0) { $x_1,\ov{x_1}$};
\node[regular polygon,regular polygon sides=6] (x1b) at (4,0) {$\ov{x_1}$};
\node[ellipse] (x2x1b) at(0,-3) { $x_2,\ov{x_1}$};
\node[diamond] (x2) at (-4,-6) {$x_2$};

\node[regular polygon,regular polygon sides=6] (x2b) at (4,-6) {$\ov{x_2}$};
\node[ellipse] (x2x2b) at(0,-6) { $x_2,\ov{x_2}$};
\node[diamond] (xs) at (-4,-12) {$x_{s-1}$};

\node[regular polygon,regular polygon sides=6] (xsb) at (4,-12) {$\ov{x_{s-1}}$};
\node[ellipse] (xsxsb) at(0,-12) { $x_{s-1},\ov{x_{s-1}}$};
\node at (-4,-8.75) {$\vdots$};
\node at (4,-8.75) {$\vdots$};
\node at (0,-8.75) {$\vdots$};
\node (vdots) at (-1,-9.75) { };
\node at (0,-15)  {Top half of Variable Gadget};
\foreach \from/\to in
  {x1/x1x1b,x1x1b/x1b,x2/x2x1b,x2x1b/x1b,x2/x2x2b,x2x2b/x2b,
   xs/xsxsb,xs/vdots,xsxsb/xsb}
\draw[->,line width=1pt] (\from) -- (\to);
\end{tikzpicture}

\vspace*{.3in}

\begin{tikzpicture}[scale=.45,
		every diamond node/.style={draw=black, inner sep=0pt, minimum size=2cc, line
                  width=1pt}, 
			every ellipse node/.style={draw=black, inner sep=0pt, minimum size=2cc, line
                          width=1pt}, 
			every regular polygon node/.style={draw=black, inner sep=0pt, minimum
                          size=2cc, line width=1pt}]       
\node[diamond,fill=red!60] (x1) at (-5.5,0) {$x_s$};
\node[ellipse,fill=blue!40] (e) at(-3,0) { $x_s,e$};
\node[ellipse,fill=blue!40] (x1x1b) at(0,0) { $e,e'$};
\node[ellipse,fill=blue!40] (e') at(3,0) { $e',\ov{x_s}$};
\node[regular polygon,regular polygon sides=6,fill=green!40] (x1b) at (5.5,0) {$\ov{x_s}$};
\node[ellipse,fill=green!40] (x2x1b) at(0,-3) { $x_s,x_{s+1}$};

\node[diamond,fill=blue!40] (b) at (-11,-6) {$b$};
\node[ellipse,fill=blue!40] (b') at (-8.5,-6) {$b,b'$};
\node[ellipse,fill=blue!40] (d') at (8.6,-6) {$d',d$};
\node[regular polygon,regular polygon sides=6,fill=blue!40] (d) at (11,-6) {$d$};

\node[diamond,fill=blue!40] (a) at (-11,-3) {$a$};
\node[ellipse,fill=blue!40] (a') at (-8.5,-3) {$a,a'$};
\node[ellipse,fill=blue!40] (a1) at(-5,-3) { $x_s,a'$};
\node[ellipse,fill=blue!40] (b1) at(-5,-6) { $\ov{x_{s+1}},b'$};
\node[ellipse,fill=blue!40] (1c) at(5,-3) { $c',x_{s+1}$};
\node[ellipse,fill=blue!40] (1d) at(5.5,-6) { $d',\ov{x_{s}}$};
\node[ellipse,fill=blue!40] (c') at (8.5,-3) {$c',c$};
\node[regular polygon,regular polygon sides=6,fill=blue!40] (c) at (11,-3) {$c$};

\node[ellipse,fill=red!60] (x2x2b) at(0,-6) { $\ov{x_{s+1}},\ov{x_s}$};
\node[diamond,fill=green!40] (xm) at (-7,-9) {$\ov{x_{s+1}}$};

\node[regular polygon,regular polygon sides=6,fill=red!60] (xmb) at (7,-9) {${x_{s+1}}$};
\node[ellipse,fill=blue!40] (f) at(-3.7,-9) { $\ov{x_{s+1}},f$};
\node[ellipse,fill=blue!40] (xmxmb) at(-0.1,-9) { $f',f$};
\node[ellipse,fill=blue!40] (f') at(3.5,-9) { $f',x_{s+1}$};
\node at (0,-12)  {Middle Crossover Part of Variable Gadget};
\foreach \from/\to in
  {x1/e,e/x1x1b,x1x1b/e',e'/x1b,x2x1b/xmb,x1/x2x1b,
   xm/f,f/xmxmb,xmxmb/f',f'/xmb,xm/x2x2b,x2x2b/x1b,
  a/a',a'/a1,a1/x2x1b,x2x1b/1c,1c/c',c'/c,b/b',b'/b1,b1/x2x2b,x2x2b/1d,1d/d',d'/d}
\draw[->,line width=1pt] (\from) -- (\to);
\end{tikzpicture}
\end{center}

\caption{Reduction Gadgets for proof of \Cref{cf3.1 is npc}: diamonds represent $A$, ellipses, $R$, and
  hexagons, $C$. In the variable gadgets, the minimum contingency sets choose all red vertices and
  no green, or all green vertices and no red.}
\label{Fig_cf3}
\end{figure*}

\subsection{Proofs for \cref{sec: 3cc}}

\begin{proof}[Proof of \cref{bounded 3cc}]
Reduction from $\res(\chain)$.
\end{proof}

\begin{proof}[Proof of \cref{unbounded 3cc}]
Reduction from Max 2SAT, similar to the one used for $q_{\textrm{3conf}}^{AC}$.
\end{proof}

\subsection{Proofs for \cref{sec: 3permR}}

\begin{proof}[Proof of \cref{KS2}]
This is similar to \cref{AR perm}.  The difference is that while $A(a)$ ``dominates'' the
1-way tuple $R(a,b)$ in $q_{\textrm{3perm-R}}^{A}$, it is not the case that $S(e_1,a)$ would dominate $R(a,b)$ because
there might be many $e_i$'s such that $S(e_i,a)\in D$, in which case it might be advantageous to
choose one $R(a,b)$ instead of many $S(e_i,a)$'s.

We thus modify the flow graph to include all the $S(e,a)$ edges at cost 1 each on the left, all the
$\set{a,b}$ pairs at cost 1 each on the right.  We include $\infty$-weight edges from any $S(e,a)$
to $\set{a,b}$ plus cost 1 edges from $S(e,a)$ to $\set{b,c}$ for any 1-way edges $R(a,b)$.

Let $M$ be a min-cost flow and form $\Gamma$ by including all the $S(e,a)$'s and 1-way $R(a,b)$'s
from $M$ together with one of $R(a,b)$ or $R(b,a)$ whenever $\set{a,b}\in M$.  Similar to
\cref{AR perm}, the rule for which to choose is that if some $S(e,a) \in (D-M)$ but no
$S(f,b) \in (D-M)$, then add $R(a,b)$ to $\Gamma$.  Symmetrically, if $S(e,b) \in (D-M)$ but no
$S(f,a) \in (D-M)$, then add $R(b,a)$ to $\Gamma$; otherwise, arbitrarily add one or the other.

The same argument as in \cref{AR perm} shows that the resulting $\Gamma$ is a minimum
contingency set.
\end{proof}

\begin{proof} [Proof of \cref{KS0}] 
We reduce 3SAT to $\res(q_{\textrm{3perm-R}}^{S_{xy}})$.   
The idea for the variable gadgets is that for a database that
contains the tuples  $T_{x_i} = \{S(x_i,\ov{x_i})$, $R(x_i,\ov{x_i})$, $S(\ov{x_i},{x_i})$,
$R(\ov{x_i},{x_i})\}$, 
we must choose exactly one $R(x_i,\ov{x_i})$ or $R(\ov{x_i},{x_i})$, the first of  which will
correspond to the assignment $x$ to 1, and the second of which, to 0.  In full detail, the $x$
gadget consists of a chain of these choices, i.e., the union of $T_{x_i}$, $i = 1 \ldots, m$, together
with all the tuples $R(x_i,x_{i+1})$, $R(x_{i+1},x_{i})$, $R(\ov{x_i},\ov{x_{i+1}})$,
$R(\ov{x_{i+1}},\ov{x_{i}})$.  
For a minimum contingency over this gadget we may choose 
all of the $R(x_i,x_{i+1})$ and $R(x_i,\ov{x_i})$ edges (corresponding to $x$ gets 1), 
or all the $R(\ov{x_i},\ov{x_{i+1}})$ and $R(\ov{x_i},{x_i})$ edges (corresponding to $x$ gets 0).  

The clause gadget is similar.  If $C_i$ is $(x \lor \ov{y} \lor z)$,
then the clause can eliminate two, but not all three pointers to the edges $\set{x_i, x_{i+1}}$,
$\set{\ov{y_i}, \ov{y_{i+1}}}$, $\set{z_i, z_{i+1}}$ after removing 8 tuples.
To simplify the explanation, let $P(a,b) = \set{R(a,b),R(b,a)}$ and $F(a,b) = P(a,b)\cup \set{S(a,b),S(b,a)}$ for elements $a,b\in D$.
The $C_i$ clause gadget contains the union of the following sets of tuples:
$F(a_i,b_i)$, $F(b_i,c_i)$, $F(c_i,a_i)$, $F(a_i,x_i)$, $F(b_i,\ov{y_i})$, $F(c_i,z_i)$, $P(a_i,a_i')$, $P(b_i,b_i')$, $P(c_i,c_i')$.
The idea is that for each full pair, $F(e,f)$, exactly one of $R(e,f)$ or $R(f,e)$ must be chosen in
the minimum contingency set $\Gamma$.
$C_i$ is designed so that a contingency set of size 8 exists
iff at least one pair from $P(x_i,x_{i+1})$, $P(\ov{y_i},\ov{y_{i+1}})$, $P(z_i,z_{i+1})$ has been
previously chosen, i.e., iff the clause $C_i$ is true.
\end{proof}

\begin{proof}[Proof of \cref{KS1.5 et al}]
We reduce $\res(q_\textrm{perm}^{AB})$ to $\res(q_\textrm{3perm-R}^{AC})$.  Given a database $D\models q_\textrm{perm}^{AB}$, construct
$D'\models q_\textrm{3perm-R}^{AC}$ as 
\begin{align*}
A' &:= \bigset{a'}{A(a) \in D}\\
R' &:= R \cup \bigset{(a',a)}{A(a) \in D}.
\end{align*}
It then follows, that it is always at least as good to put $A(a')$ into $\Gamma$, rather than
$R(a',a)$.  Thus, the minimum contingency sets for $(D',q_\textrm{3perm-R}^{AC})$ correspond exactly to the minimum
contingency sets for $(D,q_\textrm{perm}^{AB})$.

For $\res(q_\textrm{3perm-R}^{AB})$, Even though $q_\textrm{perm}^{AB} \rightarrow q_\textrm{3perm-R}^{AB}$, there is no obvious reduction between $\res(q_\textrm{perm}^{AB})$
 and $\res(q_\textrm{3perm-R}^{AB})$.  However, the same reduction from 3SAT to $\res(q_\textrm{perm}^{AB})$ in \cref{perm is npc} 
 also works for $\res(q_\textrm{3perm-R}^{AB})$.

For $\res(q_\textrm{3perm-R}^{S_{xy}BC})$, we can define a reduction from $\res(q_\textrm{perm}^{AB})$.
\end{proof}

\subsection{Proofs for \cref{sec: rep 3 Ratoms}}

\begin{proof}[Proof of \cref{rep 3R}]
For $\res(z_4)$, a reduction from $\res(\vc)$ is enough. Note that tuples $R(a,b)$ with $a\neq b$ do not need to be in a contingency set.

For $\res(z_5)$, a reduction from Max 2SAT, similar to the one used in \cref{cf3.1 is npc}, can be used to show \np-hardness.
\end{proof}

\section{Relevant proofs from sj-free case}\label{sec: sj-free proofs}

\begin{proof}[Proof of \Cref{fact: domination does not change complexity}]
{Let $\Gamma$ be a minimum contingency set of $q$ in $D$.  
Suppose that atom $A$ dominates atom $B$ but there is some tuple $B(\vec t)\in \Gamma$. 
Let $\vec p$ be the projection of $\vec t$ onto $\var(A)$. 
Then we
can replace $B(\vec t)$ by $A(\vec p)$ 
and we remove at least as many witnesses that $D\models q$.  It
follows, as desired, that the complexity of $\res(q)$ is unchanged if $B$ is exogenous, i.e.,  $\res(q) \equiv \res(q')$.}
\end{proof}

\begin{proposition}[Triangle $q_\triangle$ is hard]\label{thm: hardness of triangle}
$\res(q_\triangle)$ is \NP-complete.
\end{proposition}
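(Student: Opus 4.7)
The plan is to show membership in \NP{} and then prove \NP-hardness by reduction from 3SAT, mirroring the style of the gadget-based reduction I later use for $\chain$ (\Cref{chain is npc}) and for the self-join variations of $\rats$ (\Cref{prop: sj rats hard}). Membership in \NP{} is immediate: a contingency set of size at most $k$ is a polynomial-size certificate, and testing $D - \Gamma \not\models q_\triangle$ takes polynomial time.

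For hardness I will map a 3CNF formula $\psi$ with $n$ variables and $m$ clauses to a pair $(D_\psi, k_\psi)$ such that $\psi \in 3\sat \Leftrightarrow (D_\psi, k_\psi) \in \res(q_\triangle)$. Since every witness of $q_\triangle$ over $D_\psi$ is a triangle on tuples of $R, S, T$, both gadgets will be built out of triangles. Variable gadget: for each variable $v_i$, I would create a $2m$-cycle whose consecutive tuples alternate between ``true'' and ``false'' copies of $v_i$; each edge of this cycle sits inside a triangle witness, and the only two minimum covers of the cycle are the set of $m$ blue (true) tuples or the set of $m$ red (false) tuples, thereby encoding a Boolean assignment to $v_i$. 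Clause gadget: for each clause $C_j$, I would build a small $R,S,T$-triangle together with three ``hook'' tuples, one per literal, each sharing a join variable with the corresponding literal copy in the relevant variable gadget. The numerical parameters will be chosen so that the clause gadget admits a contingency set of size exactly $c$ if at least one literal is made true by the variable assignment, and needs $c+1$ otherwise. Summing variable costs $nm$ and clause costs, I set $k_\psi$ to the value achieved precisely when every clause is satisfied.

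The forward direction ($\psi \in 3\sat \Rightarrow (D_\psi, k_\psi) \in \res(q_\triangle)$) is straightforward: take a satisfying assignment, choose the corresponding ``colour'' inside each variable gadget, and use the cheaper clause cover inside each clause gadget. The reverse direction requires showing that any contingency set of size $\le k_\psi$ must (i) be consistent, i.e.\ inside each variable gadget it must take one of the two canonical $m$-tuple choices, and (ii) spend no more than $c$ tuples per clause gadget; from (i)+(ii) one reads off a satisfying assignment. Standard ``cheapest-to-swap'' arguments show that any deviation from the canonical variable-gadget choices strictly increases cost, as does failing to cover a clause whose literals are all falsified by the encoded assignment.

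The main obstacle will be controlling \emph{inter-gadget witnesses}: the construction must guarantee that the only triangles in $D_\psi$ are the ones belonging to a single gadget and the designated ``hook'' triangles linking a literal to its clause. I would enforce this by assigning disjoint namespaces to the variable and clause gadgets and making each element appear only in the role for which it was designed, so that no stray triangle can form across gadgets. Once this isolation is established, the per-gadget analysis above gives the exact correspondence between satisfying assignments and contingency sets of size $k_\psi$, completing the reduction.
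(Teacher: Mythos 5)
Your high-level strategy (3SAT reduction with cyclic variable gadgets and per-clause structures, plus the trivial \NP membership argument) matches the paper's, but the plan has a concrete gap in the variable gadget. For $q_\triangle$ a witness needs one tuple from each of $R$, $S$, $T$, so a ``$2m$-cycle whose consecutive tuples alternate between true and false copies, each edge sitting inside a triangle witness, with exactly two minimum covers'' does not exist as described: if each cycle edge closes its own triangle with fresh third edges, the triangles are edge-disjoint, any transversal picking one edge per triangle is minimum, and the gadget does \emph{not} force a consistent true/false choice; to force consistency, consecutive triangles must \emph{share} tuples, and with three distinct relations this requires the longer periodic structure the paper actually builds --- six-node segments cycling through $R,S,T$ with both polarities, $12m$ solid edges per variable, dotted closing edges that are never chosen, and unused buffer segments --- none of which follows from your sketch. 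Asserting that ``the only two minimum covers are the $m$ blue or the $m$ red tuples'' is precisely the property that has to be engineered, and your proposed shape does not deliver it.

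The clause side has a similar issue. You assert that a small triangle plus three hook tuples can be tuned so the gadget costs $c$ when some literal is true and $c+1$ otherwise, but no construction or accounting is given, and your isolation argument (``disjoint namespaces \ldots so that no stray triangle can form across gadgets'') is in tension with the hooks, which necessarily share constants with the variable gadgets; the real work is proving that this sharing creates exactly the intended cross-gadget witnesses and nothing else. The paper resolves both problems differently: it adds \emph{no} clause tuples at all, instead identifying three vertices across three variable gadgets so that one pre-existing $v_i$-edge, one $\ov{v_2}$-edge and one $v_3$-edge form a single new RGB triangle per clause (destroyed for free exactly when the clause is satisfied), and it uses the alternating unused segments of each $12m$-edge cycle as spacing to guarantee no spurious triangles arise from these identifications, yielding $k_\psi=6mn$. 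Your plan would need either this identification trick or a worked-out clause gadget with the budget analysis and a proof that only the designated witnesses exist; as written, the two central gadget properties are claimed rather than established.
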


\begin{proof}[Proof of \cref{thm: hardness of triangle}]
We reduce 3SAT to $\res(q_\triangle)$.  It will then follow that $\res(q_\triangle)$ is NP complete.
Let $\psi$ be a 3CNF formula with $n$ variables $v_1, \ldots, v_n$ and $m$ clauses 
$C_0, \ldots, C_{m-1}$.  
Our reduction will map  any such $\psi$ to a pair $(D_\psi,k_\psi)$ where $D_\psi$ is a database 
satisfying
$q_\triangle$, and 
\begin{equation}\label{hard rats reduction}
\psi\in 3\sat \qLra (D_\psi,k_\psi) \in \res(q)
\end{equation}
In our construction, if $\psi \in 3\sat$, then the size of each minimum contingency set for $q_\triangle$ in
$D_\psi$ will be $k_\psi=6mn$, whereas if $\psi \not\in 3\sat$, then the size of all contingency sets
for $q_\triangle$ in $D_\psi$ will be greater than $k_\psi$.

Note $D_\psi \models q_\triangle$ iff it contains three pairs $R(a,b)$, $S(b,c)$, $T(c,a)$.
We visualize $R(a,b)$ as a red edge, $S(b,c)$ as a green edge and $T(c,a)$ as a
blue edge.
Thus each witness $(a,b,c)$ that $D_\psi\models q_\triangle$ is an RGB triangle.
(Notice that the edge direction $a \rightarrow b$ drawn in \Cref{fig:gadgetgi} corresponds to the
variable order in $R$, and analogously for $S$ and $T$.)
The job of a contingency set for
$q_\triangle$ is to remove all RGB triangles.

$D_\psi$ contains one circular gadget $G_i$ for each variable $v_i$.  
The circle consists of $12m$ solid edges, half of
them marked $v_i$ and the other half marked $\ov{v_i}$ (see \autoref{fig:gi segment} and
\autoref{fig:gadget}).  Note that there are $12m$ RGB triangles and they can be minimally broken by
choosing the $6m$ $v_i$ edges or the $6m$ $\ov{v_i}$ edges. Any other way would require more
edges removed.
Thus, each minimum contingency set for $D_\psi$ corresponds to a truth assignment to the
variables of $\psi$. And there will be
a minimum contingency set of size $k_\psi = 6mn$ iff $\psi \in 3\sat$.

\begin{figure}
\begin{subfigure}[b]{\linewidth}
\begin{center}
\begin{tikzpicture}[ scale=.35]
\draw (2,2) circle [radius=1.0] node {{\color{black} $a^i_1$}};
\draw (6,2) circle [radius=1.0] node {{\color{black} $b^i_1$}};
\draw (10,2) circle [radius=1.0] node {{\color{black} $c^i_1$}};
\draw (14,2) circle [radius=1.0] node {{\color{black} $a^i_2$}};
\draw (18,2) circle [radius=1.0] node {{\color{black} $b^i_2$}};
\draw (22,2) circle [radius=1.0] node {{\color{black} $c^i_2$}};
 \draw[->,line width=2pt,color=red] (3,2) -- (5,2)  node[pos=.5,above]{$v_i$}; 
 \draw[->,line width=2pt,color=dg] (7,2) -- (9,2)  node[pos=.5,above]{$\ov{v_i}$}; 
 \draw[->,line width=2pt,dotted,color=blue] (10,1) .. controls(6,-1) .. (2,1)  node[pos=.5,below]{ }; 
 \draw[->,line width=2pt,dotted,color=red] (14,1) .. controls(10,-1) .. (6,1)  node[pos=.5,below]{ }; 
 \draw[->,line width=2pt,dotted,color=dg] (18,1) .. controls(14,-1) .. (10,1)  node[pos=.5,below]{ }; 
 \draw[->,line width=2pt,dotted,color=blue] (22,1) .. controls(18,-1) .. (14,1)  node[pos=.5,below]{ }; 
\draw (6,-.6) node {{\color{black}  \includegraphics[scale=0.025]{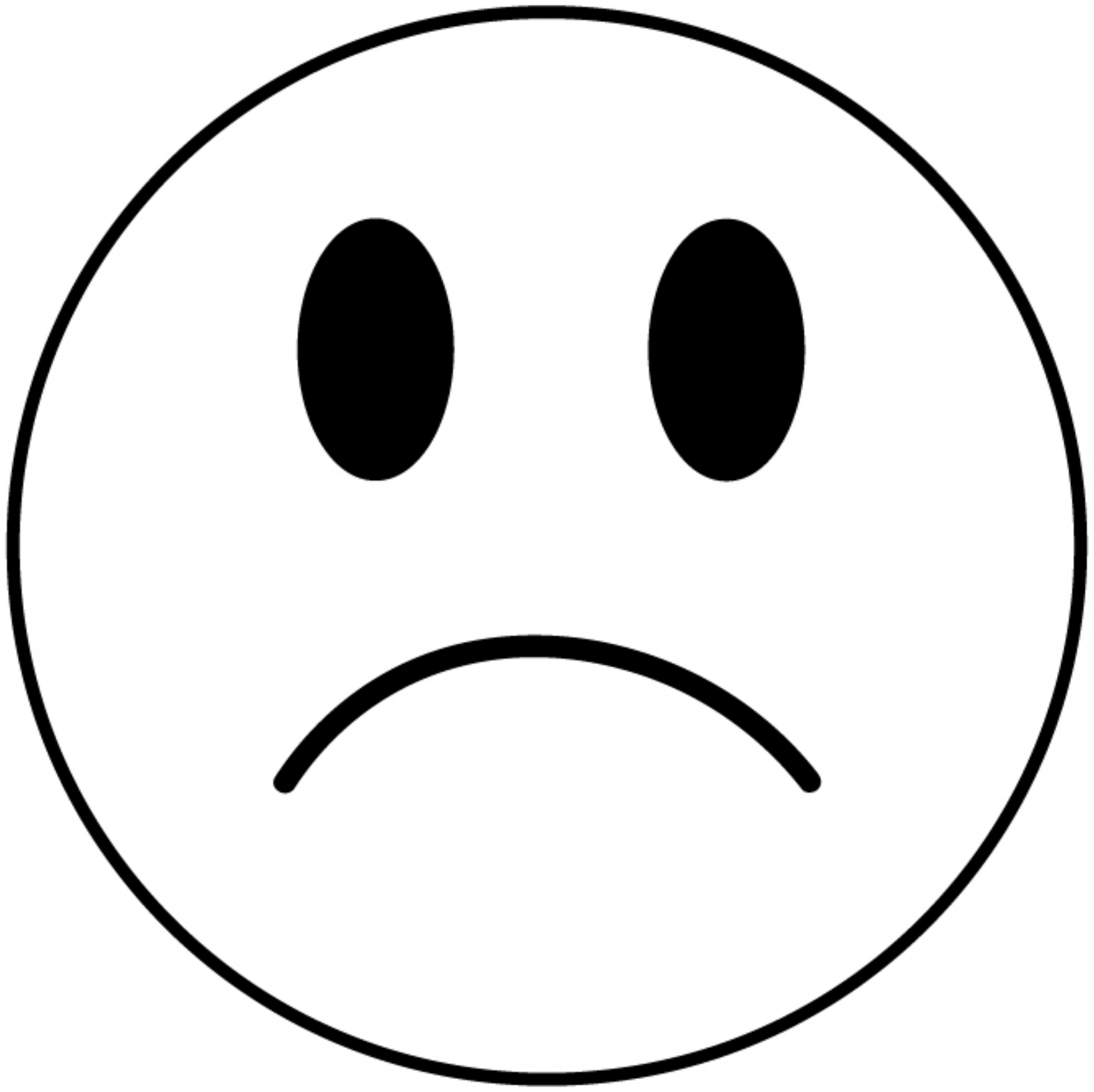} }};
\draw (10,-.6) node {{\color{black}  \includegraphics[scale=0.025]{frown} }};
\draw (14,-.6) node {{\color{black}  \includegraphics[scale=0.025]{frown} }};
\draw (18,-.6) node {{\color{black}  \includegraphics[scale=0.025]{frown} }};
 \draw[->,line width=2pt,color=blue] (11,2) -- (13,2)  node[pos=.5,above]{${v_i}$}; 
 \draw[->,line width=2pt,color=red] (15,2) -- (17,2)  node[pos=.5,above]{$\ov{v_i}$}; a2 b2
 \draw[->,line width=2pt,color=dg] (19,2) -- (21,2)  node[pos=.5,above]{${v_i}$}; 
 \draw[->,line width=2pt,color=blue] (23,2) -- (25,2)  node[pos=.5,above]{$\ov{v_i}$}; 
\end{tikzpicture}
\end{center}
\caption{A six-node segment of the gadget $G_i$. A minimum contingency set chooses either all the solid lines marked $v_i$, or all the solid lines marked $\ov{v_i}$.  The dotted lines are sad because each of them is only part of one single RGB triangle, thus they are never chosen.  
}
\label{fig:gi segment}
\end{subfigure}

\vspace*{.1in}
\begin{subfigure}[b]{\linewidth}
\begin{tikzpicture}[ scale=.2]
\draw (2,2) circle [radius=1.0] node {{\color{red} }};
\draw (6,2) circle [radius=1.0] node {{\color{blue}}};
\draw (10,2) circle [radius=1.0] node {{\color{dg} }};
\draw (14,2) circle [radius=1.0] node {{\color{red}}};
\draw (18,2) circle [radius=1.0] node {{\color{blue}}};
\draw (22,2) circle [radius=1.0] node {{\color{dg} }};
 \draw[->,line width=2pt,color=red] (3,2) -- (5,2)  node[pos=.5,above]{$v_i$}; 
 \draw[->,line width=2pt,color=dg] (7,2) -- (9,2)  node[pos=.5,above]{$\ov{v_i}$}; 
 \draw[->,line width=2pt,color=blue] (11,2) -- (13,2)  node[pos=.5,above]{${v_i}$}; 
 \draw[->,line width=2pt,color=red] (15,2) -- (17,2)  node[pos=.5,above]{$\ov{v_i}$}; a2 b2
 \draw[->,line width=2pt,color=dg] (19,2) -- (21,2)  node[pos=.5,above]{${v_i}$}; 
 \draw[->,line width=2pt,color=blue] (22.5,1.5) -- (24,0)  node[pos=.8,above]{$\ov{v_i}$}; 
 \draw[->,line width=2pt,dotted,color=blue] (10,1) .. controls(6,-1) .. (2,1)  node[pos=.5,below]{ }; 
 \draw[->,line width=2pt,dotted,color=red] (14,1) .. controls(10,-1) .. (6,1)  node[pos=.5,below]{ }; 
 \draw[->,line width=2pt,dotted,color=dg] (18,1) .. controls(14,-1) .. (10,1)  node[pos=.5,below]{ }; 
 \draw[->,line width=2pt,dotted,color=blue] (22,1) .. controls(18,-1) .. (14,1)  node[pos=.5,below]{ }; 
\draw (24.5,-.5) circle [radius=1.0] node {{\color{red} }};
\draw (27,-3) circle [radius=1.0] node {{\color{blue}}};
\draw (29.5,-5.5) circle [radius=1.0] node {{\color{dg} }};
\draw (32,-8) circle [radius=1.0] node {{\color{red}}};
\draw (34.5,-10.5) circle [radius=1.0] node {{\color{blue}}};
\draw (37,-13) circle [radius=1.0] node {{\color{dg} }};
 \draw[->,line width=2pt,color=red] (25,-1) -- (26.5,-2.5)  node[pos=.8,above]{$v_i$}; 
 \draw[->,line width=2pt,color=dg] (27.5,-3.5) -- (29,-5)  node[pos=.8,above]{$\ov{v_i}$}; 
 \draw[->,line width=2pt,color=blue] (30,-6) -- (31.5,-7.5)  node[pos=.8,above]{${v_i}$}; 
 \draw[->,line width=2pt,color=red] (32.5,-8.5) -- (34,-10)  node[pos=.8,above]{$\ov{v_i}$}; a2 b2
 \draw[->,line width=2pt,color=dg] (35,-11) -- (36.5,-12.5)  node[pos=.8,above]{${v_i}$}; 
 \draw[->,line width=2pt,color=blue] (37,-14) -- (37,-17)  node[pos=.5,right]{$\ov{v_i}$}; 
\draw (37,-18.5) node {{\color{black} $\vdots$ }};
\draw[->,line width=2pt,dotted,color=red] (23.5,-.5) .. controls(20.75,-1) .. (18,1)  ; 
\draw[->,line width=2pt,dotted,color=dg] (26,-3) .. controls(24,-3) .. (22,1)  ; 
\draw[->,line width=2pt,dotted,color=blue] (29,-6) .. controls(25,-5) .. (24,-1)  ; 
\draw[->,line width=2pt,dotted,color=red] (31.5,-8.5) .. controls(27.5,-7.5) .. (26.5,-3.5)  ; 
\draw[->,line width=2pt,dotted,color=dg] (34,-11) .. controls(30,-10) .. (29,-6)  ; 
\draw[->,line width=2pt,dotted,color=blue] (36.5,-13.5) .. controls(32.5,-12.5) .. (31.5,-8.5)  ; 
\draw[->,line width=2pt,color=blue] (0,0) -- (1.5,1.5)  node[pos=.8,left]{$\ov{v_i}$}; 
\draw (-1.2,-.8) node {{\color{black} ${\mathstrut^{.^{.^{.^{}}}}}$ }};
\draw (13,-3.5) node {{\color{black}  \includegraphics[scale=0.05]{frown} }};
\draw (13,-6.75) node {{\color{black}  1 }};
\draw (25.5,-9) node {{\color{black} \includegraphics[scale=0.05]{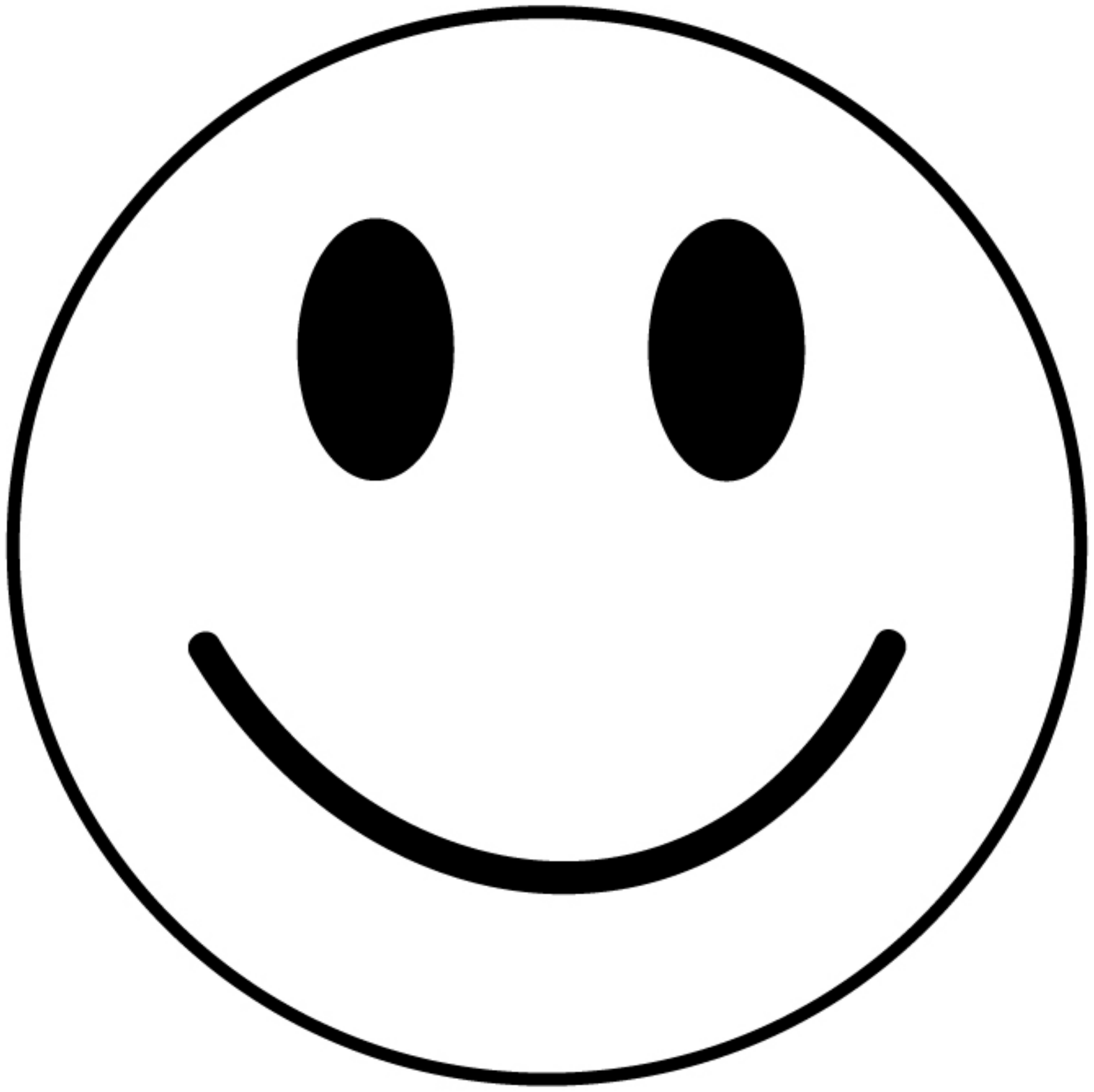} }};
\draw (25.5,-12.25) node {{\color{black}  2 }};
\draw (1,-6) node {{\color{black} \includegraphics[scale=0.05]{smile} }};
\draw (1,-9.25) node {{\color{black} $2m$ }};
\end{tikzpicture}
\caption{Gadget $G_i$ is a cycle containing $2m$ six-node segments with
$12m$ RGB triangles.  They can all be
  eliminated by removing the $6m$ edges marked $v_i$ or the $6m$ edges marked $\ov{v_i}$.  
The odd segments are sad because they are never used for connecting different gadgets (corresponding to clauses that use several variables);
they only separate the even ones, thus preventing spurious triangles.
}\label{fig:gadget}
\end{subfigure}

\vspace*{.1in}

\begin{subfigure}[b]{\linewidth}
 \begin{center}
 \begin{tikzpicture}[ scale=.9]
\fill[black!25] (16,3.5) ellipse (3.0 and 0.5);
\draw[rotate=58,fill=black!25] (9.75,-16.1) ellipse (2.5 and 0.6);
\draw[rotate=-58,fill=black!25] (7.3,11.25) ellipse (2.5 and 0.6);
\draw (16,3.5) node {{\color{black}  $G_1$ }};
\draw (18.75,-.25) node {{\color{black}  $G_2$ }};
\draw (13.5,-.5) node {{\color{black}  $G_3$ }};
 \draw (14,2) circle [radius=1.0] node {{\color{black} $ $ }};
 \draw (13.7,1.7) node {{\color{black}$a^3_{4j+2}$ }};
 \draw (14.4,2.4) node {{\color{black}$a^1_{4j+1}$ }};
 \draw (18,2) circle [radius=1.0] node {{\color{black} }};
 \draw (17.7,2.3)  node {{\color{black} $b^1_{4j+1}$}};
 \draw (18.3,1.6) node {{\color{black} $b^2_{4j+1}$}};
 \draw (16,-.67) circle [radius=1.0] node {{\color{black}$c^3_{4j+1}\hspace*{.07in}c^2_{4j+1}$}};
  \draw[<-,line width=2pt,color=red] (11.5,2) -- (13,2) ; 
  \draw[->,line width=2pt,color=red] (15,2) -- (17,2)  node[pos=.5,above]{${v_1}$}; a to  b
  \draw[<-,line width=2pt,color=red] (19,2) -- (20.5,2)  ; 
  \draw[->,line width=2pt,color=dg] (17.5,1.1) -- (16.7,0) node[pos=.6,right]{$\;\ov{v_2}$} ; 
  \draw[<-,line width=2pt,color=dg] (15.5,-1.57) -- (14.7,-2.67) ; 
  \draw[<-,line width=2pt,color=dg] (19.5,3.77) -- (18.7,2.67) ; 
  \draw[<-,line width=2pt,color=blue] (14.5,1.1) -- (15.3,0) node[pos=.4,right]{$\;{v_3}$} ; 
  \draw[->,line width=2pt,color=blue] (12.5,3.77) -- (13.3,2.67) ; 
  \draw[->,line width=2pt,color=blue] (16.5,-1.57) -- (17.3,-2.67) ; 
 \end{tikzpicture}
 \end{center}
\caption{For clause $C_j=(v_1 \lor \ov{v_2} \lor v_3)$,  we identify vertices 
$b^1_{4j+1}\in G_1$ with $b^2_{4j+1}\in G_2$; 
$c^2_{4j+1}\in G_2$ with $c^3_{4j+1}\in G_3$ and 
$a^3_{4j+2}\in G_3$ with  $a^1_{4j+1}\in G_1$. 
This RGB triangle will be deleted iff the chosen variable
  assignment satisfies $C_j.$
}
\label{fig:gadgetgi identification}
\end{subfigure}

\caption{Gadget construction for hardness proof for $q_\triangle$.
}
\label{fig:gadgetgi}
\end{figure}

We complete the construction of $D_\psi$ by adding one RGB triangle
for each clause $C_j$.  For example, suppose
$C_j = v_1 \lor \ov{v_2} \lor v_3$.
The RGB triangle we add consists of a red edge marked $v_1$, a green edge marked $\ov{v_2}$ and
a blue edge marked $v_3$ (see \autoref{fig:gadgetgi identification}).  Note that if the chosen assignment satisfies $C_j$,  then all $v_1$ edges
are removed, or all $\ov{v_2}$ edges are removed, or all $v_3$ edges are removed.  Thus the $C_j$
triangle is automatically removed.  

How do we create $C_j$'s RGB triangle?  Remember that we have chosen $G_i$ to contain 2 segments for
each clause.  We use the $j$th  odd-numbered segment of $G_i$ to produce the $v_i$ or
$\ov{v_i}$ used in the clause-$j$ triangle.  The even numbered segments are not used:  they serve as
buffers to prevent spurious RGB triangles from being created (In \autoref{fig:gadget} we 
mark these even segments with frowns:  they are sad because they are never used).

More precisely, the  red
$v_1$-edge from $G_1$ is $(a^1_{4j+1},b^1_{4j+1})$, 
the green $\ov{v_2}$-edge from $G_2$ is $(b^2_{4j+1},c^2_{4j+1})$,
and the blue $v_3$-edge from $G_3$ is $(c^3_{4j+1},a^3_{4j+2})$ 
(see \autoref{fig:gadgetgi identification}).  

Now to make this an RGB triangle in $D_\psi$, we identify the two $a$-vertices, the two $b$ vertices and the
two $c$ vertices.  In other words, $G_1$'s $a$-vertex $a^1_{4j+1}$ is equal to $G_3$'s $a$-vertex
$a^3_{4j}$, i.e., they are the same element of the domain of $D_\psi$.
We have thus constructed $C_j$'s RGB triangle  (see \autoref{fig:gadgetgi identification}).

The key idea is that these
identifications can only create this single new RGB triangle because there is no other way to get back to
$G_1$ from $G_2$ in two steps.  All other identifications involve
different segments and so are at least six steps away. Recall that this is the reason why the odd-numbered
segments in the $G_i$'s are not used: this ensures that no additional RGB triangles
are created.

Thus, as desired, \autoref{hard rats reduction} holds and we have reduced $3\sat$ to $\res(q_\triangle)$.
\end{proof}

\begin{figure}
\begin{center}
\begin{tikzpicture}[ scale=.4]
\node[circle,draw=black, inner sep=2pt, minimum size=3cc, line width=1pt] (S0) at(0,0)  { $S_0(\angle{ab})$};
\node[circle,draw=black, inner sep=2pt, minimum size=3cc, line width=1pt] (S1) at(8,-8)  { $S_1(\angle{bc})$};
\node[circle,draw=black, inner sep=2pt, minimum size=3cc, line width=1pt] (S2) at(-8,-8)  { $S_2(\angle{ac})$};
\draw (7,-3.5) node {{\color{red}  $b$ preserved}};
\draw (-7.5,-3.5) node {{\color{blue}  $a$ preserved}};
\draw (0,-9) node {{\color{dg}  $c$ preserved}};
\path[line width=2pt,color=red] (S0) edge (S1);
\draw[line width=2pt,color=dg] (S1) -- (S2);
\draw[line width=2pt,color=blue] (S2) -- (S0);
\end{tikzpicture}
\end{center}
\caption{Reduction from $\res(q_\triangle)$ to $\res(q)$ when $q$ contains a triad, $\set{S_0,S_1,S_2}$.}
\label{hard part fig}
\end{figure}

\begin{proposition}[Tripod $q_\Tri$ is hard]\label{prop:tripodQuery}
$\res(q_\Tri)$ is \NP-complete.
\end{proposition}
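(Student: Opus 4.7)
The plan is to reduce $\res(q_\triangle)$ to $\res(q_\Tri)$. Since we already have \cref{thm: hardness of triangle} establishing that $\res(q_\triangle)$ is \NP-complete, and membership in \NP\ is immediate (verify in polynomial time that a proposed contingency set of size $\leq k$ actually falsifies the query), this suffices. As a preliminary simplification, I would apply \cref{fact: domination does not change complexity}: since $\var(A) = \{x\} \subset \var(W) = \{x,y,z\}$, the atom $A$ dominates $W$ (and similarly $B$ and $C$ dominate $W$), so it is enough to work with the equivalent query $q_\Tri' \datarule A(x), B(y), C(z), W^\exSymb(x,y,z)$ in which $W$ is exogenous.

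Given an input $(D,k)$ of $\res(q_\triangle)$ with endogenous relations $R, S, T$, I would construct a database $D'$ for $q_\Tri'$ as follows. Introduce pairwise distinct fresh domain elements $u_{ab}$, $v_{bc}$, $w_{ca}$, one per endogenous tuple of $D$. Then for each $R(a,b) \in D$ add $A(u_{ab})$, for each $S(b,c) \in D$ add $B(v_{bc})$, and for each $T(c,a) \in D$ add $C(w_{ca})$. Finally, for every triple $(a,b,c)$ with $R(a,b), S(b,c), T(c,a) \in D$ (i.e., every witness of $D \models q_\triangle$), add the exogenous tuple $W(u_{ab}, v_{bc}, w_{ca})$ to~$D'$. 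By construction each $A$-tuple in $D'$ is in bijection with an $R$-tuple in $D$, and analogously for $B$--$S$ and $C$--$T$; so any set $\Gamma$ of endogenous tuples of $D$ corresponds to a set $\Gamma'$ of endogenous tuples of $D'$ of the same size, and vice versa.

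The core of the argument is then that witnesses of $D' \models q_\Tri'$ are in bijection with witnesses of $D \models q_\triangle$: a tuple $W(u_{ab}, v_{bc}, w_{ca})$ exists in $D'$ only when $(a,b,c)$ is an actual triangle of $D$, so any assignment making the body of $q_\Tri'$ true must pick $A(u_{ab}), B(v_{bc}), C(w_{ca})$ together with the matching $W$-fact, which forces exactly one pre-image triangle. Consequently $\Gamma$ breaks every triangle witness in $D$ iff $\Gamma'$ breaks every witness in $D'$, yielding $(D,k) \in \res(q_\triangle) \Leftrightarrow (D',k) \in \res(q_\Tri')$. The main obstacle I anticipate is precisely this spurious-witness check: one must confirm that the freshness of the $u_{ab}, v_{bc}, w_{ca}$ prevents any two distinct triangles of $D$ from combining across the $W$-relation to produce an extra witness in $D'$, which the disjointness of the name spaces guarantees.
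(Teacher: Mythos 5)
Your proposal is correct and follows essentially the same route as the paper: a resilience-preserving reduction from $\res(q_\triangle)$ that encodes each $R$-, $S$-, $T$-tuple as a fresh $A$-, $B$-, $C$-value and uses $W$ only to realign them, so witnesses of $D'$ are in bijection with triangles of $D$. The only cosmetic differences are that you discharge the $W$-atom upfront via the domination proposition (the paper instead argues inline that any $W$-tuple in a contingency set can be replaced by the corresponding $A$-tuple) and that you populate $W$ only with witness triples rather than all aligned triples over the domain, neither of which changes the argument.
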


\begin{proof}[Proof of \cref{prop:tripodQuery}]
We reduce $\res(q_\triangle)$ to $\res(q_\Tri)$.  It will then follow that $\res(q_\Tri)$ is \NP-complete.
Let $(D,k)$ be an instance of $\res(q_\triangle)$.  We construct an instance $(D',k)$ of $\res(q_\Tri)$
by constructing relations $A,B,C$ as copies of $R,S,T$ from $D$.  Define $D'=(A,B,C,W)$ as follows:
\begin{align*}
A &= \bigset{\angle{ab}}{R(a,b)\in D}\\
B &= \bigset{\angle{bc}}{S(b,c)\in D}\\
C &= \bigset{\angle{ca}}{T(c.a)\in D}\\
W &= \bigset{(\angle{ab},\angle{bc},\angle{ac})}{a,b,c\in \textrm{dom(D)}}
\end{align*}

\noindent Here, $\angle{ab}$ stands for a new unique domain value resulting from the concatenation
of domain values $a$ and $b$.  Observe that there is a 1:1 correspondence between the witnesses of $D\models
q_\triangle$ and the witnesses of $D'\models q_\Tri$.
Thus, every contingency set for $q_\triangle$ in $D$ corresponds to a contingency set of the same
size for $q_\Tri$ in $D'$.
Furthermore no minimum $\Gamma'$ from $D'$ needs to choose tuples from~$W$.  If
$\vec t = W(\angle{ab},\angle{bc},\angle{ac})$ were in $\Gamma'$, then we could replace it by
$A(\angle{ab})$, which suffices to remove all the witnesses 
removed by~$\vec t$. As we will explain
later, $A$ ``dominates''
$W$ (\autoref{sj-free domination}).
It follows that $(D,k) \in \res(q_\triangle) \Leftrightarrow (D',k)\in \res(q_\Tri)$.
\end{proof}

\begin{proof}[Proof of \cref{hard part dichotomy}]
Let $q$ be a query with triad ${\mathcal T}=\set{S_0,S_1,S_2}$.  
We build a reduction from $\res(q_\triangle)$ to $\res(q)$. 
Given any $D$ that satisfies $q_\triangle$ we will produce a database $D'$ that satisfies $q$ such
that for all $k$:
\begin{equation}\label{eq res hard case1}
(D,k)\in \res(q_\triangle) \quad\Leftrightarrow\quad (D',k)\in\res(q)
\end{equation}

\noindent
We will assume that no variable is shared by all three elements of ${\mathcal T}$  
(we can ignore any
such variable by setting it to a constant). 
Our proof splits into two cases:

\underline{Case
 1}:  $\var(S_0), \var(S_1), \var(S_2)$ 
are pairwise disjoint. 
Our reduction is similar to the reduction from $q_\triangle$ to $q_\Tri$ (\autoref{prop:tripodQuery}).

We first define the triad relations in $D'$:
\begin{align*}
S_0 &= \bigset{(\angle{ab}, \ldots, \angle{ab})}{R(a,b) \in D} \\
S_1 &= \bigset{(\angle{bc}, \ldots, \angle{bc})}{S(b,c) \in D} \\
S_2 &= \bigset{(\angle{ca}, \ldots, \angle{ca})}{T(c,a) \in D}. 
\end{align*}

\noindent
Thus, each tuple of, for example, $S_0$ consists of identical entries with value $\angle{ab}$ for each
pair $R(a,b) \in D$.  Thus, $S_0,S_1,S_2$ mirror $R,S,T$, respectively. 

To define all the other atoms $A_i$ of $D'$, we first
partition the variables of $q$ into 4 disjoint sets: $\var(q)= \var(S_0)\cup\var(S_1)\cup \var(S_2) \cup
V_3$.  Now for each atom $A_i$, arrange its variables in these four groups. Then define the atom $A'_i$
of $D'$ as follows:
\begin{equation}\label{case1 eq}
A_i' = \bigset{(\angle{ab};\angle{bc};\angle{ca};\angle{abc})}{D \models q_\triangle(a,b,c)}
\end{equation}
\noindent For example, all the variables $v\in \var(S_0)$ are assigned the value $\angle{ab}$
and all the variables $v\in V_3$ are assigned $\angle{abc}$.

By the definition of triad, there is a path from $S_0$ to $S_1$ not using any edges 
(variables) from
$\var(S_2)$.  
Thus, any witness that  $D'\models q$ 
which includes occurrences of $\angle{ab}$ and $\angle{b'c'}$ must have $b =
b'$.  

Similarly, a path from $S_1$ to $S_2$ guarantees that $c$ is preserved and a path from $S_2$ to
$S_0$ guarantees that $a$ is preserved.  It follows that the witnesses 
that $D' \models q$ are essentially identical to the
witnesses that $D\models q_\triangle(x,y,z)$ (See \cref{hard part fig}).

Furthermore, any minimum contingency set only needs tuples from
$S_0, S_1$ or $S_2$. For example, if a tuple contains $\angle{ab}$ or $\angle{abc}$, then it can be
replaced by a tuple from $S_0$.   Thus the sizes of minimum contingency sets are preserved, i.e., \autoref{eq res
  hard case1} holds, as desired.  Thus $\res(q)$ is \NP-complete.

\underline{Case 2}:  $\var(S_i) \cap \var(S_j) \ne \emptyset $  for some $i\ne j$:
We generalize the construction from Case 1 as follows.  Partition $\var(S_i)$ into those 
unshared, those shared with $S_{i-1}$, and those shared with $S_{i+1}$ (Addition is mod 3).

We then assign the
relations of the triad as follows:
\begin{align*}
S_0 &= \bigset{(\angle{ab}; a; b)}{R(a,b) \in D} \\
S_1 &= \bigset{(\angle{bc}; b; c)}{S(b,c) \in D} \\
S_2 &= \bigset{(\angle{ca}; c; a)}{T(c,a) \in D},
\end{align*}

Since none of the $S_i$'s is dominated, in each case both possible values occur, e.g., $a$ and $b$
both occur in the tuples of $S_0$
Thus as in Case 1, $S_0,S_1,S_2$ capture $R,S,T$, respectively.  We now partition 
$\var(q)$ into 7 sets as follows.  The key idea is that 
for each assignment of $x,y,z$ to values $a,b,c$ in $D$, we will make assignments
according to that partition.

\begin{equation}\label{variable partition eq}
\begin{array}{rl}
 \var(S_0)-(\var(S_1) \cup \var(S_2))&\angle{ab} \\  
 \var(S_1)-(\var(S_0) \cup \var(S_2))&\angle{bc} \\ 
 \var(S_2)-(\var(S_0) \cup \var(S_1))& \angle{ca} \\  
 \var(q^*) - (\var(S_0)\cup \var(S_1)\cup \Var(S_2))& \angle{abc} \\ 
 \var(S_2) \cap \var(S_0)& a \\  
 \var(S_0) \cap \var(S_1)& b \\
 \var(S_1) \cap \var(S_2)& c
\end{array}
\end{equation}

We then define each other atom $A$ in
$D'$ to be the following set of tuples, where the only difference between atoms is which
of the 7 members of the partition of variables occurs in $\var(A)$.
\begin{equation}\label{case2 eq}
\hspace*{-.1in}\bigset{(\angle{ab};\angle{bc};\angle{ca};\angle{abc};a;b;c)}
	{D\!\models\! q_\triangle(a,b,c)}
\end{equation}

By the definition of triad, there is a path from $S_0$ to $S_1$ not using any edges (variables) from
$S_2$, i.e., none from $\var(S_2) \cup V_4 \cup V_6$.  Thus, any witness including occurrences of
some of  $\angle{ab},b',\angle{b''c}$ must 
have $b = b' = b''$.  Thus, as in Case 1, 
the witnesses of $D' \models q$ are essentially identical to the
witnesses of $D\models q_\triangle$ and we have reduced
$\res(q_\triangle)$ to $\res(q)$.
\end{proof}

\section{Independent Join Paths: details}
\label{app:IJPs}

 {We give more details on the concept of Independent Join Paths. 
We start with some intuition by providing examples} (\cref{app:IJP:examples}), state our conjecture,
 {and finish by pointing out how this concept could possibly allow an automated search
for hardness proofs} (\cref{app:IJP:automation}), 
 {a prospect we are especially excited about}.

\subsection{IJP Examples}
\label{app:IJP:examples}

We give here examples of IJPs for various queries and earlier hardness reductions, and provide the intuition for our 4 conditions.

\introparagraph{Standard paths} 
The first example shows that IJPs contain standard paths (\cref{unary path}) as a special case.

\begin{example}[$\vc$]
Consider our simplest example for an SJ-path implying hardness: 
$\vc$
from \cref{fig: vc hypergraph}.
The following database of 3 tuples forms an IJP: 
$$
D = \{R(1), S(1,2), R(2)\}
$$
\begin{enumerate}

\item We have $R(1)$ and $R(2)$ 
with
$\{1\} \not \subseteq \{2\}$
and
$\{2\} \not \subseteq \{1\}$.

\item $R(1)$ and $R(2)$ each participate in only one witness, which in this case is the same one. 

\item $R$ being unary, there can't be any other relation with a strict subset of the constants.

\item No exogenous relation.

\item The resilience $\rho(\vc,D)=1$, but becomes 0 after removing either $R(1)$ or $R(2)$ or both.
\end{enumerate}
\end{example}

\introparagraph{Triads} 
The second example shows that any query with a triad can form IJPs. 
We illustrate with our favorite triangle query.

\begin{example}[$q_\triangle$]\label{ex:app:IJP:triangle}
Consider the triangle query as the simplest example of a non-linear SJ-free query containing a triad 
(see \cref{fig:triangleHypergraph}).
The following database of 7 tuples form an IJP:
\begin{align*}
	D=\{R(1,2), R(4,2), R(4,5), S(2,3), S(5,3), T(3,1), T(3,4) \}
\end{align*}	

\begin{enumerate}

\item We have $R(1,2)$ and $R(4,5)$ 
with
$\{1,2\} \not \subseteq \{4,5\}$
and
$\{4,5\} \not \subseteq \{1,2\}$.

\item $R(1,2)$ only participates in witness $w_1 = (1,2,3)$, and $R(4,5)$ only participates
in witness $ w_2 = (4,5,3)$.

\item No other relation has a strict subsets of the constants from $R$

\item No exogenous relation.

\item The resilience $\rho(q_{\triangle},D)=2$, but becomes 1 after removing either $R(1,2)$, or $R(4,5)$, or both.
\end{enumerate}

\Cref{fig:Fig_data_H} illustrates the 3 joins forming the IJP. 
The connection to our idea from \cref{fig:Fig_VC_intuition_b} now becomes clearer.
Also notice that this IJP forms the basic element of our prior hardness proof for triads.
\end{example}	

\begin{figure}[h]
\centering
\includegraphics[width=0.8\linewidth]{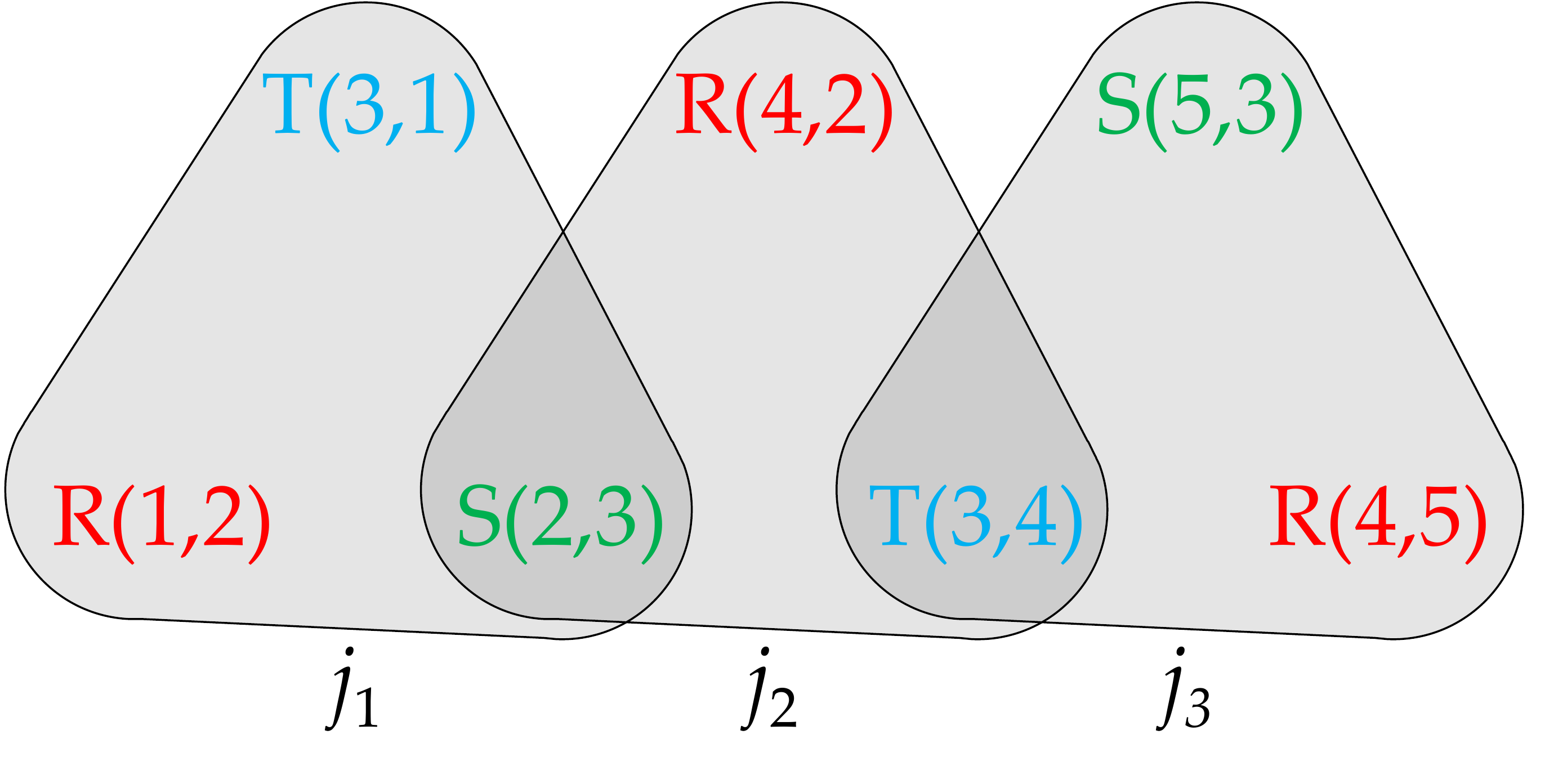}
\caption{\Cref{ex:app:IJP:triangle}: IJP for triangle query $q_\triangle$.}
\label{fig:Fig_data_H}
\end{figure}

\introparagraph{More complicated IJPs}
The third example uses a more complicated IJP.

\begin{example}[more complicated gadget]\label{ex:z5_inverter}
Consider the query
\begin{align*}
	z_5 \datarule A(x), R(x,y), R(y,z), R(z,z)
\end{align*}		
Then following database forms an IJP:
\begin{align*}
	D= \{&A(1), A(4), A(5), A(9), A(13), \\
	      & R(1,2), R(2,2), R(2,3), R(3,3), R(4,1), R(5,2),\\
	      & R(5,6), R(6,7), R(7,7), R(8,7), R(9,8),\\
	      & R(1,10), R(10, 11), R(11,11), R(12,11), R(13,12) \}
\end{align*}		
\begin{enumerate}

\item We have $A(9)$ and $A(13)$.

\item $A(9)$ only participates in witness $w_1 = (9,8,7)$ and $A(13)$ only participates in witness $w_2 = (13,12,11)$.

\item No other relation has a strict subset of the constants from $A$.

\item No exogenous relation.

\item The resilience $\rho(\vc,D)=4$ with $$\Gamma = \{R(1,2), R(2,2),R(7,7),R(11,11)\},$$ 
but becomes 3 after 
($i$) removing $A(9)$ 
with $$\Gamma = \{A(5), R(1,2), R(11,11)\},$$
or ($ii$) removing $A(13)$ 
with $$\Gamma = \{A(1), R(2,2),R(7,7)\},$$
or ($iii$) removing both
with $$\Gamma = \{A(1), A(5),R(1,2)\} \text{ or }\Gamma = \{A(1), A(5),R(2,2)\}.$$
\end{enumerate}

\Cref{fig:Fig_compositequery_H_Rzz}
illustrates how these 21 tuples create 8 different joins,
representing the IJP.
It turns out that this IJP is ``hidden'' 
and can be spotted by the careful reader 
in the crossover part of the variable gadget used in \cref{cf3.1 is npc}. 
\end{example}

\begin{figure}[h]
\centering
\includegraphics[width=0.4\textwidth]{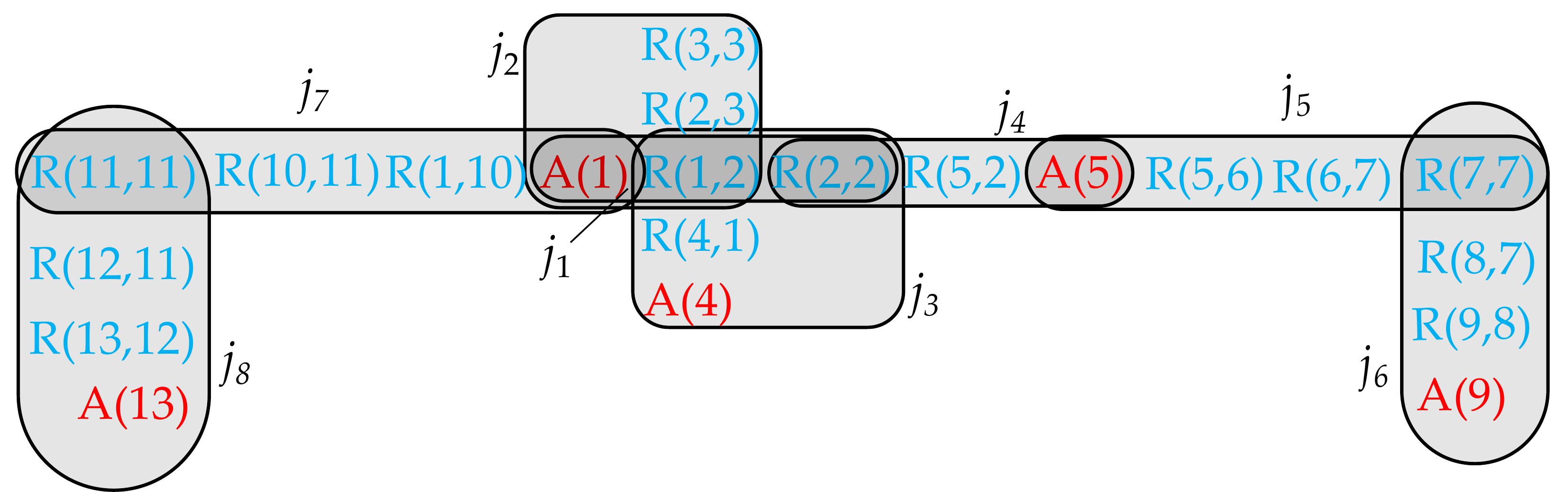}
\caption{\Cref{ex:z5_inverter}: IJP for $z_5$.}
\label{fig:Fig_compositequery_H_Rzz}
\end{figure}

\introparagraph{Condition 4}
We next give one example that illustrates why we need condition 4 of our definition for IJPs.
In particular, this query is an example in which two (instead of only one) relation is repeated.
We know through a dedicated proof that the complexity of this query is in \PTIME.
We illustrate a ``failed attempt'' to create an IJP and point out the problems that would arise if we ignored condition 4.

\begin{example}[Independent paths]\label{ex: ind path}
	Consider the following query 
	$q \datarule $ 
	$\ex{A}(x), R(x), S(x,y), S(z,y), R(z), \ex{B}(z)$
	which contains two repeated relations.
We investigate the canonical database
\begin{align*}
	D= \{ &\{R(1), \ex{A}(1), S(1,2), S(3,2), R(3), \ex{B}(3) \}
\end{align*}		
and its ability to form an IJP.
\begin{enumerate}

\item We have $R(1)$ and $R(3)$.

\item $R(1)$ and $R(3)$ participate in only one witness $w = (1,2,3)$.

\item No other relation has a strict subset of the constants from $A$.

\item Condition 3 requires that $\ex{B}(1)$ and $\ex{A}(3)$ be added to the database, which is currently not the case, and which we ignore for a moment.

\item The resilience is 1, and becomes 0 if any tuple is removed.
\end{enumerate}
	
The crucial condition 4 forces us to add $\ex{B}(1)$ and $\ex{A}(3)$ to the database. And then condition 2 and 5 are not true anymore. 
Addition of these tuples form 2 more joins
$\{R(1), \ex{A}(1), S(1,2), S(1,2), R(1), \ex{B}(1) \}$
and
$\{R(3), \ex{A}(3), S(3,2), S(3,2), R(3), \ex{B}(3) \}$,
which requires \emph{both} tuples $R(1)$ and $R(3)$ to be removed make the query false.

In other words, the canonical database is not enough to succeed with the reduction from VC (recall \cref{fig:Fig_VC_intuition_b}: any two edges incoming and outgoing from vertex $a$ create addition joins.
\end{example}

\subsection{Toward an automated proof construction}
\label{app:IJP:automation}

At its core, each IJP can be considered as a set of 
``canonical databases'' or witnesses, which have been appropriately ``aligned.''
We give the intuition with the triangle query $q_\triangle$
from \cref{ex:app:IJP:triangle}
and \cref{fig:Fig_data_H}.

\begin{example}
Assume we construct three disjoint canonical databases:
\begin{align*}
	j_1 & : R(1,2), S(2,3), T(3,1) \\
	j_2 & : S(a,b), T(b,4), R(4,a) \\
	j_3 & : T(c,d), R(d,5), S(5,d) 
\end{align*}
The total number of constants used is 9, three for each of the three joins.

We can now look at all the possible ways in which these $n=9$ constant can be partitioned into nonempty subsets.
The answer is given by the Bell number and is 21147 for $n=9$. 
Exhaustive enumeration over these 21147 cases will also lead to partition
\begin{align*}
	\{\{1\},\{2, a\},\{3, b, c\},\{4, d\},\{5\}\} 
\end{align*}	
which is isomorph to the IJP from \cref{fig:Fig_data_H}.

Our \cref{def:IJP} now provides a procedure to test that the resulting database indeed forms an IJP.
\end{example}

The more general procedure is now as follows
\begin{enumerate}
	\item for an increasing number of joins $k=1, 2, 3, \ldots$
	\item for all possible partitions
	\item for all pairs of tuples of the same relation that are not dominated
	\item if an exogenous tuple contains a subset of the constants, then possible add a second tuple
	\item calculate the minimal VC of the resulting hypergraph under the 4 cases 
		$\{(0,0), (0,1), (1,0), (1,1)\}$, where 0 and 1 mean that a tuple is present or absent, respectively.
\end{enumerate}

\end{document}